\providecommand{\lemmaname}{Lemma}
\providecommand{\theoremname}{Theorem}
\providecommand{\propositionname}{Proposition}
\providecommand{\definitionname}{Definition}
\newtheorem{theorem}{Theorem}
\DeclareMathOperator{\sign}{sign} 
\newcommand{\rem}[1]{}
\newtheorem{thm}{\protect\theoremname}
\newtheorem{prop}[thm]{\protect\propositionname}
\newtheorem{lem}[thm]{\protect\lemmaname}
\newcommand\summaryname{Abstract}
    {\normal\begin{center}%
    \bfseries{\summaryname} \end{center}}
\begin{document}

\title{Quantum Integrable Systems arising from Separation of Variables on
$S^{3}$}
\author{Sean R.~Dawson,  Holger R.~Dullin\footnote{Emails: \url{sdaw6022@uni.sydney.edu.au}, \url{holger.dullin@sydney.edu.au.}}\\School of Mathematics and Statistics,\\The University of Sydney, Australia}

\maketitle
\begin{abstract}
    We study the family of quantum integrable systems that arise from separating the Schrödinger equation in all $6$ separable orthogonal coordinates on the $3-$sphere: ellipsoidal, prolate, oblate, Lam\'{e}, spherical and cylindrical. On the one hand each separating coordinate system gives rise to a quantum integrable system on $S^2 \times S^2$, on the other hand it also leads to families of harmonic polynomials in $\mathbb{R}^4$. We show that separation in ellipsoidal coordinates yields a generalised Lam\'{e} equation - a Fuchsian ODE with 5 regular singular points. We seek poly\-nomial solutions so that the eigenfunctions are analytic at all finite singularities. We classify eigenfunctions by their discrete symmetry and compute the joint spectrum for each symmetry class. The latter 5 separable coordinate systems are all degenerations of the ellipsoidal coordinates. We perform similar analyses on these systems and show how the ODEs degenerate in a fashion akin to their respective coordinates. For the prolate system we show that there exists a defect in the joint spectrum which prohibits a global assignment of quantum numbers: the system has quantum monodromy. This is a companion paper to \cite{Nguyen2023} where the respective classical systems were studied.
\end{abstract}
\section{Introduction}
The study of separation of variables yields a vast array of classical and quantum integrable systems that are ripe for exploration. 
Separation of variables originated with Jacobi and St\"ackel \cite{Staeckel93}.
On the quantum side, seminal work was done in the early $20$th century by Robertson \cite{robertson1927} who showed, for a given orthogonal coordinate system, that the Schr\"{o}dinger equation is separable if both the classical Hamilton-Jacobi equation is separable and the Robertson condition is satisfied. Work done soon after by Eisenhart \cite{eisenhart1934} showed that all systems that arise from separation of variables are St\"{a}ckel systems and so the corres\-ponding classical integrals obtained will be quadratic in the momenta. 
Work by Kalnins and Miller in the late $20$th century yielded a complete classification of all orthogonal separable coordinates on both $\mathbb{R}^{n}$ and $S^n$ \cite{Kalnins1986}. This work was further summarised in \cite{Kress06}, where connections to superintegrability were also drawn. 
Separation for systems with potential are also well studied. In \cite{Kalnins2002} invariants for a system that admit separation of variables were constructed. 
Sch\"{o}bel and Veselov  gave a topology to the space of separable coordinates on the sphere \cite{Schoebel2015, Schoebel2016} and identified it with Stasheff polytopes. Previously, our motivation was to study the corresponding integrable systems, which inherit the same topology.
In our paper \cite{Nguyen2023}, we showed that the space of orthogonally separable coordinates on $S^3$ induces a family of classical integrable systems on $S^2 \times S^2$ after reduction. Here we seek to broaden this analysis to the space of quantum integrable systems, their corresponding separated ODEs and special functions. In particular, we study 5 families of Fuchsian equations: the generalised Lam\'{e}, Heun, Jacobi, Gegenbauer and Legendre  differential equations. Similar to our previous work, we aim to understand how degenerations in the coordinate systems descend to the space of integrable systems, and correspondingly ODEs and special functions.

The general idea to start with a super-integrable system and exploit its multiseparability to define interesting Liouville integrable systems by reduction has been exploited in the case of the Kepler problem \cite{DW18} and the harmonic oscillator \cite{Chiscop_2019}. This paper follows the same approach for the geodesic flow on the three dimensional sphere $S^3$. In this case the space of separable coordinate systems is a Stasheff polytope which is the pentagon shown in Fig.~\ref{fig:StasheffQuantum}. This is the parameter space of the family of quantum integrable systems on $S^2 \times S^2$. While the previous examples (Kepler and Harmonic oscillator) also lead to reduced systems on compact symplectic manifolds and hence quantised systems on finite dimensional Hilbert spaces, starting with $S^3$ separation of variables in the Schr\"odinger equation leads to Fuchsian equations, while the presence of the potential in the earlier examples lead to confluent Fuchsian equations.
Degeneration of parameters in the general separable coordinate system on $S^3$ to the so-called prolate case (the bottom edge of the pentagon in Fig.~\ref{fig:StasheffQuantum}) then leads to a semi-toric family whose quantum mechanics is described by polynomial solutions of the Heun equation. The semi-toric systems on $S^2\times S^2$ studied in \cite{SadovskiiZhilinskii99,FlochPelayo18,ADH19} do not appear to include the prolate system studied here, because the non-trivial integral is a homogeneous polynomial. In fact, all our Liouville integrable systems have integrals that are in general homogeneous quadratics in the momenta, and may degenerate into squares of linear functions of the momenta. Because of this all our integrable systems are related to spherical harmonics on $S^3$. The ellipsoidal reduced system is related to the Manakov top and has been studied classically and quantum mechanically in \cite{Sinitsyn2007}.

\begin{figure}[tbh]
\begin{centering}
\includegraphics[width=11cm]{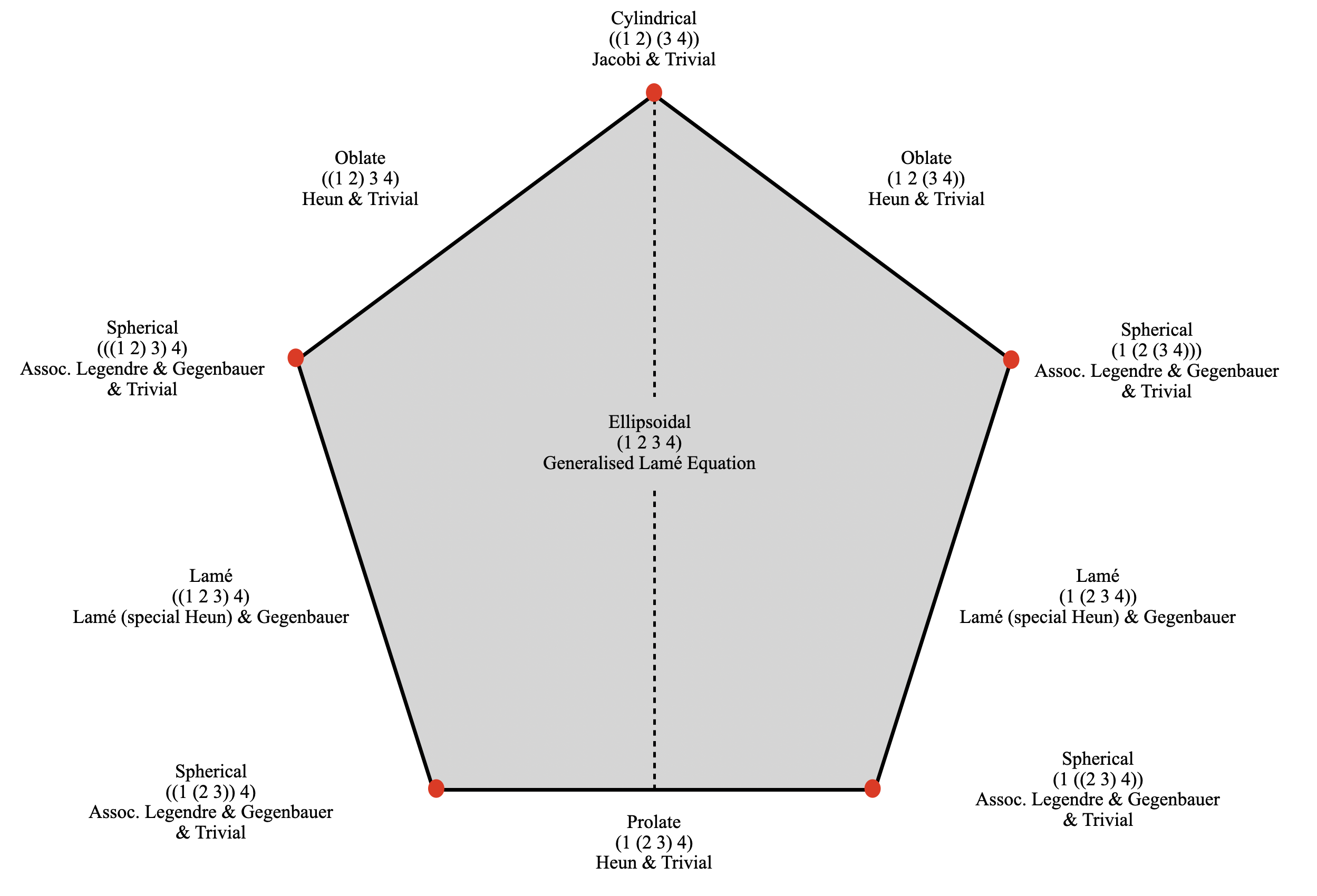}
\par\end{centering}
\caption{Stasheff polytope of orthogonal separable coordinates on $S^3$ with corresponding ODEs arising from separation. \label{fig:StasheffQuantum}}
\end{figure}


In our study of these quantum systems, we perform a variety of numerical computations to find the joint spectrum. For all degenerate systems, a power series ansatz yields a three term recurrence relation. Given the necessary quantisation conditions, this recurrence terminates and yields a polynomial as an eigenfunction. The joint spectrum is then computed from the associated tri-diagonal matrix. The ellipsoidal coordinates on $S^3$ gives the more difficult generalised Lam\'{e} equation which has $5$ regular singularities. Consequently, the aforementi\-oned methodology falls short and we use the method of computing Heine-Stieltjes polynomials as described in \cite{Alam1979, Al-Rashed1985, Volkmer1999}.
It is interesting to note that for all systems on $S^3$, the joint spectrum is recovered from finite matrices, with corresponding polynomial eigenfunction. This contrasts with our previous work \cite{Dawson2022, spheroidal} where the spectrum is determined by infinite matrices and the eigenfunctions are given by infinite series. 

Another motivation for this work is to suggest a natural family of integrable systems that contains the semi-toric case, but includes deformation to more typical quantum integrable systems. The classification of semi-toric systems is now fully understood classically \cite{Pelayo2009} and quantum mechanically \cite{FlochVuNgoc21}. This line of research was initiated by Duistermaat \cite{Duistermaat80} and the first example of quantum monodromy is due to Cushman and Duistermaat \cite{CushDuist88}. Monodromy, both classical and quantum, is caused by the existence of a non-degenerate focus-focus critical value in the classical system \cite{VuNgoc99}. 
Many important examples of integrable systems in physics do have monodromy, see, e.g.~\cite{CushDuist88,SadovskiiZhilinskii99,spheroidal,DW18,Dawson2022}.
In this paper, we obtain a semi-toric system through separation of the geodesic flow on $S^3$ in prolate coordinates. This system thus arises as a degeneration of a more general 2 parameter family corresponding to the inside of the pentagon in Fig. \ref{fig:StasheffQuantum}.

The outline of this paper is as follows. In Section \ref{sec:S3-1} we describe the ellipsoidal-spherical coordinates, the most general orthogonal separable coordinates on $S^3$. The ODEs that arise from separating the Schr\"{o}dinger equation in these coordinates are known as the generalised Lam\'{e} equation. The joint spectrum is computed, along with the corresponding quantised actions and their symmetry classes. The first degenerate system is analysed in Section \ref{sec:ProlateS3} with the prolate coordinates. Separation yields a semi-toric system whose monodromy matrix we compute. This contrasts with the oblate system in Section \ref{sec:OblateS3} which, while similar to prolate coordinates, does not possess monodromy and instead has hyperbolic singularities. Section \ref{sec:LameS3} studies the Lam\'{e} coordinates (an extension of ellipsoidal coordinates on $S^2$ onto $S^3$) while Sections \ref{sec:SphericalS3} and \ref{sec:CylindricalS3} focus on the spherical and cylindrical systems respectively. All these results are visualised by the Stasheff polytope in Fig.~\ref{fig:StasheffQuantum}. In the appendix we recall some results on $S^2$ for the only two separable coordinates: ellipsoidal and spherical.

\section{Ellipsoidal Coordinates and the Generalised Lam\'{e}
Equation}\label{sec:S3-1}

Let $\bm{x}\coloneqq(x_{1},x_{2},x_{3},x_{4})$
be Cartesian coordinates on $\mathbb{R}^{4}$ and consider the unit
sphere $S^{3}\subset\mathbb{R}^{4}$ where $S^{3}=\left\{ (x_{1},x_{2},x_{3},x_{4})\in\mathbb{R}^{4}\,|\,x_{1}^{2}+x_{2}^{2}+x_{3}^{2}+x_{4}^{2}=1\right\} $.
The geodesic flow on $S^{3}$ is a constrained system on $T^{*}\mathbb{R}^{4}$
with corresponding momentum $\bm{y}=(y_{1},y_{2},y_{3},y_{4})$ such
that $\bm{x}\cdot\bm{x}=1$ and $\bm{x}\cdot\bm{y}=0$. Set
\[
H=\frac{1}{2}\sum_{i>j}\ell_{ij}^{2}
\]
 to be the Hamiltonian of the geodesic flow where the angular momentum
in the $(i,j)$ plane is given by $\ell_{ij}\coloneqq x_{i}y_{j}-x_{j}y_{i}$ and the symplectic structure is $\sum_i dx_i \wedge dy_i$.

Let a homogeneous polynomial $\Psi_D(x_1, x_2, \dots, x_n)$ of degree $D$ be a solution to Laplace's equation in $\mathbb{R}^n$,
\[
   \Delta \Psi_D = \sum \partial_i^2 \Psi_D = 0\,.
\]   
By definition $\Psi_D$ is called a \emph{harmonic} polynomial.
Introduce the radius $r = \sqrt{\sum x_i^2}$ and some local coordinates $s$ on the sphere $S^{n-1}$. The Laplacian in such a coordinate system is
\[
   \Delta_{\mathbb{R}^n} = \frac{\partial^2}{\partial r^2} + \frac{n-1}{r} \frac{\partial}{\partial r} + \frac{1}{r^2} \Delta_{S^{n-1}}\,.
\]
Now rewrite the harmonic polynomial as $\Psi_D(\bm{x}) = r^D \psi_D(s)$ and thus define the spherical harmonic $\psi_D : S^{n-1} \to \mathbb{R}$. Inserting this into the above Laplacian gives
\[
      -\Delta_{S^{n-1}} \psi_D = D(D-2+n) \psi_D\,,
\]
so we see that spherical harmonics are eigenfunctions of the Laplace-Beltrami operator on the sphere. For $n=3$ on $S^2$ these are the usual spherical harmonics typically denoted by $Y_{lm}$,
for the case of general $n$ see, e.g.,~\cite{Axler01}.
The above Hamiltonian can be canonically quantised which leads to $\hat\ell_{ij} = -i \hbar (x_i \partial_j - x_j \partial_i)$. This again gives  the Laplace-Beltrami operator on $S^{n-1}$ and hence 
\[
    2 \hat H \Psi_D = D( D - 2 + n) \Psi_D\,.
\]
In the following we will speak of both the spherical harmonic $\psi_D$ in local coordinates $s$ on the sphere $S^{n-1}$, and the corresponding harmonic polynomial $\Psi_D$ in global Cartesian coordinates $\bm{x}$ on $\mathbb{R}^n$, as eigenfunctions of the Laplace-Beltrami operator. These considerations apply to any dimension $n$, but in the following we will consider $n=4$ only.

Ellipsoidal-spherical (also known as sphero-conal or spherical ellipsoidal) coordinates $\bm{s}\coloneqq(s_{1},s_{2},s_{3})$
on $S^{3}$ are defined as the roots of 
\begin{align*}
T(s) & =\sum_{i=1}^{4}\frac{x_{i}^{2}}{s-e_{i}}
\end{align*}
 where $0 \le e_{j}\le s_{j}\le e_{j+1}$ for all $j=1,2,3$ and the
$e_{i}$ are called the semi-major axes. 
This is by analogy with ellipsoidal coordinates on the tri-axial ellipsoid which are different and defined through $T(s) = 1$. We will only use ellipsoidal-spherical coordinates in the following and call them ellipsoidal for short since no confusion can arise. The transformation to ellipsoidal coordinates is found by solving $T(s_{j})=0$ where $j=1,2,3$ together with $r^2 = \sum_{i=1}^{4}x_{i}^{2}=1$ to give 
\begin{align}
x_{i}^{2} & =\frac{\prod_{j=1}^{3}(s_{j}-e_{i})}{\prod_{k\ne i}(e_{k}-e_{i})}.\label{eq:ellipsoidal def}
\end{align}

From \cite{Nguyen2023} we have the following result.
\begin{lem}
The Hamiltonian $H$ for the geodesic flow on $S^{3}$
separates in ellipsoidal coordinates (\ref{eq:ellipsoidal def}) to
give integrals 
\begin{align}
\eta_{1}= & \sum_{i<j}\left(\ell_{ij}^{2}\sum_{k\ne i,j}e_{k}\right), & \eta_{2}=\sum_{i<j}\left(\ell_{ij}^{2}\prod_{k\ne i,j}e_{k}\right).\label{eq:EllipsoidalIntegralDef}
\end{align}
The triple $(H,\eta_{2},\eta_{2})$ is a Liouville
integrable system on $T^{*}S^{3}$. 
\end{lem}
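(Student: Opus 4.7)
The plan is to establish the lemma in two stages: first that the Hamilton--Jacobi equation for $H$ separates in the coordinates (\ref{eq:ellipsoidal def}) and produces $\eta_1,\eta_2$ as the two non-trivial separation constants, and second that $(H,\eta_1,\eta_2)$ are in involution and functionally independent on a dense open subset of $T^{*}S^{3}$. (I am reading the final triple in the statement as $(H,\eta_1,\eta_2)$, treating the repetition as a typo.)

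For separability, I would carry out the canonical change of coordinates implicit in (\ref{eq:ellipsoidal def}), introducing momenta $p_j$ conjugate to $s_j$. The Stäckel matrix built from the rows $(1,s_j,s_j^{2})$ (appropriately normalised by $\prod_{i}(s_j-e_i)^{1/2}$) has a Vandermonde-type determinant, and inverting it rewrites $2H$ as a Stäckel sum in which each summand depends only on a single pair $(s_j,p_j)$. The coefficients that appear at different powers of $s_j$ in this sum are exactly the coefficients of the spectral polynomial $2H\lambda^{2}-\eta_{1}\lambda+\eta_{2}$, so the expressions in (\ref{eq:EllipsoidalIntegralDef}) are read off directly. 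This computation is the one already carried out in the companion paper \cite{Nguyen2023}.

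For involutivity, the key device is to encode all three integrals into the single polynomial
\begin{align*}
P(\lambda) \;=\; \sum_{i<j}\ell_{ij}^{2}\prod_{k\ne i,j}(\lambda-e_{k}) \;=\; 2H\lambda^{2}-\eta_{1}\lambda+\eta_{2},
\end{align*}
so that the three pairwise brackets $\{H,\eta_1\}$, $\{H,\eta_2\}$, $\{\eta_1,\eta_2\}$ collapse into the single identity $\{P(\lambda),P(\mu)\}\equiv 0$ for all $\lambda,\mu$. Expanding using the $\mathfrak{so}(4)$ brackets $\{\ell_{ij},\ell_{kl}\}=\delta_{jk}\ell_{il}-\delta_{ik}\ell_{jl}-\delta_{jl}\ell_{ik}+\delta_{il}\ell_{jk}$ produces a sum of terms of type $\ell_{ab}\ell_{cd}/[(\lambda-e_{p})(\mu-e_{q})]$, which after a partial-fraction split separating the $\lambda$- and $\mu$-dependence cancel pairwise. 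This is the calculational heart of the proof and the step I expect to be the main obstacle, although it is essentially the standard argument for quadratic Uhlenbeck-type integrals on $S^{n-1}$ and can alternatively be phrased via a classical $r$-matrix on $\mathfrak{so}(4)$.

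Finally, the brackets computed on $T^{*}\mathbb{R}^{4}$ descend cleanly to $T^{*}S^{3}$ because each $\ell_{ij}$ Poisson-commutes with the constraints $\bm{x}\cdot\bm{x}-1$ and $\bm{x}\cdot\bm{y}$. Functional independence of $dH,d\eta_{1},d\eta_{2}$ on a dense open subset of $T^{*}S^{3}$ is checked at a generic point where all $\ell_{ij}$ are nonzero and the $e_{k}$ distinct: the $3\times 6$ matrix whose rows give the coefficients of $\ell_{ij}^{2}$ in $(2H,\eta_{1},\eta_{2})$ is built from elementary symmetric functions in the two-element subsets $\{e_{k}\}_{k\ne i,j}$, and a Vandermonde argument shows it has full row rank.
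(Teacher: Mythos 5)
The paper itself offers no proof of this lemma: it is imported verbatim from the companion paper \cite{Nguyen2023} (``From \cite{Nguyen2023} we have the following result''), so there is no in-paper argument to compare yours against. Your plan follows the standard route and, for the separation part, is essentially what the companion paper does: pass to the curvilinear momenta via the Stäckel matrix \eqref{eq:Stackel El S3-1}, read off the pencil $P(\lambda)=\sum_{i<j}\ell_{ij}^{2}\prod_{k\ne i,j}(\lambda-e_{k})=2H\lambda^{2}-\eta_{1}\lambda+\eta_{2}$ as the numerator of the separated equation (consistent with the quantum separated form \eqref{eq:Gen Lame}), and package involutivity in the single identity $\{P(\lambda),P(\mu)\}=0$, which is the classical Uhlenbeck/Neumann-without-potential argument on $\mathfrak{so}(4)$. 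That identity, together with your observation that each $\ell_{ij}$ Poisson-commutes with the constraints $\bm{x}\cdot\bm{x}-1$ and $\bm{x}\cdot\bm{y}$, correctly gives three commuting functions on $T^{*}S^{3}$, and your reading of the triple as $(H,\eta_{1},\eta_{2})$ is clearly the intended one.

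The one step you should tighten is functional independence. Full row rank of the $3\times 6$ matrix of coefficients of the $\ell_{ij}^{2}$ is not by itself sufficient on $T^{*}S^{3}$: the six $\ell_{ij}$ are not independent there, since they satisfy the Plücker relation $\ell_{12}\ell_{34}-\ell_{13}\ell_{24}+\ell_{14}\ell_{23}=0$, so their differentials span only a five-dimensional space at generic points, and moreover $d\eta_{k}$ involves the extra factors $2\ell_{ij}\,d\ell_{ij}$ rather than the $d(\ell_{ij}^{2})$ as free covectors. So either verify at a concrete generic point (all $\ell_{ij}\ne 0$, $e_{k}$ distinct) that $dH,d\eta_{1},d\eta_{2}$ remain independent after imposing this relation, or argue directly from the separation: on the open set where the $s_{j}$ are distinct and the conjugate momenta nonzero, the nonvanishing Stäckel determinant shows the map $(H,\eta_{1},\eta_{2})$ has rank three. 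This is a routine repair rather than a flaw in the approach.
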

Consider the following linear coordinate transformation
\begin{align*}
\begin{aligned}X_{1} & =\frac{1}{\sqrt{2}}(\ell_{12}+\ell_{34}) &  &  & Y_{1} & =\frac{1}{\sqrt{2}}(\ell_{12}-\ell_{34})\\
X_{2} & =\frac{1}{\sqrt{2}}(\ell_{13}-\ell_{24}) &  &  & Y_{2} & =-\frac{1}{\sqrt{2}}(\ell_{13}+\ell_{24})\\
X_{3} & =\frac{1}{\sqrt{2}}(\ell_{14}+\ell_{23}) &  &  & Y_{3} & =\frac{1}{\sqrt{2}}(\ell_{14}-\ell_{23}).
\end{aligned}
\end{align*}
Since $\sqrt{H}$ generates
an $S^{1}$ flow with constant period, we are able to reduce by the
flow of $\sqrt{H}$ to $S^{2}\times S^{2}$ with local coordinates
$(\bm{X},\bm{Y})=(X_{1},X_{2},X_{3},Y_{1},Y_{2},Y_{3})$ and Casimirs
$\bm{X}\cdot\bm{X}=\bm{Y}\cdot\bm{Y}=1$. Again, from \cite{Nguyen2023}
we have the following result.
\begin{lem}
The original integrable system $(H,\eta_{1},\eta_{2})$
on $T^{*}S^{3}$ descends to a two degree of freedom integrable system
$(\eta_{1}(\bm{X},\bm{Y}),\eta_{2}(\bm{X},\bm{Y}))$ on $S^{2}\times S^{2}$
with Poisson structure 
\begin{align}
B_{\bm{X},\bm{Y}} & =\begin{pmatrix}\hat{\bm{X}} & \bm{0}\\
\bm{0} & \hat{\bm{Y}}
\end{pmatrix}\label{eq:BXY}
\end{align}
 where, for a vector $\bm{v}\in\mathbb{R}^{3}$, the corresponding
antisymmetric hat matrix $\hat{\bm{v}}$ is given by
\begin{align}
\hat{\bm{v}}\bm{u} & =\bm{v}\times\bm{u}, &  & \forall\bm{u}\in\mathbb{R}^{3}.
\end{align}
\end{lem}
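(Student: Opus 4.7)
The plan is to verify directly that the linear change of variables from the six angular momenta $\ell_{ij}$ to $(\bm{X},\bm{Y})$ realises the classical isomorphism $\mathfrak{so}(4)\cong\mathfrak{su}(2)\oplus\mathfrak{su}(2)$, and then to perform Marsden--Weinstein reduction by the circle action generated by $\sqrt{H}$. The four things to check are: (i) the brackets $\{X_i,X_j\}$ and $\{Y_i,Y_j\}$ close up as two commuting copies of the $\mathfrak{so}(3)$ bracket, (ii) the mixed brackets $\{X_i,Y_j\}$ vanish, (iii) $|\bm{X}|^2$ and $|\bm{Y}|^2$ both equal $H$ on $T^*S^3$ so that after rescaling they become Casimirs of value $1$, and (iv) the integrals $\eta_1,\eta_2$ are $\{H,\cdot\}$-invariant and can be written as functions of $(\bm{X},\bm{Y})$ only.

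For the bracket computations I would start from the canonical brackets $\{\ell_{ij},\ell_{kl}\}=\delta_{jk}\ell_{il}-\delta_{ik}\ell_{jl}-\delta_{jl}\ell_{ik}+\delta_{il}\ell_{jk}$ induced by the symplectic form $\sum_i dx_i\wedge dy_i$, and substitute the defining formulas for $X_i,Y_j$. By bilinearity each bracket reduces to a handful of $\ell$-brackets; the factors of $1/\sqrt{2}$ combine to give the correct $\mathfrak{so}(3)$ normalisation so that $\{X_i,X_j\}=\epsilon_{ijk}X_k$ and $\{Y_i,Y_j\}=\epsilon_{ijk}Y_k$, which is exactly the content of the hat-matrix block form $\hat{\bm{X}}$ and $\hat{\bm{Y}}$ in \eqref{eq:BXY}. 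The \emph{main obstacle}, and the key point of the lemma, is the vanishing of the mixed brackets $\{X_i,Y_j\}$: this is the algebraic "miracle" that the $(+)$- and $(-)$-self-dual angular momentum components decouple, and it has to be checked for all nine pairs — the relative signs chosen in the definitions of $\bm{X}$ and $\bm{Y}$ (in particular the minus signs in $X_2$ and $Y_2$) are precisely what makes the cross-terms cancel.

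Next I would compute $|\bm{X}|^2\pm|\bm{Y}|^2$. Expanding shows $|\bm{X}|^2+|\bm{Y}|^2=\sum_{i<j}\ell_{ij}^2=2H$, while $|\bm{X}|^2-|\bm{Y}|^2=2(\ell_{12}\ell_{34}-\ell_{13}\ell_{24}+\ell_{14}\ell_{23})$, which is twice the Pfaffian of the antisymmetric matrix $(\ell_{ij})$. On $T^*S^3$ the constraints $\bm{x}\cdot\bm{x}=1$ and $\bm{x}\cdot\bm{y}=0$ force this Pfaffian to vanish, so $|\bm{X}|^2=|\bm{Y}|^2=H$. Hence on a level set $H=\tfrac12$ (chosen by rescaling time, which is the same as dividing by $\sqrt{H}$), $\bm{X}$ and $\bm{Y}$ each lie on a copy of $S^2$, and the sought Casimirs $\bm{X}\cdot\bm{X}=\bm{Y}\cdot\bm{Y}=1$ hold on the reduced space.

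Finally, since $\sqrt{H}$ generates an $S^1$-action with constant period, Marsden--Weinstein reduction at this level produces a four-dimensional symplectic manifold; by the preceding computation this is $S^2\times S^2$ with the Lie--Poisson structure \eqref{eq:BXY}. It remains to observe that $\eta_1,\eta_2$ in \eqref{eq:EllipsoidalIntegralDef} are homogeneous quadratics in the $\ell_{ij}$, hence quadratic in $(\bm{X},\bm{Y})$; by direct substitution they can be rewritten as polynomials $\eta_1(\bm{X},\bm{Y}),\eta_2(\bm{X},\bm{Y})$, and their Poisson commutation with $H=\tfrac12(|\bm{X}|^2+|\bm{Y}|^2)$ — which follows either from the original integrability or from the fact that each $\ell_{ij}^2$ commutes with $H$ — guarantees that they descend to the reduced space. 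Independence and Poisson commutativity on the quotient are inherited from $T^*S^3$, completing the proof.
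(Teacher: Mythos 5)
The paper does not actually prove this lemma: it is imported verbatim from the classical companion paper \cite{Nguyen2023}, so there is no in-text argument to compare against. Your blind proof is the standard one — realise the splitting $\mathfrak{so}(4)\cong\mathfrak{so}(3)\oplus\mathfrak{so}(3)$ through the linear map $\ell_{ij}\mapsto(\bm{X},\bm{Y})$, check that the two blocks Poisson-commute, identify the Casimirs, and reduce by the $S^1$ action of $\sqrt{H}$ — and it is essentially correct; in particular you correctly single out the vanishing of the nine mixed brackets (forced by the sign choices in $X_2,Y_2$) as the crucial algebraic step, and the descent of $\eta_1,\eta_2$ from their invariance under the $H$-flow. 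Two small corrections. First, the Pfaffian identity is even better than you state: for $\ell_{ij}=x_iy_j-x_jy_i$ the matrix $(\ell_{ij})$ has rank at most two (equivalently $x\wedge y\wedge x\wedge y=0$), so $\ell_{12}\ell_{34}-\ell_{13}\ell_{24}+\ell_{14}\ell_{23}=0$ identically on $T^*\mathbb{R}^4$; the constraints $\bm{x}\cdot\bm{x}=1$, $\bm{x}\cdot\bm{y}=0$ are not needed for $|\bm{X}|^2=|\bm{Y}|^2=H$. Second, with the $1/\sqrt{2}$ normalisation of the paper and the convention $\{x_i,y_j\}=\delta_{ij}$ one finds $\{X_1,X_2\}=\ell_{14}+\ell_{23}=\sqrt{2}\,X_3$, i.e.\ the brackets close with structure constant $\sqrt{2}$ rather than $1$, so your claim that the $1/\sqrt{2}$ factors yield exactly $\{X_i,X_j\}=\epsilon_{ijk}X_k$ is off by this constant; it is harmless because the constant, together with the common Casimir value $|\bm{X}|^2=|\bm{Y}|^2=H$, is absorbed in the rescaling that enforces $\bm{X}\cdot\bm{X}=\bm{Y}\cdot\bm{Y}=1$ on the reduced space (equivalently in the reduction by $\sqrt{H}$ rather than $H$), after which the bracket takes the block hat-matrix form \eqref{eq:BXY}. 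With these adjustments your argument is a complete and self-contained proof of the lemma.
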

When written in terms of $\bm{X}$ and $\bm{Y}$ on $S^2 \times S^2$ the system can be quantised through Berezin-Toeplitz quantisation in the same way it was done in \cite{FlochPelayo18,Floch23}. The quantisation is the same, but the operators are different. In their case the global $S^1$ integral is linear and the commuting second integral has linear and quadratic terms, while the integrals $\eta_1$, $\eta_2$ are both homogeneous quadratic. In later chapters on degenerate coordinate systems (e.g.~the prolate case) linear integrals will also appear, in particular the global $S^1$ actions are linear in $\bm{X},\bm{Y}$. We will not use Berezin-Toeplitz quantisation, but instead employ separation of variables, so that the connection to spherical harmonics and special functions remains.

Quantising through separation of variables reduces the quantisation problem to the problem of finding polynomial solutions of a single 2nd order Fuchsian ODE. It would be interesting to study in detail the relationship between the two approaches. The one thing that we do use from the Berezin-Toeplitz approach is the fact that $\hbar$ is the inverse of an integer. This also appears naturally through the separation of  variables, but would be harder to justify without reference to the Berezin-Toeplitz approach, because separation of variables does not clearly execute the reduction to a system with compact phase space.

To construct a quantised version of the Hamiltonian we derive the diagonal metric tensor $g$ in ellipsoidal coordinates on the $3-$sphere given by (\ref{eq:ellipsoidal def}). This is an orthogonal coordinate systems, so the off-diagonal entries of the metric vanish. The diagonal entries of the metric are
\[
g_{jj}=\frac{-(s_{j}-s_{k})(s_{j}-s_{m})}{4(s_{j}-e_{1})(s_{j}-e_{2})(s_{j}-e_{3})(s_{j}-e_{4})}
\]
where $k$ and $m$ are the two unique distinct indices different from $j$. Denote the determinant of the metric by $g = \prod g_{jj}$. The inverse metric with upper indices has entries $g^{jj} = 1/g_{jj}$.
Thus the Laplace-Beltrami operator on $S^3$ in these coordinates is given by 
\[
\Delta = - \frac{1}{\sqrt{g}} \sum \partial_i ( \sqrt{g}  \, g^{jj} \partial_i)\,.
\]
When quantising with the semi-classical parameter $\hbar$ the corresponding stationary Schr\"odinger equation is given by 
\begin{align}\label{eq:Schrodinger}
-\hbar^{2}\Delta\psi & =\tilde{E}\psi \,.
\end{align}
As already remarked, it is possible to quantise in the original Euclidean coordinates and replace $\ell_{ij}$ with $x_i \partial_j - x_j \partial i$, see \cite{gurarie95,toth94}. We need separation of variables, however, and hence obtain commuting operators via the St\"ackel matrix of the system.
The classical commuting integrals \eqref{eq:EllipsoidalIntegralDef} are obtained from separation of variables in the Hamilton-Jacobi equation. 
The same separation works for the Schr\"odinger equation because the sphere has constant curvature, and hence the Robertson condition is satisfied, see, e.g. \cite{KKM18}. 

The reduction to a two degree of freedom integrable system is achieved on the quantum level by fixing the eigenvalue of the Laplace-Beltrami operator, which for $S^3$ is $E = D(D+2)$ for non-negative integer $D$.

\begin{theorem}
There are three commuting 2nd order differential operators $(\hat H, \hat \eta_1, \hat \eta_2)$ on $S^3$ with rational coefficients, whose classical limit are the integrals $(H, \eta_1, \eta_2)$ given in \eqref{eq:EllipsoidalIntegralDef}.
\end{theorem}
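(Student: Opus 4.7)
The plan is to quantise via separation of variables, exploiting the St\"ackel structure that already produced the classical integrals in \eqref{eq:EllipsoidalIntegralDef}. Because $S^{3}$ has constant curvature, the Robertson condition holds, so the Schr\"odinger equation \eqref{eq:Schrodinger} admits a product separation $\psi(\bm s) = f_{1}(s_{1}) f_{2}(s_{2}) f_{3}(s_{3})$ in the ellipsoidal coordinates. First I would substitute the given diagonal $g_{jj}$ into
\[
\Delta = -\frac{1}{\sqrt g}\sum_{j} \partial_{j}\bigl(\sqrt g \,g^{jj} \partial_{j}\bigr),
\]
and rearrange the separated equation to expose two separation constants alongside the energy eigenvalue $\tilde E$; the coefficients of $1$, and of these two constants, each give a rational, second-order operator in the single variable $s_{i}$.

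Next, I would record this separation via a St\"ackel matrix $\phi(\bm s)$ whose $i$th row depends only on $s_{i}$, exactly mirroring the construction that led classically to $(H,\eta_{1},\eta_{2})$. The three commuting operators then take the form
\[
\hat H = \sum_{i=1}^{3} (\phi^{-1})_{1i}\, \mathcal L_{i}, \qquad \hat \eta_{1} = \sum_{i=1}^{3} (\phi^{-1})_{2i}\, \mathcal L_{i}, \qquad \hat \eta_{2} = \sum_{i=1}^{3} (\phi^{-1})_{3i}\, \mathcal L_{i},
\]
where each $\mathcal L_{i}$ is a second-order differential operator in the single variable $s_{i}$ with coefficients rational in $s_{i}$ inherited from $g_{jj}$ and the polynomial $\prod_{k}(s_{i}-e_{k})$. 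The semiclassical substitution $-i\hbar \partial_{s_{i}} \to p_{s_{i}}$ turns each $\mathcal L_{i}$ into a quadratic in $p_{s_{i}}$, and the classical St\"ackel reconstruction then returns $(H, \eta_{1}, \eta_{2})$ of \eqref{eq:EllipsoidalIntegralDef}, so the classical limit is automatic.

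Commutativity then becomes an algebraic consequence of the St\"ackel identity: the commutators $[\mathcal L_{i}, \mathcal L_{j}]$ vanish trivially for $i\ne j$ because they act in disjoint variables, and the cross terms produced by applying Leibniz to the coefficients $(\phi^{-1})_{ki}$ telescope, since the $i$th column of $\phi^{-1}$ has a prescribed $s_{i}$-dependence dual to the rows of $\phi$. The main obstacle I expect is handling the first-order pieces of $\Delta$ arising from $\partial_{s_{i}} \log \sqrt g$: a naive quantisation of the classical $\sum (\phi^{-1})_{ki}\, p_{s_{i}}^{2}$ does not produce them, so one must verify that $\sqrt g$ factorises appropriately so that each $\partial_{s_{i}}\log \sqrt g$ is a function of $s_{i}$ alone and can be absorbed into $\mathcal L_{i}$ without spoiling its single-variable character. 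This is precisely where the constant-curvature hypothesis (equivalently, the Robertson condition cited after \eqref{eq:Schrodinger}) enters, and once verified, the triple $(\hat H, \hat \eta_{1}, \hat \eta_{2})$ has rational coefficients, commutes pairwise, and has the stated classical limit.
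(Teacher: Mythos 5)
Your proposal is correct and follows essentially the same route as the paper: the paper also quantises via the explicit St\"ackel matrix \eqref{eq:Stackel El S3-1}, defining $\hat\eta_{k-1}=-\frac{1}{\sqrt{g}}\sum_i\partial_i(\sqrt{g}\,r^{ki}\partial_i)$ from the rows $r^{ki}$ of its inverse, with the classical limit $\eta_{k-1}=\sum_i r^{ki}p_i^2$ immediate and commutativity resting on the St\"ackel/Robertson framework valid on the constant-curvature sphere. The only small correction is that the single-variable quantity to be checked is $f_i=\partial_{s_i}\log(\sqrt{g}\,g^{ii})$ (as in the paper's separation lemma following this theorem), not $\partial_{s_i}\log\sqrt{g}$ by itself.
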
 
\begin{proof}
 A St\"{a}ckel matrix for ellipsoidal
coordinates is given by 
\begin{equation}
\sigma_{\text{el}}=\frac{1}{4}\begin{pmatrix}-\frac{s_{1}^{2}}{A(s_{1})} & -\frac{s_{1}}{A(s_{1})} & -\frac{1}{A(s_{1})}\\
-\frac{s_{2}^{2}}{A(s_{2})} & -\frac{s_{2}}{A(s_{2})} & -\frac{1}{A(s_{2})}\\
-\frac{s_{3}^{2}}{A(s_{3})} & -\frac{s_{3}}{A(s_{3})} & -\frac{1}{A(s_{3})}
\end{pmatrix}.\label{eq:Stackel El S3-1}
\end{equation}
where $A(z)=\prod_{k=1}^{4}(z-e_{k})$. Note that the metric and St\"{a}ckel
matrix (\ref{eq:Stackel El S3-1}) are related by
\[
\frac{1}{g_{jj}}=\frac{\det(\Omega_{j})}{\det(\sigma_{\text{el}})}
\]
where $\Omega_{j}$ is the minor formed by deleting the $j^{th}$
row and first column of $\sigma_{\text{el}}$.

Denote the rows of the inverse of the St\"ackel matrix by $r^1, r^2, r^3$. By definition of the St\"ackel matrix the first row contains the diagonal entries of the metric $g^{jj} = r^{1j}$ with upper indices. 
Then the commuting operators are
\[
     \hat \eta_{k-1} = -\frac{1}{\sqrt{g}} \sum_i \partial_i ( \sqrt{g} \, r^{ki} \partial_i ), \quad k= 1,2,3.
\]
where $\hat H = \hat \eta_0$ corresponding to the first row of the inverse of the St\"ackel matrix is the Laplace-Beltrami operator. The corresponding classical integrals are $\eta_{k-1} = \sum r^{ki} p_i^2$ as defined in \eqref{eq:EllipsoidalIntegralDef} in Cartesian coordinates. Here $p_i$ are the canonical curvilinear momenta conjugate to the coordinates $s_i$.
We denote the eigenvalues of $(\hat H,\hat \eta_1,\hat \eta_2)$ by $(E,\lambda_1,\lambda_2)$. 
\end{proof}

 The system of ellipsoidal coordinates on the sphere is used in the separation of the quantised C.~Neumann system, 
 see e.g. \cite{gurarie95, GurarieBook} and also \cite{Bellon:2005ae, BELLON2005360, BELLON2006110}. Our system before reduction is the Neumann system without potential, and hence  formulas in \cite{gurarie95} can be specialised to our case. The reduction step (or fixing the eigenvalue of the Schr\"odinger operator) does not make sense for the Neumann system, because unlike the geodesic flow on $S^3$ it is not a superintegrable system.
 
It is also possible to obtain the operators $\hat \eta_i$ by canonical quantisation of the angular momentum operators expressed in ellipsoidal coordinates using
\[
   \ell_{ij} = 2 x_i(s) x_j(s)( e_i - e_j) \sum_{k=1}^3 c_{ij}(s_k) p_k, \quad c_{ij}(s_k) = \frac{\prod_{n \not = i,j} (s_k - e_n)}{\prod_{m \not = k} (s_k - s_m) }
\]
where $x_i(s)$ as given in \eqref{eq:ellipsoidal def}.
This equation is linear in momenta and can be quantised by replacing $p_k \to -i \hbar \partial_{s_k}$. In this way the classical commuting integrals expressed in terms of $\ell_{ij}$ can be directly converted into the corresponding quantum operators. However, the calculation using the St\"ackel matrix is much simpler.

The inverse of the St\"ackel matrix is used to define the commuting operators, while the St\"ackel matrix itself determines how the spectral parameters enter the separated equation(s), which is the content of
\begin{lem}
\label{SeparationLemma}
In ellipsoidal coordinates
(\ref{eq:ellipsoidal def}) on $S^{3}$, the Schrödinger equation
(\ref{eq:Schrodinger}) separates into 
\begin{equation}
\psi_{j}^{''}+\frac{1}{2}\left(\frac{1}{s_{j}-e_{1}}+\frac{1}{s_{j}-e_{2}}+\frac{1}{s_{j}-e_{3}}+\frac{1}{s_{j}-e_{4}}\right)\psi_{j}^{'}+\frac{-Es_{j}^{2}+\lambda_{1}s_{j}-\lambda_{2}}{4(s_{j}-e_{1})(s_{j}-e_{2})(s_{j}-e_{3})(s_{j}-e_{4})}\psi_{j}=0\label{eq:Gen Lame}
\end{equation}
for all $j=1,2,3$ with $E\coloneqq\frac{\tilde{E}}{\hbar^{2}}$,
$\lambda_{1}$ and $\lambda_{2}$ are spectral parameters and $\psi=\psi_{1}\psi_{2}\psi_{3}$
where $\psi_{j}=\psi(s_{j})$. 
\end{lem}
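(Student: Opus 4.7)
The plan is to substitute the product ansatz $\psi=\psi_{1}(s_{1})\psi_{2}(s_{2})\psi_{3}(s_{3})$ into the Schr\"odinger equation~\eqref{eq:Schrodinger} and exploit the St\"ackel structure encoded in $\sigma_{\text{el}}$. Setting $G_j := \sqrt{g}\,g^{jj}$, the product rule applied to the Laplace-Beltrami operator gives, after division by $\psi$,
\[
  \frac{\Delta \psi}{\psi} = \sum_{j} \frac{g^{jj}}{\psi_j}\Bigl[\psi_j'' + (\partial_j \ln G_j)\,\psi_j'\Bigr],
\]
so that~\eqref{eq:Schrodinger} reduces to $\sum_j g^{jj}\,F_j(s_j) = -E$, where $F_j$ denotes the bracketed expression divided by $\psi_j$. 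The separation then hinges on two claims: (i) $\partial_j \ln G_j$ depends only on $s_j$ and simplifies to $\tfrac{1}{2}\,A'(s_j)/A(s_j)$, matching the coefficient of $\psi_j'$ in~\eqref{eq:Gen Lame}, and (ii) the rational structure of $g^{jj}$ forces $F_j$ to take the advertised form with two universal spectral parameters.

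Step (i) is the main technical obstacle. Splitting $\ln G_j = \tfrac{1}{2}\sum_{l \ne j}\ln g_{ll} - \tfrac{1}{2}\ln g_{jj}$ and differentiating the explicit metric, for each $l \ne j$ only the numerator of $g_{ll}$ contains $s_j$, giving $\partial_j \ln g_{ll} = 1/(s_j - s_l)$; while differentiating $g_{jj}$ directly produces $\partial_j \ln g_{jj} = 1/(s_j - s_k) + 1/(s_j - s_m) - A'(s_j)/A(s_j)$. The coincidence poles $1/(s_j - s_l)$ then cancel identically between the two pieces, leaving
\[
  \partial_j \ln G_j = \frac{1}{2}\,\frac{A'(s_j)}{A(s_j)} = \frac{1}{2}\sum_{i=1}^{4}\frac{1}{s_j - e_i},
\]
which is precisely the first-derivative coefficient in~\eqref{eq:Gen Lame}. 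This cancellation is the explicit verification of the Robertson condition for $S^3$ in ellipsoidal coordinates, and is what lifts the classical St\"ackel separation of the Hamilton-Jacobi equation to the quantum Schr\"odinger level.

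For step (ii), substitute $g^{jj} = -4A(s_j)/[(s_j - s_k)(s_j - s_m)]$ and try the ansatz $F_j = \Lambda(s_j)/(4A(s_j))$ for a single universal polynomial $\Lambda$ of degree at most two. The condition $\sum_j g^{jj}\,F_j = -E$ then becomes
\[
  \sum_j \frac{\Lambda(s_j)}{(s_j - s_k)(s_j - s_m)} = E,
\]
which is exactly the Lagrange interpolation identity at three distinct nodes: the left-hand side equals the leading coefficient of $\Lambda$ whenever $\deg \Lambda \le 2$. Writing $\Lambda(s) = E s^2 - \lambda_1 s + \lambda_2$ introduces the two free separation constants $\lambda_1, \lambda_2$, and substituting $F_j$ back into the bracketed expression rearranges into~\eqref{eq:Gen Lame}. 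These $\lambda_i$ coincide with the joint eigenvalues of the commuting operators $\hat\eta_1, \hat\eta_2$ from the preceding theorem, since the analogous product-ansatz reduction of $\hat\eta_i\,\psi = \lambda_i\,\psi$ produces the two remaining linear equations equivalent to $F_j = \sum_k (\sigma_{\text{el}})_{jk}\,\lambda_{k-1}$ with $\lambda_0 = E$.
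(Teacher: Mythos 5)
Your derivation is correct and reaches exactly \eqref{eq:Gen Lame}, but it is more self-contained than the paper's argument: the paper simply defines $f_j = \partial_{s_j}\log(g^{jj}\sqrt{\det g})$, cites the general St\"ackel separation formula of Gurarie to assert that the separated equation is $\psi_j'' + f_j\psi_j' + \sum_k c_{k-1}(\sigma_{\text{el}})_{jk}\psi_j = 0$ with $(c_0,c_1,c_2)=(-E,\lambda_1,-\lambda_2)$, and states that ``computing the relevant quantities'' gives the result. You instead verify both ingredients directly: the explicit cancellation of the coincidence poles $1/(s_j-s_l)$ giving $f_j = \tfrac12 A'(s_j)/A(s_j)$, and the separation identity itself via Lagrange interpolation at the three nodes $s_1,s_2,s_3$, which is precisely the content of the relevant row of the St\"ackel matrix \eqref{eq:Stackel El S3-1} (your quadratic $\Lambda$ reproduces $\sum_k c_{k-1}(\sigma_{\text{el}})_{jk}$). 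What the paper's route buys is brevity and a uniform mechanism reused for every degenerate coordinate system later; what yours buys is an explicit proof of the key identity and of the fact that the first-derivative coefficient depends on $s_j$ alone (your Robertson-condition remark), with the two separation constants emerging as the sub-leading coefficients of $\Lambda$. Two minor caveats: your argument as written establishes sufficiency of the ansatz (which is all the lemma and the rest of the paper use), while your phrase that the structure ``forces'' $F_j$ to be of that form would need the easy converse argument (fix two of the $s_l$ and vary the third to see each $4A(s_j)F_j$ is a quadratic with leading coefficient $E$ and constant lower coefficients); and the final identification of $\lambda_1,\lambda_2$ with the eigenvalues of $\hat\eta_1,\hat\eta_2$ is asserted rather than derived, though this matches how the paper itself treats the spectral parameters.
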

\begin{proof}
Let 
\begin{align*}
f_{j}\coloneqq & \frac{\partial}{\partial s_{j}}\log(g^{jj}\sqrt{\det(g)}).
\end{align*}
Following  \cite{gurarie95} the equation (\ref{eq:Schrodinger}) separates into 
\begin{align*}
\psi_{j}^{''}+f_{j}\psi_{j}^{'}+\sum_{k=1}^{3}c_{k-1}(\sigma_{\text{el}})_{jk}\psi_{j}=0
\end{align*}
where $(c_{0},c_{1},c_{2})=(-E,\lambda_{1},-\lambda_{2})$
are spectral parameters. Computing the relevant quantities gives (\ref{eq:Gen Lame}).
\end{proof}

Equation (\ref{eq:Gen Lame}) is known as the generalised Lam\'{e} equation. This is a second order Fuchsian equation with $4$ finite regular singular points which can be writen as
\begin{align}
\frac{d^{2}w}{dz^{2}}+\left(\sum_{j=1}^{4}\frac{\gamma_{j}}{z-e_{j}}\right)\frac{dw}{dz}+\left(\sum_{j=1}^{4}\frac{q_{j}}{z-e_{j}}\right)w & =0\label{eq:ArbGenLame}
\end{align}
where $\sum_{j=1}^{4}q_{j}=0$ and the $q_{j}$ are known as the accessory
parameters. 
Often, we write the multiplicative term of \eqref{eq:ArbGenLame} over a common denominator, i.e. \begin{equation}
    \sum_{j=1}^4\frac{q_j}{z-e_j} = \frac{c_0z^2 + c_1z + c_2}{4\Pi_{j=1}^4(z-e_j)}.
\end{equation}
The exponents at the pole $e_{j}$ are
$(0,1-\gamma_{j})$ while those at infinity are $(\alpha,\beta)$ defined by
\begin{equation}
\begin{aligned}
    \alpha+\beta+1 & =\sum_{j=1}^{4}\gamma_{j}, & 
    \alpha\beta & =\sum_{j=1}^{4}e_{j}q_{j}.
\end{aligned}
\label{eq:Gen FuschianInfinity}
\end{equation}
From (\ref{eq:Gen FuschianInfinity}) it is clear
that there are two free (so-called) accessory parameters, while the more well
known Lam\'{e} equation only
has one, see \cite{DLMF} 31.14 for more detail. These two free parameters turn out to be the spectral parameters, i.e.~the eigenvalues of $\hat \eta_1$ and $\hat \eta_2$. For some other examples of occurrences of the generalised Lam\'{e} equation, see \cite{Pawellek2007,Mussardo2006, Chen2019} and the references therein.

In this paper, we seek polynomial solutions of \eqref{eq:Schrodinger} giving harmonic polynomials on $S^3$ in Cartesian coordinates. Ellipsoidal coordinates are singular along co-dimension one submani\-folds of $S^3$, and these coordinate singularities translate to the singularities of the Fuchsian ODE \eqref{eq:ArbGenLame}. Solutions to the separated equations that are analytic at all finite singularities correspond to polynomial solutions in the local coordinates $s_i$, possibly after factoring out single powers of $x_k(s)$ to achieve all possible discrete symmetries in the original coordinates. Thus polynomial solutions of the Fuchsian ODEs in local coordinates lead to polynomial solutions in the original cartesian coordinates. The quantisation conditions at the level of the Fuchsian equation \emph{are} the conditions that enforce a polynomial separated eigenfunction $\psi(s)$. In general these are conditions for each factor $\psi_j(s_j)$ of the separated eigenfunction $\psi(s) = \prod \psi_j(s_j)$, but it turns out that in the present highly symmetric case the three equations are the same equation, just evaluated on different intervals $e_j \le s_j \le e_{j+1}$, and hence a single polynomial gives three $\psi_j(s_j)$ and hence the complete eigenfunction $\psi(s)$. In cases of more degenerate coordinate systems discussed in the next section there will actually be different equations for the different coordinates.

It is known from 31.15 of \cite{DLMF} and \cite{Volkmer1999} that for every
vector $\bm{n}=(n_{1},n_{2},n_{3})$ of non negative integers, there
are uniquely determined real values of $(c_{0},c_{1},c_{2})$ such
that (\ref{eq:ArbGenLame}) has polynomial solutions with $n_{j}$ zeros in the interval $(e_{j},e_{j+1})$ for all $j=1,2,3$.
Such solutions $S_d(z)$ of degree $d$ where $d = n_1 + n_2 + n_3$ are known as Heine-Stieltjes polynomials. 

From \cite{Volkmer1999} we have the following Lemma
for the spectral parameter $c_{0}$.
\begin{lem}
\label{Lemma3}Let $S_d(z)$ be a Heine-Stieltjes polynomial solution to (\ref{eq:ArbGenLame}) of degree $d$ over the interval $[e_{1},e_{4}]$.
Then 
\begin{align}
\frac{1}{4}c_{0} & =-d(d-1+\sum_{i=1}^{4}\gamma_{i}) = \alpha\beta \label{eq:c2 expression}
\end{align}
and $(\alpha,\beta)=(-d,(-1+d+\sum\gamma_{i}))$. Further, let $K(d)$ be the space of
all such polynomials that satisfy (\ref{eq:ArbGenLame}).
Then, we have
\begin{align}
\text{dim}(K(d)) & \coloneqq\begin{pmatrix}d+2\\
2
\end{pmatrix}.\label{eq:dim d}
\end{align}
\end{lem}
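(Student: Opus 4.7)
The plan is to treat the two claims separately: a standard indicial analysis at $z=\infty$ produces the exponents and the formula for $c_0$, while Volkmer's existence/uniqueness theorem together with a stars-and-bars count gives the dimension.

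For the exponents I would read them off from the Laurent expansions of the coefficients of \eqref{eq:ArbGenLame} at infinity. Because $\sum_j q_j = 0$, the $1/z$ term of $\sum_j q_j/(z-e_j)$ vanishes, leaving
\begin{align*}
\sum_{j=1}^4 \frac{\gamma_j}{z-e_j} &= \frac{\sum_j\gamma_j}{z} + O(z^{-2}), &
\sum_{j=1}^4 \frac{q_j}{z-e_j} &= \frac{\sum_j q_j e_j}{z^2} + O(z^{-3}).
\end{align*}
Substituting $w\sim z^{-\alpha}$ produces the indicial equation $\alpha(\alpha+1) - \alpha\sum_j\gamma_j + \sum_j q_j e_j = 0$, whose Vieta relations reproduce \eqref{eq:Gen FuschianInfinity}. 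Since $S_d(z)\sim z^d$ as $z\to\infty$, one exponent is forced to be $\alpha = -d$, and $\alpha+\beta+1 = \sum_j\gamma_j$ then gives $\beta = d-1+\sum_j\gamma_j$ and $\alpha\beta = -d(d-1+\sum_i\gamma_i)$. To identify this with $c_0/4$, I would clear denominators to rewrite the multiplicative term as $\sum_j q_j\prod_{k\ne j}(z-e_k) = \tfrac14(c_0 z^2 + c_1 z + c_2)$; the $z^3$ coefficient vanishes by $\sum_j q_j = 0$, and the $z^2$ coefficient simplifies via $\sum_j q_j\bigl(\sum_{k\ne j}e_k\bigr) = (\sum_k e_k)(\sum_j q_j) - \sum_j q_j e_j = -\sum_j q_j e_j$ (using $\sum_j q_j = 0$ again) to $\sum_j q_j e_j$, yielding $c_0/4 = \sum_j q_j e_j = \alpha\beta$ as required.

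For the dimension formula I would invoke Volkmer's theorem as quoted in the paragraph preceding the lemma: for each $\bm{n} \in \mathbb{Z}_{\ge 0}^3$ there exist unique accessory parameters for which \eqref{eq:ArbGenLame} admits a polynomial solution of degree $|\bm{n}|$ with exactly $n_j$ zeros in $(e_j,e_{j+1})$. Restricting to $|\bm{n}| = d$, the first part of the lemma shows that $c_0$ is fixed by $d$ alone, while Volkmer's uniqueness implies that distinct $\bm{n}$ produce distinct Heine-Stieltjes polynomials, so $\dim K(d)$ equals the number of such multi-indices. A stars-and-bars count gives $\binom{d+3-1}{3-1} = \binom{d+2}{2}$, establishing the claim. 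The main obstacle is the bookkeeping in the indicial analysis, where two successive cancellations driven by $\sum_j q_j = 0$ (first in the Laurent expansion at infinity, then after clearing denominators) must be tracked to identify $c_0$ with $4\alpha\beta$; Part 2 is then immediate from Volkmer's theorem combined with the combinatorial identity.
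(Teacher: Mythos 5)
Your proposal is correct, and it actually supplies more than the paper does: the paper states this lemma without proof, quoting it directly from Volkmer's work (the same reference it uses for the existence/uniqueness of Heine--Stieltjes polynomials). Your first part is the standard derivation and is sound: the Laurent expansion at infinity, with the $1/z$ term of the potential killed by $\sum_j q_j=0$, gives the indicial equation whose Vieta relations are exactly \eqref{eq:Gen FuschianInfinity}; a degree-$d$ polynomial solution forces the exponent $\alpha=-d$, hence $\beta=d-1+\sum_j\gamma_j$; and clearing denominators with the second use of $\sum_j q_j=0$ identifies $\tfrac14 c_0=\sum_j q_je_j=\alpha\beta$, with the sign bookkeeping handled correctly. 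Your second part leans on the same Volkmer theorem the paper quotes, plus stars and bars, which is precisely how the count $\binom{d+2}{2}$ arises. The only point worth tightening is the surjectivity side of your classification: you argue that distinct multi-indices $\bm{n}$ give distinct polynomials (injectivity), but to conclude that $\dim K(d)$ \emph{equals} the number of multi-indices with $|\bm{n}|=d$ you also need that every degree-$d$ Heine--Stieltjes solution has all its zeros simple and lying in the open intervals $(e_j,e_{j+1})$, so that it is counted by some $\bm{n}$. This is part of the classical Heine--Stieltjes/Volkmer theory under the standing hypotheses $\gamma_j>0$ and $e_1<e_2<e_3<e_4$, so it is a citation-level remark rather than a gap, but it should be stated since you are otherwise presenting a self-contained argument.
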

Note that $\alpha$ is a negative integer in order to obtain polynomial solutions. We also have from \cite{Volkmer1999} the following
Lemma relating polynomial solutions of the Schrödinger equation in
Cartesian coordinates to products of Heine-Stieltjes polynomials.
\begin{prop}
The product $S_d(s_1)S_d(s_2)S_d(s_3)$, where  $S_d(s)$ is a Heine-Stieltjes polynomial solution to \eqref{eq:Gen Lame} of degree $d$, when expressed in the original Cartesian coordinates using (\ref{eq:ellipsoidal def}), is a homogeneous polynomial of degree $\tilde{D}\coloneqq2d$ in the variables $x_{i}$ given by 
\begin{align}
\Phi_{\tilde{D}}(\bm{x})\coloneqq\left(\prod_{i=1}^{4}\prod_{k=1}^{d}(z_{k}-e_{i})\right)\prod_{k=1}^{d}\sum_{i=1}^{4}\frac{x_{i}^{2}}{z_{k}-e_{i}}\,,\label{eq:PsidEllipsoidal}
\end{align}
 where $z_{k}$ are the roots of $S_d(z)$.
\end{prop}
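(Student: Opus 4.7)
The plan is to exploit the rational-function factorisation of $T(s) = \sum_{i=1}^{4} x_i^2/(s-e_i)$ that is built into the definition of ellipsoidal coordinates. First I would multiply $T$ through by $\prod_{i=1}^{4}(s-e_i)$ to obtain $P(s) := \sum_i x_i^2 \prod_{j \neq i}(s-e_j)$, a polynomial in $s$ of degree at most three. Expanding in powers of $s$ shows that its leading coefficient is $\sum_i x_i^2 = 1$ by virtue of the constraint $\bm{x}\in S^{3}$, so $P$ is monic cubic. Since $T(s_j)=0$ for $j=1,2,3$ by definition of the ellipsoidal coordinates, $P(s) = \prod_{j=1}^{3}(s-s_j)$ and hence the key identity
\begin{equation*}
\sum_{i=1}^{4} \frac{x_i^2}{s-e_i} \;=\; \frac{\prod_{j=1}^{3}(s-s_j)}{\prod_{i=1}^{4}(s-e_i)}.
\end{equation*}

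Next I would specialise to $s=z_k$ for each root of $S_d$ and multiply over $k=1,\ldots,d$. The denominator $\prod_i(z_k-e_i)$ appearing in each factor is precisely what the prefactor $\prod_{i,k}(z_k-e_i)$ in \eqref{eq:PsidEllipsoidal} cancels, leaving
\begin{equation*}
\Phi_{\tilde{D}}(\bm{x}) \;=\; \prod_{j=1}^{3} \prod_{k=1}^{d} (z_k - s_j).
\end{equation*}
Writing $S_d$ in monic form as $S_d(z) = \prod_k(z - z_k)$ yields $\prod_k(z_k - s_j) = (-1)^d S_d(s_j)$, so $\Phi_{\tilde D}(\bm{x}) = (-1)^{3d}\prod_j S_d(s_j)$; the sign $(-1)^d$ is absorbable into the normalisation convention on $S_d$, giving the claimed identity.

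Homogeneity and polynomiality in $\bm{x}$ are then immediate from the right-hand side of \eqref{eq:PsidEllipsoidal}: each factor $\sum_i x_i^2/(z_k-e_i)$ is a polynomial of degree two in the $x_i$ (the $z_k$ and $e_i$ being constants), so the product over $k=1,\ldots,d$ is a homogeneous polynomial of total degree $2d = \tilde D$. The only point that requires care is the clean cancellation in the second step; this hinges entirely on the monic-cubic identification of $P(s)$, which in turn uses the $S^{3}$ constraint $\sum x_i^2 = 1$. After that the remainder of the argument is pure bookkeeping in the double product.
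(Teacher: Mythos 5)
Your proposal is correct, and it is essentially the argument the paper relies on: the paper cites \cite{Volkmer1999} for this Proposition and proves its prolate and Lam\'{e} analogues by exactly your device, namely identifying $\prod_j(\theta-s_j)$ with the monic polynomial obtained by clearing denominators in $T(\theta)$ (using $\sum_i x_i^2=1$) and then taking the product over the roots $z_k$. The only cosmetic difference is the sign bookkeeping, which the paper fixes by the convention $S_d(z)=\prod_k(z_k-z)$ rather than absorbing it into the normalisation as you do.
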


In Cartesian coordinates there are 16 discrete symmetry classes of eigenfunctions 
corres\-pon\-ding to parity
about each of the $x_{i}$ axes. Let $\bm{\mu} = (\mu_1, \mu_2,\mu_3,\mu_4)$ with $\mu_i\in \{0,1\}$ where $\mu_i = 1$ denotes a wavefunction odd about the respective axis and even otherwise. 
By \eqref{eq:PsidEllipsoidal} the wave function $\Phi_{\tilde{D}}(\bm{x})$ is quadratic in all $x_i$, and hence cannot give any odd symmetry. In order to find these other symmetries, consider the change of dependent variable in \eqref{eq:Gen Lame} given by 
\begin{equation}
     \phi_j = \prod_{i}(s_j-e_i)^{\mu_i/2}\psi_{j}.
\end{equation}
This corresponds to multiplication by $x_1^{\mu_1}x_2^{\mu_2}x_3^{\mu_3}x_4^{\mu_4}$ in Cartesian coordinates. This yields the transformed generalised Lam\'{e} equation 
\begin{equation}
\phi_j^{''}+\left(\frac{\tilde{\gamma}_{1}}{z-e_{1}}+\frac{\tilde{\gamma}_{2}}{z-e_{2}}+\frac{\tilde{\gamma}_{3}}{z-e_{3}}+\frac{\tilde{\gamma}_{4}}{z-e_{4}}\right)\phi_j^{'}+\frac{(u_{0} - E)z^{2}+(u_{1} + \lambda_1)z+(u_{2}-\lambda_2)}{4(z-e_{1})(z-e_{2})(z-e_{3})(z-e_{4})}\phi_j=0\label{eq:translame}
\end{equation}
where
\begin{align}
\tilde{\gamma}_{m} & =\begin{cases}
\frac{3}{2} & \text{if \ensuremath{\mu_{m}=1}}\\
\frac{1}{2} & \text{if \ensuremath{\mu_{m}=0}}
\end{cases}\label{eq:gammarule}
\end{align}
and the $u_{j}$ are shown in Table \ref{tab: CoeffTable} for a given symmetry class $\bm{\mu}$. 
Note that in the table we adopted the notation where $(\mu_i)$ represents symmetry classes where only one $\mu_i$ is $1$ (i.e. $(1,0,0,0), (0,1,0,0)$ etc). Similarly, $(\mu_i,\mu_j)$ represents all classes which are odd about two axes and so forth for $(\mu_i,\mu_j,\mu_k)$. 

Denote by $S_d^{\bm{\mu}}(z)$ solutions to \eqref{eq:translame} for the symmetry class $\bm{\mu}$. Repeating the same reasoning used to obtain \eqref{eq:PsidEllipsoidal}, we let the product $S_d^{\bm{\mu}}(s_1)S_d^{\bm{\mu}}(s_2)S_d^{\bm{\mu}}(s_3)$, when converted to Cartesian coordinates, be given by $\Phi_{\tilde{D}}^{\bm{\mu}}(\bm{x})$.
Finally, we set 
\begin{equation}
    \Psi_{D}^{\bm{\mu}}(\bm{x}) \coloneqq x_1^{\mu_1}x_2^{\mu_2}x_3^{\mu_3}x_4^{\mu_4}\Phi_{\tilde{D}}^{\bm{\mu}}(\bm{x})\label{eq:phi solution}
\end{equation}
where $D = \tilde{D} + \sum_{i=1}^4 \mu_i$. Note that $\Phi_{\tilde{D}}(\bm{x})$ in \eqref{eq:PsidEllipsoidal} is the special case of \eqref{eq:phi solution} with $\bm{\mu} = (0,0,0,0)$.
\begin{table}[H]
\begin{centering}
\textcolor{black}{}%
\begin{tabular}{|c|c|c|c|}
\hline 
\textcolor{black}{Symmetry} & \textcolor{black}{$u_{2}$ } & 
\textcolor{black}{$u_{1}$} & \textcolor{black}{$u_{0}$}\tabularnewline
\hline 
\textcolor{black}{$(0,0,0,0)$} & \textcolor{black}{$0$} & \textcolor{black}{$0$} & \textcolor{black}{$0$}\tabularnewline
\hline 
\textcolor{black}{$(\mu_i)$} & \textcolor{black}{$e_{j}e_{k}+e_{j}e_{m}+e_{k}e_{m}$} & \textcolor{black}{$-2(e_{j}+e_{k}+e_{m})$} & \textcolor{black}{$3$}\tabularnewline
\hline
\textcolor{black}{$(\mu_i,\mu_j)$} & \textcolor{black}{$e_{i}e_{k}+e_{i}e_{m}+e_{j}e_{k}+e_{j}e_{m}+4e_{k}e_{m}$} & \textcolor{black}{$-2(e_{i}+e_{j})-6(e_{k}+e_{m})$} & \textcolor{black}{$8$}\tabularnewline
\hline 
\textcolor{black}{$(\mu_i,\mu_j, \mu_k)$} & \textcolor{black}{$e_{i}e_{j}+e_{i}e_{k}+e_{j}e_{k}+4(e_{i}e_{m}+e_{j}e_{m}+e_{k}e_{m})$} & \textcolor{black}{$-6(e_{i}+e_{j}+e_{k}+2e_{m})$} & \textcolor{black}{$15$}\tabularnewline
\hline 
\textcolor{black}{$(1,1,1,1)$} & \textcolor{black}{$4(e_{1}e_{2}+e_{1}e_{3}+e_{2}e_{3}+e_{1}e_{4}+e_{2}e_{4}+e_{3}e_{4})$} & \textcolor{black}{$-12(e_{1}+e_{2}+e_{3}+e_{4})$} & \textcolor{black}{$24$}\tabularnewline
\hline 
\end{tabular}
\par\end{centering}
\caption{Coefficients $u_{j}$ in (\ref{eq:translame}) for each symmetry class.
\label{tab: CoeffTable}}
\end{table}
\begin{theorem}
     The function $\Psi_{D}^{\bm{\mu}}(\bm{x})$ given by \eqref{eq:phi solution} is an eigenfunction of the Schrödinger operator \eqref{eq:Schrodinger}. The
energy  eigenvalue $E$  is given by 
\begin{align}
E & =D(D+2)\label{eq:Ejj2}
\end{align}
where $D= 2d+\sum_{i=1}^4 \mu_i$ is the degree of the harmonic polynomial $\Psi_D$.
\end{theorem}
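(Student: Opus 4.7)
The plan is to recognise $\Psi_D^{\bm\mu}$ as a separated eigenfunction of the Schrödinger operator, built from a Heine--Stieltjes polynomial solution of the transformed Fuchsian equation \eqref{eq:translame}, and then to extract the eigenvalue using Lemma \ref{Lemma3} together with the explicit values of $u_0$ recorded in Table \ref{tab: CoeffTable}.

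First I would identify $\Psi_D^{\bm\mu}$ as a separated solution in ellipsoidal coordinates. Combining \eqref{eq:phi solution} with \eqref{eq:ellipsoidal def} shows that $\prod_i x_i^{\mu_i}$ equals a constant multiple of $\prod_j \prod_i (s_j-e_i)^{\mu_i/2}$, so
\[
   \Psi_D^{\bm\mu}(\bm{x}) \;=\; c\prod_{j=1}^{3} \psi_j(s_j), \qquad \psi_j(s_j) \;=\; \prod_i (s_j-e_i)^{\mu_i/2}\, S_d^{\bm\mu}(s_j).
\]
The factor $\prod_i(s_j-e_i)^{\mu_i/2}$ is precisely the change of dependent variable relating \eqref{eq:Gen Lame} and \eqref{eq:translame}, so each $\psi_j$ solves \eqref{eq:Gen Lame} because $S_d^{\bm\mu}$ solves \eqref{eq:translame} by construction. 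Lemma \ref{SeparationLemma} then implies that $\prod_j \psi_j$, and hence $\Psi_D^{\bm\mu}$, is a solution of \eqref{eq:Schrodinger}. The subtle point here is that the half-integer $s_j$-powers in the individual $\psi_j$ must combine through \eqref{eq:ellipsoidal def} into the integer Cartesian monomial $\prod x_i^{\mu_i}$, while the remaining factor $\prod_j S_d^{\bm\mu}(s_j)$ must be symmetric in $(s_1,s_2,s_3)$ and hence descend to a polynomial in the $x_i^2$; together these ensure that $\Psi_D^{\bm\mu}$ is a single-valued polynomial of degree $D = 2d + M$ with $M := \sum_i \mu_i$.

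To identify the eigenvalue I would view \eqref{eq:translame} as an instance of \eqref{eq:ArbGenLame} with $\gamma_i = \tilde\gamma_i$ and leading accessory coefficient $c_0 = u_0 - E$, then apply Lemma \ref{Lemma3} to the degree-$d$ Heine--Stieltjes polynomial $S_d^{\bm\mu}$. Using $\tilde\gamma_i = \tfrac12 + \mu_i$ from \eqref{eq:gammarule} gives $\sum_i \tilde\gamma_i = 2+M$, and the lemma produces $E = u_0 + 4d(d+1+M)$. A direct inspection of Table \ref{tab: CoeffTable} reveals the pattern $u_0 = M(M+2)$ in every symmetry class, matching the tabulated values $0,3,8,15,24$ for $M = 0,1,2,3,4$. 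Substituting, a short expansion factors $M(M+2) + 4d(d+1+M)$ as $(2d+M)(2d+M+2) = D(D+2)$, completing the proof. The algebraic rearrangement is elementary, so the only genuine obstacle is the polynomiality/single-valuedness check in the first step; once that is in place, the remainder is bookkeeping with Lemma \ref{Lemma3} and the table.
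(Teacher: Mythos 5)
Your proposal is correct, and its second half coincides with the paper's own argument: the paper likewise reads off the eigenvalue by combining \eqref{eq:c2 expression} with the observation that the constant shift in Table \ref{tab: CoeffTable} equals $U(U+2)$ for $U=\sum_i\mu_i$, obtaining $E=U(U+2)+4d(d+1+U)=D(D+2)$; your computation via $c_0=u_0-E$ and $\sum_i\tilde{\gamma}_i=2+U$ is exactly this step (the paper's text says ``$u_2$'' but, as your table inspection shows, the relevant column is $u_0$). Where you genuinely diverge is in verifying the eigenfunction property. The paper works in Cartesian coordinates: it expands $\Delta\bigl(x_1^{\mu_1}\cdots x_4^{\mu_4}\,\Phi_{\tilde{D}}^{\bm{\mu}}\bigr)$ via the identity \eqref{eq: VolkmerEqn} and cites Volkmer for the fact that $\Phi_{\tilde{D}}^{\bm{\mu}}$, being assembled from solutions of \eqref{eq:translame} with the Table \ref{tab: CoeffTable} coefficients, annihilates the resulting shifted operator, so fractional powers of $(s_j-e_i)$ never enter. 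You instead stay in the separated variables: each factor $\prod_i(s_j-e_i)^{\mu_i/2}S_d^{\bm{\mu}}(s_j)$ solves \eqref{eq:Gen Lame} because $S_d^{\bm{\mu}}$ solves \eqref{eq:translame}, and Lemma \ref{SeparationLemma} upgrades the triple product to a solution of \eqref{eq:Schrodinger}. This route is legitimate and arguably more self-contained (it uses only the substitution behind \eqref{eq:gammarule} and the separation lemma, not an external Laplacian identity), but it carries the burden you yourself flag: one must check through \eqref{eq:ellipsoidal def} that $\prod_j\prod_i(s_j-e_i)^{\mu_i/2}$ is a constant multiple of $\prod_i x_i^{\mu_i}$, keeping track of the signs of $s_j-e_i$ on the intervals $e_j\le s_j\le e_{j+1}$, so that the assembled product is single-valued and polynomial in the $x_i$; the paper's Cartesian identity bypasses this entirely. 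Incidentally, your reading of the substitution (separated factor equals $\prod_i(s_j-e_i)^{\mu_i/2}$ times the solution of \eqref{eq:translame}) is the one consistent with the exponents \eqref{eq:gammarule}, whereas the paper's displayed definition of $\phi_j$ has the roles inverted, so your version is the right one to carry through the argument.
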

\begin{proof}
    We have 
    \begin{equation}
    \Delta(\Psi_{D}^{\bm{\mu}}(\bm{x})) = \Delta(\Phi_{\tilde{D}}^{\bm{\mu}}(\bm{x})) + \sum_{j=1}^{4}\frac{2\mu_j}{x_j}\frac{\Phi_{\tilde{D}}^{\bm{\mu}}(\bm{x})}{\partial x_j}.\label{eq: VolkmerEqn}
    \end{equation}
    From \cite{Volkmer1999}, we know that since $\Phi_{\tilde{D}}^{\bm{\mu}}(\bm{x})$ satisfy \eqref{eq:translame} for the appropriate parameter choices in Table \ref{tab: CoeffTable}, they also satisfy the right hand side of \eqref{eq: VolkmerEqn}. Hence, $\Psi_{D}^{\bm{\mu}}(\bm{x})$ are solutions of \eqref{eq:Schrodinger}. To compute the energy for a given symmetry class $\bm{\mu}$, we note that the shift $u_2$ in Table \ref{tab: CoeffTable} is simply $U(U+2)$ where $U=\sum_{i=1}^4\mu_i$. Combining this with (\ref{eq:c2 expression}) and using $E = -c_0$ gives 
    \begin{equation}
        E = U(U+2) + 4d(d+1 + U)
    \end{equation}
    which simplifies to $E=D(D+2)$ for each class. 
\end{proof}

From (\ref{eq:phi solution}) we observe the following
\begin{prop}
For polynomial solutions to (\ref{eq:Schrodinger})
of fixed degree $D$ corresponding to energy $E=D(D+2)$, only 8 of the
16 total discrete symmetry classes can be present. If $E$ is even,
then the $(0,0,0,0)$, $(1,1,1,1)$ and all symmetry classes even
about 2 of the $x_{i}$ axes are present. Similarly, if $E$ is odd,
then the remaining 8 symmetry classes odd about and odd number of $x_i$ are present. 
\end{prop}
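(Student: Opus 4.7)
The plan is to reduce this to a parity argument using the relation $D = 2d + \sum_i \mu_i$ supplied by the preceding theorem. First I would set $U \coloneqq \sum_{i=1}^{4}\mu_i$ and observe that, by that theorem, any polynomial eigenfunction $\Psi_D^{\bm\mu}$ belonging to a symmetry class $\bm\mu$ has degree $D = 2d + U$ for some nonnegative integer $d$. Consequently $D$ and $U$ have the same parity, i.e.\ $U \equiv D \pmod 2$.

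Next I would translate this into a statement about $E$. Since $E = D(D+2)$ and $D,\,D+2$ have the same parity, $E$ is even precisely when $D$ is even, and $E$ is odd precisely when $D$ is odd. Combining with the parity match of $D$ and $U$, a symmetry class $\bm\mu$ can contribute to energy $E$ only if the number of $\mu_i$ equal to $1$ has the same parity as $E$.

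Now I would count. The symmetry classes with $U$ even are those with $U\in\{0,2,4\}$, giving $\binom{4}{0}+\binom{4}{2}+\binom{4}{4}=1+6+1=8$ classes; explicitly, these are $(0,0,0,0)$, the six classes with exactly two indices odd (i.e.\ even about exactly two of the $x_i$-axes), and $(1,1,1,1)$. The symmetry classes with $U$ odd are those with $U\in\{1,3\}$, giving $\binom{4}{1}+\binom{4}{3}=4+4=8$ classes. This accounts for all $16$ classes split into two blocks of $8$.

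The only remaining (and quite routine) point is existence: for every admissible $\bm\mu$ (one whose parity matches that of $D$ and with $U\le D$) one must exhibit an actual eigenfunction. This is supplied by Lemma~\ref{Lemma3}: for the transformed equation \eqref{eq:translame} associated with $\bm\mu$, one sets $d=(D-U)/2\ge 0$ and appeals to the Heine--Stieltjes existence result, which produces a polynomial $S_d^{\bm\mu}$ and hence, via \eqref{eq:phi solution}, a genuine eigenfunction $\Psi_D^{\bm\mu}$ with energy $D(D+2)$. No step here is a real obstacle; the only mild care needed is to recognise that the energy constraint $E=D(D+2)$ forces the parity of $U$, which is the content of the whole proposition.
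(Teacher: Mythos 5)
Your argument is correct and is essentially the paper's own reasoning: the proposition is stated there as an observation following from \eqref{eq:phi solution}, namely that $D=2d+\sum_i\mu_i$ forces $\sum_i\mu_i\equiv D\equiv E\pmod 2$, which is exactly your parity argument, with the $8+8$ count and the Heine--Stieltjes existence step (your appeal to Lemma~\ref{Lemma3}, subject to $U\le D$) making explicit what the paper leaves implicit.
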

We refer to solutions which are even about an even
(odd) number of axes (with even (odd) energies) as ``even'' (``odd'')
solutions. In particular, the even symmetry classes are  
\begin{equation}
    \mathcal{S}_{E} = \{(0,0,0,0),(1,1,0,0),(1,0,1,0),(1,0,0,1),(0,1,1,0),(0,1,0,1),(0,0,1,1),(1,1,1,1)\} \label{eq:SE_Defn}
\end{equation}
and the odd symmetry classes are 
\begin{equation}
    \mathcal{S}_{O} = \{(0,0,0,1),(0,0,1,0),(0,1,0,0),(1,0,0,0),(1,1,1,0),(1,1,0,1),(1,0,1,1),(0,1,1,1)\}. \label{eq:SO_Defn}
\end{equation}


Using the above results, we also have the following Lemma, which shows that all eigenfunctions are obtained in this way.
\begin{lem}
The total number of eigenstates $N$ for a polynomial
solution to (\ref{eq:Schrodinger}) of degree $D$ is given by 
\begin{equation}
N=(D+1)^{2}\label{eq:Nstates}
\end{equation}

\end{lem}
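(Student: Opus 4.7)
The plan is to count eigenstates by summing contributions from all allowed symmetry classes, then verify the arithmetic matches $(D+1)^2$.

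First I would set up the count. For a fixed degree $D$, the proposition preceding the lemma restricts the contributing symmetry classes $\bm{\mu}\in\{0,1\}^4$ to those with $U:=\sum_{i=1}^4 \mu_i$ having the same parity as $D$. For each such class, the relation $D=\tilde{D}+U=2d+U$ determines $d=(D-U)/2$ uniquely. By Lemma~\ref{Lemma3} applied to the transformed equation \eqref{eq:translame} (whose coefficient structure is identical to \eqref{eq:ArbGenLame} with shifted $\tilde{\gamma}_i$ given in \eqref{eq:gammarule}), the space $K(d)$ of Heine-Stieltjes polynomials of degree $d$ in that symmetry class has dimension $\binom{d+2}{2}$. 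Since there are $\binom{4}{U}$ symmetry classes with $\sum\mu_i=U$, and different symmetry classes produce linearly independent eigenfunctions (they transform differently under the sign-flip reflections $x_i\mapsto -x_i$), the total count is
\begin{equation*}
    N \;=\; \sum_{\substack{0\le U\le 4 \\ U\equiv D\;(\mathrm{mod}\,2)}} \binom{4}{U}\binom{\tfrac{D-U}{2}+2}{2},
\end{equation*}
with the convention that $\binom{m+2}{2}=0$ whenever $m<0$ (i.e.\ when $U>D$).

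Next I would evaluate this sum by splitting into the even and odd cases. For $D=2m$, the only admissible $U$ values are $0,2,4$, giving
\begin{equation*}
    N \;=\; \binom{m+2}{2}+6\binom{m+1}{2}+\binom{m}{2} \;=\; \tfrac{1}{2}\bigl(8m^2+8m+2\bigr) \;=\; (2m+1)^2.
\end{equation*}
For $D=2m+1$, the admissible values are $U=1,3$, giving
\begin{equation*}
    N \;=\; 4\binom{m+2}{2}+4\binom{m+1}{2} \;=\; 2(m+1)\bigl((m+2)+m\bigr) \;=\; (2m+2)^2.
\end{equation*}
In both cases $N=(D+1)^2$, establishing \eqref{eq:Nstates}.

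An elegant alternative (which I would at least mention) is to invoke the fact that the space of harmonic polynomials of degree $D$ on $\mathbb{R}^4$ has dimension $(D+1)^2$, a standard result computed as $\binom{D+3}{3}-\binom{D+1}{3}=(D+1)^2$. Since each $\Psi_D^{\bm{\mu}}(\bm{x})$ from \eqref{eq:phi solution} is a harmonic polynomial of degree $D$ (by the preceding theorem), and the combinatorial identity above confirms we have produced exactly $(D+1)^2$ linearly independent such polynomials, the construction is exhaustive. The main step requiring care is the linear independence claim; this follows because the joint eigenspaces of the reflection operators $x_i\mapsto -x_i$ are orthogonal, and within each symmetry class the Heine-Stieltjes polynomials for distinct triples $\bm{n}=(n_1,n_2,n_3)$ correspond to distinct eigenvalues of the commuting operators $(\hat\eta_1,\hat\eta_2)$ and are therefore linearly independent. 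The only real obstacle is bookkeeping the parity constraint correctly; once that is done the sum telescopes cleanly.
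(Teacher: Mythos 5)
Your proposal is correct and follows essentially the same route as the paper: for each parity-allowed symmetry class with $U=\sum\mu_i$, you take $d=(D-U)/2$ and the Heine--Stieltjes dimension $\binom{d+2}{2}$ from Lemma \ref{Lemma3}, weight by the number $\binom{4}{U}$ of classes, and sum, exactly as in the paper's proof (which treats the even case and invokes symmetry for the odd case, with Table \ref{tab:NumStatesEllipsoidal} covering both). Your added remarks on linear independence and the comparison with the dimension $(D+1)^2$ of degree-$D$ harmonic polynomials on $\mathbb{R}^4$ are consistent with the paper's surrounding discussion but are not a different method.
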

\begin{proof}
Without loss of generality, we prove this for $D$ even (i.e. ``even'' solutions). From (\ref{eq:dim d}),
we know that the total number of eigenstates for the $(0,0,0,0)$
symmetry class is $\begin{pmatrix}d+2\\
2
\end{pmatrix} = \begin{pmatrix}\frac{D+4}{2}\\
2
\end{pmatrix}$. For a fixed $D$, we similarly have the total number of states for an element of the $(\mu_i,\mu_j)$ class as $\begin{pmatrix}\frac{D+2}{2}\\
2\end{pmatrix}$ and for the $(1,1,1,1)$ class we obtain $\begin{pmatrix}\frac{D}{2}\\
2\end{pmatrix}$.
The result follows since
\[
N=\begin{pmatrix}\frac{D+4}{2}\\
2
\end{pmatrix}+6\begin{pmatrix}\frac{D+2}{2}\\
2
\end{pmatrix}+\begin{pmatrix}\frac{D}{2}\\
2
\end{pmatrix}=(D+1)^{2}.
\]
In Table \ref{tab:NumStatesEllipsoidal} we show the number of eigenstates per symmetry class and energy parity.
\end{proof}

\begin{table}[H]
    \centering
    \begin{tabular}{|c|c|c|}
    \hline
        Symmetry & Number of States & Energy Parity\\
    \hline
        $(0,0,0,0)$ & $\begin{pmatrix}(D+4)/{2}\\2\end{pmatrix}$ & Even\\
        \hline 
        $(\mu_i)$ & $\begin{pmatrix}(D+3)/{2}\\2\end{pmatrix}$ & Odd \\
        \hline 
         $(\mu_i, \mu_j)$& $\begin{pmatrix}(D+2)/{2}\\2\end{pmatrix}$ & Even \\
         \hline 
        $(\mu_i,\mu_j,\mu_k)$ & $\begin{pmatrix}(D+1)/{2}\\2\end{pmatrix}$ & Odd\\
        \hline 
        $(1,1,1,1)$  & $\begin{pmatrix}D/{2}\\2\end{pmatrix}$ & Even \\
    \hline
    \end{tabular}
    \caption{Number of states for each symmetry class and energy parity.}
    \label{tab:NumStatesEllipsoidal}
\end{table}
Recall that the basis of our study is separation
of variables on the $3-$sphere. It is no surprise then that the total
energy (\ref{eq:Ejj2}) and number of eigenstates (\ref{eq:Nstates})
precisely coincide with standard results for spherical harmonics on
the sphere, see, e.g.~\cite{Axler01}.
To compute the joint spectrum of $(\lambda_{1},\lambda_{2})$
in (\ref{eq:Gen Lame}) for all 16 discrete symmetry classes, we employ
the following Lemma from \cite{Al-Rashed1985}.
\begin{lem}
Let $S_{d}$ be a Heine-Stieltjes polynomial of degree
$d$ and denote its (real) roots by $z_{1},\dots,z_{d}$. If, in  \eqref{eq:ArbGenLame}, the $\gamma_j> 0$ and $e_j \in \mathbb{R}$ with $e_j < e_{j+1}$ for all $j$, then every root is a solution to the system of equations
\begin{align}
\sum_{j=1}^{4}\frac{\gamma_{j}/2}{z_{k}-e_{j}}+\sum_{j=1,j\ne k}^{d}\frac{1}{z_{k}-z_{j}} & =0, \quad k = 1, \dots, d\,.
\label{eq:NumericalSpecEqns}
\end{align}
For each solution $z_1,\dots, z_d$ the accessory parameters $q_{j}$ are given
by 
\begin{align*}
q_{j} & =\gamma_{j}\sum_{k=1}^{d}\frac{1}{z_{k}-e_{j}}, \quad j = 1,2,3,4\,,
\end{align*}
and in terms of these the spectral parameters $(\lambda_{1},\lambda_{2})$
are 
\begin{subequations}
\begin{align}
\lambda_{1} & = - 4 \sum e_ie_j(q_k + q_m) \label{eq:lambdaEllipsoidal1}\\
\lambda_{2} & = 4 \sum e_i e_j e_k q_m \label{eq:lambdaEllipsoidal2}
\end{align}
\end{subequations}
where the sums are over all $i,j,k,m$
that are pairwise distinct.
\end{lem}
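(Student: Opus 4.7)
The plan is to substitute $S_d(z)=\prod_{k=1}^d(z-z_k)$ into the ODE \eqref{eq:ArbGenLame} and read off all three claims from the resulting rational identity by examining its singular structure. Using the factorisation $S_d(z)=(z-z_k)h(z)$ with $h(z)=\prod_{l\neq k}(z-z_l)$ one has $S_d'(z_k)=h(z_k)$ and $S_d''(z_k)=2h'(z_k)$, hence
\[
\frac{S_d''(z_k)}{S_d'(z_k)} \;=\; 2\sum_{l\neq k}\frac{1}{z_k-z_l}.
\]
Evaluating \eqref{eq:ArbGenLame} at $z=z_k$ annihilates the $qw$ term (since $S_d(z_k)=0$) and, after dividing through by $2S_d'(z_k)$, produces precisely the electrostatic equilibrium equations \eqref{eq:NumericalSpecEqns}.

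For the accessory-parameter formula I would rewrite the ODE as
\[
\sum_{j=1}^4 \frac{q_j}{z-e_j} \;=\; -\frac{S_d''(z)}{S_d(z)} - \bigg(\sum_{j=1}^4 \frac{\gamma_j}{z-e_j}\bigg)\frac{S_d'(z)}{S_d(z)}
\]
and take residues of the right-hand side. Its \emph{a priori} simple poles at each root $z_k$ have residue $-2\sum_{l\neq k}(z_k-z_l)^{-1}-\sum_j \gamma_j/(z_k-e_j)$, which vanishes by the equilibrium equations just derived; thus the only surviving singularities are at $z=e_j$. Since $-S_d''/S_d$ is holomorphic at $e_j$, only the $\gamma_j/(z-e_j)$ piece of the second term contributes, yielding $q_j=-\gamma_j S_d'(e_j)/S_d(e_j) = \gamma_j\sum_{k=1}^d 1/(z_k-e_j)$.

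For the spectral parameters I would equate the two forms of the coefficient of $w$ in \eqref{eq:Gen Lame}--\eqref{eq:ArbGenLame}, namely $\sum_j q_j/(z-e_j)=(-Ez^2+\lambda_1 z-\lambda_2)/(4\prod_j(z-e_j))$, and clear the denominator to obtain
\[
-Ez^2+\lambda_1 z-\lambda_2 \;=\; 4\sum_{j=1}^4 q_j\prod_{k\neq j}(z-e_k).
\]
This is Lagrange interpolation at the nodes $e_j$; expanding each cubic factor in the elementary symmetric polynomials of $\{e_k:k\neq j\}$ and matching coefficients of $z^3,z^2,z^1,z^0$ recovers the constraint $\sum q_j=0$, the energy identity, and linear expressions for $\lambda_1$ and $\lambda_2$ in the $q_j$. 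Regrouping these double sums by elementary symmetric functions of the $e_i$'s and using $\sum q_j=0$ to swap which complementary pair of $q$'s appears delivers \eqref{eq:lambdaEllipsoidal1}--\eqref{eq:lambdaEllipsoidal2}.

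The substantive step is the pole-cancellation argument in paragraph two: one must recognise that the electrostatic equilibrium equations are \emph{exactly} the vanishing-residue conditions that collapse the rewritten ODE to a sum of simple poles at the fixed singularities $e_j$. Once this is in hand, the accessory-parameter formula and the partial-fraction identities for the spectral parameters follow routinely by taking residues and by Lagrange interpolation, respectively.
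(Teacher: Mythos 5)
The paper itself gives no proof of this Lemma (it is quoted from the cited works of Al-Rashed and Volkmer), so there is no in-paper argument to compare with; your route is the classical Stieltjes electrostatic derivation used in those references, and its first two steps are essentially correct. Two small remarks: you implicitly use that the roots $z_k$ are simple and distinct from the $e_j$ (distinctness is part of the Heine--Stieltjes setup since the $z_k$ lie in the open intervals $(e_j,e_{j+1})$; simplicity follows because $S_d(z_k)=S_d'(z_k)=0$ at a regular point of \eqref{eq:ArbGenLame} would force $S_d\equiv 0$), and this should be said. Also, for the direction actually asserted (given that $S_d$ solves \eqref{eq:ArbGenLame}, express $q_j$ through the roots) the pole-cancellation at the $z_k$ is not needed: multiplying the ODE by $(z-e_j)$ and letting $z\to e_j$ gives $\gamma_j S_d'(e_j)+q_j S_d(e_j)=0$ at once. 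Your residue argument in fact proves the converse statement as well, which is a bonus rather than a flaw.

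The genuine problem is in your last step, which you leave as an assertion. Carrying out the coefficient matching in $-Ez^2+\lambda_1 z-\lambda_2=4\sum_j q_j\prod_{k\neq j}(z-e_k)$ gives $\sum_j q_j=0$, $E=-4\sum_j e_jq_j$, $\lambda_2=4\sum e_ie_je_kq_m$ (in agreement with \eqref{eq:lambdaEllipsoidal2}), and $\lambda_1=4\sum_j q_j\sigma_2^{(j)}=4\sum e_ie_j(q_k+q_m)$, where the last sum runs over unordered pairs $\{i,j\}$ with $\{k,m\}$ the complementary pair and $\sigma_2^{(j)}$ is the second elementary symmetric function of $\{e_k\}_{k\neq j}$. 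Using $\sum_j q_j=0$ this can be rewritten as $\lambda_1=-4\sum e_ie_j(q_i+q_j)$, but no such swap produces the printed combination in \eqref{eq:lambdaEllipsoidal1}, which attaches the minus sign to the complementary pair $(q_k+q_m)$. A concrete check: for $d=1$, $\gamma_j=\tfrac12$, $(e_1,e_2,e_3,e_4)=(0,1,2,3)$, the equilibrium equation gives $z_1=\tfrac32$, hence $q=(\tfrac13,1,-1,-\tfrac13)$; direct substitution of $S_1(z)=z-\tfrac32$ into \eqref{eq:Gen Lame} yields $(E,\lambda_1,\lambda_2)=(8,24,8)$, whereas $-4\sum e_ie_j(q_k+q_m)=-24$ (while \eqref{eq:lambdaEllipsoidal2} gives $\lambda_2=8$ correctly). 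So your derivation is sound but establishes \eqref{eq:lambdaEllipsoidal1} only up to a sign (equivalently with $(q_i+q_j)$ in place of $(q_k+q_m)$), which points to a typo in the stated formula; you should make the coefficient computation explicit and flag the discrepancy rather than claiming the regrouping ``delivers'' the formula as printed.
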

This well known Lemma is surprising in the sense that the system of equations \eqref{eq:NumericalSpecEqns} determines the eigenfunction $S_d$ and joint spectrum $(\lambda_1, \lambda_2)$ simultaneously. 
Comparing (\ref{eq:lambdaEllipsoidal1}), (\ref{eq:lambdaEllipsoidal2}) and the classical equation (\ref{eq:EllipsoidalIntegralDef}),
the degrees in $e_j$ are different. 
The reason for this is that $q_j$ for fixed energy is of degree $-1$ in $e_j$, as can be seen from \eqref{eq:c2 expression}. Note that the solution to the nonlinear equations \eqref{eq:NumericalSpecEqns} is such that automatically $\sum q_i = 0$ and $\sum q_i e_i = \alpha \beta = -d(d - 1 + \sum \gamma_i)$, combining \eqref{eq:c2 expression} and \eqref{eq:Gen FuschianInfinity}.

We can use the value of $\hbar$ to scale the energy eigenvalue to 1 in the semiclassical limit $\hbar \to 0$, and similar scalings for the other eigenvalues. After the scaling, changing $\hbar$ changes the number of states, but not the size of the image of the momentum map. 
Recall the definition of $E=\frac{\tilde{E}}{\hbar^{2}}$
and the discrete values of $E = D(D+2)$ given in (\ref{eq:Ejj2}). To scale such that $\tilde E$ is exactly 1 would require $\hbar = 1/\sqrt{ D ( D+2)}$, which is, however, not an integer. The reduced symplectic manifold is $S^2 \times S^2$ and hence compact, and therefore we require $\hbar$ to be the inverse of an integer. Thus 
\begin{align}
\hbar & =\frac{1}{\sqrt{D(D+2) + 1}} = \frac{1}{D+1}\label{eq:hbar val}
\end{align}
so that $1/\hbar^2$ is exactly the number of states in the reduced compact system.
With this definition of $\hbar$ we find
\[
    \tilde E = E \hbar^2 = \frac{D(D+2)}{(D+1)^2} = 1 - \hbar^2
\]
where the last equality is exact.

Using (\ref{eq:NumericalSpecEqns}) with the 16
forms of the generalised Lam\'{e} equation,
we produce examples of the joint spectrum shown in Fig. \ref{fig:Fig1}
with $D=18$ (a) and $D=19$ (b). We also show that the boundary of the joint spectrum
is given by the classical momentum map (solid black lines). For
their derivation, see \cite{Nguyen2023}. 

In our numerics, we seed roots $z_k$ randomly in the appropriate intervals $(e_j, e_{j+1})$ as outlined in \cite{Volkmer1999}. Standard root finding techniques like Newton's method are then used to find accurate solutions to (\ref{eq:NumericalSpecEqns}). Larger values of $D$ result in substantially longer computation time.

Since we have the freedom to choose the quadratic integrals via an affine transformation, we can give a clearer representation of the
spectrum, as shown in Fig. \ref{fig:Fig1rotated} a) and
b) with the hyperbolic-hyperbolic point centred at the origin. This clearly shows three regions with a $\mathbb{Z}^{2}$
lattice.

Note that lattice points in these regions are actually two
points, representing period doubling. Classically, this corresponds
to two tori in the pre-image of the classical momentum map. Moving
from these chambers into the fourth (bounded by the curve), we see
a halving of visible dots as we see period doubling occurring again. This phenomena has been described in more detail
in \cite{Sinitsyn2007}.
\begin{figure}[H]
\begin{centering}
\textcolor{black}{\includegraphics[width=7cm]{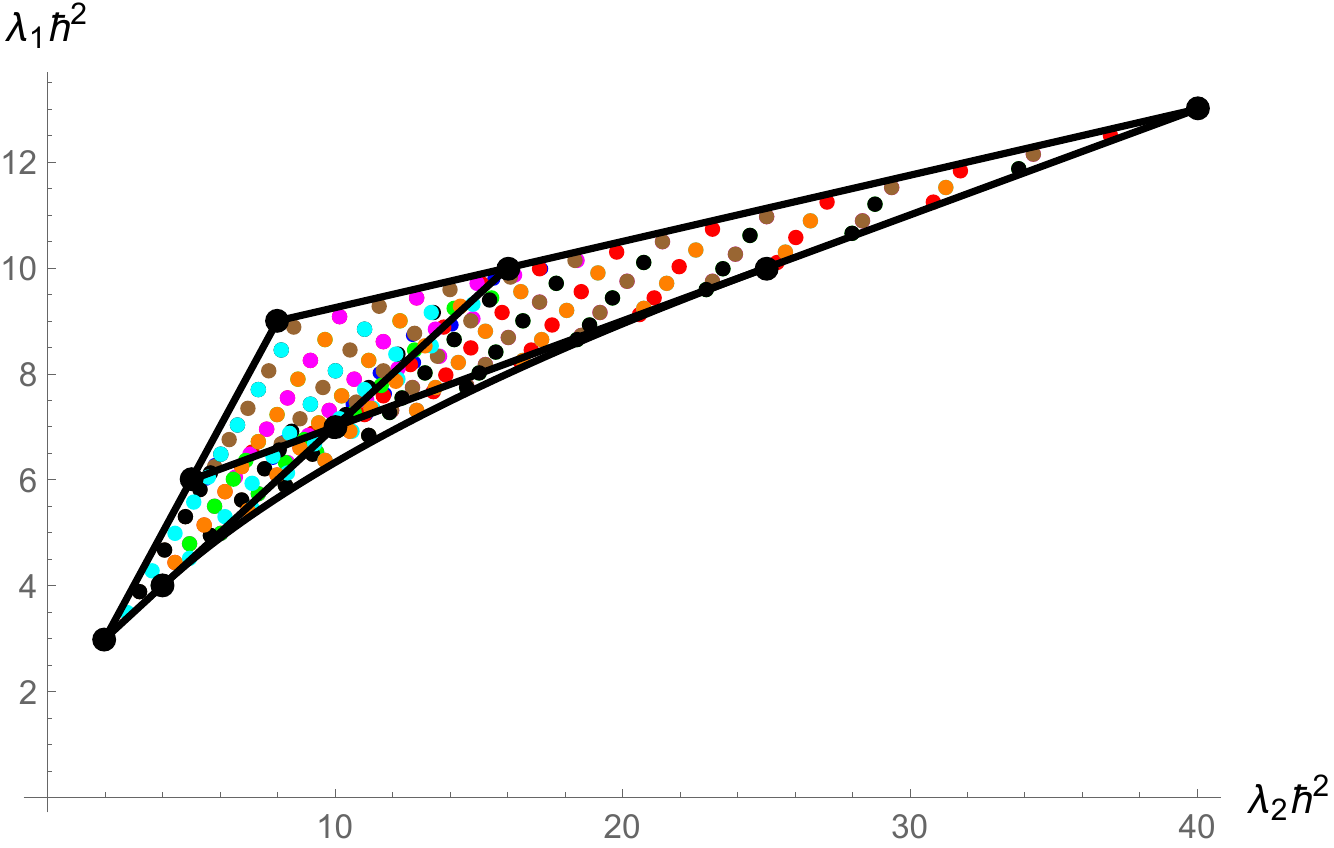}\includegraphics[width=7cm]{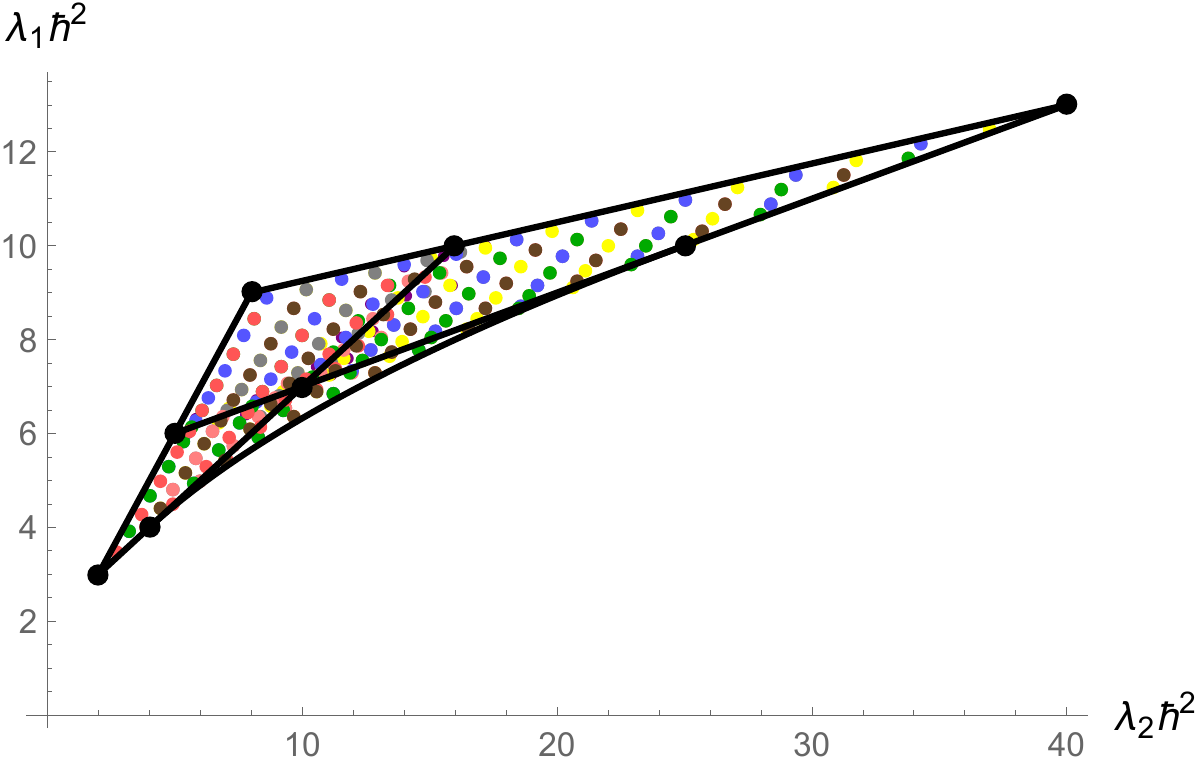}}
\par\end{centering}
\caption{a) Joint spectrum $(\lambda_{1}, \lambda_2)$ where $(e_{1},e_{2},e_{3},e_{4})=(1,2,5,8)$
with $D=18$ and b) $D=19$. A direct correspondence between the coloured dots and symmetry classes is shown in Table \ref{tab:my_label}.\label{fig:Fig1}}
\end{figure}
\begin{figure}[H]
\begin{centering}
\includegraphics[width=7cm,height=7cm]{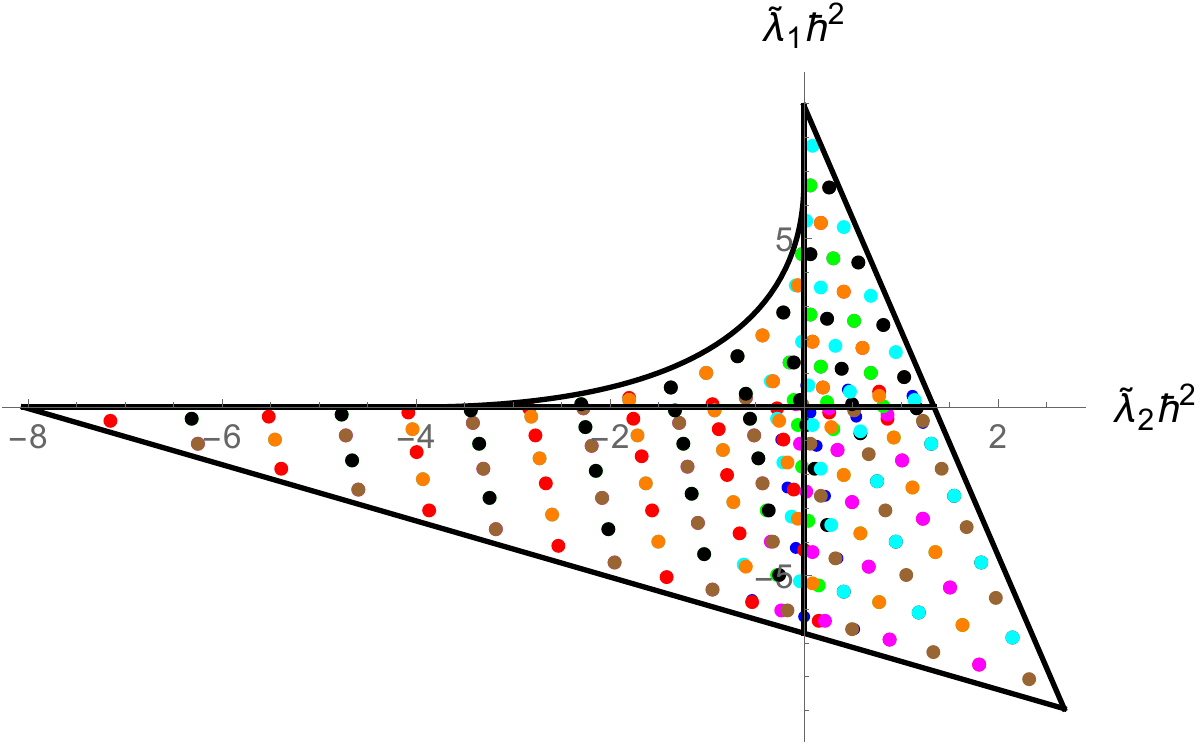}\includegraphics[width=7cm,height=7cm]{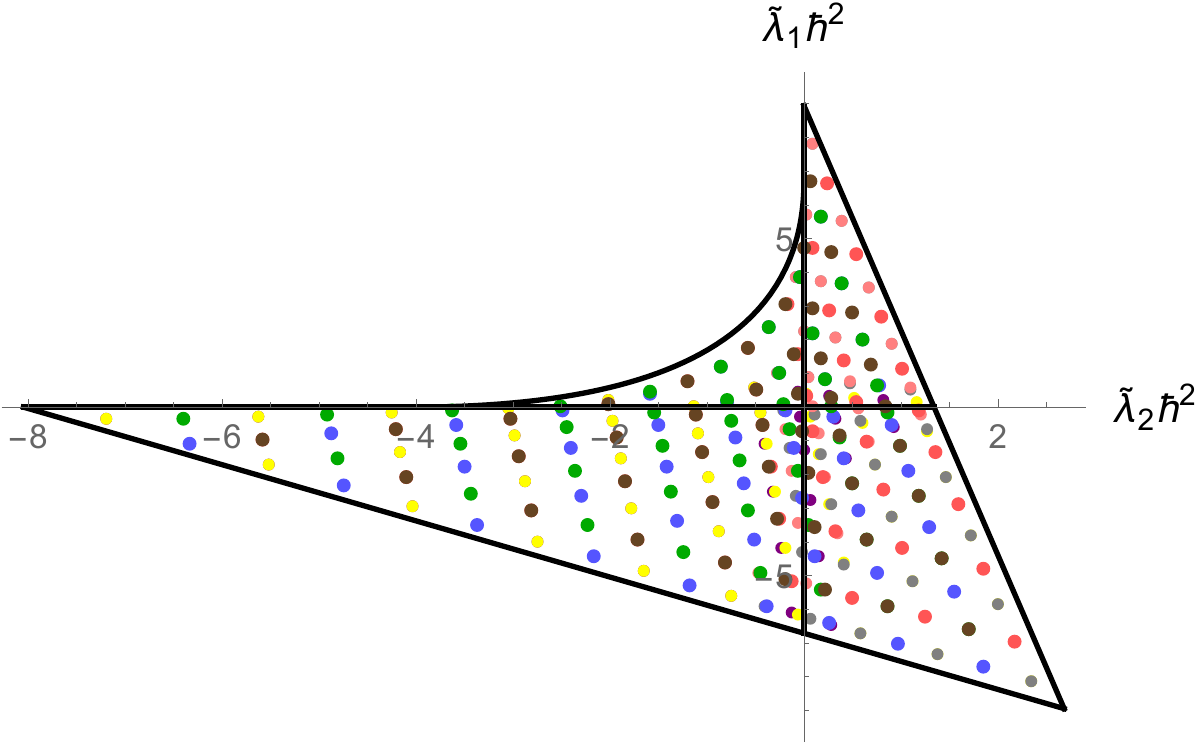}
\par\end{centering}
\caption{Spectra corresponding to Fig. \ref{fig:Fig1} a) and b) after performing an affine transformation to map the hyperbolic-hyperbolic point to the origin. \label{fig:Fig1rotated}}
\end{figure}
\begin{table}[H]
    \centering
    \begin{tabular}{|c|c|c|c|}
    \hline
    $\bm{\mu}$ & Colour a) & $\bm{\mu}$ & Colour b)\\
    \hline
       $(0,0,0,0)$  & Blue & $(1,0,0,0)$ & Purple\\
       \hline 
        $(1,1,0,0)$ & Red & $(0,1,0,0)$ & Yellow\\
        \hline 
        $(1,0,1,0)$ & Magenta & $(0,0,1,0)$ & Grey \\
        \hline 
        $(1,0,0,1)$ & Green & $(0,0,0,1)$ & Pink\\
        \hline 
        $(0,1,1,0)$ & Brown & $(1,1,1,0)$ & Light Blue\\
        \hline 
        $(0,1,0,1)$ & Black & $(1,1,0,1)$ & Dark Green\\
        \hline 
        $(0,0,1,1)$ & Cyan & $(0,1,1,1)$ & Light Red\\
        \hline 
        $(1,1,1,1)$ & Orange & $(1,0,1,1)$ & Dark Brown\\
    \hline
    \end{tabular}
    \caption{Symmetry class corresponding to each coloured eigenstate shown in Figure \ref{fig:Fig1}.}
    \label{tab:my_label}
\end{table}
\textcolor{black}{Equipped with the joint spectrum $(\lambda_{1},\lambda_{2})$,
we can investigate the corresponding action variables. Using the St\"{a}ckel
matrix (\ref{eq:Stackel El S3-1}) and the results of \cite{gurarie95}
and \cite{Nguyen2023}, we have the following Lemma. }
\begin{lem}
The three continuous, classical actions $(J_{1},J_{2},J_{3})$
are given by 
\begin{equation}
\begin{aligned}J_{1}=\frac{1}{\pi}\int_{e_{1}}^{\min(R_{1},e_{2})}p(z)dz &  & J_{2}=\frac{1}{\pi}\int_{\max(R_{1},e_{2})}^{\min(R_{2},e_{3})}p(z)dz &  & J_{3}=\frac{1}{\pi}\int_{\max(R_{2},e_{3})}^{e_{4}}p(z)dz\end{aligned}
\label{eq:ACtions ell dfef}
\end{equation}
where the momentum $p(z)$ is given by 
\[
p^{2}=\frac{-\tilde{E}z^{2}+\lambda_{1}z-\lambda_{2}}{4(z-e_{1})(z-e_{2})(z-e_{3})(z-e_{4})},
\]
and $R_{1},R_{2}$ denote the roots of $p^{2}$ with $0\le R_{1}\le R_{2}$.
\end{lem}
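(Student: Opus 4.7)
The proof proceeds by Hamilton--Jacobi separation: I derive the three classical separated momenta from the St\"ackel matrix, identify each action with the standard libration integral, and then locate the turning points to recover the stated endpoints.

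First, at the classical level the Hamilton--Jacobi equation in ellipsoidal coordinates separates via the St\"ackel matrix \eqref{eq:Stackel El S3-1} with separation constants $(c_0,c_1,c_2)=(-\tilde E,\lambda_1,-\lambda_2)$, the classical counterparts of the spectral parameters entering Lemma \ref{SeparationLemma}. Substituting the explicit entries of $\sigma_{\text{el}}$ one finds, for each $j=1,2,3$,
\[
  p_j^{\,2} \;=\; \sum_{k=1}^3 \sigma_{\text{el},\,jk}\,c_{k-1} \;=\; \frac{-\tilde E\,s_j^{\,2}+\lambda_1 s_j-\lambda_2}{4\,A(s_j)}, \qquad A(z)\coloneqq\prod_{k=1}^{4}(z-e_k).
\]
Thus every $p_j$ is the specialisation to $z=s_j$ of the common function $p(z)$ in the statement. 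The same formula is recovered as the WKB limit of \eqref{eq:Gen Lame} upon replacing $\psi_j''/\psi_j\to -p_j^2/\hbar^2$ and $E\hbar^2\to\tilde E$, which serves as an independent check.

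Next, since each $s_j$ librates in a closed subinterval of $[e_j,e_{j+1}]$, the standard action is
\[
  J_j \;=\; \frac{1}{2\pi}\oint_{\gamma_j} p_j\,ds_j \;=\; \frac{1}{\pi}\int_{L_j} p(z)\,dz,
\]
where $\gamma_j$ is the libration cycle and $L_j\subset[e_j,e_{j+1}]$ is the classically allowed region determined by $p^{2}\ge 0$. A short sign check gives $A(z)<0$ on $[e_1,e_2]\cup[e_3,e_4]$ and $A(z)>0$ on $[e_2,e_3]$, while the numerator $N(z):=-\tilde E z^{2}+\lambda_1 z-\lambda_2$ is a downward-opening parabola with real non-negative roots $R_1\le R_2$. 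Hence $p^2\ge 0$ forces $N\le 0$ on the outer intervals and $N\ge 0$ on the middle one. Intersecting $\{N\le 0\}=(-\infty,R_1]\cup[R_2,\infty)$ and $\{N\ge 0\}=[R_1,R_2]$ with the three coordinate intervals yields
\[
  L_1=[e_1,\min(R_1,e_2)],\quad L_2=[\max(R_1,e_2),\min(R_2,e_3)],\quad L_3=[\max(R_2,e_3),e_4],
\]
and inserting these into the loop integral above reproduces \eqref{eq:ACtions ell dfef}.

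The main technical point is verifying that $R_1,R_2$ indeed lie in a suitable subrange so that each $L_j$ is a single (possibly empty) interval for every regular value of the momentum map, and that the three $L_j$ together cover all of the real classical motion. This reality and placement of the roots follows from the fact that the common level sets of $(H,\eta_1,\eta_2)$ are compact and non-empty for regular values, a consequence of the classical analysis of \cite{Nguyen2023}, together with the strict ordering $0\le e_1<e_2<e_3<e_4$. Once this is in place the three integrals are well defined and smooth on the regular part of the image of the momentum map, completing the proof.
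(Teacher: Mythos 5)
Your proposal is correct and follows essentially the route the paper intends: the lemma is stated there without an explicit proof, being attributed to the St\"ackel matrix \eqref{eq:Stackel El S3-1} and the cited classical results, which is exactly the St\"ackel-separated momentum plus standard libration-action computation you carry out (with the placement of the turning points $R_1,R_2$ deferred to the classical analysis of the companion paper, as the authors also do). One bookkeeping remark: with $\sigma_{\text{el}}$ exactly as printed and $(c_0,c_1,c_2)=(-\tilde{E},\lambda_1,-\lambda_2)$, the sum $\sum_k \sigma_{\text{el},jk}\,c_{k-1}$ actually comes out as the negative of your displayed expression — the same sign/convention slip already present in the paper's own derivation of \eqref{eq:Gen Lame} — so your final formula for $p^2$ agrees with the lemma once that convention is fixed.
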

It was shown in \cite{Nguyen2023} that the 3 actions
are linearly dependent, continuous and satisfy
the following Lemma. 
\begin{lem}
The actions are constrained by the following relation
\begin{align}
J_{1}+J_{2}+J_{3} & =\sqrt{\tilde{E}}.\label{eq:Linear Dependence of Action}
\end{align}
\end{lem}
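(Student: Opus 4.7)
The plan is to realise each action $J_k$ as a contour integral around a branch cut of $p(z)$ in the complex plane, and then use Cauchy's theorem to reduce the sum to a residue calculation at infinity. Note that, since $p^2$ has simple zeros at $R_1,R_2$, simple poles at $e_1,\ldots,e_4$, and behaves like $-\tilde E/(4z^2)$ at infinity, the function $p$ lifts to a meromorphic differential on a compact hyperelliptic Riemann surface of genus two, whose only poles lie at the two smooth points over $z=\infty$.

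First I would fix a single-valued branch of $p(z)$ on $\mathbb{C}$ minus the three branch cuts lying on the classically-allowed intervals $[e_1,\min(R_1,e_2)]$, $[\max(R_1,e_2),\min(R_2,e_3)]$, $[\max(R_2,e_3),e_4]$, normalised so that $p=+|p|$ just above each cut. The jump $p^{+}-p^{-}=2|p|$ across the cut then gives, for a small counterclockwise loop $C_k$ around the $k$-th cut, the identity $\oint_{C_k}p\,dz = -2\pi J_k$. Since $p$ is analytic on the complement of the cuts, Cauchy's theorem deforms the sum of these loops into a single counterclockwise circle $C_\infty$ at infinity:
\[
-2\pi(J_1+J_2+J_3)\;=\;\sum_{k=1}^{3}\oint_{C_k}p\,dz\;=\;\oint_{C_\infty}p\,dz\;=\;2\pi i\,a_{-1},
\]
where $a_{-1}$ is the coefficient of $1/z$ in the Laurent expansion of $p$ at infinity, and $a_{-1}^{2} = -\tilde E/4$.

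The main obstacle is to pin down the sign in $a_{-1}=\pm i\sqrt{\tilde E}/2$ consistently with the branch convention $p=+|p|$ above each cut, since this is what fixes the overall coefficient in the final relation. I would do this by continuously tracking $p$ from just above the cut $[e_2,e_3]$ out along the positive imaginary axis, where $p^2\to \tilde E/(4y^2)>0$ and the numerator $-\tilde E z^2+\lambda_1 z-\lambda_2$ has no zeros in the open upper half plane; the continuous branch of $p$ is then forced to remain real with a determined sign, selecting $a_{-1}$ uniquely. Assembling these pieces produces the asserted linear relation $J_1+J_2+J_3 = \sqrt{\tilde E}$.
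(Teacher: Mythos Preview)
The paper does not prove this lemma in the text; it cites the companion paper \cite{Nguyen2023} and adds only the conceptual remark that the relation reflects superintegrability of the geodesic flow (since $\sqrt{H}$ generates a periodic flow, $H$ must be a function of the sum of the actions). Your residue-at-infinity argument is the standard direct route to such action-sum identities and is correct in outline: the three classically-allowed intervals pair up the six branch points $e_1,\ldots,e_4,R_1,R_2$, so $p$ is single-valued on their complement, and the three cycle integrals collapse to a single large-circle integral governed by $a_{-1}$. Compared with the superintegrability remark, your approach is more explicit and pins down the constant, while the paper's remark explains \emph{why} such a relation must exist without computing it.

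One arithmetic point to watch: from $p^{2}\sim -\tilde E/(4z^{2})$ you get $a_{-1}=\pm i\sqrt{\tilde E}/2$, so your displayed chain $-2\pi\sum_k J_k = 2\pi i\,a_{-1}$ actually yields $\sum_k J_k = \sqrt{\tilde E}/2$, not $\sqrt{\tilde E}$. The mismatch is not in your method but in the prefactor: the ellipsoidal actions in (\ref{eq:ACtions ell dfef}) are written with $1/\pi$, whereas the analogous formulas in the oblate and Lam\'e sections carry $2/\pi$, and the explicit spherical and cylindrical action formulas confirm $\sum_k J_k = 1$ when $\tilde E = 1$. With the $2/\pi$ normalisation your computation lands exactly on the stated result. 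For the sign determination you can bypass the analytic continuation along the imaginary axis and simply note that positivity of $\sum_k J_k$ already selects the branch of $a_{-1}$.
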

The fact that $H$ can be written as a function of the sum of the actions
in (\ref{eq:Linear Dependence of Action})  follows from the fact that $H$ is
superintegrable.
In Fig. \ref{fig:Fig2} a) and b) we show the actions of the even and odd spectra
in Fig. \ref{fig:Fig1} a) and b) respectively, computed using (\ref{eq:ACtions ell dfef})
with $\tilde{E}=1$. 
Similar plots for each symmetry class can be found in Appendix \ref{subsec:ActionsEllipsoida}.

The classical action variables even though continuous are complicated functions, and defining corresponding quantum mechanical operators may be difficult. Thus presenting the joint spectrum in the space of action variables is not done by quantising these functions. Instead we take the eigenvalues of the operators $(\hat H, \hat \eta_1, \hat \eta_2)$ and map them to $(J_1, J_2, J_3)$ with the previous Lemma. In this way a striking representation of the joint spectrum inside an equilateral triangle is obtained. Is shown in \cite{Nguyen2023} this triangle is rigid even when passing to degenerate coordinates systems. Note that this triangle is not Delzant (our system is not toric to begin with), but it seems to serve a similar role for our class of quantum integrable systems. Near each corner of the triangle the lattice is approximately $\mathbb{Z}^2$ with basis vectors given by the sides of the triangle. The action triangle is a natural global representation because there are three elliptic-elliptic equilibrium points in the reduced system, so that it represents three different $\mathbb{Z}^2$ lattices in one moment map.

Among the degenerate systems discussed in the next section the cylindrical coordinate system does lead to a toric system on $S^2 \times S^2$ with a square as Delzant polytope. The triangle appears as the quotient of the square by discrete symmetry group $\mathbb{Z}^2 \times \mathbb{Z}^2$ acting by reflection across the diagonals of the square, see below.

\textcolor{black}{}
\begin{figure}[H]
\begin{centering}
\textcolor{black}{\includegraphics[width=7cm]{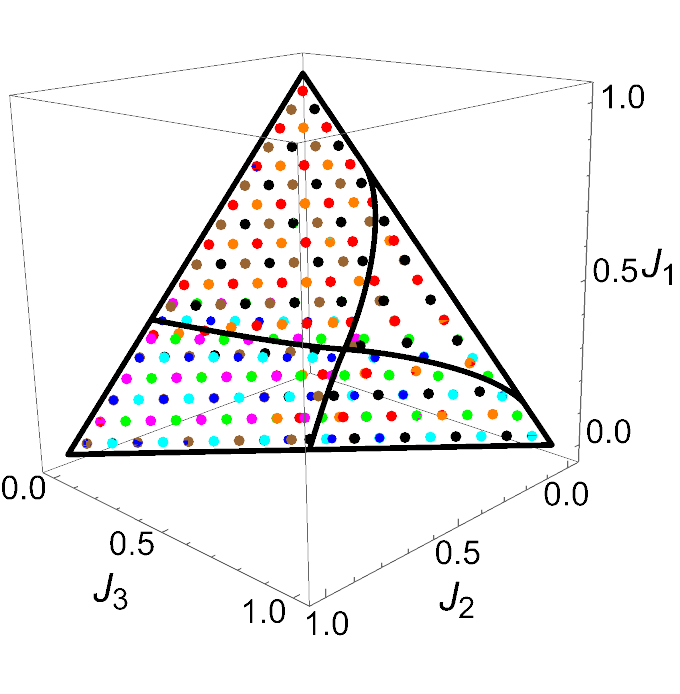}\includegraphics[width=7cm]{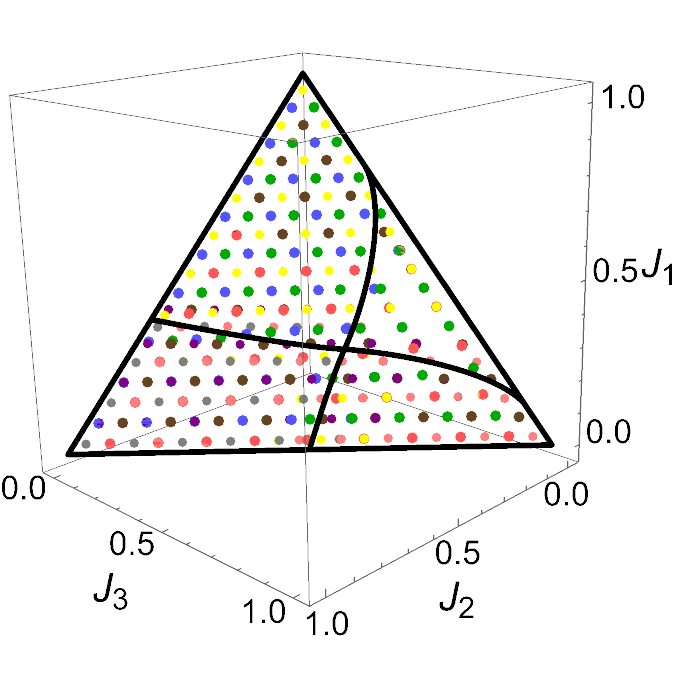}}
\par\end{centering}
\textcolor{black}{\caption{Joint spectrum in action variables in the ellipsoidal case corresponding to the joint spectra  shown in Fig \ref{fig:Fig1} a) and b) respectively. \label{fig:Fig2}}
}
\end{figure}

\section{Degenerate Systems}
We now study the degenerate systems arising from the generalised ellipsoidal coordinates $(1234)$. We originally focus on the prolate, oblate and Lam\'{e} coordinates; these correspond to the edges of the Stasheff polytope shown in Fig. \ref{fig:StasheffQuantum}. Further degenerations of these yield the spherical and cylindrical coordinates, represented by the corners of the polytope. 

We will show that the separated ODEs for these coordinates can be obtained by smoothly degenerating those of the generalised ellipsoidal coordinates (i.e.~the generalised Lam\'{e} equation). In doing this, we prove the following theorem which is the classical analogue of Theorem~3 of \cite{Nguyen2023}:

\begin{theorem}
    For each pair of parentheses in the labelled Stasheff polytope (see Fig.~\ref{fig:StasheffQuantum}) that enclose two adjacent numbers, the corresponding quantum integrable system has an $SO(2)$ symmetry and a corresponding ODE that can be transformed to a trivial ODE with trigonometric solutions. 

    For each pair of parentheses that enclose three adjacent numbers, the corresponding quantum integrable system has a global $SO(3)$ symmetry and a corresponding ODE that can be transformed to the Gegenbauer equation (hypergeometric type). This applies also when there is a pair of  parentheses inside. 

    If there are two pairs of parentheses (corresponding to the edges of the Stasheff polytope) then one ODE is of Heun type (Fuchsian with 4 regular singular points).

    If there are three pairs of parentheses (corresponding to the corners of the Stasheff polytope) then one ODE is of hypergeometric type (Fuchsian with 3 regular singular points). For spherical coordinates this can be transformed to the associated Legendre equation and for cylindrical coordinates this is the Jacobi equation.
    
    
\end{theorem}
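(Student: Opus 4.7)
My approach is to track how the five regular singular points of the generalised Lam\'e equation \eqref{eq:Gen Lame} (namely $e_1, e_2, e_3, e_4$ and $\infty$) degenerate as the $e_i$ are scaled according to the parenthesization. Each pair of parentheses corresponds to a scaling limit in which the enclosed $e_i$ are brought into coincidence at a controlled rate, and I would argue by induction on the nesting depth of the parenthesization, treating the innermost parentheses first and then working outward.

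For a pair of parentheses enclosing two adjacent indices $(i,i+1)$, I would set $e_{i+1}=e_i+\varepsilon$ and introduce a new variable $\phi$ on the collapsing interval by $s_i=e_i+\varepsilon\sin^2\phi$, rescaling the spectral parameters so that the $1/\varepsilon$ terms in the coefficients of \eqref{eq:Gen Lame} combine to a finite limit. In the limit $\varepsilon\to 0$ the two singularities at $e_i$ and $e_{i+1}$ merge and, after cancellation of the divergent pieces of the potential, the equation for $\psi(\phi)$ reduces to $\psi''+C\psi=0$ with $C$ linear in the remaining spectral parameter. The trigonometric solutions are eigenfunctions of the momentum conjugate to $\phi$, exhibiting the expected $SO(2)$ symmetry.

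For a triple $(i,i+1,i+2)$, the same procedure is applied on two nested scales: the inner two $e_i$ collapse first, then the result collapses against the third. The net effect is that three singular points of \eqref{eq:Gen Lame} merge into a single regular singular point of the reduced equation. A M\"obius transformation sending this merged point to $-1$ and the remaining finite $e_j$ to $+1$ (with $\infty$ preserved) recasts the equation in standard Gegenbauer form, with the exponents at $\pm 1$ matching the values of $\gamma_j/2$ from \eqref{eq:gammarule}. The enhanced $SO(3)$ symmetry corresponds to the Casimir of the three coalesced axes, which is straightforward to verify on the classical side from \eqref{eq:EllipsoidalIntegralDef} and carries over to the quantum integrals via the quantisation of Lemma \ref{SeparationLemma}.

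For the edge and corner cases of the pentagon the result follows by counting: each pair of parentheses collapsing two adjacent $e_i$ reduces the number of distinct finite singular points by one, while the singularity at $\infty$ always persists. Two disjoint pairs (an edge) therefore leaves four regular singular points, which is the defining data of the Heun equation. A triple together with a pair (a corner) leaves three regular singular points, so the surviving nontrivial ODE is hypergeometric; matching exponents then identifies the spherical case with the associated Legendre equation and the cylindrical case with the Jacobi equation. The main obstacle I anticipate is choosing the correct rescaling of $\lambda_1$ and $\lambda_2$ in each limit so that the degenerate ODE has finite coefficients and matches the explicit forms stated in Sections~\ref{sec:ProlateS3}--\ref{sec:CylindricalS3}; this requires careful use of \eqref{eq:c2 expression} and \eqref{eq:Gen FuschianInfinity} to track how $\alpha$ and $\beta$ behave under the merging of singular points, and the remainder of the verification is a case-by-case substitution that can be carried out in parallel with the explicit derivations in those sections.
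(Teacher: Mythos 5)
Your overall strategy---obtaining every degenerate ODE by controlled coalescence of the singular points $e_1,\dots,e_4,\infty$ of the generalised Lam\'e equation \eqref{eq:Gen Lame}---is the same mechanism the paper uses in its degeneration lemmas (e.g.\ the prolate substitution $e_3=e_2+\epsilon$, $s_2=e_2+\epsilon\tilde{s}_2$, $p_2=\tilde{p}_2/\epsilon$), but the paper's actual proof of the theorem is the explicit case-by-case separation in Sections \ref{sec:ProlateS3}--\ref{sec:CylindricalS3}, using Kalnins' explicit coordinates and St\"ackel matrices and identifying each separated ODE directly, with the degeneration lemmas serving as consistency checks. The genuine problem is your final counting argument. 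Two \emph{disjoint} inner pairs, $e_1=e_2$ and $e_3=e_4$, is not an edge but the cylindrical corner $(12)(34)$, and by your own rule it removes two finite singular points, leaving two finite points plus $\infty$, i.e.\ a hypergeometric (Jacobi) equation, not four points and Heun; an edge carries a single inner pair (the paper's ``two pairs'' includes the outer pair $(1234)$). More seriously, your rule does not establish the Heun claim for the Lam\'e edges $(123)4$ and $1(234)$: there three $e_i$ merge at the same rate, and the Heun-type (Lam\'e) equation \eqref{eq:raw equations2} governs the two coordinates \emph{inside} the collapsed region, visible only after zooming in (equivalently sending one semi-axis to infinity as in \eqref{eq:LameCoordTrans}), where the three points re-separate and the far singularity merges with $\infty$; the count at the original scale produces only the Gegenbauer factor. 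Your description of the triple as two nested collapses likewise describes the spherical corner $((12)3)4$ rather than the edge $(123)4$.

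Two further technical points. A M\"obius transformation alone cannot bring the outer equation of the triple to Gegenbauer form: that equation has singular points $0,1,\infty$ with exponents $(0,\tfrac12)$ at $0$, and the paper needs the dependent-variable factor $(1-s_1)^{r_+}$ together with the \emph{quadratic} substitution $s_1=x_1^2$ (similarly $s_2=x^2$ for the associated Legendre case), a two-to-one covering rather than a fractional linear map. Also, the $SO(2)$ and $SO(3)$ claims are asserted rather than derived; in the paper they come from the explicit separated integrals ($M=\ell_{23}^2$ or $\ell_{34}^2$ for the $SO(2)$ cases, $F=\ell_{12}^2+\ell_{13}^2+\ell_{14}^2$ for the $SO(3)$ cases) produced by the St\"ackel matrices of the degenerate systems. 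With the combinatorics corrected, the zoomed-in limits carried out, and the rescaling of $\lambda_1,\lambda_2$ tracked as you propose, your degeneration route can be completed, but as written it does not prove the edge and corner parts of the statement.
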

Note that a single ODE of Heun type gives solutions for \emph{two} factors of the eigenfunction $\psi$. In the general ellipsoidal case treated in the previous section the single Fuchsian ODE with 4 finite regular singular points gives \emph{three} such factors. Any ODE of hypergeometric type (including the trivial trigonometric one) gives a solution for a single factor of the eigenfunction. Hence another way of stating which ODEs occur for the various cases is this: On the inside/edges/corners of the Stasheff polytope the highest Fuchsian ODE has 5/4/3 regular singular points, accounting for 3/2/1 of the factors of the eigenfunctions. The remaining 0/1/2 ODEs are of hypergeometric type.

\subsection{Prolate Coordinates and the Heun Equation}\label{sec:ProlateS3}

Prolate coordinates on $S^{3}$ are a degeneration
of ellipsoidal coordinates arising from limiting
the middle two semi major axes to each other, i.e. $e_{3}\to e_{2}^{+}$.
We normalise the $e_{i}$ according to $(e_{1},e_{2}=e_{3},e_{4})=(0,1,a)$.

From \cite{Kalnins1986}, an explicit representation
of prolate coordinates is 
\begin{equation}
\begin{aligned}x_{1}^{2} & =\frac{s_{1}s_{3}}{a} &  & x_{2}^{2}=-\frac{\left(s_{1}-1\right)s_{2}\left(s_{3}-1\right)}{a-1}\\
x_{3}^{2} & =\frac{\left(s_{1}-1\right)\left(s_{2}-1\right)\left(s_{3}-1\right)}{a-1} &  & x_{4}^{2}=\frac{\left(a-s_{1}\right)\left(a-s_{3}\right)}{(a-1)a}
\end{aligned}
\label{eq:prolate coord def}
\end{equation}
where $0\le s_{1},s_{2}\le1\le s_{3}\le a$ and a possible St\"{a}ckel
matrix is given by 
\begin{equation}
\sigma_{\text{pro}}=\frac{1}{4}\left(\begin{array}{ccc}
-\frac{1}{\left(s_{1}-1\right)\left(s_{1}-a\right)} & -\frac{1}{\left(s_{1}-1\right)s_{1}\left(s_{1}-a\right)} & -\frac{1-a}{\left(s_{1}-1\right){}^{2}s_{1}\left(s_{1}-a\right)}\\
0 & 0 & -\frac{1}{\left(s_{2}-1\right)s_{2}}\\
-\frac{1}{\left(s_{3}-1\right)\left(s_{3}-a\right)} & -\frac{1}{\left(s_{3}-1\right)s_{3}\left(s_{3}-a\right)} & -\frac{1-a}{\left(s_{3}-1\right){}^{2}s_{3}\left(s_{3}-a\right)}
\end{array}\right).\label{eq:stackel matrix pro}
\end{equation}

\textcolor{black}{From \cite{Nguyen2023} we have the following result.}
\begin{lem}
Separating the Hamilton-Jacobi equation in prolate
coordinates and subsequently reducing by $\sqrt{H}$ 
gives a semi-toric integrable system on $S^{2}\times S^{2}$ with
integrals
\begin{align*}
G= a\ell_{12}^{2}+a\ell_{13}^{2}+\ell_{14}^{2} &  & M=\ell_{23}^{2}
\end{align*}
and Poisson structure $B_{\bm{X},\bm{Y}}$ as
given in (\ref{eq:BXY}).
\end{lem}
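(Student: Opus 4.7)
My plan is to realise the prolate case as the confluent limit $e_3\to e_2^{+}$ of the ellipsoidal system and to carry over separability, involutivity and the reduction step already established in Section~\ref{sec:S3-1}. Normalising $(e_1,e_2,e_3,e_4)=(0,1,1,a)$ and substituting into the formulas (\ref{eq:EllipsoidalIntegralDef}) for $\eta_1,\eta_2$ should produce the claimed integrals directly, modulo a linear change of basis in the integrals.

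Carrying out the substitution, the ``product'' integral $\eta_2$ kills every summand containing a factor of $e_1=0$ and retains only
\[
\eta_2 \;\longrightarrow\; a\ell_{12}^{2}+a\ell_{13}^{2}+\ell_{14}^{2} = G,
\]
so the second ellipsoidal integral becomes $G$ on the nose. The ``sum'' integral $\eta_1$ contracts to $(1+a)(\ell_{12}^{2}+\ell_{13}^{2})+2\ell_{14}^{2}+a\ell_{23}^{2}+\ell_{24}^{2}+\ell_{34}^{2}$, and using $2H=\sum_{i<j}\ell_{ij}^{2}$ one finds $\eta_1-G-2H=(a-1)\ell_{23}^{2}$, so that $M:=\ell_{23}^{2}$ is an independent third integral. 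The emergence of $M$ reflects the new $SO(2)$-symmetry of rotations in the $(x_2,x_3)$-plane when $e_2=e_3$; involutivity of $(H,G,M)$ is inherited from the ellipsoidal case by continuity, and can equally well be checked directly from the St\"ackel matrix $\sigma_{\text{pro}}$ in (\ref{eq:stackel matrix pro}) by the same recipe as in the proof of Lemma~\ref{SeparationLemma}.

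The reduction by $\sqrt{H}$ is verbatim the one used for the ellipsoidal system: $\sqrt{H}$ still generates the geodesic $S^{1}$-flow on $T^{*}S^{3}$ with constant period, so the symplectic quotient is $(\bm{X},\bm{Y})\in S^{2}\times S^{2}$ with Poisson structure $B_{\bm{X},\bm{Y}}$ of (\ref{eq:BXY}). Substituting $\ell_{23}=(X_3-Y_3)/\sqrt{2}$, $\ell_{14}=(X_3+Y_3)/\sqrt{2}$ and the analogous identities rewrites $G$ and $M$ as polynomials in $(\bm{X},\bm{Y})$; their Poisson commutation and functional independence on an open dense set then follow by routine computation.

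The principal remaining task, and the main obstacle, is the semi-toric property. The linear function $\ell_{23}=(X_3-Y_3)/\sqrt{2}$ is a global branch of $\sqrt{M}$ on $S^{2}\times S^{2}$ whose Hamiltonian flow generates an honest $S^{1}$-action (opposite rotations about the $X_3$- and $Y_3$-axes on the two spheres), so the pair $(G,\ell_{23})$ is a momentum map with a global $S^{1}$ component. To upgrade ``one $S^{1}$-action plus involutivity'' to \emph{semi-toric} one must classify all critical points of $(G,\ell_{23})$, show they are non-degenerate, and verify that only elliptic-regular, elliptic-elliptic and focus-focus blocks arise (no hyperbolic ones), with at least one isolated focus-focus critical value responsible for the monodromy computed later in Section~\ref{sec:ProlateS3}. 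This singularity analysis is precisely the content of the classical counterpart proved in \cite{Nguyen2023}, and is the hard part of the argument.
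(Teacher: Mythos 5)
Your computations are correct and your route is essentially the one the paper (implicitly) takes: the paper gives no proof of this lemma at all, stating it as imported from the companion paper \cite{Nguyen2023}, where the prolate system is obtained exactly as a degeneration $e_3\to e_2^+$ of the ellipsoidal one. Your substitution $(e_1,e_2=e_3,e_4)=(0,1,a)$ into \eqref{eq:EllipsoidalIntegralDef} does give $\eta_2\to a\ell_{12}^2+a\ell_{13}^2+\ell_{14}^2=G$, and your identity $\eta_1-G-2H=(a-1)\ell_{23}^2$ is right; reassuringly, it is the classical shadow of the spectral-parameter relation $m^2=\tfrac{1}{a-1}(\lambda_1-\lambda_2-E)$ that the paper uses when degenerating the generalised Lam\'e equation via \eqref{eq:prodegen}, so the linear change of basis you invoke is exactly the one realised by the St\"ackel matrix \eqref{eq:stackel matrix pro}. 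The reduction by $\sqrt H$ and the rewriting of $G$, $\ell_{23}=(X_3-Y_3)/\sqrt2$ on $S^2\times S^2$ with Poisson structure \eqref{eq:BXY} is likewise verbatim the ellipsoidal argument.

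The one caveat is the semi-toric assertion, which is part of the statement: you correctly identify that it requires classifying the critical points of $(G,\ell_{23})$, checking non-degeneracy, and exhibiting the focus-focus value while excluding hyperbolic blocks, but you do not carry this out — you defer it to \cite{Nguyen2023}. Since the paper itself proves nothing here and simply cites \cite{Nguyen2023} for the entire lemma, this deferral matches the paper's own treatment and is not a mathematical error, but be aware that as a self-contained proof your proposal establishes only the ``integrable system with integrals $G$, $M$ on $S^2\times S^2$'' part, not the semi-toric property. (A small normalisation point: for the semi-toric definition the global $S^1$ generator should be rescaled so that its flow is $2\pi$-periodic, since $\ell_{23}=(X_3-Y_3)/\sqrt2$ as written carries the factor $1/\sqrt2$.)
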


To obtain the separated equations, we use the St\"{a}ckel matrix  (\ref{eq:stackel matrix pro}) and the metric
tensor for prolate coordinates \eqref{eq:prolate coord def}, and follow the same methodology as described in Lemma \ref{SeparationLemma}. For all future systems studied in this paper, this same technique is used to obtain the separated ODEs.
This gives the following. 
\begin{lem}
Separating the Schrödinger equation (\ref{eq:Schrodinger})
in prolate coordinates (\ref{eq:prolate coord def}) gives the following
separated equations
\begin{subequations}
\begin{align}
\psi_{j}^{''}+\frac{1}{2}\left(\frac{1}{s_{j}}+\frac{2}{s_{j}-1}+\frac{1}{s_{j}-a}\right)\psi_{j}^{'}+\frac{-Es_{j}^{2}+\left(\lambda+E+(a-1)m^{2}\right)s_{j}-\lambda}{4s_{j}(s_{j}-1)^{2}(s_{j}-a)}\psi_{j} & =0 & \text{for \ensuremath{j=1,3}}\label{eq:ProlateEqn1}\\
\psi_{j}^{''}+\frac{1}{2}\left(\frac{1}{s_{j}}+\frac{1}{s_{j}-1}\right)\psi_{j}^{'}+\frac{m^{2}}{4s_{j}(1-s_{j})}\psi_{j} & =0 & \text{for \ensuremath{j=2}}\label{eq:ProlateEqn2}
\end{align}
\end{subequations}
where $(m,\lambda)$ are spectral parameters. 
\end{lem}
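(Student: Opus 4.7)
The plan is to follow exactly the approach of the proof of Lemma~\ref{SeparationLemma}, applying the Gurarie-Stäckel separation formula
\[
\psi_j'' + f_j\,\psi_j' + \sum_{k=1}^{3} c_{k-1}(\sigma_{\text{pro}})_{jk}\,\psi_j = 0,
\qquad f_j = \partial_{s_j}\log\bigl(g^{jj}\sqrt{\det g}\bigr),
\]
to the prolate Stäckel matrix (\ref{eq:stackel matrix pro}) and the metric induced by (\ref{eq:prolate coord def}). First I would compute $g_{jj}$ either directly from (\ref{eq:prolate coord def}) or, more efficiently, by taking the degeneration $e_3 \to e_2^+$ of the ellipsoidal metric with normalisation $(e_1,e_2,e_4)=(0,1,a)$; this produces a double pole at $s_j=1$ in $g_{11}$ and $g_{33}$ and a markedly simpler $g_{22}$ depending only on $s_2$. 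The Stäckel identity $g^{jj}=\det(\Omega_j)/\det(\sigma_{\text{pro}})$ from Lemma~\ref{SeparationLemma} should be verified in passing to confirm that (\ref{eq:stackel matrix pro}) is indeed a Stäckel matrix for this degenerate system.

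With the metric in hand, I would compute $f_j$ by logarithmic differentiation. The expected output is $\tfrac{1}{2}\bigl(1/s_j + 2/(s_j-1) + 1/(s_j-a)\bigr)$ for $j=1,3$ and $\tfrac{1}{2}\bigl(1/s_2 + 1/(s_2-1)\bigr)$ for $j=2$, matching the first-order terms in (\ref{eq:ProlateEqn1})--(\ref{eq:ProlateEqn2}). The doubled residue $2/(s_j-1)$ for $j=1,3$ is precisely the fingerprint of the collision $e_2=e_3$ in the generalised Lamé equation, and the absence of the $1/(s_2-a)$ term for $j=2$ reflects that $s_2$ parametrises the angular direction fixed by the $SO(2)$ symmetry coming from the degeneration.

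The zeroth-order term requires identifying the spectral parameters. Since the middle row of $\sigma_{\text{pro}}$ has vanishing first two entries, the $j=2$ equation is forced to depend only on the third spectral parameter $c_2$, which I name $m^2$ directly; this yields (\ref{eq:ProlateEqn2}) at once. For $j=1,3$ I would set $c_0=-E$ as dictated by the Schrödinger eigenvalue, combine the three terms $c_0(\sigma)_{j1}+c_1(\sigma)_{j2}+c_2(\sigma)_{j3}$ over the common denominator $4s_j(s_j-1)^2(s_j-a)$, and match the resulting quadratic numerator against $-Es_j^2+(\lambda+E+(a-1)m^2)s_j-\lambda$. Solving the linear system in $(c_1,c_2)$ then identifies $c_1$ with $\lambda$ (up to a shift) and confirms that the $m^2$-contribution from the third column of the first and third rows supplies exactly the $(a-1)m^2$ correction in the linear coefficient, while the constant term reads off as $-\lambda$.

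The main obstacle I expect is purely algebraic bookkeeping: one must verify that the \emph{same} parameter $m^2$ enters as the $c_2$ spectral parameter both in the $j=2$ equation and, via the first and third rows of $\sigma_{\text{pro}}$, in the $j=1,3$ equation, since this consistency is precisely what guarantees that the product ansatz $\psi=\psi_1\psi_2\psi_3$ solves the original Schrödinger equation (\ref{eq:Schrodinger}). A useful cross-check is to take the limit $e_3\to e_2^+$ directly inside the generalised Lamé equation (\ref{eq:Gen Lame}): after pulling out a factor tied to the coalescing singularities, this should reproduce (\ref{eq:ProlateEqn1}) with the spectral parameter reassignment falling out naturally, while (\ref{eq:ProlateEqn2}) emerges as the residual equation associated with the new ignorable coordinate $s_2$.
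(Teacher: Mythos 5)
Your proposal is correct and follows essentially the same route as the paper, which gives no separate proof of this lemma but simply states that one applies the Stäckel/Gurarie separation formula of Lemma~\ref{SeparationLemma} with the prolate metric and the Stäckel matrix (\ref{eq:stackel matrix pro}); your term-matching of the spectral parameters (with $c_2=m^2$ entering both the $j=2$ and $j=1,3$ equations) is exactly the required bookkeeping, modulo harmless sign/naming conventions for $c_0,c_1$. Your suggested cross-check via the degeneration $e_3\to e_2^+$ of (\ref{eq:Gen Lame}) is precisely the content of the paper's subsequent lemma, so it is a valid but separate verification rather than part of this proof.
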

We observe that the separated equation in (\ref{eq:ProlateEqn2}) is of hypergeometric type, but via the coordinate transformation $s_{2}=\cos^{2}\phi$ we recover the trivial ODE 
\begin{align}
\psi^{''}_2(\phi)+m^{2} & =0.\label{eq:trivial equation}
\end{align}
To ensure a smooth globally defined solution
on $S^{3}$, we enforce periodic boundary conditions.
This yields a discrete spectral parameter 
$m\in\mathbb{Z}$ representing the quantised angular momentum in the
$(x_{2},x_{3})$ plane with solutions $e^{im\phi}$ to (\ref{eq:trivial equation}). 

In \cite{Nguyen2023} we showed that the classical integrals for the degenerate systems are obtained by a smooth degeneration of the coordinates. The following Lemma shows how the ODEs \eqref{eq:ProlateEqn1} and \eqref{eq:ProlateEqn2} naturally arise from the generalised Lam\'{e} equation \eqref{eq:Gen Lame} in a similar fashion.
\begin{lem}
The separated equations for $\psi_{j}$ in (\ref{eq:ProlateEqn1}) and (\ref{eq:ProlateEqn2})
can be obtained by smoothly degenerating the generalised Lam\'{e}
equation (\ref{eq:Gen Lame}).
\end{lem}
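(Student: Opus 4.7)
The plan is to introduce a deformation parameter $\epsilon > 0$ and set $(e_1,e_2,e_3,e_4) = (0,1,1+\epsilon,a)$ in the generalised Lamé equation (\ref{eq:Gen Lame}), then study the limit $\epsilon \to 0^+$. The three coordinates split into two qualitatively different cases: $s_1 \in [0,1]$ and $s_3 \in [1+\epsilon,a]$ retain intervals of positive length in the limit, so a direct substitution suffices; by contrast, $s_2 \in [1, 1+\epsilon]$ collapses to a point, so a rescaling of the independent variable is required.

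For $j = 1, 3$, I would substitute $e_2 = e_3 = 1$ directly into (\ref{eq:Gen Lame}). The two merging residues $1/2$ in the coefficient of $\psi_j'$ combine to produce the term $\frac{2}{s_j-1}$, and the product $(s_j-e_2)(s_j-e_3)$ in the denominator becomes $(s_j-1)^2$. Matching the quadratic numerator against that of (\ref{eq:ProlateEqn1}) then forces the identifications $\lambda_2 = \lambda$ and $\lambda_1 = \lambda + E + (a-1)m^2$. These are precisely the quantum counterparts of the classical relations $\eta_2 = G$ and $\eta_1 = G + (a-1)M + 2H$ that one obtains by substituting the degenerate $e_i$'s into the symmetric polynomials of (\ref{eq:EllipsoidalIntegralDef}), so the reparametrisation is dictated by the degeneration of the classical integrals from Nguyen2023.

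For $j = 2$, I would set $t = (s_2 - 1)/\epsilon$ so that $t \in [0,1]$ and write $\psi_2(s_2) = \Psi(t)$. Then $\psi_2'$ and $\psi_2''$ pick up factors of $\epsilon^{-1}$ and $\epsilon^{-2}$, while the two inner singularities contribute $(\epsilon t)^{-1}$ and $(\epsilon(t-1))^{-1}$, and the outer ones at $s_2 = 0, a$ remain $O(1)$. Multiplying the whole equation by $\epsilon^2$ and letting $\epsilon \to 0$, the surviving terms are $\Psi_{tt}$, the ``inner'' first-order term $\tfrac{1}{2}(1/t + 1/(t-1))\Psi_t$, and the $\Psi$ coefficient with numerator evaluated at $s_2 = 1$, which by the identification above reduces to $-E + \lambda_1 - \lambda_2 = (a-1)m^2$. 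The denominator limits to $4\cdot 1 \cdot t(t-1)\cdot (1-a)$, and the factors $(a-1)$ and $(1-a)$ cancel with opposite signs to give the coefficient $m^2/(4t(1-t))$. Renaming $t$ back to $s_2$ reproduces (\ref{eq:ProlateEqn2}).

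The main obstacle is the simultaneous bookkeeping of the $\epsilon$-scalings in the $j=2$ limit: one must verify that the $\epsilon^{-2}$ in $\psi_2''$ is matched by the $\epsilon^{-2}$ produced by the collapsing $(s_2-e_2)(s_2-e_3)$ in the denominator, that the outer singular terms and the linear-in-$\epsilon$ corrections from Taylor expanding $(1+\epsilon t)^{-1}$ and $(1+\epsilon t - a)^{-1}$ are genuinely subleading, and most delicately that the factor $(a-1)$ produced by the spectral identification cancels the $(1-a)$ from the limit of $(s_2-e_4)$ so that a finite, non-trivial $m^2$ term emerges. Once this cancellation is checked, the smooth degeneration claim follows.
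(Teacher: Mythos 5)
Your proposal is correct and follows essentially the same route as the paper: degenerate $e_3\to e_2=1$ with the spectral identification $\lambda=\lambda_2$, $m^2=(\lambda_1-\lambda_2-E)/(a-1)$, substitute directly for the $s_1,s_3$ equations, and rescale the collapsing coordinate via $s_2=1+\epsilon\tilde s_2$ before taking $\epsilon\to0$ for the $s_2$ equation (the paper states this as the canonical transformation from \cite{Nguyen2023}, with $p_2=\tilde p_2/\epsilon$ playing the role of your blow-up of the independent variable). Your explicit $\epsilon$-bookkeeping and the cancellation of $(a-1)$ against $(1-a)$ yielding $m^2/(4t(1-t))$ is exactly the computation the paper leaves implicit.
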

\begin{proof}
As noted in \cite{Nguyen2023}, the canonical transformation
from ellipsoidal to prolate coordinates is given by 
\begin{align}
e_{3} & =e_{2}+\epsilon, & s_{2} & =e_{2}+\epsilon\tilde{s}_{2}, & p_{2} & =\frac{\tilde{p}_{2}}{\epsilon},\label{eq:prodegen}
\end{align}
in the limit $\epsilon\to0$ where $\tilde{s}_{2}\in[0,1]$ and $\tilde{p}_2$ is its conjugate momenta. We
normalise the semi-major axes with $(e_{1},e_{2}=e_{3},e_{4})=(0,1,a)$
and set
\begin{align*}
\lambda=\lambda_{2}, &  & m^{2}=\frac{1}{a-1}(\lambda_{1}-\lambda_{2}-E).
\end{align*}
For the equations in $s_1$ and $s_3$, the first derivative terms trivially degenerate while for the multiplicative 
term, substituting (\ref{eq:prodegen}) into (\ref{eq:Gen Lame})
and taking the limit $\epsilon\to0$ gives (\ref{eq:ProlateEqn1}).

Repeating the same process for the equations in $s_2$ gives the desired result.
\end{proof}

Note that the St\"{a}ckel matrix used to obtain the separation constants is not unique and so neither are the arising ODEs. Another choice of St\"{a}ckel matrix would result in a different (yet equivalent) set of ODEs since the separation constants would be different. The first derivative term is specific to the coordinates and so is independent of the choice of St\"{a}ckel matrix.

Equation (\ref{eq:ProlateEqn1}) is
an example of a Heun equation; a Fuchsian $4$ 
ODE with regular singularities
at $0,1,a$ and $\infty$: 
\begin{align}
W^{''}+\left(\frac{\gamma}{z}+\frac{\delta}{z-1}+\frac{\epsilon}{z-a}\right)W^{'}+\frac{\alpha\beta z-q}{z(z-1)(z-a)}W & =0\label{eq:GenHeun}
\end{align}
where $\alpha+\beta+1=\gamma+\delta+\epsilon$. The corresponding
Riemann symbol is given by 
\begin{align}
\mathcal{S}_{\text{Heun}} & =\left(\begin{aligned}0 &  & 1 &  & a &  & \infty\\
0 &  & 0 &  & 0 &  & \alpha & ;z\\
1-\gamma &  & 1-\delta &  & 1-\epsilon &  & \beta
\end{aligned}
\right).\label{eq:Heun Riemann}
\end{align}
\begin{lem}
By the change of dependent variable $\psi_j=(z-1)^{|m|/2}W_j$,
(\ref{eq:ProlateEqn1}) can be written in the form of (\ref{eq:GenHeun})
where $\gamma=\epsilon=\frac{1}{2},\delta=1+|m|,q=\frac{1}{4}\left(a|m|-\lambda\right)$
and 
\begin{align}
\alpha,\beta & =\frac{1}{2}\left(1\pm\sqrt{1+E}+\left|m\right|\right).\label{eq:alphabeta}
\end{align}
Specifically, we have 
\begin{equation}
    W_j^{''} + \left(\frac{1}{2z} + \frac{1+ |m|}{2(z-1)} + \frac{1}{2(z-a)}\right)W_j^{'} + \frac{(-E + |m|(|m|+2))z- (a|m| - \lambda)}{4z(z-1)(z-a)}W_j\label{eq:specificHeunPro}
\end{equation}
\end{lem}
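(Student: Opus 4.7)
The plan is to carry out the change of dependent variable directly and match coefficients with the Heun form (\ref{eq:GenHeun}). Writing $z=s_j$ and $\psi_j=(z-1)^{|m|/2}W_j$, I first compute
\[
   \psi_j' = (z-1)^{|m|/2}\Bigl(W_j' + \tfrac{|m|}{2(z-1)}W_j\Bigr),
\]
\[
   \psi_j'' = (z-1)^{|m|/2}\Bigl(W_j'' + \tfrac{|m|}{z-1}W_j' + \tfrac{|m|(|m|-2)}{4(z-1)^2}W_j\Bigr),
\]
substitute into (\ref{eq:ProlateEqn1}), and divide through by $(z-1)^{|m|/2}$.

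Next I collect the coefficient of $W_j'$. The contribution from $\psi_j''$ gives $\tfrac{|m|}{z-1}$, and the original first-derivative term contributes $\tfrac{1}{2z}+\tfrac{1}{z-1}+\tfrac{1}{2(z-a)}$. Adding these gives $\tfrac{1}{2z}+\tfrac{|m|+1}{z-1}+\tfrac{1}{2(z-a)}$, from which I read off $\gamma=\tfrac12$, $\delta=1+|m|$, $\epsilon=\tfrac12$.

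For the coefficient of $W_j$ I combine the three contributions: $\tfrac{|m|(|m|-2)}{4(z-1)^2}$ from $\psi_j''$, $\tfrac{|m|}{4(z-1)}\bigl(\tfrac{1}{z}+\tfrac{2}{z-1}+\tfrac{1}{z-a}\bigr)$ from the cross term, and the original multiplicative term of (\ref{eq:ProlateEqn1}). Putting everything over the common denominator $4z(z-1)^2(z-a)$ yields a numerator quadratic in $z$. The only nontrivial check is that this numerator is divisible by $(z-1)$ (so that a singularity of order one, not two, remains at $z=1$, matching the Heun form). Evaluating the numerator at $z=1$ I expect the terms to cancel identically — this is the one bookkeeping step where the particular algebraic form of $\lambda + E + (a-1)m^2$ in (\ref{eq:ProlateEqn1}) is crucial, and it is the main potential pitfall of the computation. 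After dividing out $(z-1)$, the coefficient of $W_j$ becomes $\bigl[(|m|(|m|+2)-E)z-(a|m|-\lambda)\bigr]/\bigl[4z(z-1)(z-a)\bigr]$, which is the stated Heun form (\ref{eq:specificHeunPro}), with accessory parameter $q=\tfrac14(a|m|-\lambda)$ and $\alpha\beta=\tfrac14(|m|(|m|+2)-E)$.

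Finally I recover $\alpha,\beta$ from the Fuchs relation $\alpha+\beta+1=\gamma+\delta+\epsilon=2+|m|$, so that $\alpha+\beta=1+|m|$ and $\alpha\beta=\tfrac14(|m|(|m|+2)-E)$. Solving this quadratic gives discriminant $(1+|m|)^2-(|m|^2+2|m|-E)=1+E$, hence $\alpha,\beta=\tfrac12\bigl(1+|m|\pm\sqrt{1+E}\bigr)$, which is exactly (\ref{eq:alphabeta}). The whole proof is therefore a bounded computation; the only step that is not purely mechanical is verifying the vanishing at $z=1$ that allows the $(z-1)^{-2}$ singularity in the potential term to drop to $(z-1)^{-1}$, so that the transformed equation genuinely sits inside the Heun class with the correct exponent $1-\delta=-|m|$ at $z=1$.
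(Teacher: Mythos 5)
Your computation is correct and is the same direct verification the paper relies on (the paper states the Lemma without writing it out): factoring out $(z-1)^{|m|/2}$, checking that the numerator of the potential term vanishes at $z=1$ so the double pole drops to a simple one, reading off $q=\tfrac14(a|m|-\lambda)$ and $\alpha\beta=\tfrac14\bigl(|m|(|m|+2)-E\bigr)$, and recovering $\alpha,\beta$ from the Fuchs relation. Note that your first-derivative coefficient $\tfrac{1}{2z}+\tfrac{1+|m|}{z-1}+\tfrac{1}{2(z-a)}$ is the one consistent with the stated $\delta=1+|m|$ and with the indicial exponents $0,-|m|$ at $z=1$; the denominator $2(z-1)$ in the printed display \eqref{eq:specificHeunPro} is a typo, so your version should be taken as correct.
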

The solution $H\ell(a,q;\alpha,\beta,\gamma,\delta;z)$
that corresponds to exponent $0$ about $z=0$ such that $H\ell(a,q;\alpha,\beta,\gamma,\delta;0)=1$
is called a Heun function. We use the notation from \cite{DLMF}. Assuming $\gamma\notin-\mathbb{N}$, the
Heun function can be expanded as an infinite series
\begin{align}
H\ell(a,q;\alpha,\beta,\gamma,\delta;z) & =\sum_{i=0}^{\infty}c_{i}z^{i}\label{eq:Heun solutions}
\end{align}
 where $|z|<1$ and $c_{0}=1$. Substituting (\ref{eq:Heun solutions})
into (\ref{eq:GenHeun}) gives the following Lemma taken from \cite{ronveaux95}.
\begin{lem}
\label{LemmaProlate}The coefficients $c_{i}$ in
(\ref{eq:Heun solutions}) satisfy the following three term recurrence relation
\begin{align}
A_{i}c_{i+1}-(B_{i}+q)c_{i}+C_{i}c_{i-1} & =0\label{eq:3 term}
\end{align}
 where, for $i\ge1$:
\begin{align}
A_{i} & =a(i+1)(i+\gamma)\nonumber \\
B_{i} & =i\left[(i-1+\gamma)(a+1)+a\delta+\epsilon\right]\label{eq:Heun3term}\\
C_{i} & =(i-1+\alpha)(i-1+\beta)\nonumber 
\end{align}
 subject to the conditions $c_{0}=1$ and 
\begin{equation}
a\gamma c_{1}-qc_{0} =0.
\end{equation}
\end{lem}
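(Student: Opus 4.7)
The plan is a direct power-series calculation: substitute the ansatz \eqref{eq:Heun solutions} into the Heun equation \eqref{eq:GenHeun} and read off the recursion by matching coefficients of $z^i$. First I would multiply \eqref{eq:GenHeun} through by $z(z-1)(z-a)$ to clear denominators, putting the ODE in the polynomial-coefficient form
\[
P_3(z)\, W'' + P_2(z)\, W' + P_1(z)\, W = 0,
\]
where $P_3(z) = z(z-1)(z-a)$ has degree $3$, $P_2(z) = \gamma(z-1)(z-a) + \delta z(z-a) + \epsilon z(z-1)$ has degree $2$, and $P_1(z) = \alpha\beta z - q$ has degree $1$. Because each $P_k$ is a sum of at most three monomials, inserting $W = \sum_{i\ge 0} c_i z^i$ and extracting the coefficient of $z^i$ produces contributions only from $c_{i+1}$, $c_i$, and $c_{i-1}$, which is exactly what yields a three-term recurrence.

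Next I would expand each $P_k$ monomially and track, for each term of the form $z^k W^{(n)}$, the single index shift it induces. A routine bookkeeping calculation then gathers the $c_{i+1}$ coefficient into $a(i+1)i + \gamma a(i+1) = a(i+1)(i+\gamma) = A_i$, and the $c_i$ coefficient into $-i\bigl[(a+1)(i-1+\gamma) + a\delta + \epsilon\bigr] - q = -(B_i + q)$, after extracting the isolated $-q\, c_i$ contribution from $P_1 W$.

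The one non-trivial step is the $c_{i-1}$ coefficient, which appears initially as $(i-1)(i-2) + (\gamma+\delta+\epsilon)(i-1) + \alpha\beta$. Here I would invoke the Fuchsian identity $\alpha + \beta + 1 = \gamma + \delta + \epsilon$ stated immediately after \eqref{eq:GenHeun}, which rewrites this coefficient as $(i-1)^2 + (\alpha+\beta)(i-1) + \alpha\beta = (i-1+\alpha)(i-1+\beta) = C_i$, completing the recurrence \eqref{eq:3 term} for all $i \ge 1$.

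Finally, inspecting the same calculation at $i = 0$ gives the initial data: the $c_{-1}$ contributions are absent (they would require an index $i = -1$ in the series) and $B_0 = 0$, so the coefficient of $z^0$ collapses to $a\gamma c_1 - q c_0 = 0$, which together with the normalisation $c_0 = H\ell(a,q;\alpha,\beta,\gamma,\delta;0) = 1$ determines the sequence. The main obstacle is purely clerical — correctly apportioning three monomials in each $P_k$ among the three index-shifts — and once the Fuchsian identity is used in the $c_{i-1}$ coefficient, no further structural ingredient is needed.
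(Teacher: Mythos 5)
Your proposal is correct and is essentially the same argument the paper relies on: the lemma is quoted from the literature immediately after the remark that one substitutes the series \eqref{eq:Heun solutions} into \eqref{eq:GenHeun}, and your explicit bookkeeping (clearing denominators, matching coefficients of $z^i$, using $\alpha+\beta+1=\gamma+\delta+\epsilon$ to factor the $c_{i-1}$ coefficient into $(i-1+\alpha)(i-1+\beta)$, and reading off $a\gamma c_1 - qc_0=0$ at $i=0$) is exactly that standard derivation, carried out correctly.
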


From (\ref{eq:Heun3term}) it is clear that $\alpha=-d$ where $d\in\mathbb{N}$ forces $C_{d+1}=0$ and the three
term recurrence (\ref{eq:3 term}) truncates (\ref{eq:Heun solutions})
to a polynomial of degree $d$. These solutions, analytic at
all three finite singularities $z=0,1,a$ are known as Heun polynomials
and will be denoted by $Hp(a,q;\alpha,\beta,\gamma,\delta;z)$.
In the general ellipsoidal case Heine-Stieltjes polynomials were computed by solving a non-linear system of equations for the roots of the eigenfunctions. In the Heun case the above recursion gives a different and simpler method for computing the eigenfunctions and eigenvalues. 
In the remainder of this section we show how to obtain the corresponding harmonic polynomials in $\mathbb{R}^4$ and how to compute the various discrete symmetry classes of eigenfunctions.


Using a technique similar to the ellipsoidal case and outlined by \cite{Volkmer1999},
we have the following Lemma.
\begin{lem}
\label{prolateincartesianlemma}Let $Hp_{d}(z)$
be a Heun polynomial solution to \eqref{eq:specificHeunPro} of degree $d$ and
denote by $z_{1},\dots,z_{d}$ its roots. The product $Hp_d(s_1)Hp_d(s_3)$, expressed in the original Cartesian coordinates, is given by 
\begin{align}
Hp_d(s_1)Hp_d(s_3) & =\prod_{k=1}^{d}z_{k}(z_{k}-a)\left[\frac{x_{1}^{2}}{z_{k}}-r^2+x_{4}^{2}\frac{1-a}{z_{k}-a}\right]\label{eq:procombinedwfcartesian-1}
\end{align}
where $r^2 = x_1^2 + x_2^2 + x_3 ^2 + x_4 ^2$.
\end{lem}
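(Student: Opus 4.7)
The plan is to observe that $Hp_d$, being monic of degree $d$, factors as $Hp_d(s_j) = \prod_{k=1}^d (s_j - z_k)$, so that
\[
Hp_d(s_1) Hp_d(s_3) = \prod_{k=1}^d (s_1 - z_k)(s_3 - z_k).
\]
Thus it suffices to prove, for each root $z_k$ and up to an overall sign that can be absorbed into the normalisation of $Hp_d$, the quadratic-in-$s_1,s_3$ identity
\[
(s_1 - z_k)(s_3 - z_k) = z_k(z_k - a)\left[\frac{x_1^2}{z_k} - r^2 + \frac{(1-a)x_4^2}{z_k - a}\right].
\]

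First I would degenerate the ellipsoidal generating function $T(s) = \sum_{i=1}^4 x_i^2/(s-e_i)$. Using the degeneration (\ref{eq:prodegen}) with $(e_1,e_2,e_3,e_4)=(0,1,1,a)$, the two middle partial fractions coalesce into $(x_2^2 + x_3^2)/(s-1)$, yielding
\[
T(s) = \frac{x_1^2}{s} + \frac{x_2^2 + x_3^2}{s-1} + \frac{x_4^2}{s-a}.
\]
On the other side, the ellipsoidal identity $T(s)\prod_i(s-e_i) = \prod_j(s-s_j)$ degenerates because the factor $(s-s_2) = (s - 1 - \epsilon\tilde s_2)$ in the numerator and one factor $(s-1-\epsilon)$ in the denominator both tend to $(s-1)$ and cancel, giving the crucial prolate identity
\[
T(s) = \frac{(s-s_1)(s-s_3)}{s(s-1)(s-a)}.
\]

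Next I would evaluate this at $s = z_k$ and clear the denominator to obtain
\[
(z_k - s_1)(z_k - s_3) = x_1^2(z_k-1)(z_k-a) + (x_2^2 + x_3^2)\, z_k(z_k-a) + x_4^2\, z_k(z_k-1).
\]
Then I would eliminate the unwanted coordinates $x_2, x_3$ by substituting $x_2^2 + x_3^2 = r^2 - x_1^2 - x_4^2$, grouping the $x_1^2$ and $x_4^2$ coefficients using the telescoping identities $(z_k-1)(z_k-a) - z_k(z_k-a) = -(z_k-a)$ and $z_k(z_k-1) - z_k(z_k-a) = z_k(a-1)$, and finally factoring $z_k(z_k - a)$ out of the bracket.

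The argument is essentially a bookkeeping exercise: there are no conceptual difficulties, and the only subtlety worth flagging is that the identity is best read as a homogeneous polynomial identity in $\mathbb{R}[x_1,x_2,x_3,x_4]$ (with $r^2$ carried along explicitly for homogeneity), rather than as an identity on $S^3$ where $r^2 = 1$; this viewpoint is exactly what one wants later when one multiplies by $x_1^{\mu_1}\cdots x_4^{\mu_4}$ to produce harmonic polynomials on $\mathbb{R}^4$ in the spirit of Proposition~2 and (\ref{eq:phi solution}).
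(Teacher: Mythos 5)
Your proposal is correct and has the same overall skeleton as the paper's proof: factor $Hp_d$ over its roots and reduce everything to a single quadratic-in-$(s_1,s_3)$ identity per root $z_k$, then take the product. Where you differ is in how that key identity is established. The paper simply asserts
\[
(\theta-s_{1})(\theta-s_{3})=\theta(\theta-a)\Bigl(r^2-\tfrac{x_{1}^{2}}{\theta}+\tfrac{(a-1)x_{4}^{2}}{\theta-a}\Bigr)
\]
and verifies it directly: both sides are monic quadratics in $\theta$, and the right-hand side is checked to vanish at $\theta=s_1,s_3$ using the explicit prolate formulas \eqref{eq:prolate coord def} and $r^2=1$. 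You instead derive it by degenerating the ellipsoidal generating function $T(s)$ under \eqref{eq:prodegen}, cancelling the coalescing factors $(s-s_2)$ and $(s-e_3)$, and then doing the partial-fraction bookkeeping after eliminating $x_2^2+x_3^2=r^2-x_1^2-x_4^2$. Your route is conceptually attractive because it parallels how the ODEs themselves degenerate, but it imports the fact that \eqref{eq:prolate coord def} is exactly the $\epsilon\to0$ limit of \eqref{eq:ellipsoidal def} (true, and consistent with \cite{Nguyen2023}, but an extra dependency you do not verify); the paper's direct check is shorter and self-contained. Note also that keeping the leading coefficient $r^2$ of $T(s)\prod_i(s-e_i)$ explicit is exactly what produces the $r^2$ in the bracket, which you handle correctly via the substitution for $x_2^2+x_3^2$.

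One small sign point: your per-root target identity, and the lemma's displayed bracket, are the \emph{negative} of what the computation actually yields, so the stated right-hand side differs from $Hp_d(s_1)Hp_d(s_3)$ by an overall $(-1)^d$. This cannot literally be ``absorbed into the normalisation of $Hp_d$'' as you suggest, since $Hp_d$ enters the product twice and an overall sign of $Hp_d$ cancels; it is, however, a harmless constant for an eigenfunction, and the paper's own displayed formula carries exactly the same $(-1)^d$ relative to its proof, so this does not affect the validity of your argument.
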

\begin{proof}
Firstly, we claim that 
\begin{align}
(\theta-s_{1})(\theta-s_{3}) & =\theta(\theta-a)\left(r^2-\frac{x_{1}^{2}}{\theta}+\frac{(a-1)}{\theta-a}x_{4}^{2}\right)\label{eq:HeinProof-1}
\end{align}
for all $\theta\ne 0,a$.
Both the left and right hand sides of (\ref{eq:HeinProof-1}) are
monic quadratic functions of $\theta$. Further, both expressions
have roots at $\theta=s_{1},s_{3}$. This is easily verified for the
right hand side by using the coordinate transform (\ref{eq:prolate coord def}) and the condition $r^2=1$.

Let $Hp_{d}(z)$ be a Heine-Stieltjes polynomial of degree $d$ with
roots $z_{1},z_{2},\dots,z_{d}$ in factored form, i.e. 
\begin{equation}
Hp_{d}(z) =(z_{1}-z)(z_{2}-z)\dots(z_{d}-z).\label{eq:heinroots-1}
\end{equation}
Combining (\ref{eq:HeinProof-1}) with (\ref{eq:heinroots-1}) gives the desired result. \end{proof}

Note that we have solved the canonical Heun equation
(\ref{eq:GenHeun}). To write the combined solution to (\ref{eq:ProlateEqn1}) and (\ref{eq:ProlateEqn2}),
we include the normalisation factor by recalling $\psi_{1}=(1-s_{1})^{|m|/2}W_{1}$.
For the $s_{3}$ equation, since $s_{3}\ge1$ we have $\psi_{3}=(s_{3}-1)^{|m|/2}W_{3}$.
Hence, the combined solution to \eqref{eq:Schrodinger} in prolate coordinates is the product
\begin{equation}
\psi_{1}\psi_{2}\psi_{3}=(1-s_{1})^{|m|/2}(s_{3}-1)^{|m|/2}Hp_{d}(s_{1})Hp_{d}(s_{2})e^{im\phi}.\label{eq:preProlateSol}
\end{equation}
This yields the following Lemma.
\begin{lem}
Expressed in Cartesian coordinates, the product $\psi_1\psi_2\psi_3$ in \eqref{eq:preProlateSol} can be written as the following $\tilde{D}$ degree homogeneous polynomial where $\tilde{D} = 2d + |m|$: 
\begin{equation}
\Phi_{\tilde{D}}(\bm{x})=(a-1)^{|m|/2}\left(x_{2}+i \sign(m)x_{3}\right)^{|m|}\prod_{k=1}^{d}z_{k}(z_{k}-a)\left[\frac{x_{1}^{2}}{z_{k}}-r^2+x_{4}^{2}\frac{1-a}{z_{k}-a}\right]\label{eq:FinalPSID}
\end{equation}
and $m\in[-\tilde{D},\tilde{D}]$ is an integer.
\end{lem}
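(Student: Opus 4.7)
The plan is to reduce the statement to Lemma~\ref{prolateincartesianlemma}, which already expresses $Hp_d(s_1)Hp_d(s_3)$ in Cartesian form, by handling the remaining prefactors $(1-s_1)^{|m|/2}(s_3-1)^{|m|/2}$ and $e^{im\phi}$ separately and showing that, combined, they collapse in Cartesian coordinates to $(a-1)^{|m|/2}(x_2 + i\sign(m)\,x_3)^{|m|}$.

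First I would establish the identity $(1-s_1)(s_3-1) = (a-1)(x_2^2 + x_3^2)$. This is a one-line computation from \eqref{eq:prolate coord def}: summing the expressions for $x_2^2$ and $x_3^2$ factors out $(1-s_1)(s_3-1)/(a-1)$, and what remains is $(-s_2) + (s_2-1) = -1$, giving the identity after absorbing a sign from $(s_1-1)$. Second, I would fix a consistent branch by using the substitution $s_2 = \cos^2\phi$ (the same one already used to reduce \eqref{eq:ProlateEqn2} to \eqref{eq:trivial equation}), which makes $x_2 = R\cos\phi$ and $x_3 = R\sin\phi$ with $R = \sqrt{(1-s_1)(s_3-1)/(a-1)}$. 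Then
\[
    x_2 + i\,\sign(m)\,x_3 = R\,e^{i\,\sign(m)\,\phi},
\]
so that raising to the $|m|$th power produces $R^{|m|}\,e^{im\phi}$. This is precisely where the factor $\sign(m)$ in \eqref{eq:FinalPSID} becomes necessary: for $m<0$ it gives $(x_2 - ix_3)^{|m|} = \overline{(x_2+ix_3)^{|m|}}$, which supplies the correct $e^{-i|m|\phi}$.

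Combining the two steps yields the key identity
\[
    (1-s_1)^{|m|/2}(s_3-1)^{|m|/2}\,e^{im\phi} = (a-1)^{|m|/2}\,\bigl(x_2 + i\sign(m)\,x_3\bigr)^{|m|}.
\]
Substituting this together with Lemma~\ref{prolateincartesianlemma} into \eqref{eq:preProlateSol} gives exactly \eqref{eq:FinalPSID}. Homogeneity of degree $\tilde{D} = 2d + |m|$ is then immediate: the linear factor contributes degree $|m|$, while each of the $d$ brackets in the product is homogeneous of degree $2$ once the constant $1$ is written as $r^2$, as is already done in Lemma~\ref{prolateincartesianlemma}. Finally, $m\in\mathbb{Z}$ follows from the periodic boundary condition imposed on \eqref{eq:trivial equation}, and the stated range $|m|\le\tilde{D}$ is automatic since $\tilde{D} = 2d + |m|$ with $d\ge 0$.

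The main subtlety will be the branch bookkeeping in the second step: the signs of $x_2,x_3$ are only determined up to $\pm$ by the squared expressions in \eqref{eq:prolate coord def}, and one must check that the convention identifying $\tan\phi = x_3/x_2$ is globally consistent on the coordinate patch where prolate coordinates are non-singular. Once this is fixed, the appearance of $\sign(m)$ is precisely what allows a single uniform formula to work for both positive and negative $m$, avoiding a case split in the final statement.
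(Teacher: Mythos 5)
Your proposal is correct and follows essentially the same route as the paper's proof: establish $(1-s_1)(s_3-1)=(a-1)(x_2^2+x_3^2)$ from \eqref{eq:prolate coord def}, write $e^{im\phi}$ in terms of $(x_2+ix_3)/\sqrt{x_2^2+x_3^2}$, and combine with Lemma~\ref{prolateincartesianlemma} and \eqref{eq:preProlateSol}, with homogeneity read off term by term. Your extra attention to the $\sign(m)$ branch bookkeeping is a welcome refinement of what the paper leaves implicit, but it is not a different argument.
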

\begin{proof}
Using the definition of prolate coordinates in
(\ref{eq:prolate coord def}) we obtain
\[
(1-s_{1})^{m/2}(s_{3}-1)^{m/2}=\left[(a-1)(x_{2}^{2}+x_{3}^{2})\right]^{m/2}.
\]
We also know that $e^{im\phi}=\left(\frac{x_{2}+ix_{3}}{\sqrt{x_{2}^{2}+x_{3}^{2}}}\right)^{m}$. Combining this with (\ref{eq:procombinedwfcartesian-1}) and (\ref{eq:preProlateSol})
gives the product in (\ref{eq:FinalPSID}). For homogeneity, given that $r^2 = \sum_{i=1}^{4}x_{i}^2$, it is clear that all terms in \eqref{eq:FinalPSID} are of degree $\tilde{D} = 2d + |m|$.
\end{proof}
Like in the ellipsoidal case, there are $16$ discrete symmetry classes of $\Phi_{\tilde{D}}$ corresponding to parity about each of the $x_i$ axes. However, we note that a simple parity flip of either $x_2$ or $x_3$ for the the complex phase $(x_2+ix_3)^m$ would not necessarily yield a materially different wave function depending on the parity of $m$.

To address this, we note that for odd $m$ we have $\Re(x_2+ix_3)^{m} = x_2P(x_2^2,x_3^2)$ and $\Im(x_2+ix_3)^{m} = x_3Q(x_2^2,x_3^2)$ where $P$ and $Q$ are some polynomials of degree $(m-1)/2$. These therefore give solutions which are odd about the $x_2$ $(x_3)$ axes and even about $x_3$ $(x_2)$ respectively. Similarly, we note that for even $m$ we have $\Re(x_2+ix_3)^{m} = \tilde{P}(x_2^2,x_3^2)$ and $\Im(x_2+ix_3)^{m}=x_2x_3\tilde{Q}(x_2^2,x_3^2)$ where $\tilde{P}$ and $\tilde{Q}$ are polynomials of degree $m/2$ and $(m-2)/2$ respectively. The former give solutions which are even about both $x_2$ and $x_3$ and the latter those which are odd about both of these axes.

In the following discussion, we only focus on the discrete symmetries about the $x_1$ and $x_4$ axes. This is because symmetries about the $x_2$ and $x_3$ axes are encoded in taking the real and imaginary components of the wave function, in conjunction with a parity choice of $m$, as just described. This is similar in other degenerate systems (oblate, spherical, etc).

Let $\bm{\mu} = (\mu_1, \mu_4)$ where $\mu_i \in \{0, 1\}$ and $\mu_i$ being $1$ ($0$) denotes a solution odd (even) about the $x_i$ axis.   
For these symmetries, we consider the change of dependent variable in \eqref{eq:specificHeunPro} given by $\phi_1 = s_1^{\mu_{1}/2}(a-s_1)^{\mu_{1}/2}W_1$ and $\phi_3 = s_3^{\mu_{4}/2}(a-s_3)^{\mu_{4}/2}W_3$
which corresponds to multiplication of the total wavefunction by $x_1^{\mu_1}x_4^{\mu_4}$. Doing so gives a resulting Heun equation with parameters shown in Table \ref{tab: Heunpro Params}.

Denote by $Hp_{d}^{\bm{\mu}}(z)$ solutions to \eqref{eq:GenHeun} corresponding to the $\bm{\mu}$ symmetry class. Let the product $Hp_{d}^{\bm{\mu}}(s_1)Hp_{d}^{\bm{\mu}}(s_3)e^{im\phi}$, when converted back to Cartesian coordinates, be given by $\Phi_{\tilde{D}}^{\bm{\mu}}(\bm{x})$. As with the ellipsoidal case, we set 
\begin{equation}
    \Psi_{D}^{\bm{\mu}}(\bm{x}) \coloneqq x_{1}^{\mu_1}x_{4}^{\mu_4}\Phi_{\tilde{D}}^{\bm{\mu}}(\bm{x})
\end{equation}
where $D \coloneqq \tilde{D} + \mu_1+\mu_4$ and $m\in[-D, D]$ is an integer for all symmetry classes. Note that \eqref{eq:FinalPSID} is a special case of $\Psi_{D}^{\bm{\mu}}(\bm{x})$ with $\bm{\mu} = (0,0)$. 
\begin{lem}
    The $\Psi_{D}^{\bm{\mu}}(\bm{x})$ are $D$ degree, homogeneous harmonic polynomials and the energy eigenvalue is given by $E=D(D+2)$. 
\end{lem}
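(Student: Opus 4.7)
The plan is to follow the same template used for the ellipsoidal case, breaking the proof into three independent checks: (i) homogeneity and degree, (ii) harmonicity (i.e.\ $\Psi_D^{\bm\mu}$ satisfies \eqref{eq:Schrodinger}), and (iii) identification of the energy eigenvalue as $E = D(D+2)$.

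For homogeneity, I would argue by direct degree counting from the Cartesian representation. Lemma \ref{prolateincartesianlemma} shows that $Hp_d^{\bm\mu}(s_1) Hp_d^{\bm\mu}(s_3)$ becomes a polynomial of total degree $2d$ in the variables $x_i$ after expressing each factor $z_k - s_1$, $z_k - s_3$ via \eqref{eq:HeinProof-1}. The phase factor $e^{im\phi}$ contributes a homogeneous polynomial of degree $|m|$ in $x_2, x_3$ after extracting the relevant real/imaginary part (as described just before Table \ref{tab: Heunpro Params}). Multiplying by $x_1^{\mu_1} x_4^{\mu_4}$ then gives a homogeneous polynomial of total degree $\tilde D + \mu_1 + \mu_4 = D$.

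For harmonicity, the key point is that $\Psi_D^{\bm\mu}$ is, by construction, built from solutions of the separated Schr\"odinger equation in prolate coordinates. The factors $Hp_d^{\bm\mu}(s_j)$ solve the Heun equation \eqref{eq:GenHeun} with the parameters from Table \ref{tab: Heunpro Params}, which is equivalent to \eqref{eq:ProlateEqn1} under the change of dependent variable $\psi_j = (1-s_j)^{|m|/2} s_j^{\mu_1/2} (a-s_j)^{\mu_4/2} W_j$ (and analogously for $s_3$), while $e^{im\phi}$ solves \eqref{eq:trivial equation}. Reassembling these factors through the St\"ackel decomposition \eqref{eq:stackel matrix pro} yields an eigenfunction of the Laplace-Beltrami operator, and multiplying by the Cartesian prefactor $x_1^{\mu_1} x_4^{\mu_4}$ precisely undoes the discrete-symmetry change of variable at the global level. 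This mirrors equation \eqref{eq: VolkmerEqn} in the ellipsoidal proof.

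For the energy eigenvalue, I would proceed as in the ellipsoidal case. Reading off $\alpha\beta$ from the coefficient of $z$ in the numerator of the Heun-form equation \eqref{eq:specificHeunPro} in the base class $\bm\mu=(0,0)$ gives $4\alpha\beta = -E + |m|(|m|+2)$, while $\alpha+\beta = 1+|m|$. Enforcing the polynomial condition $\alpha = -d$ then forces $\beta = 1+|m|+d$, and substituting yields
\begin{equation*}
E = |m|(|m|+2) + 4d(d+1+|m|) = (2d+|m|)(2d+|m|+2) = \tilde D(\tilde D + 2).
\end{equation*}
For a general symmetry class, the change of dependent variable shifts both the Frobenius exponents $\tilde\gamma_j$ and the accessory data in Table \ref{tab: Heunpro Params}, so that the effective constant term in the coefficient of $z$ is shifted by some $u_0$ analogous to the $u_0$ in Table \ref{tab: CoeffTable}. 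Setting $\alpha = -d$ in the shifted equation gives $E = u_0 + 4d(d+\tilde\gamma+\tilde\delta+\tilde\epsilon - 1) + |m|(|m|+2)$, and the main obstacle is verifying that this telescopes to $D(D+2)$ with $D = 2d + |m| + \mu_1 + \mu_4$. As in the ellipsoidal case, I expect $u_0$ to be precisely $(\mu_1 + \mu_4)(\mu_1 + \mu_4 + 2 + 2|m|)$ (or an equivalent expression), so that the sum factors cleanly; once Table \ref{tab: Heunpro Params} is tabulated this is a routine algebraic check case by case.
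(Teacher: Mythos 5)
Your proposal is correct and follows essentially the same route as the paper: harmonicity is handled by the ellipsoidal/Volkmer argument (the analogue of \eqref{eq: VolkmerEqn}), and the energy comes from the polynomial truncation condition on the shifted Heun exponent, which the paper likewise feeds back into \eqref{eq:alphabeta}. The one step you defer --- that the shift is $u_0=(\mu_1+\mu_4)(\mu_1+\mu_4+2+2|m|)$ so that $E = u_0 + 4d(d+\tilde\gamma+\tilde\delta+\tilde\epsilon-1)+|m|(|m|+2)$ telescopes to $D(D+2)$ --- does check out case by case against Table \ref{tab: Heunpro Params}, so nothing is missing beyond routine algebra that the paper's own (even terser) proof also omits.
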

\begin{proof}
    The proof that $\Psi_{D}^{\bm{\mu}}(\bm{x})$ satisfy \eqref{eq:Schrodinger} is identical to the ellipsoidal case. For the energy eigenvalue, we note that a necessary condition for polynomial solutions is for the three-term recurrence relation \eqref{eq:3 term} to vanish. This forces $C_i = 0$ in \eqref{eq:Heun3term}, leading to the condition \begin{equation}
        \alpha = -d + \frac{1}{2}(\mu_1+\mu_4). \label{eq:alphaProCond}
    \end{equation} 
    Substituting \eqref{eq:alphaProCond} into the expression for $\alpha$ in \eqref{eq:alphabeta} gives the desired result.  
\end{proof}
From \eqref{eq:alphaProCond} we observe that for the $(0,0)$ and $(1,1)$ symmetry classes, $\alpha$ is a negative integer while for the $(0,1)$ and $(1,0)$ classes, $\alpha$ is a negative half integer. Further, we note that fixing a value of $E$ (in turn fixing $D$) enforces a parity relationship between $E$ and $m$ for a given symmetry class. For even (odd) $E$, $m$ must be even (odd) for the $(0,0)$ and $(1,1)$ classes while $m$ must be odd (even) for the others.
\begin{table}
\begin{centering}
\begin{tabular}{|c|c|c|c|c|c|c|}
\hline 
\textcolor{black}{$(\mu_1,\mu_4)$} & \textcolor{black}{$\tilde{\alpha}$} & \textcolor{black}{$\tilde{\beta}$} & \textcolor{black}{$\tilde{\gamma}$} & \textcolor{black}{$\tilde{\delta}$} & \textcolor{black}{$\tilde{\epsilon}$} & \textcolor{black}{$\tilde{q}$}\tabularnewline
\hline 
\textcolor{black}{$(0,0)$} & \textcolor{black}{$\alpha$} & \textcolor{black}{$\beta$} & \textcolor{black}{$\gamma$} & \textcolor{black}{$\delta$} & \textcolor{black}{$\epsilon$} & \textcolor{black}{$q$}\tabularnewline
\hline 
\textcolor{black}{$(1,0)$} & \textcolor{black}{$\alpha+1-\gamma$} & \textcolor{black}{$\beta+1-\gamma$} & \textcolor{black}{$2-\gamma$} & \textcolor{black}{$\delta$} & \textcolor{black}{$\epsilon$} & \textcolor{black}{$(1-\gamma)(a\delta+\epsilon)+q$}\tabularnewline
\hline 
\textcolor{black}{$(0,1)$} & \textcolor{black}{$\alpha+1-\epsilon$} & \textcolor{black}{$\beta+1-\epsilon$} & \textcolor{black}{$\gamma$} & \textcolor{black}{$\delta$} & \textcolor{black}{$2-\epsilon$} & \textcolor{black}{$\gamma(1-\epsilon)+q$}\tabularnewline
\hline 
\textcolor{black}{$(1,1)$} & \textcolor{black}{$2+\alpha-\gamma-\epsilon$} & \textcolor{black}{$2+\beta-\gamma-\epsilon$} & \textcolor{black}{$2-\gamma$} & \textcolor{black}{$\delta$} & \textcolor{black}{$2-\epsilon$} & \textcolor{black}{$q+2+a\delta-\epsilon-\gamma(1+a\delta)$}\tabularnewline
\hline 
\end{tabular}
\par\end{centering}
\textcolor{black}{\caption{Parameters of (\ref{eq:GenHeun}) for the various symmetry classes.\label{tab: Heunpro Params} }
}
\end{table}

As in the ellipsoidal case, we can compute the total number of states for a fixed energy $E$ as well as that for
each symmetry class.
\begin{lem}
Fix an energy $E=D(D+2)$. Then the total
number of eigenstates is given by 
\begin{align}
N & =(D+1)^{2}.\label{eq:TotalStatesPro}
\end{align}
The total number of eigenstates for each symmetry class is shown in
Table \ref{tab:Pro Num States}.
\end{lem}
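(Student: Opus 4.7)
The plan is to decompose the counting by symmetry class $\bm{\mu} = (\mu_1, \mu_4)$ and angular quantum number $m$, use the recurrence from Lemma \ref{LemmaProlate} (with parameters modified according to Table \ref{tab: Heunpro Params}) to count polynomial solutions for each triple $(d, m, \bm{\mu})$, and then sum subject to the degree constraint $D = 2d + |m| + \mu_1 + \mu_4$ established in the proof of the preceding lemma.

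First I would establish the per-triple multiplicity. The quantisation condition \eqref{eq:alphaProCond} forces $\alpha = -d + (\mu_1 + \mu_4)/2$ and hence $C_{d+1} = 0$ in the three-term recurrence \eqref{eq:3 term}, truncating it to a $(d+1)$-dimensional linear system in $c_0,\dots,c_d$. Viewed as an eigenvalue problem for the accessory parameter $q$, this is a $(d+1)\times(d+1)$ tridiagonal matrix whose products of consecutive off-diagonal entries $A_i C_{i+1}$ are positive for $0\le i<d$; it is therefore similar to a symmetric tridiagonal matrix with $d+1$ real, simple eigenvalues, each producing a distinct Heun polynomial $Hp_d^{\bm{\mu}}$ and hence a distinct eigenstate.

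Next I would enumerate the admissible $(d,m)$ pairs for fixed $D$. The identity $\tilde D = 2d + |m|$ together with $D = \tilde D + \mu_1 + \mu_4$ and $d\in\mathbb{N}_0$ forces the parity relation between $m$ and $D-\mu_1-\mu_4$ recorded after \eqref{eq:alphaProCond}. For each class $\bm{\mu}$, $|m|$ then runs over non-negative integers of the prescribed parity with $|m|\le D-\mu_1-\mu_4$, and for $m\ne 0$ the two signs yield distinct eigenstates through the factor $e^{im\phi}$. Summing $d+1 = (D-\mu_1-\mu_4-|m|)/2+1$ over this range (with doubling for $m\ne 0$) reduces each class's count to a short arithmetic series: the classes $(0,0),(1,1)$ collapse to perfect squares in $D$ while the classes $(1,0),(0,1)$ collapse to equal triangular-type terms, yielding the entries of Table \ref{tab:Pro Num States}.

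Finally, summing the four per-class counts gives a one-line algebraic identity, $\sum_{\bm{\mu}} N_{\bm{\mu}} = (D+1)^2$, independent of the parity of $D$; this is consistent with the dimension of the $(D+1)^2$-dimensional space of harmonic polynomials of degree $D$ on $S^3$. The main obstacle is the parity bookkeeping: one must handle $D$ even and $D$ odd separately, and within each case carefully track whether $m=0$ is admissible so that the doubling factor $\pm m$ is applied correctly. Once this organisational step is in place the individual sums and the final identity are routine.
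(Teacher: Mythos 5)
Your overall strategy --- fix the class $(\mu_1,\mu_4)$ and $m$, observe that the degree is forced to be $d=(D-|m|-\mu_1-\mu_4)/2$, count $d+1$ states for each admissible $m$, and then sum the resulting arithmetic series (with the $\pm m$ doubling) to obtain the entries of Table \ref{tab:Pro Num States} and the total $(D+1)^2$ --- is exactly the paper's argument, parity bookkeeping included. The gap is in your justification of the per-$(m,\bm{\mu})$ multiplicity. You claim that the truncated $(d+1)\times(d+1)$ tridiagonal matrix built from \eqref{eq:Heun3term} has positive products $A_iC_{i+1}$ of corresponding off-diagonal entries and is therefore diagonally similar to a real symmetric tridiagonal matrix with $d+1$ real simple eigenvalues. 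This is false: the truncation condition makes the effective $\alpha$ in the recurrence (after the substitutions of Table \ref{tab: Heunpro Params}) equal to $-d$, so $C_{i+1}=(i+\alpha)(i+\beta)=(i-d)(i+\beta)$ with $\beta>0$, giving $C_{i+1}<0$ for $0\le i\le d-1$, while $A_i=a(i+1)(i+\gamma)>0$. Hence every product $A_iC_{i+1}$ is \emph{negative}, the diagonal-similarity trick does not apply, and indeed the paper explicitly remarks that these matrices are not self-adjoint. As written, your mechanism for producing $d+1$ real simple eigenvalues fails.

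The multiplicity $d+1$ is nevertheless correct, but it should be justified the way the paper (implicitly) does, via Heine--Stieltjes theory rather than matrix symmetry: for the Heun equation with real singularities $0<1<a$ and positive exponent parameters, Volkmer's results (the Heun-case analogue of Lemma \ref{Lemma3}, now with two intervals) assert that for every pair of nonnegative integers $(n_1,n_2)$ with $n_1+n_2=d$ there is a uniquely determined real accessory parameter $q$ admitting a polynomial solution with $n_1$ zeros in $(0,1)$ and $n_2$ zeros in $(1,a)$. This gives exactly $d+1$ distinct real eigenvalues, hence $d+1$ states for each admissible $(m,\bm{\mu})$. With that replacement, the rest of your proposal goes through unchanged and coincides with the paper's proof.
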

\begin{proof}
Assume $D$ is even and consider
the $(\mu_1,\mu_4)=(0,0)$ symmetry class; the other cases are proven
in a similar manner. We observe that $|m|$ ranges over the $D+1$
values $-D, -D+2, \dots, 0,\dots,D-2,D$ since $|m|$ is required to
be an even integer based on our condition for $\alpha$. For a fixed energy and the relationship between
$d$ and $|m|$ this enforces, we have a total of $\frac{D-|m|}{2}+1$
states for a given $m$.

\textcolor{black}{Summing the total number of states over all possible
values of $m$ gives 
\begin{align*}
N_{(0,0)}^{even}= & \left(\frac{D}{2}+1\right)+2\times\left(\frac{D}{2}+\left(\frac{D}{2}-1\right)+\dots+1\right)=\left(\frac{D}{2}+1\right)^{2}.
\end{align*}
Repeating a similar procedure for all other symmetry classes gives
the desired result. Regardless of the parity of $D$, summing over
each column of Table \ref{tab:Pro Num States} gives (\ref{eq:TotalStatesPro}).}
\end{proof}
\begin{table}
\begin{centering}
\textcolor{black}{}%
\begin{tabular}{|c|c|c|}
\hline 
\textcolor{black}{$(\mu_1,\mu_4)$} & \textcolor{black}{$E$ even} & \textcolor{black}{$E$ odd}\tabularnewline
\hline 
\textcolor{black}{$(0,0)$} & \textcolor{black}{$\left(\frac{D}{2}+1\right)^{2}$} & \textcolor{black}{$\left(\frac{D+1}{2}\right)\left(\frac{D+3}{2}\right)$}\tabularnewline
\hline 
\textcolor{black}{$(1,0)$} & \textcolor{black}{$\left(\frac{D}{2}+1\right)\frac{D}{2}$} & \textcolor{black}{$\left(\frac{D+1}{2}\right)^{2}$}\tabularnewline
\hline 
\textcolor{black}{$(0,1)$} & \textcolor{black}{$\left(\frac{D}{2}+1\right)\frac{D}{2}$} & \textcolor{black}{$\left(\frac{D+1}{2}\right)^{2}$}\tabularnewline
\hline 
\textcolor{black}{$(1,1)$} & \textcolor{black}{$\left(\frac{D}{2}\right)^{2}$} & \textcolor{black}{$\left(\frac{D+1}{2}\right)\left(\frac{D-1}{2}\right)$}\tabularnewline
\hline 
\end{tabular}
\par\end{centering}
\textcolor{black}{\caption{Number of states for each symmetry class and each parity of $E$.
\label{tab:Pro Num States}}
}
\end{table}

As with the ellipsoidal case, we have $\hbar$ as the inverse of an integer, namely $\hbar=\frac{1}{D+1}$. The joint spectrum $(m\hbar,\lambda\hbar^{2})$ is obtained for fixed $\tilde{E}$ and total degree $d$ by numerically computing the eigenvalues of the matrix (\ref{eq:Heun3term}) to obtain $d$ eigenvalues $q$ and hence $\lambda$ for given $m$. 
We use Mathematica's \texttt{Eigenvalue[]} command and the numerics is fast and accurate for moderately large $D=2d+|m|\approx20$. Since the matrices \eqref{eq:Heun3term} are not self adjoint, substantially larger $d$  results in non-trivial numerical error.

An example of the joint spectrum is shown in Fig. \ref{fig: Prolate Spectrum and Actions}
a) below. We have chosen $D=20$ giving $\hbar=\frac{1}{21}$
and a total of $(20+1)^{2}=441$ total eigenstates. The blue, orange,
red and cyan points correspond to the $(0,0),(1,0),(0,1),(1,1)$ symmetry
classes respectively. Since we have chose an even value of $E$ (and thus $D$) we observe states of the $(0,0)$ and $(1,1)$ symmetry classes only along even $m$. Similarly, along odd $m$ only the $(1,0)$ and $(0,1)$ states are present. Had we chosen an odd $E$ then orange/red dots would be seen along even $m$ and vice versa.
We also show the classical boundary of the momentum map in black. As mentioned, this is a semi-toric system and an isolated rank-2 critical point of focus-focus variety is located at $(m\hbar,\lambda\hbar^{2})=(0,1)$. This is shown as the magenta dot.

From the same reasoning used to obtain (\ref{eq:ACtions ell dfef}),
we obtain the following formulae for the actions
\begin{equation}
\begin{aligned}J_{1,pro}=\frac{1}{\pi}\int_{0}^{\min(r_{1},1)}p_{1}ds_{1} &  & J_{2,pro} & =\frac{1}{\pi}\int_{0}^{1}p_{2}ds_{2} &  &  & J_{3,pro}=\frac{1}{\pi}\int_{\max(1,r_{2})}^{a}p_{3}ds_{3}\end{aligned}
\label{eq:action prol}
\end{equation}
where, using (\ref{eq:stackel matrix pro}), for $j=1,3$
\begin{equation}
p_{j}^{2} =\frac{\tilde{E}s_{j}^{2}+\left(\lambda+\tilde{E}+(a-1)m^{2}\right)s_{j}-\lambda}{4s_{j}(s_{j}-1)^{2}(s_{j}-a)}
\end{equation}
and $(r_{1},r_{2})$ denote the roots of $p_{j}^{2}$ with $0\le r_{1}\le1\le r_{2}\le a$.
Note that $J_{2}$ simplifies to $m=\left|\ell_{23}\right|$. Like
for the ellipsoidal system, the prolate actions also satisfy (\ref{eq:Linear Dependence of Action}).
The action map for the prolate system is shown in Fig. \ref{fig: Prolate Spectrum and Actions} b) where the magenta dot corresponding to the focus-focus point is
located at $\frac{2}{\pi}(\sin^{-1}(\frac{1}{a}),0,\frac{\pi}{2}-\sin^{-1}(\frac{1}{a}))$.
Eigenstates with $m=0$ have $J_1 = 0$ and are hence on the boundary of the triangle. The reason for this is that the discrete symmetry reduced action variables in the quantum setting would require to impose Neumann boundary conditions.

\begin{figure}[H]
\begin{centering}
\includegraphics[width=7.5cm,height=7.5cm,keepaspectratio]{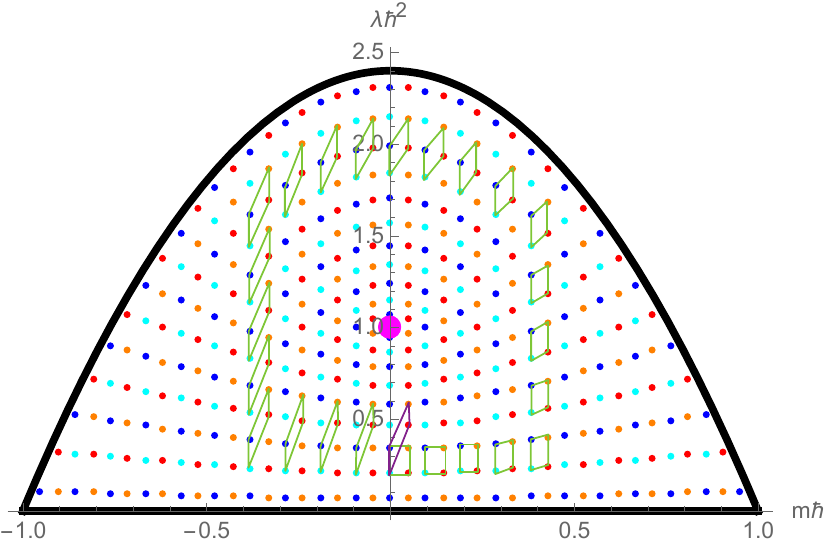}\includegraphics[width=7.5cm,height=7.5cm,keepaspectratio]{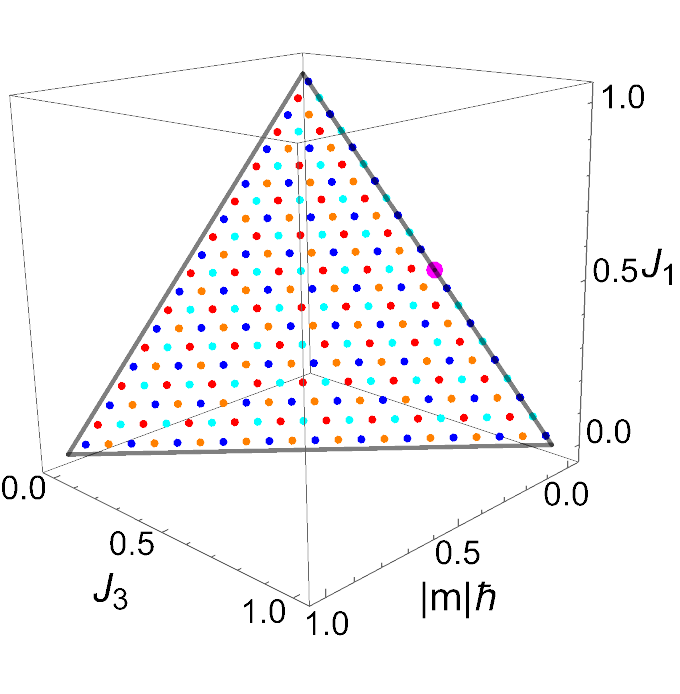}
\par\end{centering}
\caption{a) Prolate spectrum with $D=20$, $a=2.4$ and a total of $21^{2}$ eigenstates.
Different symmetry classes are represented with different colours. Note the focus-focus critical value shown in magenta. 
b) Action map corresponding to the joint spectrum. \label{fig: Prolate Spectrum and Actions}}
\end{figure}

Since the classical integrable system is semi-toric and has a focus-focus point, we expect the corresponding joint spectrum to exhibit quantum
monodromy. This is shown in Fig. \ref{fig: Prolate Spectrum and Actions}
a). A unit cell is parallel transported around the focus-focus critical
value. Denote the basis vectors of this unit cell as $v_{1}$ (vertical)
and $v_{2}$ (horizontal). As the cell completes a full loop, we observe
a basis transformation where $v_{1}$ remains unchanged but $v_{2}$
is updated to $v_{2}+2v_{1}$. In a similar vein to our previous work
\cite{spheroidal}, we have the following
basis transformation
\begin{equation}
\begin{pmatrix}v_{1}^{'}\\
v_{2}^{'}
\end{pmatrix} =\begin{pmatrix}1 & 0\\
\omega & 1
\end{pmatrix}\begin{pmatrix}v_{1}\\
v_{2}
\end{pmatrix}
\end{equation}
where $\omega=2$. As was observed in \cite{spheroidal},
when considering the symmetry classes one at a time, $\omega=1$ whereas
combining them two at a time gives $\omega=2$. Finally, in Fig.
\ref{fig:a)-Projection-of} we show the projection of the joint spectrum in the action variables onto either the $(m,J_{1})$ or $(m,J_{3})$ axes (and plotting the signed $m$ instead of $|m|$). The resulting polygons are two different representations of the semi-toric polygon invariant with the cut above and below the focus-focus point.

\begin{figure}[H]
\begin{centering}
\includegraphics[width=7.5cm,height=7.5cm,keepaspectratio]{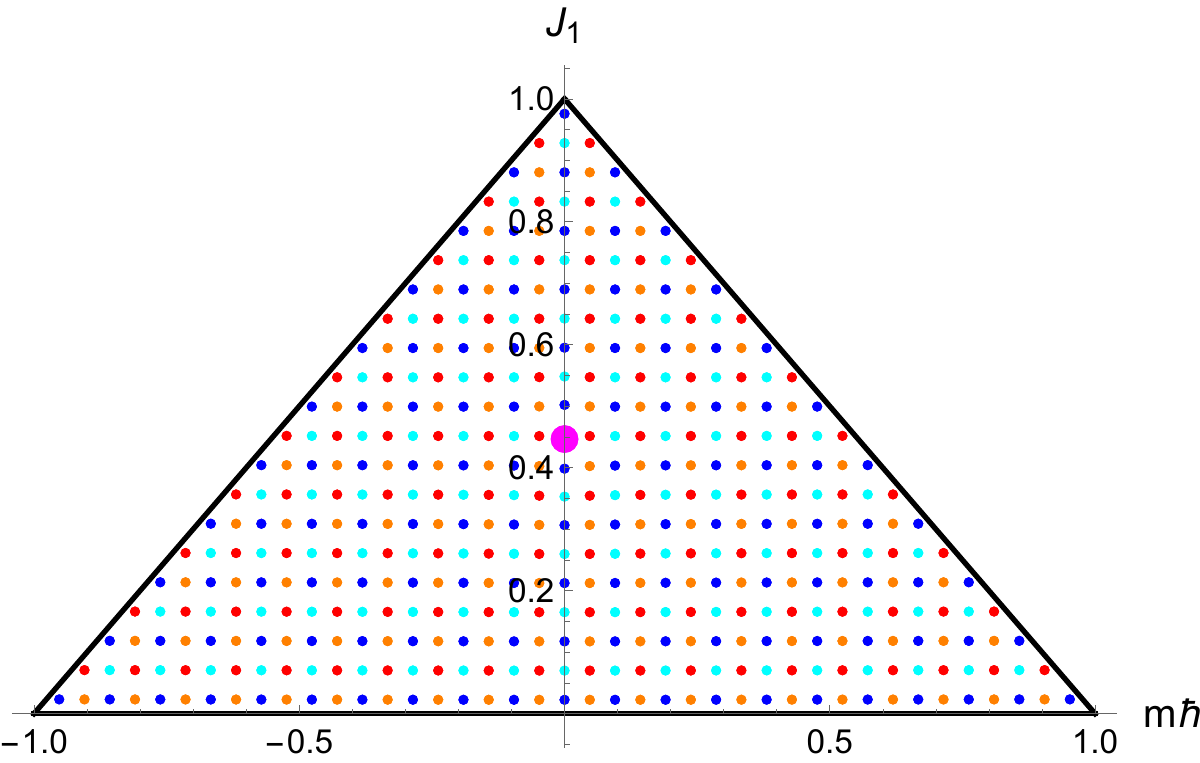}\includegraphics[width=7.5cm,height=7.5cm,keepaspectratio]{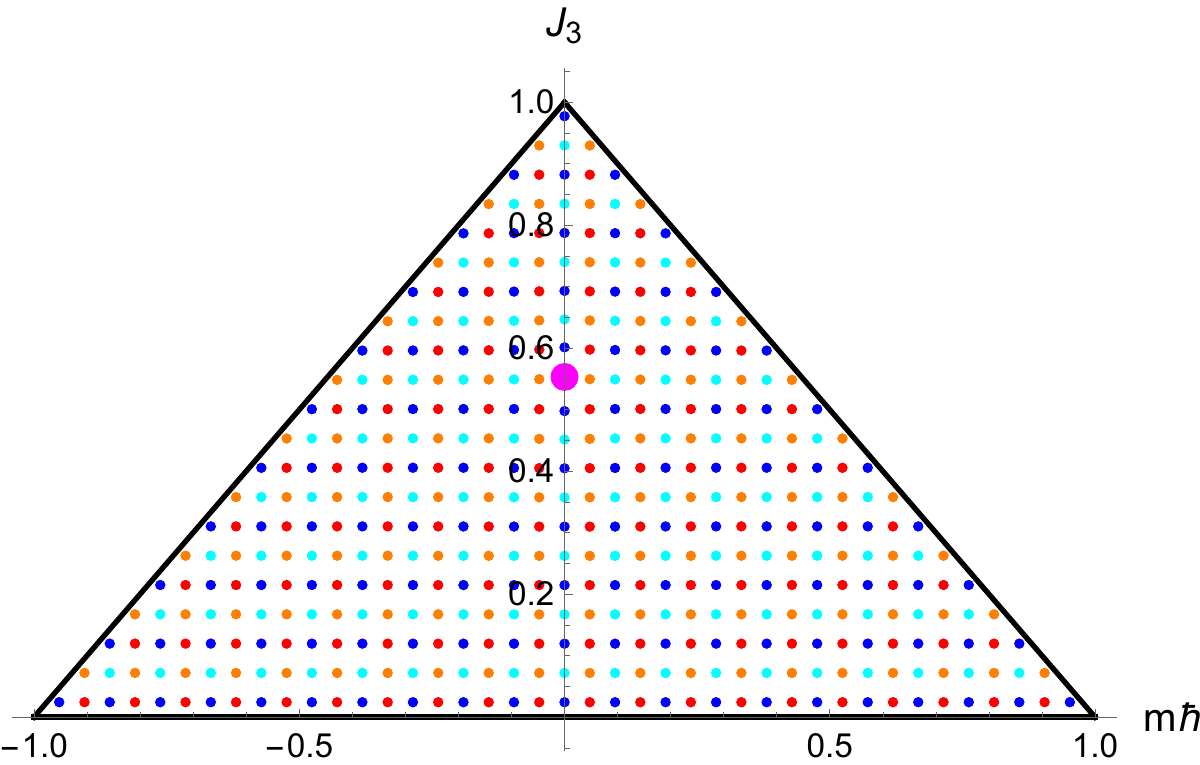}
\par\end{centering}
\caption{a) The semi-toric polygon in the prolate case $D=20$, $a=2.4$; projection of the actions onto the $(m,J_{1})$ axes, b) Projection
onto the $(m,J_{3})$ axes. \label{fig:a)-Projection-of}}

\end{figure}

\subsection{Oblate Coordinates}\label{sec:OblateS3}

Degenerating either the two largest $(e_{3}=e_{4})$
or smallest $(e_{1}=e_{2})$ semi major axes gives rise to the oblate
coordinates on $S^{3}$. They are equivalent under sending $e_{i}\to-e_{i}$
and then translating and scaling.
In this paper we only address the former case and normalise the $e_{i}$
with $(e_{1},e_{2},e_{3}=e_{4})=(0,1,a)$. 

From \cite{Kalnins1986}, an explicit representation
of oblate coordinates is 
\begin{equation}
\begin{aligned}x_{1}^{2} & =\frac{s_{1}s_{2}}{a} &  & x_{2}^{2}=\frac{-\left(s_{1}-1\right)\left(s_{2}-1\right)}{a-1}\\
x_{3}^{2} & =\frac{\left(s_{1}-a\right)\left(s_{2}-a\right)s_{3}}{a\left(a-1\right)} &  & x_{4}^{2}=\frac{\left(s_{1}-a\right)\left(s_{2}-a\right)\left(1-s_{3}\right)}{a\left(a-1\right)}
\end{aligned}
\label{eq:prolate coord def-1}
\end{equation}
where $0\le s_{1},s_{2}\le1\le s_{3}\le a$.
A possible St\"{a}ckel matrix is given by 
\begin{equation}
\Phi_{\text{obl}}=\frac{1}{4}\left(\begin{array}{ccc}
-\frac{1}{\left(s_{1}-1\right)\left(s_{1}-a\right)} & -\frac{1}{\left(s_{1}-1\right)s_{1}\left(s_{1}-a\right)} & \frac{(1-a)a}{\left(s_{1}-1\right)s_{1}\left(s_{1}-a\right){}^{2}}\\
-\frac{1}{\left(s_{2}-1\right)\left(s_{2}-a\right)} & -\frac{1}{\left(s_{2}-1\right)s_{2}\left(s_{2}-a\right)} & \frac{(1-a)a}{\left(s_{2}-1\right)s_{2}\left(s_{2}-a\right){}^{2}}\\
0 & 0 & -\frac{1}{\left(s_{3}-1\right)s_{3}}
\end{array}\right).\label{eq: stackel oblate}
\end{equation}

From \cite{Nguyen2023} and similar to the prolate
system, we have the following.
\begin{lem}
Separating the Hamilton Jacobi equation in oblate
coordinates using the Stäckel matrix \eqref{eq: stackel oblate} and subsequently reducing by $H$ gives a two degree of
freedom integrable system on $S^{2}\times S^{2}$ with integrals 
\begin{align*}
G_{\text{obl}}= a\ell_{12}^{2}+\ell_{13}^{2}+\ell_{14}^{2} &  & M=\ell_{34}^{2}
\end{align*}
and Poisson structure $B_{\bm{X},\bm{Y}}$ as
given in (\ref{eq:BXY}).
\end{lem}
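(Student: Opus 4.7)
The plan is to mirror the argument used in the prolate case, since the oblate system is another one-parameter degeneration of the ellipsoidal system obtained by coalescing two adjacent semi-major axes (here $e_3$ and $e_4$ instead of $e_2$ and $e_3$). First I would derive the classical integrals directly from the St\"{a}ckel matrix $\Phi_{\text{obl}}$ in \eqref{eq: stackel oblate}: by the standard St\"{a}ckel recipe already invoked before Lemma~\ref{SeparationLemma}, the three commuting integrals on $T^{*}S^{3}$ are $\eta_{k-1}=\sum_{i=1}^{3}r^{ki}p_{i}^{2}$, where $r^{ki}$ are the rows of the inverse of $\Phi_{\text{obl}}$ and $\eta_{0}=H$. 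Poisson commutativity $\{H,G_{\text{obl}}\}=\{H,M\}=\{G_{\text{obl}},M\}=0$ is automatic from the St\"{a}ckel construction.

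Second, and more transparently, I would obtain the explicit form of $G_{\text{obl}}$ and $M$ by smoothly degenerating the ellipsoidal integrals $\eta_{1},\eta_{2}$ of \eqref{eq:EllipsoidalIntegralDef} in the limit $e_{4}\to e_{3}^{+}$, via a canonical transformation analogous to \eqref{eq:prodegen}, namely $e_{4}=e_{3}+\epsilon$ together with a rescaling of $s_{3}$ and its conjugate momentum in the factor that picks up the $1/\epsilon$ singularity. Setting $(e_{1},e_{2},e_{3}=e_{4})=(0,1,a)$ and collecting the leading and finite pieces, the singular coefficient $1/(e_{4}-e_{3})$ multiplies precisely $\ell_{34}^{2}$, so the combination that must be rescaled to remain finite yields $M=\ell_{34}^{2}$; this conservation law is forced by the emergent $SO(2)$ symmetry of rotations in the $(x_{3},x_{4})$-plane when $e_{3}=e_{4}$. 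The remaining independent finite combination reduces, after substituting the normalised $e_{i}$, to $G_{\text{obl}}=a\ell_{12}^{2}+\ell_{13}^{2}+\ell_{14}^{2}$.

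For the reduction step, I would appeal to the same mechanism used in the ellipsoidal and prolate cases: $\sqrt{H}$ generates an $S^{1}$ action with constant period on $T^{*}S^{3}$, so symplectic reduction at a fixed level of $H$ yields $S^{2}\times S^{2}$ with the coordinates $(\bm{X},\bm{Y})$ defined before \eqref{eq:BXY}. Since $G_{\text{obl}}$ and $M$ are homogeneous quadratic in the $\ell_{ij}$, they are invariant under the flow of $\sqrt{H}$ and hence descend to well-defined functions on the reduced space. Expressing $\ell_{ij}$ in terms of $(\bm{X},\bm{Y})$ and checking that the reduced Poisson brackets agree with the block structure $B_{\bm{X},\bm{Y}}$ in \eqref{eq:BXY} finishes the argument.

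The principal difficulty is controlling the degeneration $e_{4}\to e_{3}$ cleanly, because both the St\"{a}ckel matrix entries and the raw ellipsoidal integrals contain terms that either diverge or vanish in this limit. The canonical rescaling of coordinate and momentum analogous to \eqref{eq:prodegen} is what ensures these singularities cancel and produce the finite integrals $G_{\text{obl}}$ and $M$; the remainder is a bookkeeping exercise analogous to the prolate calculation performed in \cite{Nguyen2023}.
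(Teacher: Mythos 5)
Your proposal is correct in substance and follows essentially the route the paper takes (the paper itself offers no proof here, deferring to the companion paper \cite{Nguyen2023} and to the prolate treatment): commuting integrals from the St\"ackel recipe applied to \eqref{eq: stackel oblate}, identification of their explicit form via the degeneration $e_4\to e_3^+$ of the ellipsoidal system, and reduction to $S^2\times S^2$ by the $S^1$ action generated by $\sqrt{H}$, under which $G_{\text{obl}}$ and $M$ descend because they are functions of the $\ell_{ij}$ alone and hence commute with $H$.

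One detail in your second paragraph is off, though it does not sink the argument: in the oblate (as in the prolate) degeneration no integral acquires a singular coefficient $1/(e_4-e_3)$ and none needs to be rescaled. With $(e_1,e_2,e_3=e_4)=(0,1,a)$ the ellipsoidal integrals \eqref{eq:EllipsoidalIntegralDef} remain finite and converge to $\eta_2=a\,G_{\text{obl}}$ and $\eta_1=G_{\text{obl}}+2aH+(1-a)M$, so that $M=\ell_{34}^2=\frac{1}{1-a}\bigl(\eta_1-\tfrac{1}{a}\eta_2-2aH\bigr)$ is recovered as a \emph{finite} linear combination; equivalently it is the separation constant of the trivial angular equation attached to the emergent $SO(2)$ in the $(x_3,x_4)$-plane, which is the cleaner way to phrase your symmetry argument. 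What requires the $\epsilon$-rescaling analogous to \eqref{eq:prodegen} is only the coordinate--momentum pair $(s_3,p_3)$, whose interval collapses; a genuinely singular $1/\epsilon$ rescaling of the integrals themselves occurs only in the Lam\'e degeneration, where one semi-major axis is sent to infinity and $(\eta_1,\eta_2)$ pick up factors of $1/\epsilon$.
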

Using (\ref{eq: stackel oblate}) and a computation similar to the prolate case we have the following
Lemma.
\begin{lem}
Separating the Schrödinger equation (\ref{eq:Schrodinger})
in oblate coordinates (\ref{eq:prolate coord def-1}) gives the following
ODEs
\begin{subequations}
\begin{align}
\psi_{j}^{''}+\frac{1}{2}\left(\frac{1}{s_{j}}+\frac{1}{s_{j}-1}+\frac{2}{s_{j}-a}\right)\psi_{j}^{'}+\frac{-Es_{j}^{2}+\left(\lambda+aE-(a-1)m^{2}\right)s_{j}-a\lambda}{4s_{j}(s_{j}-1)(s_{j}-a)^{2}}\psi_{j} & =0 & \text{for \ensuremath{j=1,2}}\label{eq:OblateEqn1}\\
\psi_{j}^{''}+\frac{1}{2}\left(\frac{1}{s_{j}}+\frac{1}{s_{j}-1}\right)\psi_{j}^{'}+\frac{m^{2}}{4s_{j}(1-s_{j})}\psi_{j} & =0 & \text{for \ensuremath{j=3}}\label{eq:OblateEqn2}
\end{align}
\end{subequations}
where $(m,\lambda)$ are spectral parameters.
\end{lem}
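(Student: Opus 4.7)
The plan is to apply the general separation formula from Lemma~\ref{SeparationLemma}, with $\sigma_{\text{el}}$ replaced by the oblate Stäckel matrix $\Phi_{\text{obl}}$ in (\ref{eq: stackel oblate}) and the ellipsoidal metric replaced by the diagonal metric induced by (\ref{eq:prolate coord def-1}). First I would read off the diagonal metric entries via $g_{jj}=\sum_{i=1}^{4}(\partial_{s_j} x_i)^2$ and confirm as a sanity check that $1/g_{jj}=\det(\Omega_j)/\det(\Phi_{\text{obl}})$, where $\Omega_j$ is the minor obtained by deleting the $j$th row and first column. The key structural observation is that the degeneracy $e_3=e_4=a$ causes the factor $(s_j-a)$ to appear with multiplicity two in $g_{jj}$ for $j=1,2$, while disappearing entirely from the $s_3$ factor of $\sqrt{\det g}$.

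Next I would compute $f_j=\partial_{s_j}\log(g^{jj}\sqrt{\det g})$ by straightforward logarithmic differentiation. For $j=1,2$ the doubled factor at $s_j=a$ produces exactly the coefficient $\frac{2}{s_j-a}$ displayed in (\ref{eq:OblateEqn1}), alongside simple poles $\tfrac{1}{s_j}$ and $\tfrac{1}{s_j-1}$. For $j=3$ the derivative gives only $\frac{1}{2}\bigl(\tfrac{1}{s_3}+\tfrac{1}{s_3-1}\bigr)$, matching (\ref{eq:OblateEqn2}).

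For the multiplicative term I would substitute $(c_0,c_1,c_2)=(-E,\lambda_1,-\lambda_2)$ into $\sum_{k=1}^{3}c_{k-1}(\Phi_{\text{obl}})_{jk}$, where $\lambda_1,\lambda_2$ are the two separation constants. Because the third row of $\Phi_{\text{obl}}$ has zeros in its first two columns, the $\psi_3$ equation depends only on $\lambda_2$; renaming $\lambda\coloneqq \lambda_2$ and defining $m^2$ as the affine combination of $E,\lambda_1,\lambda_2$ uniquely determined by requiring the $s_3$ equation to take the form $\frac{m^2}{4s_3(1-s_3)}\psi_3$, I would then check by clearing denominators that the three contributions in the $j=1,2$ equations collapse to the single quadratic numerator $-Es_j^2+(\lambda+aE-(a-1)m^2)s_j-a\lambda$ over the common denominator $4s_j(s_j-1)(s_j-a)^2$.

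The main obstacle, though entirely routine, is the algebraic bookkeeping required to combine the three rational terms from the Stäckel expansion into the neat quadratic-in-$s_j$ numerator and to pin down the correct affine identification between $(\lambda_1,\lambda_2)$ and $(\lambda,m^2)$. Once this identification is fixed, the two stated ODEs follow by direct substitution. As an independent consistency check, I would verify that the same equations arise by smoothly degenerating the generalised Lamé equation (\ref{eq:Gen Lame}) via $e_4\to e_3^+$, in exact analogy with the prolate degeneration $e_3\to e_2^+$ used earlier in this section.
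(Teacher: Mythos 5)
Your proposal follows essentially the same route as the paper: the paper obtains this lemma by exactly the St\"ackel computation of Lemma \ref{SeparationLemma} applied to the metric of \eqref{eq:prolate coord def-1} and the matrix \eqref{eq: stackel oblate} (``a computation similar to the prolate case''), with the identification of $(m,\lambda)$ fixed by matching, and your closing consistency check via $e_4\to e_3^+$ is precisely the paper's subsequent degeneration lemma. One small correction to your ``key structural observation'': computing $g_{jj}=\sum_i(\partial_{s_j}x_i)^2$ directly, the factor $(s_j-a)$ appears only to \emph{first} order in $g_{jj}$ for $j=1,2$, while the factors $(s_1-a)(s_2-a)$ sit in $g_{33}$ and cancel in $\det g$; the full coefficient $\tfrac{1}{2}\cdot\tfrac{2}{s_j-a}$ in $f_j=\partial_{s_j}\log(g^{jj}\sqrt{\det g})$ then comes from the single surviving power of $(s_j-a)$ in $g^{jj}\sqrt{\det g}$, not from a doubled factor inside $g_{jj}$ itself. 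This misattribution is harmless — the logarithmic differentiation you propose, carried out on the actual metric, still yields $f_j=\tfrac{1}{2s_j}+\tfrac{1}{2(s_j-1)}+\tfrac{1}{s_j-a}$ and the stated numerator after the (existing and consistent) affine identification of the separation constants with $(\lambda,m^2)$.
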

Again, the equation for $\psi_{3}$ in \eqref{eq:OblateEqn2} can be converted
to the trivial equation (\ref{eq:trivial equation}) using the same
transformation as for prolate. This yields a discrete spectral
parameter $m\in\mathbb{Z}$ which represents quantised angular momentum
in the $(x_{3},x_{4})$ plane with solutions $e^{im\phi}$.
As with the prolate system, the separated ODEs can also be obtained by degenerating the generalised Lam\'{e} equation.
\begin{lem}
The separated equations for the oblate system (\ref{eq:OblateEqn1}) and (\ref{eq:OblateEqn2})
smoothly degenerate from those of the ellipsoidal system.
\end{lem}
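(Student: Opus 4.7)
The approach will mirror the prolate degeneration proof of the preceding lemma: collapse two adjacent semi-major axes, zoom in on the degenerating coordinate, reparametrise the two ellipsoidal spectral constants $(\lambda_1,\lambda_2)$ into the oblate pair $(m,\lambda)$, and take $\epsilon \to 0$ in (\ref{eq:Gen Lame}). The only change is that the coincidence is now between $e_3$ and $e_4$ rather than $e_2$ and $e_3$, so the zoom-in is performed on $s_3$ instead of $s_2$.

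Concretely, I would introduce
\[
e_4 = e_3 + \epsilon, \qquad s_3 = e_3 + \epsilon \tilde s_3, \qquad p_3 = \tilde p_3 / \epsilon,
\]
with $\tilde s_3 \in [0,1]$, normalise as $(e_1,e_2,e_3=e_4)=(0,1,a)$, and adopt the spectral parameter map
\[
\lambda_2 = a\lambda, \qquad \lambda_1 = \lambda + aE - (a-1)m^2,
\]
which is read off by matching numerators over a common denominator in (\ref{eq:Gen Lame}) and (\ref{eq:OblateEqn1}). This is the analogue of (\ref{eq:prodegen}), tailored to the fact that the two largest (rather than the middle) axes now merge.

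For $j=1,2$ the coordinates stay uniformly bounded away from $e_3=a$ as $\epsilon\to 0$, so the degeneration is routine: in the first-derivative coefficient the pair $\tfrac{1}{2}\bigl(\tfrac{1}{s_j-a}+\tfrac{1}{s_j-a-\epsilon}\bigr)$ limits to $\tfrac{1}{s_j-a}$, the factor $(s_j-a-\epsilon)$ in the potential denominator becomes $(s_j-a)$, and the parameter map converts the numerator $-E s_j^2+\lambda_1 s_j-\lambda_2$ into precisely the numerator appearing in (\ref{eq:OblateEqn1}).

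The step requiring more care, and the main obstacle, is the equation for $s_3$, because its two singular points pinch together and $s_3$ lives in the shrinking interval $[a,a+\epsilon]$. Under the change of variable, derivatives acquire factors $1/\epsilon$ and $1/\epsilon^2$; after multiplying the ODE through by $\epsilon^2$ the first-derivative coefficient becomes $\tfrac{\epsilon}{2}\bigl(\tfrac{1}{s_3}+\tfrac{1}{s_3-1}\bigr)+\tfrac{1}{2}\bigl(\tfrac{1}{\tilde s_3}+\tfrac{1}{\tilde s_3-1}\bigr)$, whose first (regular) half dies with $\epsilon$ and whose singular half is exactly the coefficient of $\psi_3'$ in (\ref{eq:OblateEqn2}). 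In the potential, $(s_3-e_3)(s_3-e_4)=\epsilon^2\tilde s_3(\tilde s_3-1)$ cancels the $\epsilon^2$ prefactor, and the numerator evaluated at $s_3=a$ is $-Ea^2+\lambda_1 a-\lambda_2$; the parameter map was chosen precisely so that this equals $-a(a-1)m^2$, so the leading-order quotient is $m^2/(4\tilde s_3(1-\tilde s_3))$ as required by (\ref{eq:OblateEqn2}), with the $O(\epsilon)$ corrections from both numerator and denominator vanishing in the limit. Verifying this leading-order algebraic consistency condition is the only non-routine calculation.
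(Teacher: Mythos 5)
Your proposal is correct and follows essentially the same route as the paper, which proves this lemma by observing that the prolate argument carries over verbatim under the transformation $(e_4,s_3,p_3)=(e_3+\epsilon,\,e_3+\epsilon\tilde s_3,\,\tilde p_3/\epsilon)$ with $\epsilon\to 0$; your explicit parameter map $\lambda_2=a\lambda$, $\lambda_1=\lambda+aE-(a-1)m^2$ and the leading-order check $-Ea^2+\lambda_1 a-\lambda_2=-a(a-1)m^2$ simply spell out the details the paper delegates to the prolate case and to \cite{Nguyen2023}.
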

\begin{proof}
The proof is identical to that of prolate but uses
the transformation 
\begin{align*}
(e_{4},s_{3},p_{3}) & =(e_{3}+\epsilon,e_{3}+\epsilon\tilde{s}_{3},\frac{\tilde{p}_{3}}{\epsilon})
\end{align*}
and taking the limit as $\epsilon\to 0$. For more detail, see \cite{Nguyen2023}.
\end{proof}
Equation (\ref{eq:OblateEqn1}) is an example of a Heun equation. 
\begin{lem}
\label{Lemma2}By the change of
dependent variable $\psi_j=(z-a)^{|m|/2}W_j$, (\ref{eq:OblateEqn1}) can
be written in the form of (\ref{eq:GenHeun}) where $\gamma=\delta=\frac{1}{2},\epsilon=1+|m|,q=\frac{1}{4}\left(|m|-\lambda\right)$
and $\alpha,\beta$ as in (\ref{eq:alphabeta}). In particular, we have 
\begin{equation}
    W_j^{''} + \left(\frac{1}{2z} + \frac{1}{2(z-1)} + \frac{1 + |m|}{z-a}\right)W_j^{'} + \frac{(-E + |m|(|m|+2))z- (|m| - \lambda)}{4z(z-1)(z-a)}W_j. \label{eq:oblateHeunSpecificBase}
\end{equation}
\end{lem}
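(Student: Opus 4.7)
The plan is a direct verification by substituting $\psi_j = (z-a)^{|m|/2} W_j$ (writing $z$ for $s_j$) into \eqref{eq:OblateEqn1} and reading off the Heun parameters from the resulting ODE. This is completely parallel to the prolate case (Lemma \ref{LemmaProlate}), with the only structural difference being that the branch factor is now concentrated at $z=a$, the coalesced pair of singularities $e_3=e_4=a$, rather than at $z=1$.

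First I would use the product rule to compute $\psi_j'$ and $\psi_j''$, then substitute into \eqref{eq:OblateEqn1} and cancel the overall factor of $(z-a)^{|m|/2}$. The coefficient of $W_j'$ combines as
\begin{equation*}
\frac{|m|}{z-a} + \frac{1}{2}\left(\frac{1}{z} + \frac{1}{z-1} + \frac{2}{z-a}\right) = \frac{1}{2z} + \frac{1}{2(z-1)} + \frac{1+|m|}{z-a},
\end{equation*}
which matches the required form with $\gamma = \delta = \tfrac12$ and $\epsilon = 1+|m|$. Observe that $\epsilon$ increases by $|m|$ precisely because one characteristic exponent at $z=a$ has been subtracted out by the rescaling.

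Next, the coefficient of $W_j$ picks up three contributions: the $(|m|/2)(|m|/2-1)/(z-a)^2$ term from $\psi_j''$, the $|m|/(2(z-a))$ factor multiplied by the original first-derivative coefficient of \eqref{eq:OblateEqn1}, and the original zeroth-order coefficient. The only non-trivial step is to combine these over the common denominator $4z(z-1)(z-a)^2$ and verify that one factor of $(z-a)$ cancels, reducing the denominator to $4z(z-1)(z-a)$. This cancellation is guaranteed \emph{a priori} because $|m|/2$ is one of the two Frobenius exponents of \eqref{eq:OblateEqn1} at $z=a$, so a polynomial $W_j$ is consistent with the exponent $0$ branch of the rescaled equation. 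What remains is a linear polynomial in $z$ whose coefficients, after simplification, give the numerator $(|m|(|m|+2)-E)z - (|m|-\lambda)$, as claimed in \eqref{eq:oblateHeunSpecificBase}.

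Finally, comparison with the canonical Heun form \eqref{eq:GenHeun} yields $q = \tfrac{1}{4}(|m|-\lambda)$ and $\alpha\beta = \tfrac{1}{4}(|m|(|m|+2)-E)$. Combined with the Fuchsian constraint $\alpha+\beta+1 = \gamma+\delta+\epsilon = 2+|m|$, this gives a quadratic whose roots are precisely those in \eqref{eq:alphabeta}. The main obstacle is simply the bookkeeping of the $(z-a)$ cancellation in the zeroth-order coefficient; conceptually nothing new arises relative to Lemma \ref{LemmaProlate}, since the transformation is the standard absorption of a Frobenius exponent at the critical singular point.
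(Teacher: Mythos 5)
Your verification is correct: the substitution $\psi_j=(z-a)^{|m|/2}W_j$ does yield the stated first-order coefficient, the numerator of the zeroth-order term factors as $(z-a)\bigl[(|m|(|m|+2)-E)z-(|m|-\lambda)\bigr]$ so the double pole at $z=a$ cancels exactly as you argue, and matching with \eqref{eq:GenHeun} together with $\alpha+\beta+1=2+|m|$ reproduces $q=\tfrac14(|m|-\lambda)$ and \eqref{eq:alphabeta}. This is essentially the same (direct-substitution) argument the paper relies on, which it leaves implicit by analogy with the prolate case, so nothing further is needed.
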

Let $Hp_d(z)$ be a Heun polynomial solution to \eqref{eq:oblateHeunSpecificBase} of degree $d$ and let $z_{1},\dots,z_{d}$ be its roots. Many of the results for prolate apply for the oblate
system which we state in the following Lemma. 
\begin{lem}
    The combined solution to \eqref{eq:OblateEqn1} and \eqref{eq:OblateEqn2} is the product 
    \begin{equation}
    \psi_{1}\psi_{2}\psi_{3}=(a-s_{1})^{m/2}(a-s_{2})^{m/2}Hp_{d}(s_{1})Hp_{d}(s_{2})e^{im\phi}.\label{eq:HeunOblateSol}
    \end{equation}
    Expressed in the original Cartesian coordinates using (\ref{eq:prolate coord def-1}), \eqref{eq:HeunOblateSol} is a homogeneous polynomial of total degree $\tilde{D} = 2d + |m|$ given by
\begin{equation}
\Phi_{\tilde{D}}=\left(a(a-1)\right)^{m/2}\left(x_{3}+i\text{\normalfont sign}(m)x_{4}\right)^{|m|}\prod_{k=1}^{d}z_{k}(z_{k}-1)\left(r^2-\frac{x_{1}^{2}}{z_{k}}a-\frac{a-1}{z_k-1}x_{2}^{2}\right).\label{eq:oblatecombinedwf}
\end{equation}
\end{lem}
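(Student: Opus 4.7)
The plan is to mirror the strategy used in the prolate case (Lemma \ref{prolateincartesianlemma} and the subsequent lemma), adapting each step to the oblate degeneration. First I would assemble the product $\psi_1 \psi_2 \psi_3$ directly from results already established. Lemma \ref{Lemma2} states that the substitution $\psi_j = (z-a)^{|m|/2} W_j$ reduces \eqref{eq:OblateEqn1} to the canonical Heun equation, whose polynomial solutions are the Heun polynomials $W_j = Hp_d(s_j)$. Since $s_1, s_2 \in [0,1]$ and $a \geq 1$, the real-valued normalisation factor is $(a - s_j)^{|m|/2}$ up to an overall sign. The factor $\psi_3 = e^{im\phi}$ comes from the trivial equation \eqref{eq:trivial equation} obtained from \eqref{eq:OblateEqn2} via $s_3 = \cos^2 \phi$. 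Multiplying the three factors produces \eqref{eq:HeunOblateSol}.

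Next I would translate \eqref{eq:HeunOblateSol} into Cartesian coordinates. A direct computation using the oblate coordinate definition \eqref{eq:prolate coord def-1} gives
\begin{equation*}
    x_3^2 + x_4^2 = \frac{(s_1 - a)(s_2 - a)}{a(a-1)} = \frac{(a-s_1)(a-s_2)}{a(a-1)},
\end{equation*}
so $(a-s_1)^{|m|/2}(a-s_2)^{|m|/2} = (a(a-1))^{|m|/2}(x_3^2 + x_4^2)^{|m|/2}$. Combining this with the identity $e^{im\phi}(x_3^2 + x_4^2)^{|m|/2} = (x_3 + i\,\sign(m)\, x_4)^{|m|}$ (the analogue of the $(x_2,x_3)$ argument in the prolate case, with the role of the rotation plane now taken by $(x_3, x_4)$) collapses the normalisation prefactor to $(a(a-1))^{|m|/2}(x_3 + i\,\sign(m)\, x_4)^{|m|}$ as required by \eqref{eq:oblatecombinedwf}.

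The heart of the proof is the polynomial identity
\begin{equation*}
(\theta - s_1)(\theta - s_2) = \theta(\theta - 1)\left( r^2 - \frac{a x_1^2}{\theta} - \frac{(a-1) x_2^2}{\theta - 1}\right)
\end{equation*}
for $\theta \neq 0, 1$, which is the oblate analogue of \eqref{eq:HeinProof-1}. Both sides are monic quadratics in $\theta$, so it is enough to verify that the right-hand side vanishes at $\theta = s_1$ and $\theta = s_2$. Substituting $a x_1^2 = s_1 s_2$ and $(a-1) x_2^2 = -(s_1-1)(s_2-1)$ from \eqref{eq:prolate coord def-1} into the right-hand side at $\theta = s_1$ collapses to $s_1(s_1-1)(r^2 - 1)$, which vanishes on $S^3$; the case $\theta = s_2$ is symmetric. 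Writing $Hp_d(z) = \prod_{k=1}^d (z_k - z)$ and evaluating the identity at $\theta = z_k$ then produces the product representation in \eqref{eq:oblatecombinedwf}.

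Finally, homogeneity of degree $\tilde{D} = 2d + |m|$ follows immediately: each of the $d$ bracketed factors is degree $2$ in $\bm{x}$ (as a linear combination of $r^2$, $x_1^2$, and $x_2^2$ after clearing denominators), and the complex phase contributes degree $|m|$. The main subtlety is the bookkeeping around $\sign(m)$ and the convention $(a - s_j)^{|m|/2}$ versus $(s_j - a)^{|m|/2}$, which is handled exactly as in the prolate proof; no new analytic input is required beyond the algebraic identity above.
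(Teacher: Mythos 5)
Your proposal is correct and follows essentially the same route as the paper: the same monic-quadratic identity $(\theta-s_1)(\theta-s_2)=\theta(\theta-1)\bigl(r^2-\tfrac{a x_1^2}{\theta}-\tfrac{(a-1)x_2^2}{\theta-1}\bigr)$ verified via the oblate coordinate formulas on $r^2=1$, combined with $e^{im\phi}=\bigl(\tfrac{x_3+ix_4}{\sqrt{x_3^2+x_4^2}}\bigr)^m$ and the factorisation of $Hp_d$ over its roots. Your explicit check of the vanishing at $\theta=s_1,s_2$ and the degree count simply spell out details the paper leaves implicit by referring to the prolate argument.
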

\begin{proof}
The proof follows the same logic as used for the prolate system. To obtain (\ref{eq:oblatecombinedwf}),
we note the use of the following identity:
\[
(\theta-s_{1})(\theta-s_{2})=\theta(\theta-1)\left(r^2-\frac{x_{1}^{2}}{\theta}a-\frac{a-1}{\theta-1}x_{2}^{2}\right),
\]
which holds since both sides are monic quadratic polynomials in $\theta$
that vanish at $\theta=s_{1},s_{2}$. Using the definition of oblate coordinates in (\ref{eq:prolate coord def-1})
and observing that $e^{im\phi}=\left(\frac{x_{3}+ix_{4}}{\sqrt{x_{3}^{2}+x_{4}^{2}}}\right)^{m}$,
the result follows. 
\end{proof}
As with the prolate system, we only consider the 4 discrete symmetry classes about the $x_{1}$ and $x_{2}$ planes. Let $\bm{\mu} = (\mu_1, \mu_2)$ where $\mu_i \in \{0, 1\}$ and $\mu_1$ ($\mu_2$) being $1$ ($0$) denotes a solution odd (even) about the $x_1$ ($x_2$) axis. To investigate these symmetries, consider the change of dependent variable in \eqref{eq:oblateHeunSpecificBase} given by $\phi_1 = s_{1}^{\mu_1/2}(a-s_1)^{\mu_{1}/2}W_1$ and $\phi_2 = s_{2}^{\mu_{2}/2}(a-s_2)^{\mu_2/2}W_2$
which corresponds to multiplication by $x_{1}^{\mu_{1}}x_{2}^{\mu_{2}}$. This change of variables leads to a Heun equation \eqref{eq:GenHeun} with parameters given in Table \ref{tab: HeunOblParams}.
As in the prolate example, denote by $Hp_{d}^{\bm{\mu}}(z)$ solutions to \eqref{eq:GenHeun} for the symmetry class $\bm{\mu}$ and let the product $Hp_{d}^{\bm{\mu}}(s_1)Hp_{d}^{\bm{\mu}}(s_2)e^{im\phi}$, when converted back into Cartesian coordinates, be given by $\Phi_{\tilde{D}}^{\bm{\mu}}(\bm{x})$. Set \begin{equation}
    \Psi_{D}^{\bm{\mu}}(\bm{x})\coloneqq x_{1}^{\mu_1}x_{2}^{\mu_2}\Phi_{\tilde{D}}^{\bm{\mu}}(\bm{x})
\end{equation}
where $D= \tilde{D} + \sum_{i=1}^{2}\mu_i$. Again, \eqref{eq:oblatecombinedwf} is a special case with $\bm{\mu} = (0,0)$. Following the same argument used in prolate, we have the following Lemma.
\begin{lem}
    The $\Psi_{D}^{\bm{\mu}}(\bm{x})$ are $D$ degree homogeneous, harmonic polynomials and the energy eigenvalue is given by $E=D(D+2)$.  
\end{lem}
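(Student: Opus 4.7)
The plan is to adapt the prolate argument directly, since the oblate separated equations have the same Heun structure with only the roles of the singular points permuted. First, I would establish harmonicity of $\Psi_D^{\bm\mu}(\bm x)$ by invoking the same Volkmer-type identity used in the ellipsoidal case \eqref{eq: VolkmerEqn}: multiplication by $x_1^{\mu_1}x_2^{\mu_2}$ shifts the Laplacian by precisely the first-order terms $\sum_{j=1}^{2}\frac{2\mu_j}{x_j}\partial_{x_j}\Phi_{\tilde D}^{\bm\mu}$, and these are exactly the corrections encoded in the transformed Heun equation whose parameters appear in Table \ref{tab: HeunOblParams}. Hence if $\Phi_{\tilde D}^{\bm\mu}$ solves the transformed Heun equation, then $\Psi_D^{\bm\mu}$ solves the original Schrödinger equation \eqref{eq:Schrodinger}.

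Next, I would verify the degree. By the preceding lemma the product $Hp_d^{\bm\mu}(s_1)Hp_d^{\bm\mu}(s_2)e^{im\phi}$, once rewritten in Cartesian coordinates using \eqref{eq:prolate coord def-1}, is a homogeneous polynomial of degree $\tilde D = 2d + |m|$: each root $z_k$ contributes a quadratic factor in the $x_i$, and the angular factor $(x_3 + i\,\sign(m)\,x_4)^{|m|}$ contributes $|m|$. Multiplying by $x_1^{\mu_1}x_2^{\mu_2}$ then yields homogeneity of degree $D = \tilde D + \mu_1 + \mu_2$, as claimed.

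For the energy eigenvalue I would repeat the prolate computation. Polynomial termination of the three-term recurrence \eqref{eq:3 term} for the Heun function forces $C_{d+1} = 0$, i.e.~the transformed exponent $\tilde\alpha$ must equal $-d$. Translating back to the untransformed $\alpha$ via Table \ref{tab: HeunOblParams} (whose shifts arise from factoring out $s^{1/2}$ or $(s-a)^{1/2}$ at each of the two relevant singularities) yields the quantisation condition $\alpha = -d + \tfrac{1}{2}(\mu_1 + \mu_2)$. Substituting this into the expression for $\alpha$ in \eqref{eq:alphabeta} and using $D = 2d + |m| + \mu_1 + \mu_2$ gives $E = D(D+2)$ after simplification, identical in form to the prolate case.

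The main obstacle is bookkeeping: carefully tracking how each of the four symmetry classes shifts the exponents $\gamma,\delta,\epsilon$ in the Heun equation and confirming that Table \ref{tab: HeunOblParams} (obtained by standard Möbius-type transformations of \eqref{eq:GenHeun}) produces a shift of $\tfrac{1}{2}(\mu_1+\mu_2)$ in $\alpha$ uniformly across all four classes. Once that uniformity is checked, the computation reduces to the same algebraic manipulation as in the prolate case, and $E = D(D+2)$ follows immediately.
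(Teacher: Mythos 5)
Your proposal follows essentially the same route as the paper: the paper's own proof of this oblate lemma is literally the prolate argument transplanted to the oblate Heun equation (harmonicity via the Volkmer-type identity \eqref{eq: VolkmerEqn}, homogeneity of degree $D$ from multiplying the degree-$\tilde{D}$ polynomial $\Phi_{\tilde{D}}^{\bm{\mu}}$ by $x_1^{\mu_1}x_2^{\mu_2}$, and the energy from termination of the three-term recurrence \eqref{eq:3 term}), so in strategy you and the paper coincide.

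One algebraic point to tidy up: translating $\tilde{\alpha}=-d$ back through Table \ref{tab: HeunOblParams} gives $\tilde{\alpha}=\alpha+\mu_1(1-\gamma)+\mu_2(1-\delta)$ with $\gamma=\delta=\tfrac12$, hence $\alpha=-d-\tfrac12(\mu_1+\mu_2)$, not $-d+\tfrac12(\mu_1+\mu_2)$. It is the minus sign that makes \eqref{eq:alphabeta} yield $\sqrt{1+E}=1+|m|+2d+\mu_1+\mu_2=D+1$ and hence $E=D(D+2)$; with the plus sign the same algebra gives $\bigl(D-2\mu_1-2\mu_2\bigr)\bigl(D-2\mu_1-2\mu_2+2\bigr)$ instead. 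You have inherited this sign from the paper's prolate condition \eqref{eq:alphaProCond}, so it is not a flaw in your approach, but your closing claim that the substitution "gives $E=D(D+2)$ after simplification" only goes through once the sign in the quantisation condition is corrected.
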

\begin{table}
\begin{centering}
\textcolor{black}{}%
\begin{tabular}{|c|c|c|c|c|c|c|}
\hline 
\textcolor{black}{$(\mu_1,\mu_2)$} & \textcolor{black}{$\tilde{\alpha}$} & \textcolor{black}{$\tilde{\beta}$} & \textcolor{black}{$\tilde{\gamma}$} & \textcolor{black}{$\tilde{\delta}$} & \textcolor{black}{$\tilde{\epsilon}$} & \textcolor{black}{$\tilde{q}$}\tabularnewline
\hline 
\textcolor{black}{$(0,0)$} & \textcolor{black}{$\alpha$} & \textcolor{black}{$\beta$} & \textcolor{black}{$\gamma$} & \textcolor{black}{$\delta$} & \textcolor{black}{$\epsilon$} & \textcolor{black}{$q$}\tabularnewline
\hline 
\textcolor{black}{$(1,0)$} & \textcolor{black}{$\alpha+1-\gamma$} & \textcolor{black}{$\beta+1-\gamma$} & \textcolor{black}{$2-\gamma$} & \textcolor{black}{$\delta$} & \textcolor{black}{$\epsilon$} & \textcolor{black}{$q+(1-\gamma)(a\delta+\epsilon)$}\tabularnewline
\hline 
\textcolor{black}{$(0,1)$} & \textcolor{black}{$\alpha+1-\delta$} & \textcolor{black}{$\beta+1-\delta$} & \textcolor{black}{$\gamma$} & \textcolor{black}{$2-\delta$} & \textcolor{black}{$\epsilon$} & \textcolor{black}{$q+(1-\gamma)(a\delta+\epsilon)$}\tabularnewline
\hline 
\textcolor{black}{$(1,1)$} & \textcolor{black}{$\alpha+2-\gamma-\delta$} & \textcolor{black}{$\beta+2-\gamma-\delta$} & \textcolor{black}{$2-\gamma$} & \textcolor{black}{$2-\delta$} & \textcolor{black}{$\epsilon$} & \textcolor{black}{$q+2a(1-\gamma)+2(1-\gamma)\epsilon$}\tabularnewline
\hline 
\end{tabular}
\par\end{centering}
\textcolor{black}{\caption{Parameters of (\ref{eq:GenHeun}) for the various symmetry classes
of the oblate system.\label{tab: HeunOblParams}}
}
\end{table}
For a given energy the total number of eigenstates
is computed using the same methodology as for the prolate case.
\begin{lem}
For a fixed value of $D$, the total number of eigenstates is given by $N=(D+1)^{2}$. The number
of states for each symmetry class given in Table \ref{tab: HeunOblParams}
is the same as that shown in Table \ref{tab:Pro Num States}.
\end{lem}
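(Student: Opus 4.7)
The plan is to follow the template established in the prolate case almost verbatim, since the structural setup is identical: we have a Heun equation \eqref{eq:oblateHeunSpecificBase} whose polynomial solutions are indexed by a nonnegative integer $d$, the angular quantum number $m \in \mathbb{Z}$, and the discrete symmetry class $(\mu_1,\mu_2)$, all constrained by the total degree relation $D = 2d + |m| + \mu_1 + \mu_2$ (the oblate analogue of \eqref{eq:alphaProCond}, coming from the termination condition $C_{d+1}=0$ applied to the transformed Heun equation with parameters from Table \ref{tab: HeunOblParams}).

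First I would record the parity constraint: since $D - |m| - \mu_1 - \mu_2$ must equal $2d \geq 0$, the admissible values of $|m|$ for symmetry class $(\mu_1,\mu_2)$ run over $\{0,1,\dots,D-\mu_1-\mu_2\}$ with the same parity as $D - \mu_1 - \mu_2$. Equivalently, for fixed $D$, the parity of $E$ together with the symmetry class determines whether $|m|$ must be even or odd, recovering the $E$-even/$E$-odd split in Table \ref{tab:Pro Num States}. For each admissible $|m|$ the Heun polynomial $Hp_d^{\bm{\mu}}$ has degree $d = (D - |m| - \mu_1 - \mu_2)/2$, and by Lemma \ref{Lemma3} (or equivalently by the dimension of the solution space of the three-term recurrence \eqref{eq:3 term} truncated at level $d+1$) it carries $d+1$ independent eigenstates, i.e.~$\tfrac{1}{2}(D-|m|-\mu_1-\mu_2)+1$ states.

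Next, I would sum these counts over $m$. The value $m=0$ is counted once and every $|m|>0$ is counted twice (for $m$ and $-m$). For $(\mu_1,\mu_2)=(0,0)$ and $D$ even, this gives
\[
N_{(0,0)}^{\text{even}} = \left(\tfrac{D}{2}+1\right) + 2\sum_{k=1}^{D/2}\left(\tfrac{D}{2}+1-k\right) = \left(\tfrac{D}{2}+1\right)^2,
\]
matching the prolate formula. The other three symmetry classes (and the $D$-odd case) are handled by the identical telescoping sum, only with the effective maximum degree $D - \mu_1 - \mu_2$ and with $m=0$ omitted when $|m|$ is forced to be odd. The resulting entries are exactly those displayed in Table \ref{tab:Pro Num States}.

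Finally, I would verify $N = (D+1)^2$ by summing the four symmetry class totals column by column: for $E$ even this is $(D/2+1)^2 + 2(D/2+1)(D/2) + (D/2)^2 = (D+1)^2$, and for $E$ odd it is $\tfrac{1}{4}(D+1)(D+3) + 2\cdot\tfrac{1}{4}(D+1)^2 + \tfrac{1}{4}(D+1)(D-1) = (D+1)^2$, as required. There is no substantial obstacle here beyond the bookkeeping of parities and the verification that the degeneration of the Heun parameters in Table \ref{tab: HeunOblParams} produces exactly the same shift $\alpha \mapsto \alpha + (\mu_1+\mu_2)/2$ that appeared in the prolate case; the algebra for that is routine and already implicit in the construction of the table.
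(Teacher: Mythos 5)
Your proposal is correct and follows essentially the same route as the paper, which establishes the oblate count by repeating the prolate argument: fix $(\mu_1,\mu_2)$ and $m$, note that $d=(D-|m|-\mu_1-\mu_2)/2$ yields $d+1$ states from the truncated three-term recurrence, sum over the admissible $m$ of the correct parity, and add the four symmetry-class totals to obtain $(D+1)^2$. The only nitpick is that Lemma~\ref{Lemma3} gives the Heine--Stieltjes count $\binom{d+2}{2}$ for the generalised Lam\'e equation and is not the right reference in the Heun setting; your parenthetical appeal to the $(d+1)\times(d+1)$ truncation of the recurrence \eqref{eq:3 term} is the correct justification.
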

Using the three term recurrence \eqref{eq:Heun3term} and the same methodology for prolate, we find the joint spectrum. In Fig. \ref{fig:OblateSpecandActions} a) we show the joint spectrum $(m\hbar,\lambda\hbar^{2})$
for $D=20$ and $a=2.4$. As in the prolate system, we have $\hbar=\frac{1}{D+1}$
and a total of $441$ total eigenstates. The blue, red, orange and
cyan points correspond to the $(0,0),(1,0),(0,1)$ and $(1,1)$ symmetry classes
respectively.

We also show the classical outline of the momentum
map in black. Unlike the prolate system, this is not a semi-toric
system. One interesting feature of the joint spectrum is the existence
of period doubling. A quantum consequence of this
is the near-degeneracy in the upper chamber. 
\begin{figure}[H]
\begin{centering}
\includegraphics[width=7.5cm,height=7.5cm,keepaspectratio]{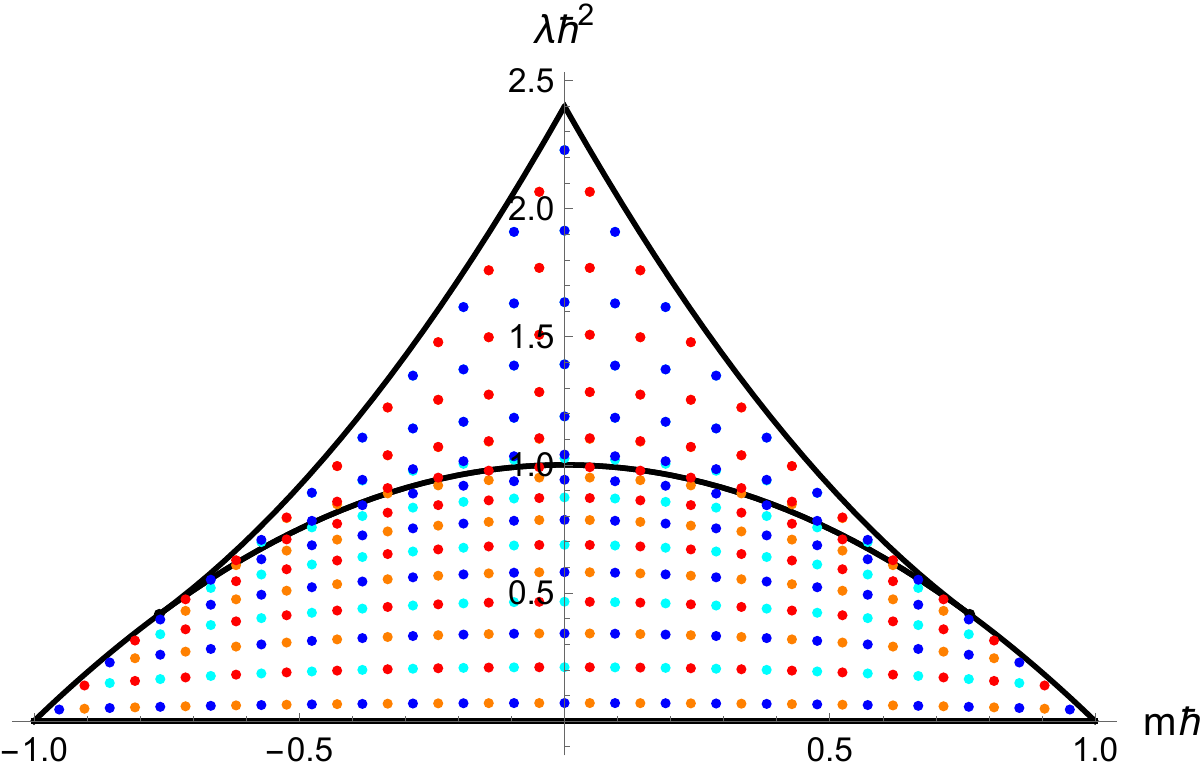}\includegraphics[width=7.5cm,height=7.5cm,keepaspectratio]{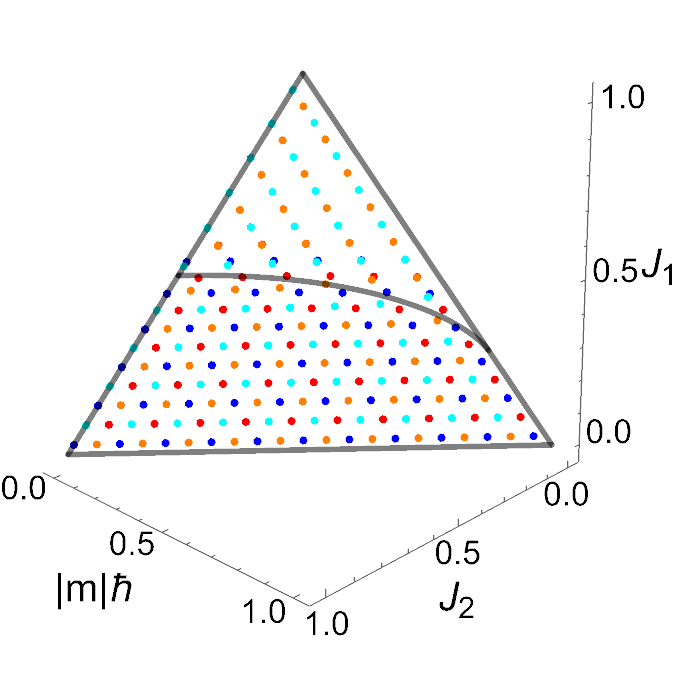}
\par\end{centering}
\caption{a) Joint spectrum of the oblate system $(m\hbar,\lambda\hbar^2)$ for $D=20$ and $a=2.4$ showing
a total of 441 states. b) Corresponding action map. Note
that light blue (cyan) and orange states are hidden by the nearby
degenerate red and blue states in the upper chamber. \label{fig:OblateSpecandActions}}
\end{figure}
To compute the actions, we use the same methodology
as for the prolate system. Doing so gives the following three continuous
actions:
\begin{align*}
J_{1} & =\frac{2}{\pi}\int_{0}^{\min(r_{1},1)}p_{1}ds & J_{2} & =\frac{2}{\pi}\int_{\max(r_{1},1)}^{\min(r_{2},a)}p_{2}ds & J_{3} & =|\ell_{34}|
\end{align*}
where 
\[
p_{i}^{2}=\frac{-\tilde{E}s_{i}^{2}+(\tilde{E}a+g-(a-1)m^{2})s_{i}-ag}{4s_{i}(s_{i}-1)(s_{i}-a)^{2}}
\]
 for $i=1,2$, $r_{2}\ge1$ and $0\le r_{1}\le r_{2}\le a.$ The action
map corresponding to the joint spectrum in Fig. \ref{fig:OblateSpecandActions} a) is shown in b). Again,
we note that $J_{1}+J_{2}+J_{3}=1$. 

\subsection{Lam\'{e} Coordinates}\label{sec:LameS3}

The Lam\'{e} coordinates are unique compared to the other singly degenerate coordinate systems (oblate, prolate)
since they are\textcolor{black}{{} an extension
of ellipsoidal coordinates from $S^{2}$ onto $S^{3}$. Consequently,
while prolate and oblate coordinates have symmetry group $SO(2)$,
the Lam\'{e} coordinates have the larger symmetry group $SO(3)$.}

As with oblate, there are two possible (yet equivalent)
ways to define Lam\'{e} coordinates. 
Either the largest three $e_i$ become equal, or the smallest three.
We choose the former and define
Lam\'{e} coordinates by 
\begin{align}
x_{1}^{2} & =s_{1} & x_{2}^{2} & =\frac{(1-s_{1})(s_{2}-f_{1})(s_{3}-f_{1})}{(f_{2}-f_{1})(f_{3}-f_{1})}\nonumber \\
x_{3}^{2} & =\frac{(s_{1}-1)(f_{2}-s_{2})(f_{2}-s_{3})}{(f_{1}-f_{2})(f_{2}-f_{3})} & x_{4}^{2} & =\frac{(s_{1}-1)(f_{3}-s_{2})(f_{3}-s_{3})}{(f_{2}-f_{3})(f_{3}-f_{1})}\label{eq:Lame def}
\end{align}
where $0\le s_{1}\le1$ and $0\le f_{1}\le s_{2}\le f_{2}\le s_{3}\le f_{3}$ with real parameters $f_i$.

A possible St\"{a}ckel matrix is given by 
\begin{equation}
\sigma_{\text{Lamé}}=\frac{1}{4}\begin{pmatrix}\frac{1}{s_{1}(1-s_{1})} & \frac{-1}{s_{1}(1-s_{1})^{2}} & 0\\
0 & \frac{1}{(f_{3}-s_{2})(s_{2}-f_{2})} & \frac{1}{(f_{3}-s_{2})(s_{2}-f_{2})(s_{2}-f_{1})}\\
0 & \frac{1}{(f_{3}-s_{3})(s_{3}-f_{2})} & \frac{1}{(f_{3}-s_{3})(s_{3}-f_{2})(s_{3}-f_{1})}
\end{pmatrix}\label{eq:StackelLame}
\end{equation}
 with corresponding integrals obtained by separation as
\begin{align*}
F & =\ell_{12}^{2}+\ell_{13}^{2}+\ell_{14}^{2}, & G & =f_{1}\ell_{34}^{2}+f_{2}\ell_{24}^{2}+f_{3}\ell_{23}^{2}.
\end{align*}
As in the other coordinates, reduction by $H$ gives a two degree of freedom integrable system $(F,G)$ on $S^2\times S^2$ with Poisson structure $B_{\bm{X}, \bm{Y}}$.
The integral $F$ is a result of the $SO(3)$
symmetry. Using (\ref{eq:StackelLame}) and the same methodology used in the previous systems, we obtain
the following separated ODEs
\begin{subequations}
\begin{align}
\psi_{1}^{''}+\frac{1}{2}\left(\frac{1}{s_{1}}+\frac{3}{s_{1}-1}\right)\psi_{1}^{'}+\frac{f-Es_{1}}{4s_{1}(1-s_{1})^{2}}\psi_{1} & =0\label{eq:raw equations}\\
\psi_{k}^{''}+\frac{1}{2}\left(\frac{1}{z-f_{1}}+\frac{1}{z-f_{2}}+\frac{1}{z-f_{3}}\right)\psi_{k}^{'}+\frac{(f-E)z+g}{4(z-f_{1})(z-f_{2})(z-f_{3})}\psi_{k} & =0 & k=2,3\label{eq:raw equations2}
\end{align}
\end{subequations}
The separated equations can also be obtained by smoothly degenerating the generalised Lam\'{e} equation.
\begin{lem}
    Equations \eqref{eq:raw equations} and \eqref{eq:raw equations2} arise from smoothly degenerating the separated generalised Lam\'{e} equations in \eqref{eq:Gen Lame}.
\end{lem}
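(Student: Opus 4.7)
The plan mirrors the prolate proof, adapted to the simultaneous coalescence of three semi-major axes rather than two. Normalising so that $e_1=0$, introduce the canonical transformation
\[
e_{k} = 1 + \epsilon f_{k-1} \quad (k=2,3,4), \qquad s_j = 1 + \epsilon \tilde s_j, \quad p_j = \tilde p_j / \epsilon \quad (j=2,3),
\]
with $(s_1,p_1)$ unchanged. Under this substitution the ellipsoidal ranges $e_j \le s_j \le e_{j+1}$ become the Lam\'e ranges $f_{j-1} \le \tilde s_j \le f_j$ in \eqref{eq:Lame def} as $\epsilon\to0$. By analogy with the prolate identification $m^2 = (\lambda_1 - \lambda_2 - E)/(a-1)$, the Lam\'e spectral parameters are identified with the ellipsoidal ones by $f = \lambda_2$ and $g = (\lambda_1 - \lambda_2 - E)/\epsilon$ at leading order. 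These choices are consistent with the classical limits $\eta_1 \to 2H + F$, $\eta_2 \to F$, $(\eta_1-\eta_2-2H)/\epsilon \to G$ established in \cite{Nguyen2023}, once one accounts for $\hat H$ having eigenvalue $E/2$.

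For the $s_1$ equation the limit is direct. The first-derivative coefficient collapses as the three simple poles at $e_2,e_3,e_4$ merge into one of strength $3/2$, giving $\tfrac12(s_1^{-1} + 3(s_1-1)^{-1})$. In the zeroth-derivative coefficient the identifications force
\[
-E s_1^2 + \lambda_1 s_1 - \lambda_2 = (s_1-1)(f - E s_1) + O(\epsilon),
\]
and this cancels one factor of $(s_1-1)$ in the denominator $4 s_1 (s_1-1)^3 + O(\epsilon)$; letting $\epsilon\to0$ yields \eqref{eq:raw equations}. For the $s_2, s_3$ equations, rewrite in the rescaled coordinate $\tilde s_j$ and multiply through by $\epsilon^2$ to balance $\partial_{s_j}=\epsilon^{-1}\partial_{\tilde s_j}$. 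The first-derivative term rescales to $\tfrac12 \sum_k (\tilde s_j - f_{k-1})^{-1}$ (the $s_j^{-1}$ contribution is suppressed by $\epsilon$), and the zeroth-derivative numerator reduces, using the spectral parameter identifications, to $\epsilon[(f-E)\tilde s_j + g] + O(\epsilon^2)$ over a denominator $4\epsilon^3 \prod_k (\tilde s_j - f_{k-1})\,[1 + O(\epsilon)]$. After the $\epsilon^2$ multiplier, the ratio survives at $O(1)$ and gives precisely \eqref{eq:raw equations2}.

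The principal subtlety is pinning down the spectral parameter scalings correctly: the leading relations $\lambda_1 \to E+f$ and $\lambda_2 \to f$ are forced by the vanishing of the $s_1$ zeroth-derivative numerator at $s_1 = 1$, which is needed to absorb one power of $(s_1-1)^3$ in the denominator, while the next-to-leading combination $\lambda_1 - \lambda_2 - E$ must be exactly of order $\epsilon$ in order to produce a finite Lam\'e separation constant $g$ in the rescaled equation. Once these scalings are fixed, the remaining verification is direct substitution exactly as in the prolate and oblate cases.
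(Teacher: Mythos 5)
Your proof is correct, and it follows the same overall strategy as the paper (fix the scalings from the classical degeneration, substitute into \eqref{eq:Gen Lame}, keep the dominant terms in $\epsilon$), but it implements the limit in a different chart. The paper reuses verbatim the canonical transformation of \cite{Nguyen2023}: it sends $e_{1}=-1/\epsilon\to-\infty$ keeping $(e_2,e_3,e_4)=(f_1,f_2,f_3)$ finite, rescales $(s_1,p_1)$, and lets the integrals blow up, $(\eta_1,\eta_2)=\bigl(-(2H-F)/\epsilon,\,-G/\epsilon\bigr)$, so the spectral parameters $\lambda_1,\lambda_2$ diverge like $1/\epsilon$ (and an implicit relabelling $s_1\mapsto 1-s_1$ is needed to land exactly on \eqref{eq:Lame def} and \eqref{eq:raw equations}). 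You instead coalesce the three largest axes, $e_k=1+\epsilon f_{k-1}$, rescale $(s_2,s_3)$ about the collision point, and keep $\lambda_1,\lambda_2$ finite, with only the combination $(\lambda_1-\lambda_2-E)/\epsilon$ carrying the Lam\'e constant $g$; the two charts are related by an affine rescaling of the parameter line and realise the same edge of the Stasheff polytope. Your version has the advantage that the coordinate intervals $e_j\le s_j\le e_{j+1}$ pass directly to $0\le s_1\le 1$, $f_1\le \tilde s_2\le f_2\le \tilde s_3\le f_3$, and the limits of the first- and zeroth-order coefficients reproduce \eqref{eq:raw equations} and \eqref{eq:raw equations2} without any relabelling; your computations (the factorisation $-Es_1^2+\lambda_1 s_1-\lambda_2=(s_1-1)(f-Es_1)+O(\epsilon)$ and the $\epsilon^2$ balancing of the rescaled equations) check out. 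One small remark: the limits $\eta_1\to 2H+F$, $\eta_2\to F$, $(\eta_1-\eta_2-2H)/\epsilon\to G$ are stated in \cite{Nguyen2023} for the other chart; in yours they are not literally quoted there, but they follow by a one-line expansion of \eqref{eq:EllipsoidalIntegralDef} in $\epsilon$, so it would be cleaner to verify them directly rather than cite them.
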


\begin{proof}
    From \cite{Nguyen2023}, it is known that the transformation 
    \begin{align}
        \left(e_1,e_2, e_3, e_4\right) &= \left(-\frac{1}{\epsilon}, f_1, f_2, f_3\right)\label{eq:LameCoordTrans}\\
        \left(s_1, p_1\right) &= \left(f_2 - \frac{s_1}{\epsilon}, \epsilon p_1 \right)\label{eq:LameCoordTrans2}
    \end{align}
    defines a canonical transformation that smoothly degenerates the classical ellipsoidal system into that for Lam\'{e}. Further, applying \eqref{eq:LameCoordTrans} and taking the limit as $\epsilon\to 0$ gives 
    \begin{equation}
        (\eta_1, \eta_2) = \left(-\frac{2H - F}{\epsilon}, -\frac{G}{\epsilon}\right).\label{eq:IntegralsLameDegen}
    \end{equation}
    Substituting \eqref{eq:LameCoordTrans}, \eqref{eq:LameCoordTrans2} and \eqref{eq:IntegralsLameDegen} into \eqref{eq:Gen Lame} and and considering dominating terms of $\epsilon$ gives the degenerated ODEs \eqref{eq:LameCoordTrans} and \eqref{eq:LameCoordTrans2}.
\end{proof}
Equation \eqref{eq:raw equations} has regular singularities
at $s_{1}=0,1,\infty$ meaning that it is of the hypergeometric type. We
observe that the form as written in (\ref{eq:raw equations}) is not
canonical. The roots of the indicial equation at $s_{1}=1$ are
\begin{equation}
r_\pm=\frac{1}{4}\left(-1\pm\sqrt{1+4E-4f}\right),\label{eq:rpm}
\end{equation}
where $r_{-}<0<r_{+}$. 
To change (\ref{eq:raw equations}) into canonical
form, we make the change of dependent variable $\psi_{1}\to y(1-s_{1})^{r_{+}}$
followed by the change of independent variable $s_{1}=x_{1}^2$, yielding the well known Gegenbauer equation
\begin{equation}
    (1-x_1^2)y^{''}-(2u+1)x_1y^{'}+n(n+2u)y=0 \label{eq:GegenbauerEqn}
\end{equation}
where $u=\frac{1}{2} (1+ \sqrt{1+4E-4f})$ and 
\begin{equation}
    n = -\frac{1}{2}\left(\sqrt{4E-4f+1}-\sqrt{4E+4}+1\right).\label{eq:JacobiPolynomial condition}
\end{equation}
From \cite{AS, Morse53}, it is known that polynomial solutions to \eqref{eq:GegenbauerEqn} occur when $n$ is an integer. These are known as the Gegenbauer polynomials $C_{n}^{u}(x_1)$ where $n$ denotes the degree of the polynomial.

The equation in the $s_{2},s_{3}$ variables is
known as the Lam\'{e} equation, a special case of the Heun equation. As with the previous degenerate cases, we simplify by mapping $(f_{1},f_{2},f_{3})$ to $(0,1,a)$
where $a>1$. The parameters $(f_1, f_2, f_3)$ in \eqref{eq:raw equations2} can be mapped to $(0,1,a)$ by a M\"obius transformation, then the Riemann symbol is given by
\[
\mathcal{S}_{2,3}=\left(\begin{aligned}0 &  & 1 &  & a &  & \infty\\
0 &  & 0 &  & 0 &  & \alpha & ;z\\
1/2 &  & 1/2 &  & 1/2 &  & \beta
\end{aligned}
\right)
\]
where $\alpha, \beta=\frac{1}{4}\left(1\pm\sqrt{1+4E-4f}\right)$
and $q=g/4$. The Lam\'{e} equation is typically
written as follows \cite{Bateman:100233}:
\begin{equation}
W^{''}+\frac{1}{2}\left(\frac{1}{\xi}+\frac{1}{\xi-1}+\frac{1}{\xi-k^{-2}}\right)W^{'}+\frac{bk^{-2}-\nu(\nu+1)\xi}{4\xi(\xi-1)(\xi-k^{-2})}W=0\label{eq:Lame equation}
\end{equation}
 where $k=\frac{1}{\sqrt{a}}$, $b=g/a$ and $\nu=\frac{1}{2}\left(\sqrt{4E-4f+1}-1\right)=2\beta-1$
with $\nu\ge-1/2$ and $0<k<1$. 
Polynomial solutions of the Lam\'{e} equation are known as Lam\'{e} polynomials \cite{Arscott2014} and we
denote them as $Lp_{d}(z)$ where the subscript $d$ signifies the solution's
degree. Note that $\nu$ being a non negative integer is a necessary condition for polynomial
solutions. 
\begin{lem}
\label{First Lame Energy Sol}Solutions to (\ref{eq:raw equations2}) are polynomial iff
$E=\tilde{D}(\tilde{D}+2)$ where $\tilde{D}=2d+n$ is an integer.
\end{lem}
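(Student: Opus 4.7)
The plan is to combine the polynomial conditions from the two separated equations \eqref{eq:raw equations} and \eqref{eq:raw equations2} and eliminate the auxiliary separation constant $f$. Since $E$ and $f$ appear as shared separation constants, requiring simultaneous polynomiality of both factors of $\psi$ imposes two algebraic relations whose combined consequence is $E = \tilde{D}(\tilde{D}+2)$.

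The first step is to extract the polynomial condition from \eqref{eq:raw equations2}. From the Riemann symbol $\mathcal{S}_{2,3}$ displayed above, the exponent $0$ must be selected at each finite singular point (so the solution is analytic at $0$, $1$, $a$), and a polynomial of degree $d$ behaves like $z^d$ at infinity, forcing the exponent at infinity $\alpha = \frac{1}{4}(1-\sqrt{1+4E-4f})$ to equal $-d$. This yields $\sqrt{1+4E-4f} = 4d+1$, equivalently the uniform Lam\'{e} polynomial condition $\nu = 2d$ with $E - f = 2d(2d+1)$. The second step uses the reduction of \eqref{eq:raw equations} to the Gegenbauer equation \eqref{eq:GegenbauerEqn}: polynomial (Gegenbauer) solutions exist iff $n$ as defined by \eqref{eq:JacobiPolynomial condition} is a non-negative integer.

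Substituting $\sqrt{1+4E-4f} = 4d+1$ into \eqref{eq:JacobiPolynomial condition} yields $n + 2d + 1 = \sqrt{E+1}$, and squaring gives
\[
E = (n+2d+1)^2 - 1 = (2d+n)(2d+n+2) = \tilde{D}(\tilde{D}+2),
\]
with $\tilde{D} = 2d+n$. The converse follows by reversing the algebra: given $E = \tilde{D}(\tilde{D}+2)$ with a decomposition $\tilde{D} = 2d + n$ into non-negative integers, setting $f = E - 2d(2d+1)$ makes both the Lam\'{e} and Gegenbauer polynomial conditions simultaneously valid, and the corresponding polynomial eigenfunctions exist by classical theory.

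The main obstacle, which is really a bookkeeping point rather than a conceptual one, is justifying cleanly that polynomiality of \eqref{eq:raw equations2} selects the exponent at infinity with the minus sign rather than the plus. This follows from the standard Frobenius-series truncation criterion: for a Fuchsian ODE whose solution is analytic at every finite pole, a polynomial of positive degree arises exactly when one of the characteristic exponents at infinity is a non-positive integer, and here the only choice consistent with the convention $\alpha \le \beta$ is $\alpha = -d$.
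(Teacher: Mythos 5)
Your proposal is correct and takes essentially the same route as the paper: the paper simply cites Volkmer for the Lam\'{e} polynomial condition $\tfrac{f-E}{4}=-d(d+\tfrac{1}{2})$ (which you instead derive directly from requiring the exponent $\alpha=-d$ at infinity in the Riemann symbol) and then, exactly as you do, combines it with the Gegenbauer integer condition (\ref{eq:JacobiPolynomial condition}) to obtain $E=\tilde{D}(\tilde{D}+2)$ with $\tilde{D}=2d+n$.
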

\begin{proof}
From \cite{Volkmer1999} we know that polynomial
solutions are only obtainable for (\ref{eq:raw equations2}) if 
\begin{equation}
\frac{f-E}{4}=-d(d+\frac{1}{2}).\label{eq:fErelation}
\end{equation}
Combining (\ref{eq:fErelation})
and (\ref{eq:JacobiPolynomial condition}) gives the result.
\end{proof}
Using \eqref{eq:fErelation}, we know that $r_{+}$ in \eqref{eq:rpm} simplifies
to $d$, $\nu=2d, \alpha = -d, \beta=2d+\frac{1}{2}$ and $u=2d+1$. Denote
the full solution $\psi=\psi_{1}\psi_{2}\psi_{3}$ to \eqref{eq:raw equations} and \eqref{eq:raw equations2} as
\begin{equation}
\psi=(1-s_{1})^{d}C_{n}^{2d+1}(x_1)Lp_{d}(s_{2})Lp_{d}(s_{3}).\label{eq:PsiD}
\end{equation}
We have the following Lemma.
\begin{lem}\label{lem:HomogenProof}
Expressed in the original Cartesian coordinates,
$\psi$ is a homogeneous polynomial of degree $\tilde{D}$. Specifically,
we can rewrite (\ref{eq:PsiD}) as 
\begin{equation}
\Phi_{\tilde{D}}(\bm{x})=r^{n}C_{n}^{2d+1}\left(\frac{x_{1}}{r}\right)\prod_{i=1}^{3}\prod_{k=1}^{d}(z_{k}-f_{i})\prod_{k=1}^{d}\sum_{i=1}^{3}\frac{x_{i+1}^{2}}{z_{k}-f_{i}}\label{eq:LameinCartesian}
\end{equation}
where the $z_{k}$ are the roots of the Lam\'{e} polynomials $Lp_{d}$, $(f_1,f_2,f_3)=(0,1,a)$ and $r^2=\sum_{i}x_{i}^{2}$.
\end{lem}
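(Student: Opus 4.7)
The plan is to reduce the claim to the two-sphere ellipsoidal identity already used in the ellipsoidal case (cf.\ \eqref{eq:PsidEllipsoidal}), by exploiting the fact that after dividing out $(1-s_1)$ the Lam\'e coordinates decouple into a radial variable $x_1$ and a genuine ellipsoidal-spherical coordinate system on $S^2$ with axis parameters $(f_1,f_2,f_3)$.

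The first step is to set $\tilde x_{i+1} := x_{i+1}/\sqrt{1-s_1}$ for $i=1,2,3$ and verify that $(\tilde x_2,\tilde x_3,\tilde x_4)\in S^2$: since $s_1=x_1^2$ and $r^2=1$, one has $\sum_{i=1}^3 \tilde x_{i+1}^2 = (1-x_1^2)/(1-s_1) = 1$. Substituting directly into \eqref{eq:Lame def} shows that $\tilde x_{i+1}^2 = \prod_{j=2}^{3}(s_j-f_i)/\prod_{k\neq i}(f_k-f_i)$, which is precisely \eqref{eq:ellipsoidal def} with $e_i\to f_i$; hence $s_2,s_3$ are the two roots of $T(z) = \sum_{i=1}^3 \tilde x_{i+1}^2/(z-f_i)$. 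Comparing leading terms of two monic quadratics in $z$ then yields the identity
\[
(z-s_2)(z-s_3) \;=\; \prod_{i=1}^3(z-f_i)\,\sum_{i=1}^3\frac{\tilde x_{i+1}^2}{z-f_i}.
\]

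Next I would apply this identity at each root $z_k$ of $Lp_d$ and multiply over $k=1,\dots,d$, obtaining
\[
Lp_d(s_2)\,Lp_d(s_3) \;=\; (1-s_1)^{-d}\prod_{i=1}^3\prod_{k=1}^d (z_k-f_i)\,\prod_{k=1}^d\sum_{i=1}^3\frac{x_{i+1}^2}{z_k-f_i},
\]
the factor $(1-s_1)^{-d}$ accounting for the $d$ rescalings $\tilde x_{i+1}^2 = x_{i+1}^2/(1-s_1)$. The prefactor $(1-s_1)^d$ in \eqref{eq:PsiD} cancels this exactly. For the Gegenbauer factor, I would use that $C_n^{\lambda}(t)$ contains only monomials of parity matching $n$, so that $r^n C_n^{2d+1}(x_1/r)$ is a genuine polynomial in $x_1$ and $r^2$ (hence in $\bm x$) that coincides with $C_n^{2d+1}(x_1)$ on $S^3$ where $r=1$; substituting it in produces exactly \eqref{eq:LameinCartesian}.

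Finally, the homogeneity/degree count is routine: $r^n C_n^{2d+1}(x_1/r)$ is homogeneous of degree $n$; each factor $\sum_i x_{i+1}^2/(z_k-f_i)$ is homogeneous of degree $2$ in $(x_2,x_3,x_4)$, contributing $2d$ over the $d$-fold product; and $\prod_{i,k}(z_k-f_i)$ contributes nothing. The total degree is $n+2d=\tilde D$. The only non-routine step is the very first one, namely the identification of $(\tilde x_2,\tilde x_3,\tilde x_4)$ with $S^2$ ellipsoidal coordinates, but this follows immediately from inspection of \eqref{eq:Lame def} and makes the remainder of the argument an exact parallel of the ellipsoidal computation behind \eqref{eq:PsidEllipsoidal}.
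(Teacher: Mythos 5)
Your proposal is correct and follows essentially the same route as the paper: your identity $(z-s_2)(z-s_3)=\prod_i(z-f_i)\sum_i \tilde x_{i+1}^2/(z-f_i)$, once $\tilde x_{i+1}^2=x_{i+1}^2/(1-s_1)$ is substituted, is exactly the displayed identity in the paper's proof, and the product over the roots $z_k$, the cancellation against $(1-s_1)^d$, and the parity argument for $r^nC_n^{2d+1}(x_1/r)$ all match. The only difference is that you justify the key identity by explicitly recognising the reduced $S^2$ ellipsoidal structure, which the paper leaves as ``easily seen''.
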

\begin{proof}
From the definition of the Lam\'{e} coordinates
(\ref{eq:Lame def}), it is easily seen that 
\[
(z_{k}-s_{2})(z_{k}-s_{3})=\frac{(z_{k}-f_{1})(z_{k}-f_{2})(z_{k}-f_{3})}{1-x_{1}^{2}}\left(\frac{x_{2}^{2}}{z_{k}-f_{1}}+\frac{x_{3}^{2}}{z_{k}-f_{2}}+\frac{x_{4}^{2}}{z_{k}-f_{3}}\right).
\]
Taking the product over all $d$ roots and using $r^2=1$ along with $s_1=x_1^2$ gives (\ref{eq:LameinCartesian}).
To prove homogeneity, we note that 
\begin{equation}
    C^{2d+1}_{n}(x_1)=r^{n}C^{2d+1}_{n}(\frac{x_1}{r})
\end{equation}
and so the argument of the Gegenbauer polynomial will be of degree 0. We observe that $r^{n}$ will be of degree $n$ in the $x_i$ and $C^{2d+1}_n(\frac{x_1}{r})$ will only consist of terms whose powers have the same parity as $n$. Thus, we have shown that \eqref{eq:LameinCartesian} is homogeneous and polynomial.
\end{proof}
As with the previous systems, let $\bm{\mu} = (\mu_2,\mu_3,\mu_4)$ where $\mu_i\in \{0,1\}$ denote a symmetry class where $\mu_i=0$ represent the wavefunction odd about the $x_i$ axis and even otherwise. Consider the change of dependent variable 
\begin{align}
    \phi_j & = \prod_{i=1}^3 (s_j - f_i)^{\mu_{i+1}/2}\psi_j
\end{align}
where $j=2,3$ in \eqref{eq:raw equations2}. This is equivalent to multiplying the original solution by $x_{2}^{\mu_2}x_{3}^{\mu_3}x_{4}^{\mu_4}$. Note that performing this change of variables in the Lam\'{e} equation \eqref{eq:raw equations2} results in a Heun equation of the form \eqref{eq:GenHeun} with parameters for each symmetry class $\bm{\mu}$ as shown in Table \ref{tab:LameTable}. 

For the Gegenbauer equation \eqref{eq:GegenbauerEqn}, we note that the independent variable is $x_1$ and so parity about this axis is determined simply by the parity of the Gegenbauer polynomial $C_n^u(x_1)$, which is determined by $n$. We let $\mu_1=1$ if $n$ is odd and $\mu_1=0$ otherwise. 

For a given symmetry class $\bm{\mu}$, denote by $Lp_{d}^{\bm{\mu}}(z)$ solutions to \eqref{eq:GenHeun} with parameters given in Table \ref{tab:LameTable}. 
We consider the product $\Phi_{\tilde{D}}^{\bm{\mu}}(\bm{x}) = C_{n}^{2d+1+U}(x_1)Lp_{d}^{\bm{\mu}}(s_2)Lp_{d}^{\bm{\mu}}(s_3)$ where $U=\sum_{i=1}^3 \mu_{i+1}$. Finally, we set 
\begin{equation}
    \Psi_{D}^{\bm{\mu}}(\bm{x}) \coloneqq x_{2}^{\mu_2}x_{3}^{\mu_3}x_{4}^{\mu_4}\Phi_{\tilde{D}}^{\bm{\mu}}(\bm{x}) \label{eq: GenLameCoordSol}
\end{equation}
where $D = n + \tilde{D} + U$. As before, \eqref{eq:LameinCartesian} is a special case of \eqref{eq: GenLameCoordSol} with $\bm{\mu} = (0,0,0)$.
\begin{lem}
    The $\Psi_{D}^{\bm{\mu}}(\bm{x})$ are homogeneous polynomial eigenfunctions of the original Laplacian \eqref{eq:Schrodinger} and the energy is given by $E = D(D+2)$.
\end{lem}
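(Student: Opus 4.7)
The plan is to mirror the proof of the analogous theorem in the ellipsoidal case (after equation \eqref{eq: VolkmerEqn}), adapted to the hybrid Gegenbauer/Heun structure of the L\'am\'e separation. I would split the argument into three parts: homogeneity, eigenfunction property, and identification of the eigenvalue.

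For homogeneity I would extend Lemma~\ref{lem:HomogenProof}. Writing $C_n^{2d+1+U}(x_1)=r^n C_n^{2d+1+U}(x_1/r)$ contributes degree $n$, with the parity of all surviving terms matching that of $n$. Applying the identity
\[
(z-s_2)(z-s_3)=\frac{\prod_{i=1}^{3}(z-f_i)}{1-x_1^2}\sum_{i=1}^{3}\frac{x_{i+1}^2}{z-f_i}
\]
from the proof of Lemma~\ref{lem:HomogenProof} to each root factor of $Lp_d^{\bm{\mu}}(s_2)Lp_d^{\bm{\mu}}(s_3)$ and absorbing the resulting $(1-x_1^2)^{-d}$ denominator against the Jacobian factor $(1-s_1)^{d+U/2}$ implicit in the change of dependent variable $\psi_1\mapsto (1-s_1)^{r_+}y$, yields $\Phi_{\tilde{D}}^{\bm{\mu}}$ as a homogeneous polynomial of degree $\tilde{D}=2d+n$. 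Multiplication by $x_2^{\mu_2}x_3^{\mu_3}x_4^{\mu_4}$ then gives $\Psi_D^{\bm{\mu}}$ of total degree $D=n+2d+U$.

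For the eigenfunction property I would use the direct analog of \eqref{eq: VolkmerEqn}, namely
\[
\Delta\Psi_D^{\bm{\mu}}=x_2^{\mu_2}x_3^{\mu_3}x_4^{\mu_4}\Bigl(\Delta\Phi_{\tilde{D}}^{\bm{\mu}}+2\sum_{j=2}^{4}\frac{\mu_j}{x_j}\partial_j\Phi_{\tilde{D}}^{\bm{\mu}}\Bigr).
\]
By construction the L\'am\'e factors solve the Heun equation \eqref{eq:GenHeun} with the shifted parameters of Table~\ref{tab:LameTable} and the Gegenbauer factor solves \eqref{eq:GegenbauerEqn} with weight $u=2d+1+U$; by the Volkmer-style correspondence \cite{Volkmer1999} these separated ODEs are exactly the conditions that make the bracketed expression equal $-E\,\Phi_{\tilde{D}}^{\bm{\mu}}$, so that $\Psi_D^{\bm{\mu}}$ is an eigenfunction.

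For the eigenvalue, the polynomial condition on the shifted Heun equation is $\tilde{\alpha}=-d$. Reading $\tilde{\alpha}=\alpha+U/2$ from Table~\ref{tab:LameTable} together with the L\'am\'e expression $\alpha=\tfrac14(1-\sqrt{1+4E-4f})$ gives $E-f=(2d+U)(2d+U+1)$ and $r_+=d+U/2$. Combined with the Gegenbauer weight $u=2d+1+U$ and a Gegenbauer polynomial of degree $n$, direct algebra then yields $E=D(D+2)$. The main obstacle is the bookkeeping needed to verify that the Heun and Gegenbauer polynomial conditions, which share the spectral parameter $f$, are simultaneously consistent and assemble cleanly into $E=D(D+2)$. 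A transparent sanity check is provided by the $SO(3)$ symmetry of the system: the construction amounts to a spherical harmonic of degree $\ell=2d+U$ on the $S^2$ spanned by $(x_2,x_3,x_4)$ times a Gegenbauer radial factor of degree $n$ in $x_1/r$, hence a harmonic polynomial of total degree $D=n+\ell$ on $\mathbb{R}^4$ with the expected $-\Delta_{S^3}$ eigenvalue $D(D+2)$.
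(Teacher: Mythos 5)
Your proposal is correct and follows essentially the same route as the paper: the eigenfunction property via the analogue of \eqref{eq: VolkmerEqn}, homogeneity by extending Lemma~\ref{lem:HomogenProof}, and the eigenvalue by combining the Heun truncation condition with the Gegenbauer condition \eqref{eq:JacobiPolynomial condition} — your relation $E-f=(2d+U)(2d+U+1)$ obtained from $\tilde{\alpha}=-d$ is exactly the paper's \eqref{eq:generalf-e}, and it likewise yields $u=2d+1+U$ and $E=D(D+2)$ with $D=n+2d+U$. The $SO(3)$ sanity check is a nice addition but not a departure from the paper's argument.
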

\begin{proof}
    Proving $\Psi_{D}^{\bm{\mu}}(\bm{x})$ is an eigenfunction follows the same argument as the previous cases. To compute the energy for a given symmetry class $\bm{\mu}$, we combine the parameters of Table \ref{tab:LameTable} with the results of \cite{Volkmer1999} to get a more general form of \eqref{eq:fErelation}, namely 
    \begin{equation}
        \frac{f-E + U(U+1)}{4}= -d(d+1/2 + U). \label{eq:generalf-e}
    \end{equation}
    Combining \eqref{eq:generalf-e} and \eqref{eq:JacobiPolynomial condition} gives the result for $E$. Similarly, we have that $u=2d+1+U$.
\end{proof}
\begin{table}[H]
\centering
\begin{tabularx}{\textwidth}{|p{1.7cm}|p{2.3cm}|p{2.3cm}|p{0.8cm}|p{0.8cm}|p{0.8cm}|p{4.8cm}|}
\hline 
$\bm{\mu}$& $\tilde{\alpha}$ & $\tilde{\beta}$ & $\tilde{\gamma}$ & $\tilde{\delta}$ & $\tilde{\epsilon}$ & $\tilde{q}$ \\
\hline 
$(0,0,0)$ & $\alpha$ & $\beta$ & $\gamma$ & $\delta$ & $\epsilon$ & $q$ \\
\hline 
$(1,0,0)$ & $\alpha+1-\gamma$ & $\beta+1-\gamma$ & $2-\gamma$ & $\delta$ & $\epsilon$ & $q+(1-\gamma)(a\delta+\epsilon)$ \\ 
\hline 
$(0,1,0)$ & $\alpha+1-\delta$ & $\beta+1-\delta$ & $\gamma$ & $2-\delta$ & $\epsilon$ & $q-a\gamma(\delta-1)$ \\
\hline 
$(0,0,1)$ & $\alpha+1-\epsilon$ & $\beta+1-\epsilon$ & $\gamma$ & $\delta$ & $2-\epsilon$ & $q+\gamma(1-\epsilon)$ \\
\hline 
$(1,1,0)$ & $\alpha-\gamma-\delta+2$ & $\beta-\gamma-\delta+2$ & $2-\gamma$ & $2-\delta$ & $\epsilon$ & $q-a(\gamma+\delta-2)-\gamma\epsilon+\epsilon$ \\
\hline 
$(1,0,1)$ & $\alpha-\gamma-\epsilon+2$ & $\beta-\gamma-\epsilon+2$ & $2-\gamma$ & $\delta$ & $2-\epsilon$ & $q-\gamma(a\delta+1)+a\delta-\epsilon+2$ \\
\hline 
$(0,1,1)$ & $\alpha-\delta-\epsilon+2$ & $\beta-\delta-\epsilon+2$ & $\gamma$ & $2-\delta$ & $2-\epsilon$ & $q+\gamma(-a\delta+a-\epsilon+1)$ \\
\hline 
$(1,1,1)$ & $\alpha-\gamma-\delta-\epsilon+3$ & $\beta-\gamma-\delta-\epsilon+3$ & $2-\gamma$ & $2-\delta$ & $2-\epsilon$ & $q-a(\gamma+\delta-2)-\gamma-\epsilon+2$ \\
\hline 
\end{tabularx}
\caption{Symmetry class $\bm{\mu}$ and corresponding Heun (Lam\'{e}) equation parameters for which the resulting polynomial is an eigenfunction. \label{tab:LameTable}}
\end{table}
From the above, it is clear that for a fixed energy $E$, only 8 of the 16
discrete symmetry classes for the Lam\'{e} system can be present.
If $E$ is even, then the symmetry classes $\mathcal{S}_{E}$ given in \eqref{eq:SE_Defn} are present. Conversely, if $E$ is odd, we have the remaining
8 symmetry classes $\mathcal{S}_{O}$ from \eqref{eq:SO_Defn}. Note that while we have omitted $\mu_1$ from the definition of $\bm{\mu}$ in our discussion of the Lam\'{e} systems so far, this symmetry is accounted for in the parity of $n$ which is set by the value of $D$. When addressing all symmetry classes below, we will write $\tilde{\bm{\mu}}=(\mu_1;\mu_2,\mu_3,\mu_4)$ to highlight this fact.

We have the following observation.
\begin{lem}
    Chebyshev polynomials of the second kind are the eigenfunction corresponding to the joint spectrum point $(f,g) = (1-\hbar^2,0)$.
\end{lem}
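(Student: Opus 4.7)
The plan is to identify the point $(f,g)=(1-\hbar^2,0)$ with a particularly simple configuration of the Lam\'e system and then read off the resulting eigenfunction. Since the plots use scaled eigenvalues and $\hbar=1/(D+1)$, the scaled value $\tilde f = f\hbar^2 = 1-\hbar^2$ matches $\tilde E = E\hbar^2 = D(D+2)/(D+1)^2 = 1-\hbar^2$ exactly, so in the raw spectral parameters used in the separated ODEs one has $f=E$ and $g=0$. I would begin by stating this conversion explicitly as the first step.

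Next, I would substitute $f=E$ and $g=0$ into the separated equation \eqref{eq:raw equations2} for $\psi_k$, $k=2,3$. The accessory numerator $(f-E)z+g$ vanishes identically, leaving
\[
\psi_k'' + \tfrac{1}{2}\left(\tfrac{1}{z-f_1}+\tfrac{1}{z-f_2}+\tfrac{1}{z-f_3}\right)\psi_k' = 0,
\]
whose unique polynomial solution is the constant $\psi_k\equiv 1$. In the Lam\'e-polynomial classification this is the degree $d=0$ case, which is consistent with the constraint $(f-E+U(U+1))/4 = -d(d+1/2+U)$ of \eqref{eq:generalf-e} provided $U=0$, so the relevant symmetry class is $\bm{\mu}=(0,0,0)$.

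The third step is to use $d=0$ and $U=0$ in the $s_1$-equation. After the changes of variable $\psi_1 = (1-s_1)^{r_+} y$ and $s_1=x_1^2$ that convert \eqref{eq:raw equations} into the Gegenbauer equation \eqref{eq:GegenbauerEqn}, Lemma~\ref{First Lame Energy Sol} gives $\tilde D = 2d+n = n$ and hence $D = n + \tilde D + U = n$; the Gegenbauer index reduces to $u=2d+1+U=1$. With $E = n(n+2)$, the Gegenbauer equation \eqref{eq:GegenbauerEqn} becomes
\[
(1-x_1^2)y'' - 3 x_1 y' + n(n+2)y = 0,
\]
whose polynomial solution of degree $n$ is $C_n^{1}(x_1)$. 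The identity $C_n^{1}(x)=U_n(x)$ is the defining relation for the Chebyshev polynomial of the second kind, completing the identification.

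There is no serious obstacle here: all the work has already been done in Lemmas~\ref{First Lame Energy Sol} and~\ref{lem:HomogenProof}, and the only subtlety is bookkeeping the scaling $\tilde f = f\hbar^2$ and checking that the degenerate choice $d=0$, $U=0$ is the only one compatible with $(f,g)=(1-\hbar^2,0)$ at fixed $D$. I would close by remarking that these Chebyshev eigenfunctions sit at the corner of the action triangle where the full $SO(3)$ symmetry of the Lam\'e system collapses to the purely axial dependence $\Psi_D(\bm x) \propto U_D(x_1)$ (after factoring $r^D=1$ on $S^3$).
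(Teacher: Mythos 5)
Your proposal is correct and follows essentially the same route as the paper: it uses the relation \eqref{eq:generalf-e} with $d=0$, $\bm{\mu}=(0,0,0)$ (equivalently $f=E$, $g=0$, rescaled to $(1-\hbar^2,0)$), and the identity $C_n^{1}(x_1)=U_n(x_1)$. You merely run the argument in the reverse direction (from the spectral point to the degenerate constant Lam\'e factor and then to the Gegenbauer equation with $u=1$), adding detail the paper leaves implicit.
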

\begin{proof}
    For $\bm{\mu} = (0,0,0)$, we have from \eqref{eq:generalf-e} the relation $f = E-4d(d+1/2)$. When $d=0$ one obtains $f=E$ which gives $f=1-\hbar^2$ (after normalisation by $\hbar$) and $g=0$. This gives Chebyshev polynomials of the second kind $U_{n}(x_1)$ since $U_{n}(x_1)=C_{n}^{1}(x_1).$ Note that in the limit $\hbar\to 0$ this joint spectrum point becomes the degenerate point $(1,0)$ for the classical system.
\end{proof}

The Lam\'{e} equation (\ref{eq:Lame equation})
can be transformed by the change of independent variable $\xi=\text{sn}^{2}(z,k)$
into the following form:
\begin{equation}
W^{''}+(b-\nu(\nu+1)k^{2}\text{sn}^{2}(z,k))W=0.\label{eq:LameJacobiForm}
\end{equation}
This form of the Lam\'{e} equation has regular
singular points at $2pK+(2q+1)iK^{'}$ where $K,K^{'}$ are the complete
elliptic integrals of the first kind with moduli $k$ and $\sqrt{1-k^{2}}$
respectively and $p,q\in\mathbb{Z}$.

From \cite{Arscott2014}, it is known that there are $8$ polynomial solutions to (\ref{eq:LameJacobiForm}) if $\nu$ is a non-negative integer. These are shown in Table \ref{tab:LameJacobiTable} where $m=0,1,\dots,\nu$ and $Ec^m_d,Es^m_d$ are known as the Lam\'{e} functions \cite{erdelyi1955higher}. Since $\xi = s_i = \text{sn}^2(z,k)$, we are able to connect these eigenfunctions to $Lp_{d}^{\bm{\mu}}(z)$ as given in Table \ref{tab:LameTable}.

The total number of states for each symmetry class
is shown in Table \ref{tab:Lame Symmetry State Table} below. Here we use the notation $(\mu_1; \mu)$ to represent a state with parity given by $\mu_1$ about the $x_1$ axis and odd about one remaining axis given by $\mu$. E.g. $(1;\mu)$ represents the classes $(1;0,0,1), (1;0,1,0), (1;1,0,0)$. Similarly, we have $(\mu_1;\mu,\nu)$ and so forth. The number of states are computed using similar methods to those shown in the prolate case.

Summing the relevant entries for a given parity of the energy shows
that the total number of states for a given energy $E=D(D+2)$
is given by $(D+1)^{2}$. To compute the joint spectrum for a given
$E$, we use the parameters shown in Table \ref{tab:LameTable}
and the recurrence relationship given in Lemma \ref{LemmaProlate}.
An example for even and odd energies is shown in Fig. \ref{fig:LameEvenOdd spectrum}.
As with the other cases such as ellipsoidal and prolate, we have $\hbar=\frac{1}{D+1}$. As with the prolate and oblate systems, since the matrices involved are not self adjoint, larger values of $D$ lead to numerical instability.
\begin{table}
\begin{centering}
\textcolor{black}{}%
\begin{tabular}{|c|c|}
\hline 
\textcolor{black}{Lam\'{e} Polynomial} & \textcolor{black}{Corresponding Symmetry $\bm{\mu}$}\tabularnewline
\hline 
\textcolor{black}{$Ec_{2d}^{2m}(z,k^{2})$} & \textcolor{black}{$(0,0,0)$}\tabularnewline
\hline 
\textcolor{black}{$Ec_{2d+1}^{2m+1}(z,k^{2})$} & \textcolor{black}{$(1,0,0)$}\tabularnewline
\hline 
\textcolor{black}{$Es_{2d+1}^{2m+1}(z,k^{2})$} & \textcolor{black}{$(0,1,0)$}\tabularnewline
\hline 
\textcolor{black}{$Ec_{2d+1}^{2m}(z,k^{2})$} & \textcolor{black}{$(0,0,1)$}\tabularnewline
\hline 
\textcolor{black}{$Es_{2d+1}^{2m+2}(z,k^{2})$} & \textcolor{black}{$(1,1,0)$}\tabularnewline
\hline 
\textcolor{black}{$Ec_{2d+2}^{2m+1}(z,k^{2})$} & \textcolor{black}{$(1,0,1)$}\tabularnewline
\hline 
\textcolor{black}{$Es_{2d+2}^{2m+1}(z,k^{2})$} & \textcolor{black}{$(0,1,1)$}\tabularnewline
\hline 
\textcolor{black}{$Es_{2d+3}^{2m+2}(z,k^{2})$} & \textcolor{black}{$(1,1,1)$}\tabularnewline
\hline 
\end{tabular}
\par\end{centering}
\caption{Lam\'{e} polynomials solutions to \eqref{eq:LameJacobiForm} and their corresponding symmetries given in Table \ref{tab:LameTable}.\label{tab:LameJacobiTable}}
\end{table}
\begin{table}[H]
\begin{centering}
\textcolor{black}{}%
\begin{tabular}{|c|c|c|}
\hline 
 & \textcolor{black}{Number of States} & \textcolor{black}{$E$}\tabularnewline
\hline 
\textcolor{black}{$(1;\mu)$} & \textcolor{black}{$D(D+2)/8$} & \textcolor{black}{Even}\tabularnewline
\hline 
\textcolor{black}{$(1;1,1,1)$} & \textcolor{black}{$D(D-2)/8$} & \textcolor{black}{Even}\tabularnewline
\hline 
\textcolor{black}{$(0;0,0,0)$} & \textcolor{black}{$(D+2)(D+4)/8$} & \textcolor{black}{Even}\tabularnewline
\hline 
\textcolor{black}{$(0;\mu,\nu)$} & \textcolor{black}{$D(D+2)/8$} & \textcolor{black}{Even}\tabularnewline
\hline 
\textcolor{black}{$(1;0,0,0)$} & \textcolor{black}{$(D+1)(D+3)/8$} & \textcolor{black}{Odd}\tabularnewline
\hline 
\textcolor{black}{$(1;\mu,\nu)$} & \textcolor{black}{$(D+1)(D-1)/8$} & \textcolor{black}{Odd}\tabularnewline
\hline 
\textcolor{black}{$(0;\mu)$ } & \textcolor{black}{$(D+1)(D+3)/8$} & \textcolor{black}{Odd}\tabularnewline
\hline 
\textcolor{black}{$(0;1,1,1)$} & \textcolor{black}{$(D+1)(D-1)/8$} & \textcolor{black}{Odd}\tabularnewline
\hline 
\end{tabular}
\par\end{centering}
\caption{Number of states per symmetry class and corresponding energy parity. \label{tab:Lame Symmetry State Table}}
\end{table}
\begin{figure}[H]
\begin{centering}
\textcolor{black}{\includegraphics[width=7.5cm,height=7.5cm,keepaspectratio]{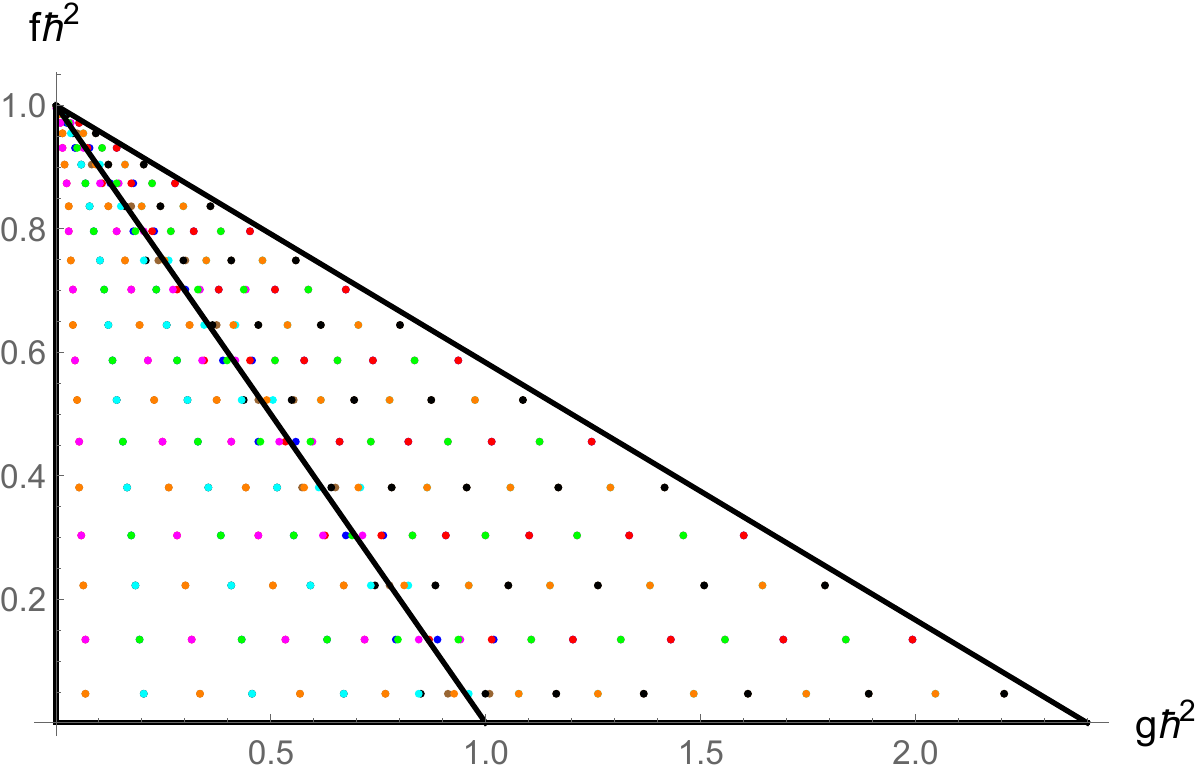}\includegraphics[width=7.5cm,height=7.5cm,keepaspectratio]{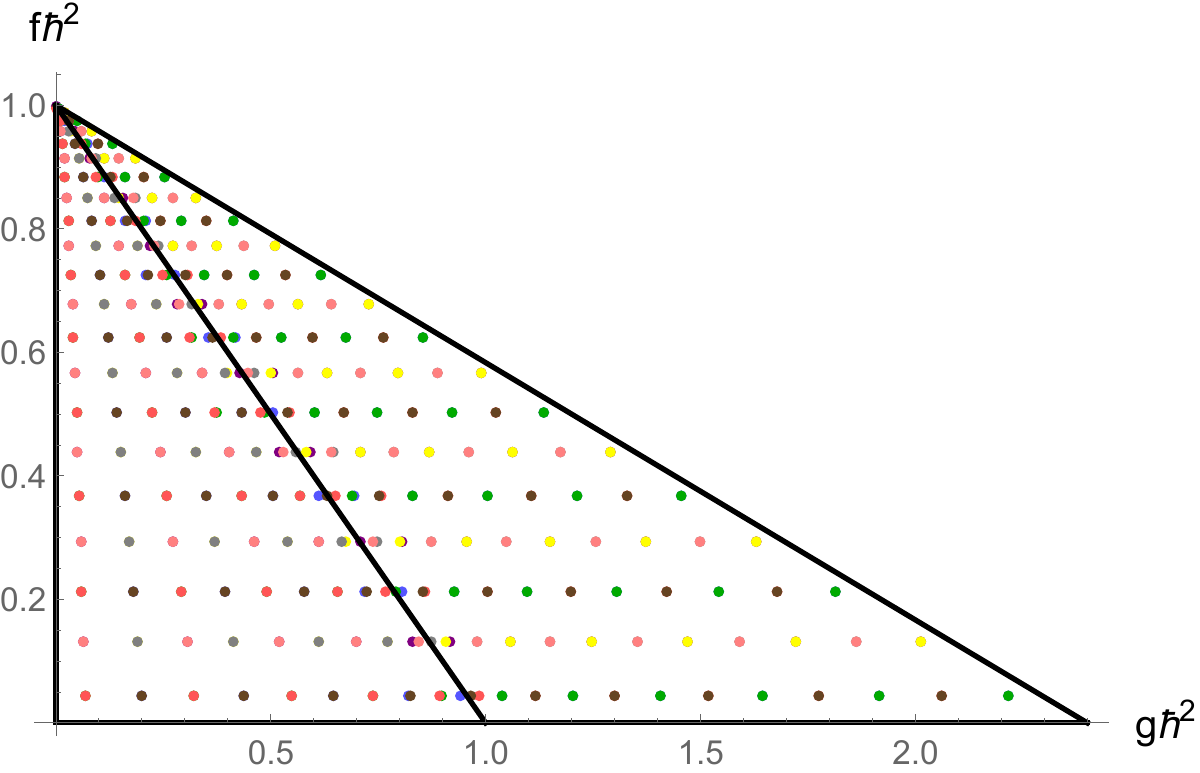}}
\par\end{centering}
\caption{Joint spectrum $(f,g)$ in the Lam\'{e}  case with $(f_1,f_2,f_3)=(0,1,2.4)$ for a) even energy with $D=20$
and b) odd energy with $D=21$ thereby representing all $16$ symmetry classes. Correspondence between the coloured dots and symmetry class are the same for the ellipsoidal system shown in Table \ref{tab:my_label}. \label{fig:LameEvenOdd spectrum}}
\end{figure}

\textcolor{black}{The actions of the Lam\'{e} system are given by
the following 
\[
\begin{aligned}J_{1}=\frac{2}{\pi}\int_{0}^{f}p_{1}ds, &  & J_{2}=\frac{2}{\pi}\int_{f_{1}}^{\min(r_{2},f_{2})}p_{2}ds, &  & J_{3}=\frac{2}{\pi}\int_{\max(f_{2},r_{2})}^{f_{3}}p_{3}ds\end{aligned}
\]
where $r_{2}=\frac{g}{1-f}$ and 
\[
\begin{aligned}p_{1}^{2}=\frac{f-\tilde{E}s_{1}}{4(s_{1}-1)^{2}s_{1}} &  & p_{i}^{2}=\frac{(f-\tilde{E})s_{i}+g}{4(s_{i}-f_{3})(s_{i}-f_{2})(s_{i}-f_{1})}\end{aligned}
\]
for $i=2,3.$ The first action simplifies to 
\begin{equation}
J_{1}=1-\sqrt{1-f}.\label{eq:J1 Lame Action}
\end{equation}
The image of spectra shown in Fig. \ref{fig:LameEvenOdd spectrum},
under the action map is shown in Fig. \ref{fig:LameActions}.}
Unlike the prolate and oblate cases, there are no states on the boundary.
There is a single state located near the top corner $(f,g)=(1-\hbar^2,0)$.
\textcolor{black}{}
\begin{figure}[H]
\begin{centering}
\textcolor{black}{\includegraphics[width=7.5cm,height=7.5cm,keepaspectratio]{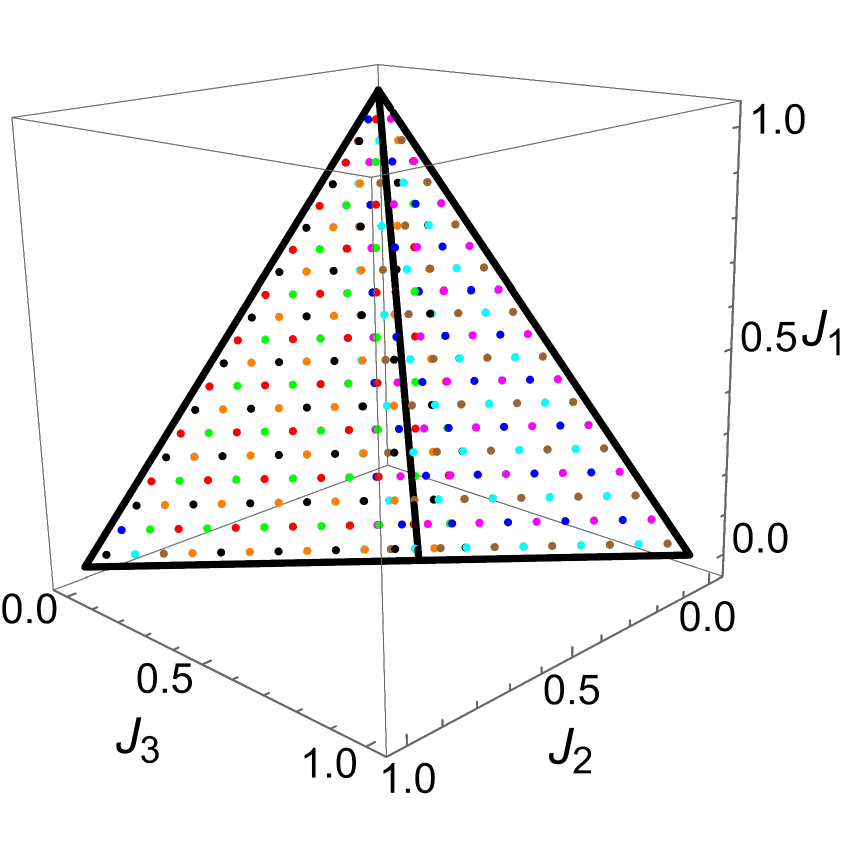}\includegraphics[width=7.5cm,height=7.5cm,keepaspectratio]{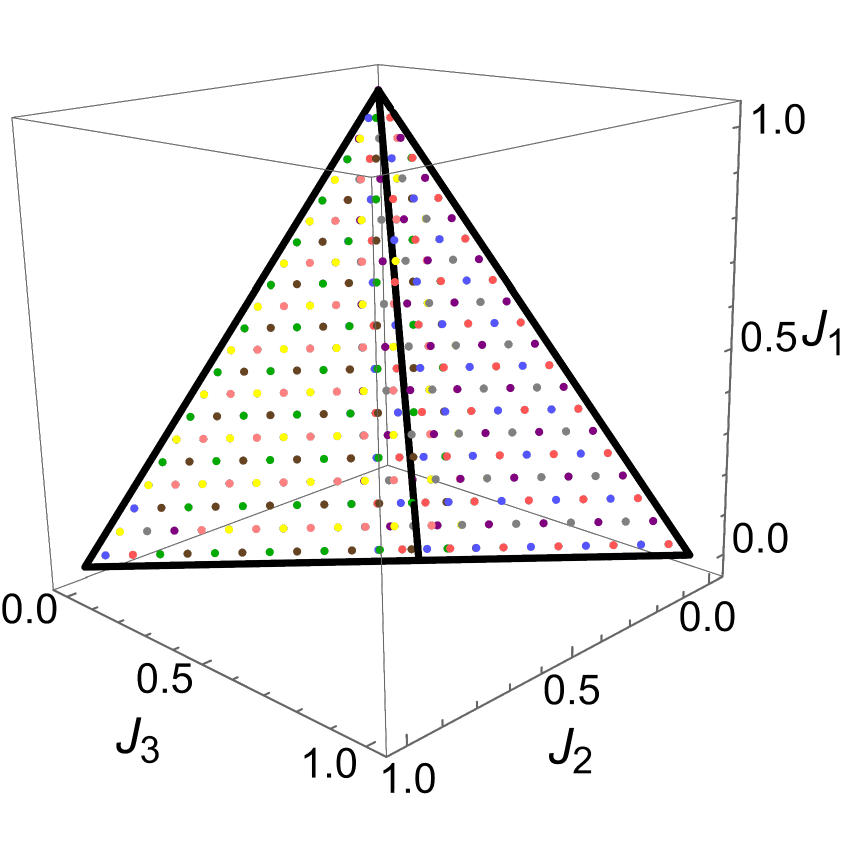}}
\par\end{centering}
\caption{Joint spectrum in the Lam\'{e} case in action variables corresponding to the joint spectra shown in Fig. \ref{fig:LameEvenOdd spectrum} a) and b).\label{fig:LameActions}}
\end{figure}

\subsection{Spherical Coordinates}\label{sec:SphericalS3}

Spherical coordinates (also known as poly-spherical
coordinates) are a further degeneration of prolate, oblate or Lam\'{e}
coordinates and possess both an $SO(2)$ and $SO(3)$ symmetry. Unlike
 prolate coordinates, spherical coordinates - despite having a global $S^{1}$
action - are not semi-toric since they have a degenerate point. This
point corresponds to the critical values at which the global action
is not differentiable, and the pre-image of the critical value is a sphere. We will see that there is a simple eigenfunction built from Chebyshev polynomials that is related to this sphere. 

Spherical coordinates can be obtained from the Lam\'{e}
case by setting either $f_{1}=f_{2}$ or $f_{2}=f_{3}$. These result
in different coordinate systems, respectively called the $12$ and
$23$ spherical systems and are related to each other via a permutation of coordinates. Similarly, setting $a=1$ in both the prolate and
oblate cases gives the $23$ and $12$ spherical systems respectively. See \cite{Nguyen2023} for more details. 

In this paper we focus on the $23$ spherical system,
defined by 
\begin{align}
x_{1}^{2} & =s_{1} & x_{2}^{2} & =(1-s_{1})s_{2}\nonumber \\
x_{3}^{2} & =(1-s_{1})(1-s_{2})s_{3} & x_{4}^{2} & =(1-s_{1})(1-s_{2})(1-s_{3}),\label{eq:23 spherical def}
\end{align}
 where $0\le s_{k}\le1$ for all $k=1,2,3$. Unlike the previously
described systems, the spherical coordinates are relatively simple
and separation can be done "by hand", without the use of a St\"{a}ckel matrix. The classical integrals obtained by separation are $(F, \ell_{34}^2)$ where $F=\ell_{12}^2 + \ell_{13}^2 + \ell_{14}^2$. Note this is the same $F$ as in the Lam\'{e} system. The separated ODEs are 
\begin{subequations}
\begin{align}
\psi_{1}^{''}+\frac{1}{2}\left(\frac{3}{s_{1}-1}+\frac{1}{s_{1}}\right)\psi_{1}^{'}+\frac{\lambda-Es_{1}}{4s_{1}(1-s_{1})^{2}}\psi_{1} & =0\label{eq:Spherical Separation1} \\
\psi_{2}^{''}+\left(\frac{1}{2s_{2}}+\frac{1}{s_{2}-1}\right)\psi_{2}^{'}+\frac{(\lambda-E)(s_{2}-1)-m}{4s_{2}(1-s_{2})^{2}}\psi_{2} & =0\label{eq:Spherical Separation2}\\
\psi_{3}^{''}+\frac{1}{2}\left(\frac{1}{s_{3}}+\frac{1}{s_{3}-1}\right)\psi_{3}^{'}+\frac{m}{4s_{3}(1-s_{3})}\psi_{3} & =0.\label{eq:Spherical Separation3}
\end{align}
\end{subequations}

The equation for $\psi_{3}$ \eqref{eq:Spherical Separation3}
can be converted to the trivial equation \eqref{eq:trivial equation} with the coordinate transformation $s_3 = \cos^2(\phi)$. Using this and periodic
boundary conditions, we find that $m\in\mathbb{Z}$ and $\psi_{3}=e^{im\phi}$
where $\phi=\arctan(\frac{x_{3}}{x_{4}})$.

As for the previous degenerate systems, we have the following Lemma.

\begin{lem}
    The system of separated equations in \eqref{eq:Spherical Separation1}, \eqref{eq:Spherical Separation2}, \eqref{eq:Spherical Separation3} can be obtained from either those of the Lam\'{e} system \eqref{eq:raw equations}, \eqref{eq:raw equations2} or those of the oblate system \eqref{eq:OblateEqn1}, \eqref{eq:OblateEqn2}. 
\end{lem}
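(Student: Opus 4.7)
The plan is to follow the template that has already been used for the ellipsoidal-to-prolate and ellipsoidal-to-oblate degenerations (see equation \eqref{eq:prodegen}). In each case a pair of axis parameters is forced to coincide by introducing a small parameter $\epsilon\to 0$, together with a compensating affine rescaling of the collapsing coordinate and the inverse rescaling of its conjugate momentum. At the level of the separated ODEs, this merges two internal regular singular points into a single double pole, lowering the number of finite singularities from four (Heun/Lam\'{e} type) to three (hypergeometric type), while a new hypergeometric ODE emerges from the rescaled coordinate.

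For the Lam\'{e} route, after normalising $(f_1,f_2,f_3)=(0,1,a)$ I would apply
\[
a = 1+\epsilon,\qquad s_3 = 1 + \epsilon\,\tilde{s}_3,\qquad \tilde{s}_3\in[0,1],
\]
with $p_3 \to \tilde{p}_3/\epsilon$. Plugging into the $k=3$ copy of \eqref{eq:raw equations2} and multiplying through by $\epsilon^2$ gives a regular $\epsilon\to 0$ limit, and the resulting equation in $\tilde{s}_3$ is \eqref{eq:Spherical Separation3} under the identification $m = E - f - g$. The $k=2$ copy of \eqref{eq:raw equations2} needs no rescaling, since $s_2\in[0,1]$ is unaffected by the collision at $s=1$; its two singularities at $s_2=1$ and $s_2=a$ simply merge into a double pole as $\epsilon\to 0$, producing \eqref{eq:Spherical Separation2} with the same $m$ and with $\lambda=f$. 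Equation \eqref{eq:raw equations} is already in the form of \eqref{eq:Spherical Separation1} after relabeling $f=\lambda$. The essential verification is that the two independent determinations of $m$ from the $k=2$ and $k=3$ equations agree.

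For the oblate route, the analogous canonical transformation is $a = 1+\epsilon$, $s_3 = 1+\epsilon\,\tilde{s}_3$, $p_3 \to \tilde{p}_3/\epsilon$, applied to \eqref{eq:OblateEqn1}--\eqref{eq:OblateEqn2}. By exactly the same mechanism, the two Heun-type equations for $j=1,2$ collapse to hypergeometric ODEs with a double pole at $s_j=1$, while the trivial $j=3$ equation stays hypergeometric. This limit naturally produces the $12$-spherical system, as observed in \cite{Nguyen2023}; the stated lemma for the $23$-spherical system then follows from composing with the discrete $S^3$ symmetry $(x_1,x_2,x_3,x_4)\mapsto(x_3,x_4,x_1,x_2)$, which is the coordinate relabeling that identifies the $12$ and $23$ spherical coordinate systems.

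The main obstacle is identifying, in each degeneration, the correct affine combinations of the Lam\'{e} (or oblate) spectral parameters that remain finite as $\epsilon\to 0$ and match the spherical spectral parameters $(\lambda,m)$. A naive limit of the raw accessory parameters is singular; the combinations $\lambda = f$ and $m = E-f-g$ (and their oblate counterparts) are forced by requiring that the coefficients of both the emergent double pole and the surviving simple poles be finite in the limit. This is the same bookkeeping step that appears in the earlier degenerations, so no new ideas are needed beyond careful substitution and an order-$\epsilon$ expansion of the three ODEs.
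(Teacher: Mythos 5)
Your Lam\'{e} route is correct and is essentially the paper's own argument: the blow-up $f_3=f_2+\epsilon$, $s_3=f_2+\epsilon\tilde s_3$, $p_3=\tilde p_3/\epsilon$ with the normalisation $(f_1,f_2)=(0,1)$ sends the $k=3$ copy of \eqref{eq:raw equations2} to \eqref{eq:Spherical Separation3}, the un-rescaled $k=2$ copy to \eqref{eq:Spherical Separation2}, and \eqref{eq:raw equations} is literally \eqref{eq:Spherical Separation1}; your identifications $\lambda=f$ and angular constant $E-f-g$ (the quantity written as $m$ in \eqref{eq:Spherical Separation2}--\eqref{eq:Spherical Separation3}) agree between the two limits, exactly as needed.

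The oblate route, however, has a genuine flaw. The coordinate whose interval collapses as $a\to1^{+}$ is the one ranging over $[1,a]$, and from \eqref{eq:prolate coord def-1} this is one of the two variables obeying the Heun equation \eqref{eq:OblateEqn1}: positivity of $x_2^2$ forces $0\le s_1\le1\le s_2\le a$, while $x_3^2,x_4^2\ge0$ force $0\le s_3\le1$ (the range line printed below \eqref{eq:prolate coord def-1} is a typo), so the angular variable of the trivial equation \eqref{eq:OblateEqn2} does not collapse at all. Substituting $s_3=1+\epsilon\tilde s_3$, $p_3=\tilde p_3/\epsilon$ into \eqref{eq:OblateEqn2} does not leave it ``hypergeometric as before'': after multiplying by $\epsilon^2$ the limit is $\ddot\psi+\tfrac{1}{2\tilde s_3}\dot\psi=0$, so the $m^2$ term is destroyed. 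Worse, under your description both $j=1,2$ copies of \eqref{eq:OblateEqn1} are taken in the naive limit and both collapse to the \emph{same} Gegenbauer-type equation \eqref{eq:Spherical Separation1}; the associated-Legendre-type equation \eqref{eq:Spherical Separation2} is then never produced, and composing with the permutation relating the $12$ and $23$ spherical systems cannot repair this, since relabelling Cartesian axes does not change the set of separated ODEs. The correct mechanism is to blow up the Heun variable on the collapsing interval, $(a,s_2,p_2)=(1+\epsilon,1+\epsilon\tilde s_2,\tilde p_2/\epsilon)$: the numerator $-Es^2+(\lambda+aE-(a-1)m^2)s-a\lambda$ is then $O(\epsilon)$ and the rescaled limit of \eqref{eq:OblateEqn1} is exactly \eqref{eq:Spherical Separation2}, with $\lambda$ and $m$ passing through unchanged (no singular recombination of spectral parameters occurs in this degeneration, contrary to your closing remark), while the un-rescaled copy of \eqref{eq:OblateEqn1} on $[0,1]$ limits to \eqref{eq:Spherical Separation1} and \eqref{eq:OblateEqn2} is already \eqref{eq:Spherical Separation3}. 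With that correction the oblate half goes through directly, and no detour through the $12$-spherical system is needed.
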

\begin{proof}
    For the Lam\'{e} system \eqref{eq:raw equations}, \eqref{eq:raw equations2}, the transformation $(s_3,p_3)\to (f_2 + \tilde{s_3} \epsilon, \tilde{p_3}/\epsilon)$ is canonical. This, coupled with $f_3 \to f_2 + \epsilon$ and a corresponding transformation of the integrals followed by the normalization $(f_1,f_2)=(0,1)$ yields the result. For the oblate direction, a similar method with $(a, s_3, p_3) = (1+\epsilon, 1+ \epsilon \tilde{s_3}, \tilde{p_3}/\epsilon)$ in \eqref{eq:OblateEqn1} and \eqref{eq:OblateEqn2} gives the result.  
\end{proof}
Next, we turn our attention to the non trivial equations.
For $\psi_{1}$ in \eqref{eq:Spherical Separation1}, we observe that it is of hypergeometric type. It
has 3 regular singularities - two finite poles at $s_{1}=0,1$ and
the third at infinity. To assist in finding polynomial solutions,
we transform the equation for $\psi_{1}$ using the change of dependent
variable $\psi_{1}=(1-s_{1})^{\frac{1}{4}(-1+X)}W_{1}$ where $X=\sqrt{1+4E-4\lambda}$,
followed by the change of independent variable $s_{1}=x_1^2$.
This yields the Gegenbauer equation 
\begin{equation}
(1-x_1^2)W_1^{''}-2x_1(2u+1)W_1^{'}+n(n+2u)W_1=0\label{eq:GegSph1}
\end{equation}
where $u = \frac{1}{2}(1+X)$ and 
\begin{equation}
    n = -\frac{1}{2}(1-2\sqrt{1+E}+X).\label{eq:n1 eq}
\end{equation}
Gegenbauer polynomial solutions to \eqref{eq:GegSph1}, denoted by $C_{n}^{u}(x_1)$, are obtained when $n_1$ is a non negative integer.

Repeating a similar process for $\psi_{2}$ in \eqref{eq:Spherical Separation2}, we
again note the equation is hypergeometric with regular singularities
at $0,1$ and infinity. The change of independent variable $s_2 = x^2 \coloneqq \frac{x_2^2}{\tilde{r}^2}$ where $\tilde{r}\coloneqq x_2^2 + x_3^2 + x_4^2$ gives the Associated Legendre equation written as follows
\begin{equation}
    (1-x^2)\psi_2^{''} - 2x\psi_2^{'} + \left(\ell(\ell+1)-\frac{m^2}{1-x^2}\right)\psi_2 = 0 \label{eq:AssocLegendreFirstAppearance}
\end{equation}
where $\ell(\ell+1) = E-\lambda$, i.e. 
\begin{equation}
    \ell = \frac{1}{2}(-1+\sqrt{1+4E-4\lambda}). \label{eq:ellS3Spherical}
\end{equation}
In \eqref{eq:AssocLegendreFirstAppearance}, $\ell$ and $m$ are referred to as the degree and order respectively of the Associated Legendre equation. Non trivial non singular solutions $P_{\ell}^{m}(x)$ are yielded when both $\ell$ and $m$ are integers. Further, said solutions are polynomial only if $m$ is an even integer. 
For polynomial solutions, both $n$ and $\ell$
from (\ref{eq:n1 eq}) and (\ref{eq:ellS3Spherical}) respectively must be
integers. Solving these simultaneously for $E$ gives 
\begin{equation}
    E=D(D+2) \label{eq:EforSphericalBase}
\end{equation}
where $D=\ell + n$. Using \eqref{eq:EforSphericalBase} with \eqref{eq:ellS3Spherical} gives analytic values of the eigenvalue $\lambda$ as 
\begin{equation}
    \lambda = D(D+2)-\ell(\ell+1)\label{eq:g base}
\end{equation}
Using (\ref{eq:g base}), we also have 
\begin{equation}
X=2\ell+1\label{eq:XSimp}
\end{equation}
along with $u=\ell+1$.
Combining the above, we have the following Lemma. 
\begin{lem}
For a fixed value of $D=n+\ell$, the combined wavefunction
solution to (\ref{eq:Schrodinger}), expressed in Cartesian coordinates, is the homogeneous polynomial given by the product 
\begin{equation}
\Psi_{D}(\bm{x})=r^{n}\tilde{r}^{\ell}\left(\frac{x_{3}+ix_{4}}{\sqrt{x_3^2+x_4^2}}\right)^{m}C_{n}^{\ell+1}\left(\frac{x_{1}}{r}\right)P_{\ell}^{m}\left(\frac{x_2}{\tilde{r}}\right)\label{eq:JacobiSphericalCartesian}
\end{equation}
where $\tilde r^2 = r^2 - x_1^2$.
\end{lem}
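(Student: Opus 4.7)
The plan is to assemble the three separated factors $\psi_{j}$ from (\ref{eq:Spherical Separation1})--(\ref{eq:Spherical Separation3}), convert them into Cartesian coordinates via the defining relations (\ref{eq:23 spherical def}), and then use the restriction $r=1$ on $S^{3}$ to insert the correct homogeneous extensions in $r$ and $\tilde r$. The key observation is that each separated ODE has already been reduced to a classical named equation (Gegenbauer or associated Legendre), so the work is primarily bookkeeping: identifying the argument of each polynomial in terms of the Cartesian coordinates and accounting for the prefactors introduced by the changes of variable.

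First I would collect the three ingredients. For $\psi_3$, periodic boundary conditions give $\psi_{3}=e^{im\phi}=\left(\tfrac{x_{3}+ix_{4}}{\sqrt{x_{3}^{2}+x_{4}^{2}}}\right)^{m}$ directly. For $\psi_{1}$, the substitution $\psi_{1}=(1-s_{1})^{(X-1)/4}W_{1}$ with $s_{1}=x_{1}^{2}$ reduces (\ref{eq:Spherical Separation1}) to (\ref{eq:GegSph1}), yielding $W_{1}=C_{n}^{\ell+1}(x_{1})$ upon using $u=\ell+1$; by (\ref{eq:XSimp}) the prefactor becomes $(1-x_{1}^{2})^{\ell/2}=\tilde r^{\ell}$ on $S^{3}$. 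For $\psi_{2}$, the change $s_{2}=x_{2}^{2}/\tilde r^{2}$ identifies (\ref{eq:Spherical Separation2}) with the associated Legendre equation (\ref{eq:AssocLegendreFirstAppearance}) at degree $\ell$ and order $m$, so $\psi_{2}=P_{\ell}^{m}(x_{2}/\tilde r)$. Multiplying the three pieces together produces the expression of (\ref{eq:JacobiSphericalCartesian}) modulo the overall factor of $r^{n}$.

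The final step is to promote the expression, originally only defined on $S^{3}$, to a genuine homogeneous polynomial on $\mathbb{R}^{4}$. Since $C_{n}^{\ell+1}$ is a polynomial of degree $n$, the identity $C_{n}^{\ell+1}(x_{1})=r^{n}C_{n}^{\ell+1}(x_{1}/r)$ holds on $S^{3}$ and supplies the missing $r^{n}$ factor. A degree count then gives $n+\ell$, since the phase and $P_\ell^m$ factor combine to be $0$-homogeneous in $(x_{3},x_{4})$. To check true polynomiality one invokes the standard factorisation $P_{\ell}^{m}(x)=(1-x^{2})^{m/2}\widetilde P_{\ell}^{m}(x)$ with $\widetilde P_{\ell}^{m}$ polynomial of degree $\ell-m$: substituting $x=x_{2}/\tilde r$ converts $(1-x^{2})^{m/2}$ into $(x_{3}^{2}+x_{4}^{2})^{m/2}/\tilde r^{m}$, which combines with the phase $\bigl((x_{3}+ix_{4})/\sqrt{x_{3}^{2}+x_{4}^{2}}\bigr)^{m}$ to give the polynomial $(x_{3}+ix_{4})^{m}/\tilde r^{m}$, and the remaining $\tilde r^{\ell-m}\widetilde P_{\ell}^{m}(x_{2}/\tilde r)$ is a polynomial of degree $\ell-m$ in $x_{2}$ and $\tilde r^{2}=x_{2}^{2}+x_{3}^{2}+x_{4}^{2}$.

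The only real obstacle is tracking these half-integer powers of $(1-s_{1})$, $(1-s_{2})$ and of $x_{3}^{2}+x_{4}^{2}$ carefully enough to confirm that all square roots cancel and no branch cuts remain; the structural computation is otherwise mechanical, and the degree assignment $D=n+\ell$ is then a direct corollary, matching (\ref{eq:EforSphericalBase}) and the general formula $E=D(D+2)$ for harmonic polynomials on $S^{3}$.
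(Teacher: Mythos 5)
Your proposal is correct and follows essentially the same route as the paper: convert $e^{im\phi}$ to $\bigl((x_3+ix_4)/\sqrt{x_3^2+x_4^2}\bigr)^m$, use the factorisation $P_\ell^m(x)=(1-x^2)^{m/2}\widetilde P_\ell^m(x)$ (the paper cites the explicit Rodrigues-type formula, which is the same fact) to cancel the half-integer powers of $x_3^2+x_4^2$ against the phase, observe that the remainder is a degree $\ell-m$ polynomial in $x_2$ and $x_3^2+x_4^2$, and use $r^nC_n^{\ell+1}(x_1/r)$ being homogeneous of degree $n$ to get total degree $D=n+\ell$. Your extra bookkeeping of the prefactor $(1-s_1)^{(X-1)/4}=\tilde r^{\ell}$ is just the assembly step the paper performs before stating the lemma, so there is no substantive difference.
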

\begin{proof}
First, we note as before $e^{im\phi}= (\frac{x_3+ix_4}{\sqrt{x_3^2+x_4^2}})^m$. 
The associated Legendre polynomials $P_{\ell}^{m}(x)$ can be written in the form (see,~e.g.,~\cite[14.7]{DLMF})
\begin{equation}
    P^m_\ell(x) = \frac{(-1)^m}{2^\ell \ell!} (1 - x^2)^{m/2} \frac{d^{\ell+m}}{dx^{\ell+m}} (x^2 - 1)^\ell \,.
\end{equation}
Thus the denominator $\sqrt{x_3^2 + x_4^2}^{m}$ in the Lemma cancels with $\tilde r^m ( 1 - x_2^2/\tilde r^2)^{m/2}$. The remaining terms produce a homogeneous degree $\ell - m$ polynomial in $x_2$ and $\tilde r^2 - x_2^2 = x_3^2 + x_4^2$.
We note that $r^{n}C_{n}^{\ell+1}(\frac{x_1}{r})$  is homogeneous of degree $n$ and so it follows that \eqref{eq:JacobiSphericalCartesian} is homogeneous of degree $D=n+\ell$. 
\end{proof}
Let $\bm{\mu}=(\mu_1,\mu_2)$ where $\mu_i\in \{0,1\}$ and $\mu_1$ $(\mu_2)$ being $1$ $(0)$ denotes a solution odd (even) about the $x_1$ $(x_2)$ axis. As with the Lam\'{e} system, we note that both the Gegenbauer and Associated Legendre equations have independent variables $x_1$ and $x_2$ respectively and thus their parity about these axes will be determined by that of $n$ and $\ell-m$ respectively. Consequently, a fixed value of $D$ enforces a relationship between $n,\ell$ and $m$.

We note that $\ell=0$ forces $D=n$ and $m=0$. Consequently, we recover Chebyshev polynomials of the second kind as $C_{n}^{1}(x_1) = U_n(x_1)$. The corresponding state on the joint spectrum is $(m,\lambda)=(0,1-\hbar^2)$,
which becomes the classical degenerate point in the limit of $\hbar\to0$. Depending on the parity of $E$, this corresponds to either the $(0,0)$ symmetry (if $E$ is even) or $(1,0)$ if $E$ is odd.

Fixing an energy $E$ allows us to compute the joint spectrum for the spherical system. An example of the joint spectrum $(m,\lambda)$ is shown in Fig.
\ref{fig:SphericalSpecAct} a). The degenerate point is shown in
magenta at $(0,1)$.

From \cite{Nguyen2023}, it is known that the actions
of the spherical system are given as follows:
\begin{align}
J_{1}=1-\sqrt{1-f} &  & J_{2}=\sqrt{1-f} - m &  & J_{3}=m\label{eq:actionSpherical}
\end{align}
where $f$ is a value of the integral $F$. 
In Fig. \ref{fig:SphericalSpecAct} b) we show the corresponding action map to the joint spectrum shown in a). 
\begin{figure}[H]
\begin{centering}
\textcolor{black}{\includegraphics[width=7.5cm,height=7.5cm,keepaspectratio]{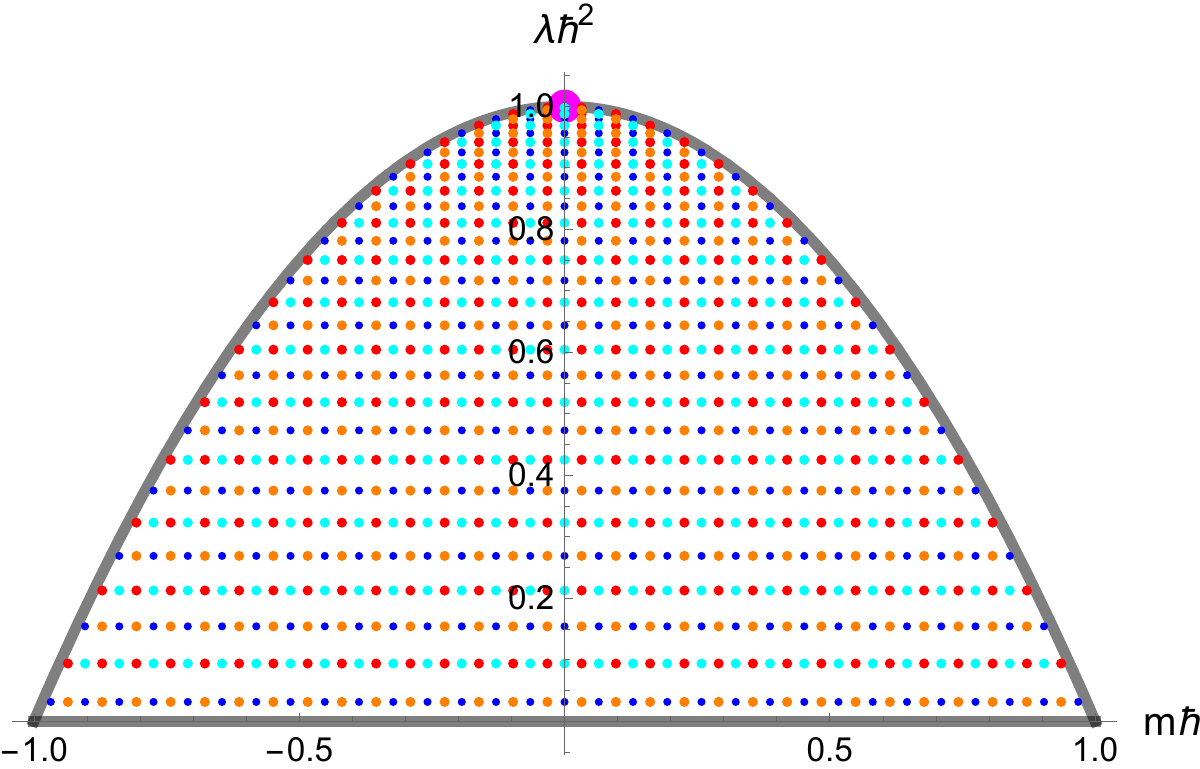}\includegraphics[width=7.5cm,height=7.5cm,keepaspectratio]{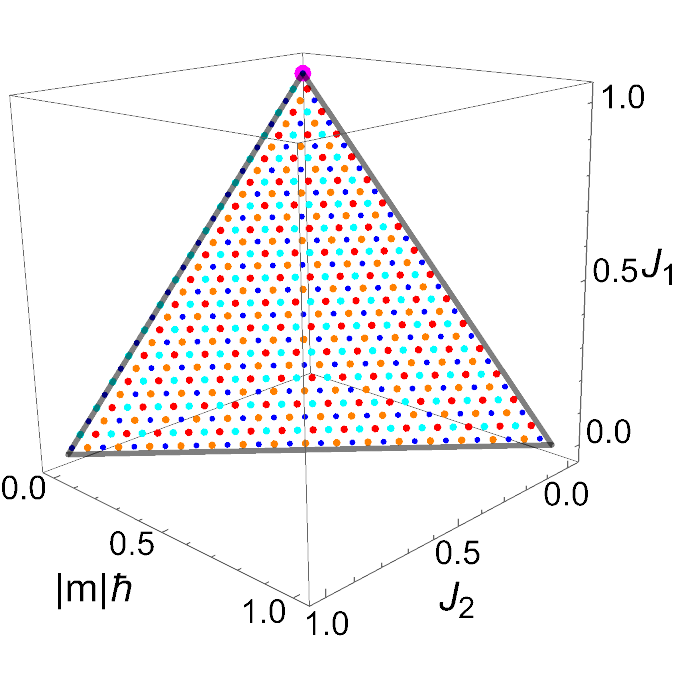}}
\par\end{centering}
\caption{a) The joint spectrum in the spherical case with $D=30$. Blue, red, orange and cyan represent
the $(0,0),(1,0),(0,1)$ and $(1,1)$ symmetries respectively. There
are $31^{2}=961$ states in total. b) The corresponding spectrum in action variables
(\ref{eq:actionSpherical}). \label{fig:SphericalSpecAct}}
\end{figure}
\subsection{Cylindrical Coordinates}\label{sec:CylindricalS3}

\textcolor{black}{Cylindrical (or Hopf, or doubly cylindrical) coordinates are a further degeneration
of oblate coordinates, found by setting $e_{1}=e_{2}$ in addition
to $e_{3}=e_{4}$. The relationship between Cartesian and cylindrical
coordinates is given by 
\begin{equation}
\begin{aligned}x_{1}^{2} & =s_{1}s_{2} & x_{2}^{2} & =s_{2}(1-s_{1})\\
x_{3}^{2} & =s_{3}(1-s_{2}) & x_{4}^{2} & =(1-s_{2})(1-s_{3}).
\end{aligned}
\label{eq:cylindricalcoorddef}
\end{equation}
The separated equations are given by 
\begin{subequations}
\begin{align}
\psi_{1}^{''}+\frac{1}{2}\left(\frac{1}{s_{1}}+\frac{1}{s_{1}-1}\right)\psi_{1}^{'}+\frac{m_{1}^{2}}{4s_{1}(1-s_{1})}\psi_{1} & =0\label{eq:cyl1}\\
\psi_{2}^{''}+\left(\frac{1}{s_{2}-1}+\frac{1}{s_{2}}\right)\psi_{2}^{'}+\frac{m_{1}(s_{2}-1)-(m_{2}+E(s_{2}-1))s_{2}}{4s_{2}^{2}(1-s_{2})^{2}}\psi_{2} & =0\label{eq:cyl2}\\
\psi_{3}^{''}+\frac{1}{2}\left(\frac{1}{s_{3}}+\frac{1}{s_{3}-1}\right)\psi_{3}^{'}+\frac{m_{2}^{2}}{4s_{3}(1-s_{3})}\psi_{3} & =0\label{eq:cyl3}
\end{align}
\end{subequations}
where $m_{1},m_{2}$ are spectral parameters. The equations in $\psi_{1}$
and $\psi_{3}$ in \eqref{eq:cyl1} and \eqref{eq:cyl3} are reducible to the trivial equation \eqref{eq:trivial equation} with the trigonometric transformation used for the previous systems. This, along
with periodic boundary conditions, give $m_{1},m_{2}\in\mathbb{Z}$
with solutions $\psi_{1}=e^{im_{1}\phi_{1}}$ and $\psi_{3}=e^{im_{2}\phi_{3}}$
where $\phi_{1}=\arctan(\frac{x_{2}}{x_{1}})$ and $\phi_{3}=\arctan(\frac{x_{4}}{x_{3}})$. }

In \eqref{eq:cyl2}, the equation in $\psi_{2}$ is of hypergeometric type. The change
of dependent variable $\psi_{2}=s_{2}^{\frac{\left|m_{1}\right|}{2}}(1-s_{2})^{\frac{\left|m_{2}\right|}{2}}y$
and independent $s_{2}=\frac{1-x}{2}$ variables yields 
\[
y^{''}+\left(\frac{1+\left|m_{1}\right|}{x-1}+\frac{1+\left|m_{2}\right|}{x+1}\right)y^{'}+\frac{-E+2\left|m_{1}\right|\left(\left|m_{2}\right|+1\right)+m_{1}^{2}+m_{2}^{2}+2\left|m_{2}\right|}{4(x^{2}-1)}y=0
\]
 whose Riemann symbol is given by 
\[
\mathcal{S}_{\text{Cyl}}=\left(\begin{aligned}-1 &  & 1 &  & \infty\\
0 &  & 0 &  & \frac{1}{2}\left(-\sqrt{E+1}+\left|m_{1}\right|+\left|m_{2}\right|+1\right) & ;x\\
-\left|m_{2}\right| &  & -\left|m_{1}\right| &  & \frac{1}{2}\left(\sqrt{E+1}+\left|m_{1}\right|+\left|m_{2}\right|+1\right)
\end{aligned}
\right).
\]

Polynomial solutions to $\psi_{2}$ are obtained when an index at infinity vanishes. This leads to the condition
\begin{equation}
    \frac{1}{2}\left(-\sqrt{E+1}+\left|m_{1}\right|+\left|m_{2}\right|+1\right)=-d
\end{equation}
which gives the quantised energy
\[
E=D(D+2)
\]
where $D=2d+\left|m_{1}\right|+\left|m_{2}\right|$. These solutions, denoted by $P_d^{(|m_1|,|m_2|)}(x)$, are the well known Jacobi polynomials.

The combined wavefunction solution to \eqref{eq:Schrodinger} is therefore
\begin{equation}
    \psi = P_d^{(|m_1|,|m_2|)}(x)e^{im_1\phi_1}e^{im_2\phi_3}. \label{eq:combCylSol}
\end{equation}
\begin{lem}
For a given $D=2d + |m_1| + |m_2|$, the total wave function \eqref{eq:combCylSol} in the original Cartesian
coordinates is a harmonic homogeneous polynomial of degree $D$ given by
\begin{equation}
\Psi_{D}=r^{2d}(x_{1}+i\sign(m_1)x_{2})^{|m_{1}|}(x_{3}+i\sign(m_2)x_{4})^{|m_{2}|}P_{d}^{(\left|m_{1}\right|,\left|m_{2}\right|)}\left(\frac{x_3^{2}+x_4^{2}-x_{1}^{2}-x_{2}^{2}}{r^2}\right).\label{eq:cylindrical eqn}
\end{equation}
\end{lem}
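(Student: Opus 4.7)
The plan is to reverse-engineer the Cartesian form of $\psi = P_d^{(|m_1|,|m_2|)}(x)\,e^{im_1\phi_1}\,e^{im_2\phi_3}$ by carefully tracking how the change of dependent variable $\psi_2 = s_2^{|m_1|/2}(1-s_2)^{|m_2|/2} y$ interacts with the angular factors, following exactly the pattern used for the prolate and oblate cases. The proof then consists of (i) rewriting each ingredient in $\bm{x}$, (ii) showing the result is polynomial of the stated degree, and (iii) invoking the Schrödinger equation to conclude harmonicity.

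First I would use the defining equations \eqref{eq:cylindricalcoorddef} to compute the two elementary identities $x_1^2 + x_2^2 = s_2$ and $x_3^2 + x_4^2 = 1 - s_2$, valid on $S^3$. Combined with the change of independent variable $s_2 = (1-x)/2$ these give $x = (x_3^2 + x_4^2) - (x_1^2 + x_2^2)$, so on $S^3$ (where $r^2=1$) the Jacobi polynomial argument equals $(x_3^2+x_4^2-x_1^2-x_2^2)/r^2$. Next I would handle the angular exponentials using the elementary identity
\begin{equation}
e^{im_k \phi} = \bigl(\cos|m_k|\phi + i\,\sign(m_k) \sin|m_k|\phi\bigr)
= \frac{(x_{2k-1} + i\,\sign(m_k)\, x_{2k})^{|m_k|}}{(x_{2k-1}^2 + x_{2k}^2)^{|m_k|/2}}\,,
\end{equation}
so that the radical in the denominator is exactly $s_2^{|m_1|/2}$ for $k=1$ and $(1-s_2)^{|m_2|/2}$ for $k=2$. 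These radicals cancel precisely against the prefactors $s_2^{|m_1|/2}(1-s_2)^{|m_2|/2}$ arising from the change of dependent variable used to bring \eqref{eq:cyl2} to Jacobi form, leaving the clean polynomial factors $(x_1 + i\sign(m_1)x_2)^{|m_1|}$ and $(x_3 + i\sign(m_2)x_4)^{|m_2|}$.

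At this point, on $S^3$ the wave function equals
\begin{equation}
(x_1 + i\sign(m_1)x_2)^{|m_1|} (x_3 + i\sign(m_2)x_4)^{|m_2|} P_d^{(|m_1|,|m_2|)}\bigl(x_3^2+x_4^2-x_1^2-x_2^2\bigr),
\end{equation}
which is polynomial of total degree $|m_1|+|m_2|+2d = D$. To promote this to a globally defined homogeneous polynomial on $\mathbb{R}^4$, I would insert the factor $r^{2d}$ and rescale the argument of $P_d^{(|m_1|,|m_2|)}$ by $1/r^2$; since $P_d^{(|m_1|,|m_2|)}$ has degree $d$, this rescaling contributes an overall $r^{-2d}$ which is exactly absorbed by the prefactor $r^{2d}$. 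On $S^3$ the new expression reduces to the original one, and by construction it is homogeneous of degree $D$ as claimed in \eqref{eq:cylindrical eqn}.

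Finally, harmonicity would follow from the same argument used in the ellipsoidal and prolate cases: by construction $\Psi_D$ satisfies $-\Delta_{S^3}\Psi_D = E\,\Psi_D$ with $E = D(D+2)$, which is precisely the eigenvalue of the Laplace-Beltrami operator on $S^3$ corresponding to homogeneous harmonic polynomials of degree $D$ in $\mathbb{R}^4$; since $\Psi_D$ is homogeneous of degree $D$, the relation $\Delta_{\mathbb{R}^4}(r^D\psi_D(s)) = 0$ derived near the start of Section~\ref{sec:S3-1} forces $\Delta_{\mathbb{R}^4}\Psi_D = 0$. The only mildly delicate point in the argument is the bookkeeping of signs of $m_1, m_2$ in the angular factors; once that is handled via the $\sign$-trick above, the rest is routine algebra plus the general principle that separated eigenfunctions with eigenvalue $D(D+2)$ are harmonic polynomials on $\mathbb{R}^4$.
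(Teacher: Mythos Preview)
Your proposal is correct and follows essentially the same approach as the paper: identify $s_2 = x_1^2+x_2^2$ and $1-s_2 = x_3^2+x_4^2$ from \eqref{eq:cylindricalcoorddef} to obtain the Jacobi argument, rewrite the exponentials and cancel the radicals against the $s_2^{|m_1|/2}(1-s_2)^{|m_2|/2}$ prefactor, and count degrees for homogeneity. Your write-up is in fact more complete than the paper's, which does not explicitly justify the harmonicity claim; your final paragraph supplying that step via the $E=D(D+2)$ eigenvalue and the $\Delta_{\mathbb{R}^4}(r^D\psi_D)=0$ relation is a welcome addition.
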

\begin{proof}
    Using the definition of cylindrical coordinates in \eqref{eq:cylindricalcoorddef}, we note that $s_2 = x_1^2 + x_2^2$. Since $x=1-2s_2$ and $r^2 = \sum_i x_i^2$, the argument of the Jacobi polynomial follows. Similarly, rewriting the exponential terms in \eqref{eq:cylindrical eqn} using this definition and cancelling terms gives the final result.
    To prove homogeneity, we note that the argument of the Jacobi polynomial is of degree $0$ and so the final degree of $\Psi_D$ will be $|m_1|+|m_2| + 2d = D$.
\end{proof}
Note that for $m_{1}=m_{2}=0$ we recover the
Legendre polynomials $P_d^0(x)=P_d^{(0,0)}(x)$.
For $m_{1}=m_{2}$ we recover Gegenbauer
polynomials $C_d^l(x) / P_d^{(l+1/2, l+1/2)} = const$, i.e.~they are the same up to normalisation,
see, e.g.~\cite{szegő1975orthogonal}.

Since there are two continuous $S^{1}$ symmetries, the discrete symmetries of (\ref{eq:cylindrical eqn})
are represented by the parity of $m_1$ and $m_2$.
Note that $D - 2d = |m_1| + |m_2|$ has the same parity as $D$ and $E$. In particular this implies that $m_1 = m_2 = 0$ is only possible for even $D$ and $E$.

An example of the joint spectrum is given in Fig. \ref{fig:CylindricalPics}
a) with $D=20$. The total number of eigenstates is $(D+1)^{2}=441$. 

Two of the spectral parameters are trivial angular momenta: $(J_{1},J_{3})=(\left|m_{1}\right|,\left|m_{2}\right|)$.
As $J_{1}+J_{2}+J_{3} = 1$, the remaining action $J_2$ is easily found.
In Fig. \ref{fig:CylindricalPics} b) an example of the action
map for the corresponding spectrum in a) is shown. 

\begin{figure}[H]
\begin{centering}
\includegraphics[width=7.5cm,height=7.5cm,keepaspectratio]{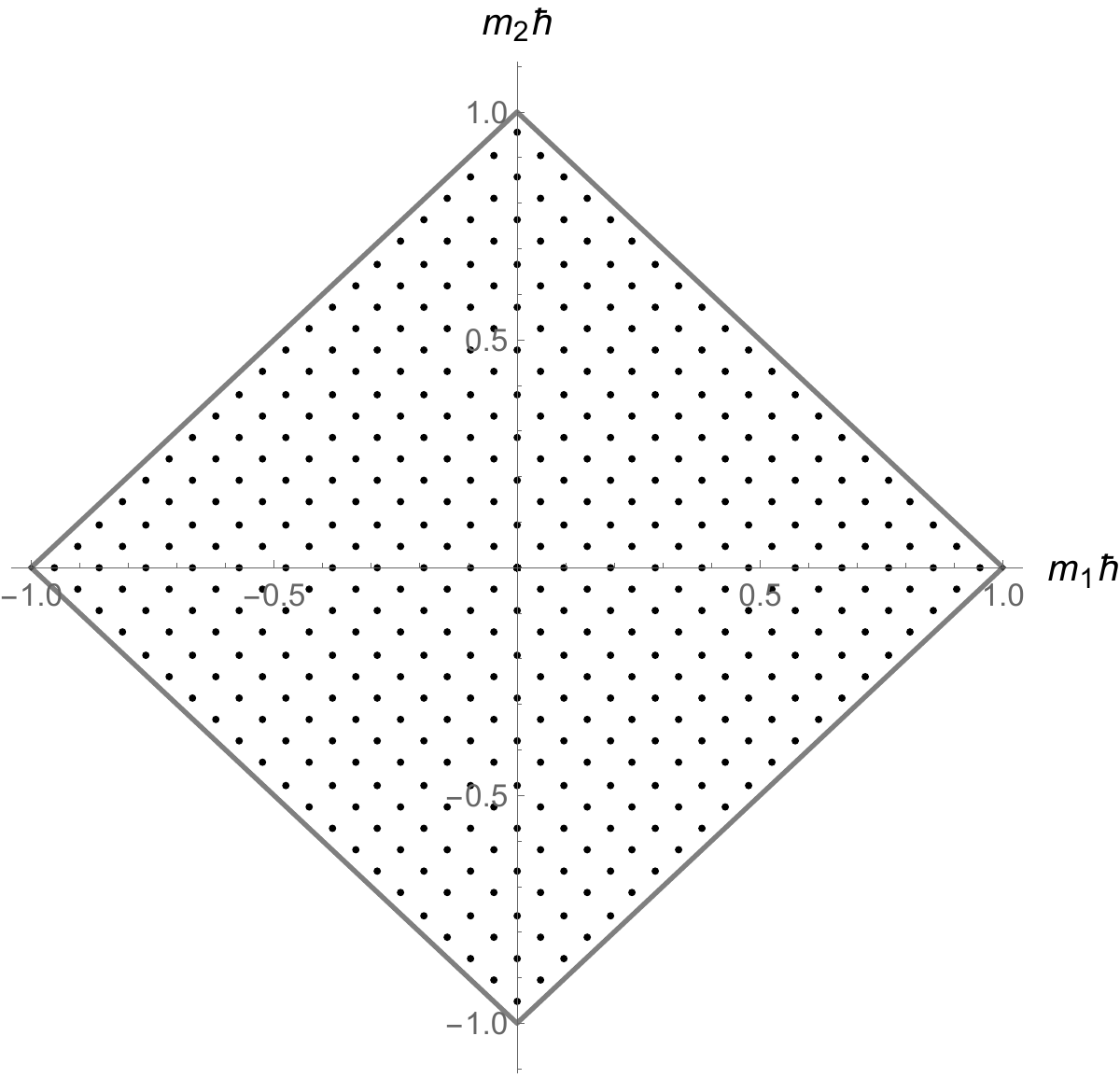}\includegraphics[width=7.5cm,height=7.5cm,keepaspectratio]{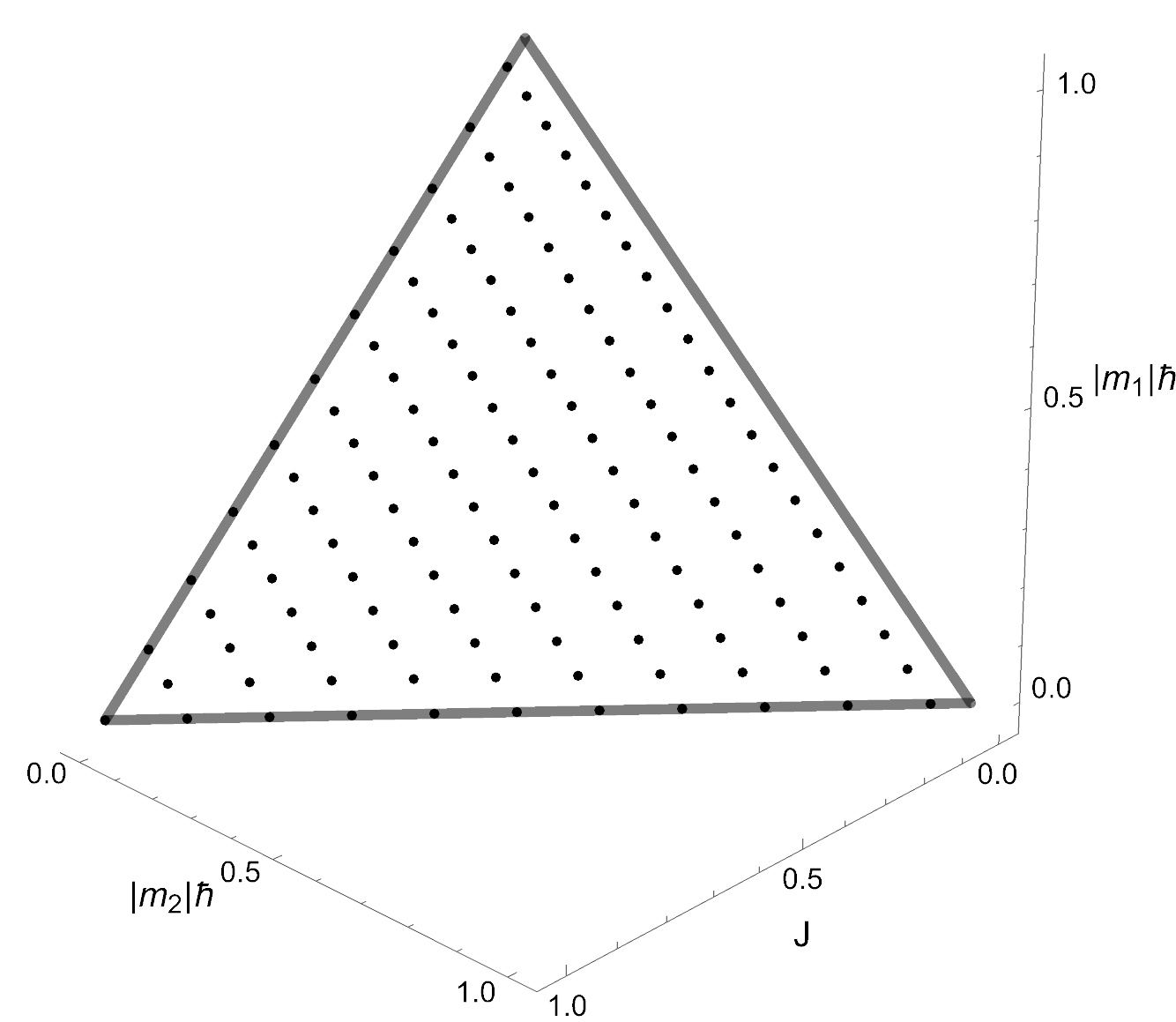}
\par\end{centering}
\caption{a) Joint spectrum with $D=20$. b) Corresponding action map.\label{fig:CylindricalPics}}

\end{figure}

\appendix 

\section{Appendix}\label{sec:AppendixS3}

\subsection{Ellipsoidal Coordinates on $S^{2}$}
In these remaining sections we compare our results on $S^3$ to the more familiar $S^2$.
Ellipsoidal coordinates on $S^{2}$ are defined
as follows: 
\begin{equation}
\begin{aligned}x_{1}^{2} & =\frac{(s_{1}-e_{1})(s_{2}-e_{1})}{(e_{2}-e_{1})(e_{3}-e_{1})} &  & x_{2}^{2}=\frac{(s_{1}-e_{2})(s_{2}-e_{2})}{(e_{1}-e_{2})(e_{3}-e_{2})} &  & x_{3}^{2}=\frac{(s_{1}-e_{3})(s_{2}-e_{3})}{(e_{1}-e_{3})(e_{2}-e_{3})}.\end{aligned}
\label{eq:S2 ellipsoidal def}
\end{equation}
where $0\le e_1 \le s_1 \le e_2 \le s_2 \le e_3$.
Separating \eqref{eq:Schrodinger} in these coordinates gives the following separated equations
\begin{equation}
\psi_{i}^{''}+\frac{1}{2}\left(\frac{1}{s_{i}-e_{1}}+\frac{1}{s_{i}-e_{2}}+\frac{1}{s_{i}-e_{3}}\right)\psi_{i}^{'}+\frac{-Es_{i}+\lambda}{4(s_{i}-e_{1})(s_{i}-e_{2})(s_{i}-e_{3})}\psi_{i}=0\label{eq:EllipsoidalS2}
\end{equation}
for $i=1,2$. 
Equations \eqref{eq:EllipsoidalS2} for $i=1,2$ are the Lam\'{e} equations studied in section \ref{sec:LameS3} and so we denote their polynomial solutions by $Lp_d(s_i)$. 
Classically, the integrals obtained
by separation are
\[
(I_{1},I_{2})=(\ell_{12}^{2}+\ell_{13}^{2}+\ell_{23}^{2},e_{3}\ell_{12}^{2}+e_{2}\ell_{13}^{2}+e_{1}\ell_{23}^{2}).
\]
A combined solution to \eqref{eq:Schrodinger} in ellipsoidal coordinates is given by 
\begin{equation}
    \psi = Lp_d(s_1)Lp_d(s_2)\label{eq:S2EllipsoidalWF}
\end{equation}
where there are $d = n_1 + n_2$ total roots of $\psi$ with $n_1$ occurring in the interval $[e_1, e_2]$ and $n_2$ in $[e_2,e_3]$.
\begin{lem}
Written in Cartesian coordinates, $\psi$ in \eqref{eq:S2EllipsoidalWF} is a homogeneous $\tilde{\ell}\coloneqq 2d$ degree polynomial
\begin{equation}
\Phi_{\tilde{l}}=\left(\prod_{i=1}^{3}\prod_{k=1}^{d}(z_{k}-e_{i})\right)\prod_{k=1}^{d}\sum_{i=1}^{3}\frac{x_{i}^{2}}{z_{k}-e_{i}}\label{eq:baseS2 Ellipsoidal}
\end{equation}
where the $z_{k}$ are solutions
of
\[
\sum_{i=1}^{3}\frac{1/4}{z_{k}-e_{i}}+\sum_{j=1,j\ne k}^{d}\frac{1}{z_{k}-z_{j}}=0.
\]
\end{lem}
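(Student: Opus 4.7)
The plan is to mirror the approach used for the ellipsoidal case on $S^3$ (Proposition following \eqref{eq:PsidEllipsoidal}) and the prolate case (Lemma~\ref{prolateincartesianlemma}). The key step is to establish the polynomial identity
\[
(\theta-s_{1})(\theta-s_{2}) \;=\; \prod_{i=1}^{3}(\theta-e_{i})\sum_{i=1}^{3}\frac{x_{i}^{2}}{\theta-e_{i}},
\]
valid for all $\theta\ne e_{1},e_{2},e_{3}$. Both sides are polynomials in $\theta$ of degree at most two. The left hand side is monic; for the right hand side, clearing denominators gives $\sum_{i}x_{i}^{2}\prod_{j\ne i}(\theta-e_{j})$, whose leading coefficient in $\theta$ is $\sum_{i}x_{i}^{2}=r^{2}=1$ on $S^{2}$, so it is also monic quadratic. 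The two roots on the left are $\theta=s_{1},s_{2}$; on the right the bracketed sum is precisely $T(\theta)$ in the definition of ellipsoidal-spherical coordinates, and the coordinates $s_{1},s_{2}$ are by construction the zeros of $T$. Hence both monic quadratics share the same roots and the identity follows.

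Next, factor the Lam\'e polynomial over its roots as $Lp_{d}(z)=\prod_{k=1}^{d}(z_{k}-z)$, which is exactly the form recalled in \eqref{eq:heinroots-1}. Applying the identity above with $\theta=z_{k}$ for each $k=1,\dots,d$ gives
\[
Lp_{d}(s_{1})Lp_{d}(s_{2})\;=\;\prod_{k=1}^{d}(z_{k}-s_{1})(z_{k}-s_{2})\;=\;\prod_{k=1}^{d}\!\left[\prod_{i=1}^{3}(z_{k}-e_{i})\right]\!\sum_{i=1}^{3}\frac{x_{i}^{2}}{z_{k}-e_{i}},
\]
which, after collecting the constant prefactor $\prod_{i,k}(z_{k}-e_{i})$ and the $d$ quadratic forms, is exactly \eqref{eq:baseS2 Ellipsoidal}.

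Finally, homogeneity is immediate: each of the $d$ bracketed sums $\sum_{i}\tfrac{x_{i}^{2}}{z_{k}-e_{i}}$ is homogeneous of degree $2$ in $(x_{1},x_{2},x_{3})$ with coefficients depending only on the $e_{i}$ and $z_{k}$, so the product over $k$ is homogeneous of degree $2d=\tilde{\ell}$. The statement about the $z_{k}$ satisfying the nonlinear system then comes directly from the Heine--Stieltjes/Al-Rashed--Volkmer lemma (specialised from the earlier lemma on $S^{3}$ with $n=3$ singularities in place of $4$, and $\gamma_{i}=1/2$). The only mild subtlety is recognising that the identity is an equality of polynomials in $\theta$ (rather than just a numerical relation on $S^{2}$), so that substituting the complex numbers $z_{k}$ is legitimate; once this is pointed out the argument is essentially a one-line calculation.
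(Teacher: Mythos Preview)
Your proof is correct and follows exactly the template the paper establishes for the analogous $S^3$ results (the Proposition leading to \eqref{eq:PsidEllipsoidal}, Lemma~\ref{prolateincartesianlemma}, and Lemma~\ref{lem:HomogenProof}): verify that both sides of the key identity are monic quadratics in $\theta$ with the same roots $s_1,s_2$, substitute the Heine--Stieltjes roots $z_k$, and read off homogeneity from the quadratic factors. The paper itself states this appendix lemma without a separate proof, treating it as the evident $S^2$ specialisation of the earlier arguments, so your write-up is if anything more explicit than what appears there.
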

Inspecting (\ref{eq:baseS2 Ellipsoidal}), it is
clear there are 8 symmetry classes in total (even/odd parity about
each of the $x_{i}$ axes). Let $\bm{\mu} \coloneqq (\mu_1, \mu_2,\mu_3)$ where $\mu_i \in \{0,1\}$ represent a symmetry class of the solution \eqref{eq:baseS2 Ellipsoidal} where $\mu_i=1$ signifies a solution odd about the $x_i$ axis and even otherwise. As with the $S^3$ ellipsoidal case, we consider these symmetries by performing the change of variables 
\begin{equation}
     \phi_j = \prod_{i=1}^3(s_j-e_i)^{\mu_i/2}\psi_{j} \label{eq:baseS2transform}
\end{equation}
in \eqref{eq:EllipsoidalS2}. Let solutions to the transformed \eqref{eq:EllipsoidalS2} for a given symmetry class $\bm{\mu}$ be given by $\Phi_{\tilde{l}}^{\bm{\mu}}(\bm{x})$ and set 
\begin{equation}
    \Psi_{\ell}^{\bm{\mu}}(\bm{x}) \coloneqq x_1^{\mu_1}x_2^{\mu_2}x_3^{\mu_3}\Phi_{\tilde{l}}^{\bm{\mu}}(\bm{x})
\end{equation}
We then have the following Lemma.
\begin{lem}
    The $\Psi_{\ell}^{\bm{\mu}}(\bm{x})$ are eigenfunctions of the Schrödinger equation \eqref{eq:Schrodinger} and the energy is given by $E = \ell(\ell+1)$ where $\ell \coloneqq 2d + \sum_{i=1}^3\mu_i$.
\end{lem}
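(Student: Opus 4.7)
The plan is to mirror the proof of the analogous ellipsoidal theorem on $S^3$ (the one following Table~\ref{tab: CoeffTable}): first reduce the eigenfunction claim to a statement about $\Phi_{\tilde{l}}^{\bm{\mu}}$ via an identity for $\Delta(\prod_i x_i^{\mu_i}\,\Phi)$, then invoke Volkmer's construction, and finally read the energy off the indicial equation at infinity of the transformed Lam\'{e} equation.

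First, I would differentiate $\Psi_\ell^{\bm{\mu}}= (\prod_i x_i^{\mu_i})\Phi_{\tilde{l}}^{\bm{\mu}}$ using the Leibniz rule. Because each $\mu_i\in\{0,1\}$, the term $\mu_i(\mu_i-1)$ vanishes and one obtains the $S^2$ analogue of \eqref{eq: VolkmerEqn}, namely
\[
\Delta \Psi_\ell^{\bm{\mu}} \;=\; \Big(\prod_i x_i^{\mu_i}\Big)\!\left[\Delta \Phi_{\tilde{l}}^{\bm{\mu}} \;+\; \sum_{j=1}^{3} \frac{2\mu_j}{x_j}\,\frac{\partial \Phi_{\tilde{l}}^{\bm{\mu}}}{\partial x_j}\right].
\]
It therefore suffices to show that $\Phi_{\tilde{l}}^{\bm{\mu}}$ satisfies the corresponding modified PDE with eigenvalue $-E$. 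Separating this modified PDE in ellipsoidal coordinates on $S^2$ (using the Stäckel-matrix approach of Lemma~\ref{SeparationLemma}) leads, after the substitution $\phi_j=\prod_i (s_j-e_i)^{\mu_i/2}\psi_j$ from \eqref{eq:baseS2transform}, to a transformed Lam\'{e}/Heun equation with modified parameters $\tilde{\gamma}_j = 1/2+\mu_j$ (just as in \eqref{eq:gammarule}) and shifted accessory coefficients. By definition the $Lp_d^{\bm{\mu}}$ are polynomial solutions of this transformed equation, so Volkmer's construction \cite{Volkmer1999} guarantees that their Cartesian-coordinate product gives precisely $\Phi_{\tilde{l}}^{\bm{\mu}}$ satisfying the required modified PDE.

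For the energy, I would use the indicial equation at infinity of the transformed Lam\'{e} equation, which is Fuchsian with the three finite regular singularities at the $e_i$ plus one at $\infty$. With $U=\sum_{i=1}^3 \mu_i$ the Fuchs relation yields $\sum_j\tilde{\gamma}_j = 3/2+U$, and reading the leading $1/s^2$ behaviour of the coefficient of $\phi$ at infinity (which is a short substitution from the known $-E/(4s^2)$ in \eqref{eq:EllipsoidalS2}) pins down the product of the two exponents at infinity. Requiring a polynomial solution of degree $d$ forces one exponent to equal $d$, and the sum/product relations then reduce to the single algebraic identity
\[
E \;=\; (2d+U)(2d+U+1) \;=\; \ell(\ell+1), \qquad \ell \coloneqq 2d+U,
\]
as claimed.

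I expect the main obstacle to be the symmetric-function bookkeeping needed to compute the shift in the numerator of the transformed equation (the $S^2$ analogue of the entries $u_0,u_1$ in Table~\ref{tab: CoeffTable}), which is required to identify the correct leading coefficient at infinity. This step is routine but error-prone; it can be sidestepped by computing the exponent at infinity directly from the substitution $\phi_j=h\psi_j$ with $h=\prod_i(s_j-e_i)^{\mu_i/2}$, since the shift of the exponent at infinity is simply $U/2$ relative to the $\bm{\mu}=\bm{0}$ case treated in the base Lemma, and this already determines $E$.
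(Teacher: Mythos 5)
Your proposal is correct and follows essentially the same route the paper uses for the analogous $S^3$ theorem (the appendix lemma itself is stated without proof, implicitly deferring to that argument): the product-rule identity with $\Delta\bigl(\prod_i x_i^{\mu_i}\bigr)=0$, Volkmer's construction for the transformed Lam\'{e} equation with $\tilde{\gamma}_j=\tfrac12+\mu_j$, and the exponent relation at infinity, which is exactly the $S^2$ version of \eqref{eq:c2 expression} used in the paper's energy computation. Your ``sidestep'' is also sound and arguably cleaner --- the factor $\prod_i(s-e_i)^{\mu_i/2}$ shifts the exponent at infinity by $U/2$, so the full separated factor behaves like $s^{\,d+U/2}$ and the indicial relation of the original equation \eqref{eq:EllipsoidalS2} (with $\sum\gamma_j=\tfrac32$, $\alpha\beta=-E/4$) immediately yields $E=(2d+U)(2d+U+1)=\ell(\ell+1)$ without the Table~\ref{tab: CoeffTable}-style bookkeeping; just note that in the paper's convention the exponent at infinity corresponding to a degree-$d$ polynomial is $-d$, not $d$.
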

Similar to the $S^{3}$ ellipsoidal and Lam\'{e}
system, for a fixed value of $\ell$, only four of these symmetries appear
at any time. When $\ell$ is even, these are the $(0,0,0), (1,1,0), (1,0,1), (0,1,1)$ classes (blue, purple, orange, gray respectively in Fig. \ref{fig:S2 Ellipsoidal Plots} a)) while for odd $\ell$ we have the $(1,0,0), (0,1,0), (0,0,1), (1,1,1)$ classes (red, green, cyan, brown). Note that $E$ is always even. 

We demonstrate an example of this in Fig. \ref{fig:S2 Ellipsoidal Plots}
a) and b) with $\ell=20$ and $\ell=19$ respectively and $(e_1, e_2, e_3) = (0, 1, 2.4)$. The
joint spectrum is obtained using the same three term recurrence relations from Lemma \ref{LemmaProlate}
used to find the prolate, oblate and Lam\'{e} spectra.
\begin{figure}[H]
\begin{centering}
\includegraphics[width=7.5cm,height=7.5cm,keepaspectratio]{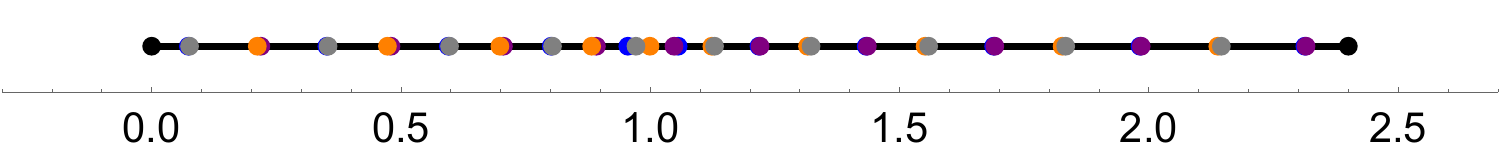}\includegraphics[width=7.5cm,height=7.5cm,keepaspectratio]{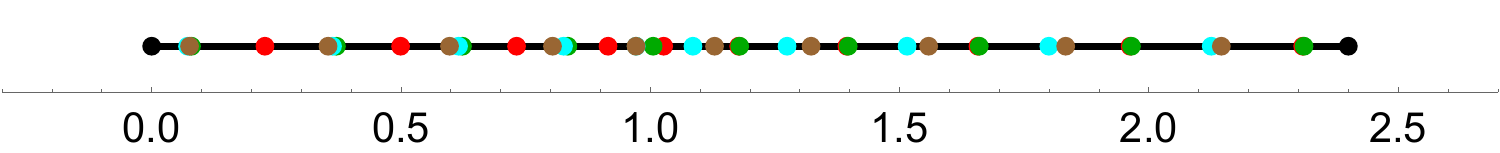}
\par\end{centering}
\textcolor{black}{\caption{Example spectra with a) $\ell=20$ and b) $\ell=19$ respectively and $(e_1, e_2, e_3) = (0, 1, 2.4)$. \label{fig:S2 Ellipsoidal Plots}}
}
\end{figure}
For ellipsoidal coordinates on the $2-$sphere,
consider a fixed value of the energy on $(E_{2}),$ the corresponding
eigenvalues $\lambda$, as well as an energy on the $3-$sphere $(E_{3})$
and eigenvalues for the \textcolor{black}{Lam\'{e}}
system $(f,g)$. Comparing (\ref{eq:EllipsoidalS2}) and (\ref{eq:raw equations})
for $k=1,2$, it is clear that 
\[
\begin{aligned}E_{2}=f-E_{3} &  & \lambda & =g.\end{aligned}
\]
This can be geometrically interpreted as follows: each joint spectrum
for the ellipsoidal $S^{2}$ system (such as those shown in Fig.
\ref{fig:S2 Ellipsoidal Plots}) corresponds to a horizontal slice
of the Lam\'{e} joint spectrum (see Fig. \ref{fig:LameEvenOdd spectrum}), where the slice is specified for a given combination of $(f,E_{3})$.

\subsection{Spherical Coordinates on $S^{2}$}
Spherical coordinates on $S^{2}$ are defined as follows:
\begin{align}
x_{1}^{2}=s_{1} &  & x_{2}^{2}=(1-s_{1})s_{2} &  & x_{3}^{2}=(1-s_{1})(1-s_{2})
\end{align}
where $0\le s_1 \le s_2 \le 1$.
Separating the Hamilton-Jacobi equation gives classical
integrals $(I_{1},I_{2})=(\ell_{12}^{2}+\ell_{13}^{2}+\ell_{23}^{2},\ell_{12}^{2})$
while doing so for the  Schrödinger equation \eqref{eq:Schrodinger} yields the following
ODEs
\begin{subequations}
\begin{align}\psi_{1}^{''}+\frac{1}{2}(\frac{1}{s_{1}}+\frac{2}{s_{1}-1})\psi_{1}^{'}+\frac{-E s_{1}+(E-m^{2})}{4s_{1}(s_{1}-1)^{2}}\psi_{1} & =0\label{eq:Spherical1 S2}\\
\psi_{2}^{''}+\frac{1}{2}(\frac{1}{s_{2}}+\frac{1}{s_{2}-1})\psi_{2}^{'}+\frac{m}{4s_{2}(s_{2}-1)}\psi_{2} & =0 \label{eq:Spherical2 S2}
\end{align}
\end{subequations}
where $(E,m)$ are spectral parameters. The change of independent variable $s_2^2=\cos^2\phi$ transforms \eqref{eq:Spherical2 S2} to the trivial equation \eqref{eq:trivial equation} with solution $e^{im\phi}$ and the associated boundary conditions force $m\in\mathbb{Z}$. 

As with the previous degenerate systems, we have the
following Lemma.
\begin{lem}
The spherical ODEs (\ref{eq:Spherical1 S2}), (\ref{eq:Spherical2 S2}) can be
obtained by degenerating those arising from ellispoidal coordinates
(\ref{eq:baseS2 Ellipsoidal})
\end{lem}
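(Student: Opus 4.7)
The plan is to mirror the degeneration arguments used for the prolate, oblate, and Lam\'{e} systems in Sections \ref{sec:ProlateS3}, \ref{sec:OblateS3}, and \ref{sec:LameS3}, but now at the level of $S^2$ ellipsoidal coordinates. Comparing the simple pole structure of the first-derivative coefficient in \eqref{eq:Spherical1 S2}, namely $\tfrac{1}{2}(\tfrac{1}{s_1} + \tfrac{2}{s_1-1})$, with the generic one $\tfrac{1}{2}\sum_i (s_i - e_i)^{-1}$ in \eqref{eq:EllipsoidalS2}, it is natural to normalise the ellipsoidal parameters as $(e_1, e_2, e_3) = (0, 1, 1+\epsilon)$ and then let $\epsilon \to 0^{+}$. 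Under this degeneration $s_1 \in [0,1]$ is unaffected, while $s_2 \in [1, 1+\epsilon]$ collapses onto a point and must be rescaled.

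First I would treat the $s_1$ equation. Substituting $e_3 = 1 + \epsilon$ directly into \eqref{eq:EllipsoidalS2} for $i=1$ and sending $\epsilon \to 0$, the first-derivative coefficient collapses to $\tfrac{1}{2}(\tfrac{1}{s_1} + \tfrac{2}{s_1-1})$ and the denominator in the multiplicative term becomes $4 s_1 (s_1 - 1)^2$. Matching the numerator $-E s_1 + \lambda$ against $-E s_1 + (E - m^2)$ in \eqref{eq:Spherical1 S2} fixes the identification of spectral parameters
\begin{equation*}
\lambda = E - m^2.
\end{equation*}
This is the analogue of the parameter identifications used in the prolate and oblate cases.

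Next I would treat the $s_2$ equation via the canonical rescaling $s_2 = 1 + \epsilon \tilde s_2$ with $\tilde s_2 \in [0,1]$ and conjugate momentum $p_2 = \tilde p_2 / \epsilon$, matching the prescription \eqref{eq:prodegen}. Under this change, $\psi_2^{''}$ picks up a factor $1/\epsilon^2$, $\psi_2^{'}$ a factor $1/\epsilon$, and the three factors in the denominator become $(1 + \epsilon\tilde s_2)\cdot \epsilon\tilde s_2 \cdot \epsilon(\tilde s_2 - 1)$, again contributing $\epsilon^2$. After multiplying the equation through by $\epsilon^2$ and expanding to leading order in $\epsilon$, the first-derivative coefficient reduces to $\tfrac{1}{2}(\tfrac{1}{\tilde s_2} + \tfrac{1}{\tilde s_2 - 1})$, and the multiplicative coefficient tends to $(\lambda - E)/\bigl(4\tilde s_2(\tilde s_2 - 1)\bigr) = m^2/\bigl(4\tilde s_2 (1 - \tilde s_2)\bigr)$ using the identification above. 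This reproduces \eqref{eq:Spherical2 S2} (with $m$ in \eqref{eq:Spherical2 S2} interpreted as $m^2$, consistent with the boundary-condition quantisation $m \in \mathbb{Z}$).

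The main technical point, and the only one that requires care, is ensuring the correct $\epsilon$-scaling: the near-coincident poles at $s_2 = e_2$ and $s_2 = e_3$ must be captured by the local coordinate $\tilde s_2$, so that both the second derivative term and the leading part of the potential-like term are of order $\epsilon^{-2}$ and balance in the limit. Provided this scaling is performed consistently with a canonical rescaling of the conjugate momentum, the degeneration is smooth and both separated equations \eqref{eq:Spherical1 S2} and \eqref{eq:Spherical2 S2} are recovered.
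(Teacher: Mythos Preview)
Your proposal is correct and is exactly the argument the paper has in mind: the paper does not spell out a proof here but merely writes ``As with the previous degenerate systems'', pointing back to the prolate/oblate-type degenerations, and your choice $(e_1,e_2,e_3)=(0,1,1+\epsilon)$ together with the canonical rescaling $s_2=1+\epsilon\tilde s_2$, $p_2=\tilde p_2/\epsilon$ is precisely the $S^2$ analogue of \eqref{eq:prodegen}. Your identification $\lambda=E-m^2$ and your remark that the $m$ in \eqref{eq:Spherical2 S2} should be read as $m^2$ are both on target.
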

For the equation in $\psi_{1}$ \eqref{eq:Spherical1 S2}, the change of independent variable $s_1 = x_1^2$ transforms \eqref{eq:Spherical1 S2} into the form of the associated Legendre equation given in \eqref{eq:AssocLegendreFirstAppearance}
where $E =\ell(\ell +1)$ is always even and $\ell$ is an integer. Polynomial solutions are this given by the Associated Legendre polynomials $P^m_{\ell}(x_1)$. 
We have the following Lemma.
\begin{lem}
For a given energy $E=\ell(\ell+1)$ the wave function of the Schrödinger equation (\ref{eq:Schrodinger}) in spherical
coordinates on $S^{2}$ is one of the following degree $\ell$ harmonic homogeneous polynomials
\begin{equation}
    \Psi_{\ell} = r^{\ell}
   \left(\frac{x_2+ix_3}{\sqrt{x_2^2+x_3^2}}\right)^m
P_{\ell}^m\left(\frac{x_1}{r}\right), \label{eq:phi S2}
\end{equation}
where $m = -\ell, \dots, \ell$.
\end{lem}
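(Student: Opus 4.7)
The plan is to mirror the proof of the analogous $S^{3}$ statement \eqref{eq:JacobiSphericalCartesian}; the $S^{2}$ case is strictly simpler since there is no additional Gegenbauer factor. First I would note that the change of variable $s_{1}=x_{1}^{2}$ converts \eqref{eq:Spherical1 S2} into the standard associated Legendre equation in $x_{1}$, so that for $E=\ell(\ell+1)$ with $\ell\in\mathbb{N}_{0}$ and $|m|\le\ell$ the regular solution is the associated Legendre polynomial $P_{\ell}^{m}(x_{1})$. The remaining non-trivial factor $\psi_{2}=e^{im\phi}$ comes from the trivial equation \eqref{eq:trivial equation} together with periodic boundary conditions forcing $m\in\mathbb{Z}$, and may be written as
\begin{equation*}
    e^{im\phi} = \left(\frac{x_{2}+ix_{3}}{\sqrt{x_{2}^{2}+x_{3}^{2}}}\right)^{m}.
\end{equation*}
Multiplying $\psi_{1}\psi_{2}$ by the radial prefactor $r^{\ell}$, which restricts to unity on $S^{2}$, produces the claimed expression \eqref{eq:phi S2}.

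Next I would verify that $\Psi_{\ell}$ is a homogeneous polynomial of degree $\ell$ in $(x_{1},x_{2},x_{3})$. Invoking the Rodrigues representation
\begin{equation*}
    P_{\ell}^{m}(x) = \frac{(-1)^{m}}{2^{\ell}\ell!}(1-x^{2})^{m/2}\frac{d^{\ell+m}}{dx^{\ell+m}}(x^{2}-1)^{\ell},
\end{equation*}
valid for $m\ge 0$, and substituting $x=x_{1}/r$, the factor $(1-x_{1}^{2}/r^{2})^{m/2}$ becomes $r^{-m}(x_{2}^{2}+x_{3}^{2})^{m/2}$ via $r^{2}-x_{1}^{2}=x_{2}^{2}+x_{3}^{2}$, which cancels the denominator $\sqrt{x_{2}^{2}+x_{3}^{2}}^{\,m}$ inside $e^{im\phi}$. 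The remaining $\ell+m$-fold derivative of $(x^{2}-1)^{\ell}$ evaluated at $x_{1}/r$ is a polynomial of degree $\ell-m$ in $x_{1}/r$ whose non-zero monomials all share the parity of $\ell-m$; combined with the compensating factor $r^{\ell-m}$, every odd power of $r$ pairs with an odd monomial in $x_{1}$, and the resulting powers $r^{\ell-m-k}$ always appear with even exponent, so that they become polynomials in $r^{2}=x_{1}^{2}+x_{2}^{2}+x_{3}^{2}$. This produces a homogeneous polynomial of degree $\ell-m$ in $(x_{1},x_{2},x_{3})$, which combined with the numerator $(x_{2}+ix_{3})^{m}$ of degree $m$ gives total degree $\ell$. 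For $m<0$ the same argument applies via the standard parity identity $P_{\ell}^{-m}=(-1)^{m}\tfrac{(\ell-m)!}{(\ell+m)!}P_{\ell}^{m}$.

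Finally, harmonicity of $\Psi_{\ell}$ on $\mathbb{R}^{3}$ is automatic: as recalled in the introduction, a homogeneous polynomial of degree $\ell$ on $\mathbb{R}^{n}$ satisfies $\Delta\Psi=0$ precisely when its restriction to $S^{n-1}$ is an eigenfunction of $-\Delta_{S^{n-1}}$ with eigenvalue $\ell(\ell+n-2)$. For $n=3$ this is exactly $\ell(\ell+1)=E$, which holds by construction from the separated Schrödinger equations \eqref{eq:Spherical1 S2} and \eqref{eq:Spherical2 S2}. The only delicate point in the whole argument is the cancellation of half-integer exponents coming from $(1-x^{2})^{m/2}$, and this is handled exactly as in the proof surrounding \eqref{eq:JacobiSphericalCartesian} for the $S^{3}$ spherical system, so no genuinely new obstacle arises.
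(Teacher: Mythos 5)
Your proof is correct and follows essentially the same route as the paper: identify $e^{im\phi}$ with the unit complex factor, use the Rodrigues form of $P_{\ell}^{m}$ so that the $(1-x_1^2/r^2)^{m/2}$ factor cancels the denominator $(x_2^2+x_3^2)^{m/2}$, and conclude homogeneity of degree $\ell$ and harmonicity from the eigenvalue relation. Your version merely spells out details the paper leaves implicit (the parity argument for powers of $r$, the $m<0$ case, and the harmonicity step), so no substantive difference arises.
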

\begin{proof}
    The form of \eqref{eq:phi S2} is obtained by recognising $e^{im\phi} = \left(\frac{x_2+ix_3}{\sqrt{x_2^2+x_3^2}}\right)^m$. Harmonicity is clear and homogeneity follows by noting that $P_{\ell}^m$ is degree $\ell$ in its argument but the argument $x_1/r$ is degree 0 in terms of the Cartesian coordinates. Moreover, $x_2^2 + x_3^2 = r^2 - x_1^2$ and so the denominator $(x_2^2 + x_3^2)^{m/2}$ cancels with a factor in $r^m P_{\ell}^m(x_1/r)$ and the remaining polynomial in $x_1/r$ is of degree $\ell - m$.
\end{proof}
For discrete symmetries reflecting parity about the $x_1$ axis, we let $\mu\in \{0,1\}$ denote the solution which is even $(0)$ or odd $(1)$ about the $x_1$ axis. The symmetry is even when $\ell - m$ is even, and odd otherwise. 
We note that, like for the spherical $S^3$ system, for a fixed $\ell$ and fixed symmetry class only every 2nd $m$ is allowed so that the parity of $\ell - m$ remains fixed.

\subsection{\label{subsec:ActionsEllipsoida}Ellipsoidal $S^{3}$ Quantised Actions}
Here, we show the quantised actions for each symmetry class $\bm{\mu}$ originally presented in Fig. \ref{fig:Fig2} a) and b). The number of eigenstates per symmetry class is given in Table \ref{tab:NumStatesEllipsoidal}. For a legend to convert between colour and symmetry class, see Table \ref{tab:my_label}. 
\begin{figure}[H]
\begin{centering}
\includegraphics[width=3cm,height=3cm,keepaspectratio]{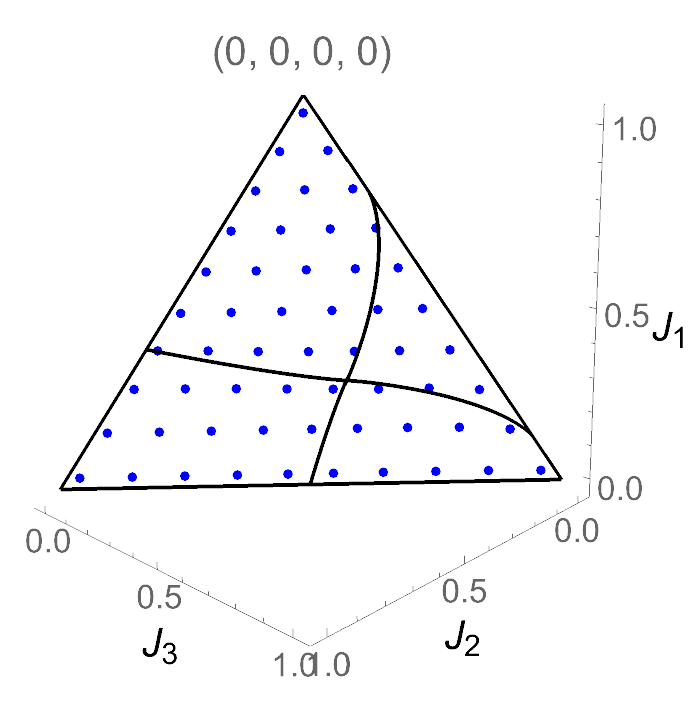}\includegraphics[width=3cm,height=3cm,keepaspectratio]{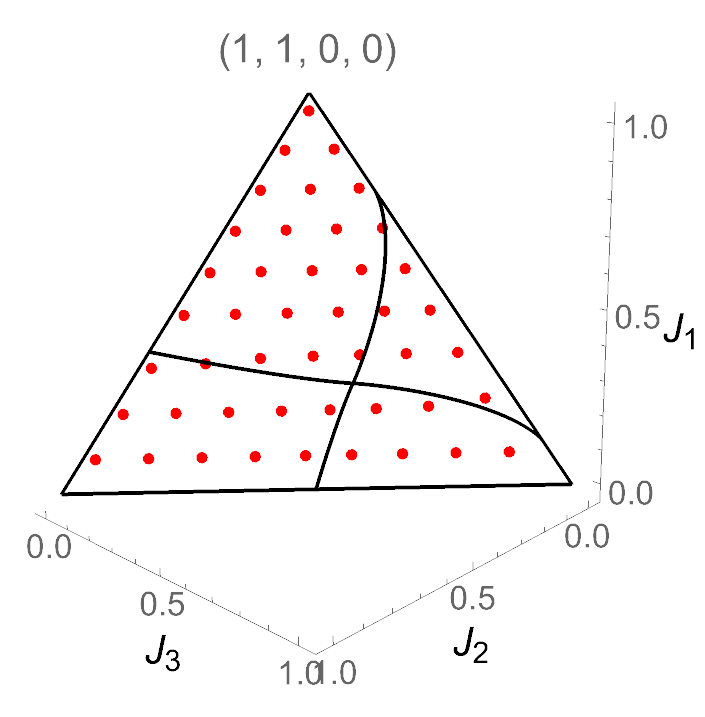}\includegraphics[width=3cm,height=3cm,keepaspectratio]{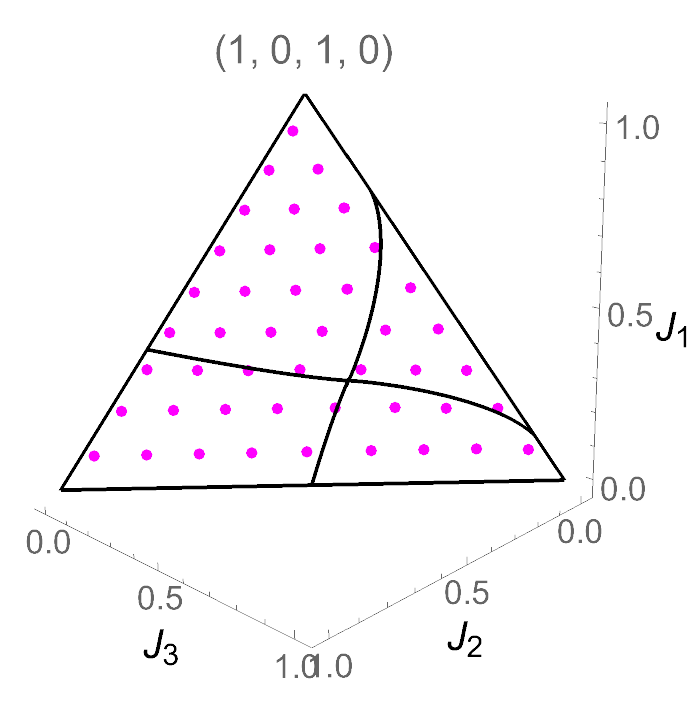}\includegraphics[width=3cm,height=3cm,keepaspectratio]{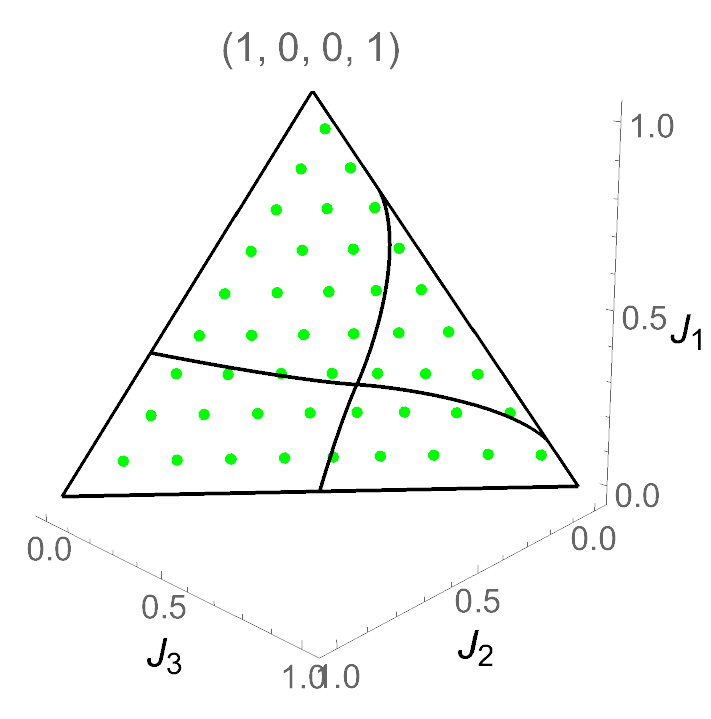}\\
\includegraphics[width=3cm,height=3cm,keepaspectratio]{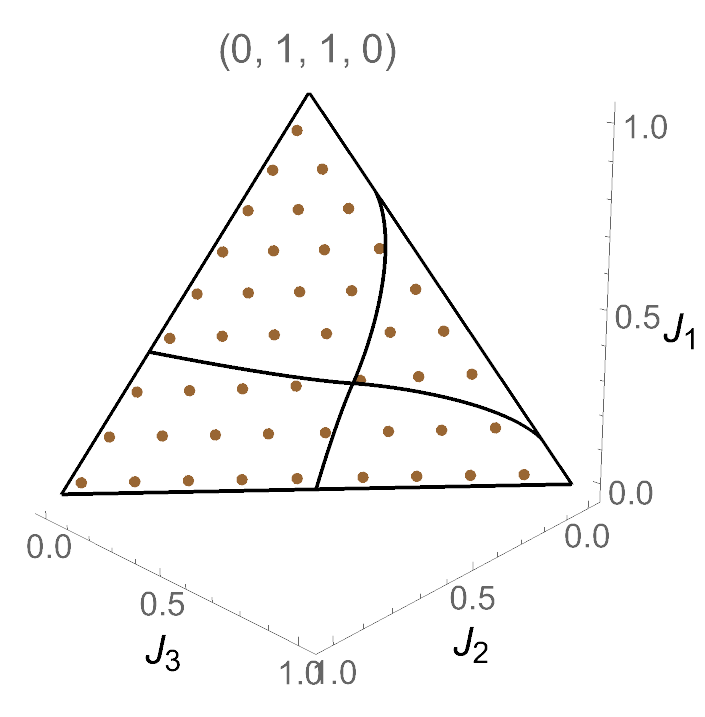}\includegraphics[width=3cm,height=3cm,keepaspectratio]{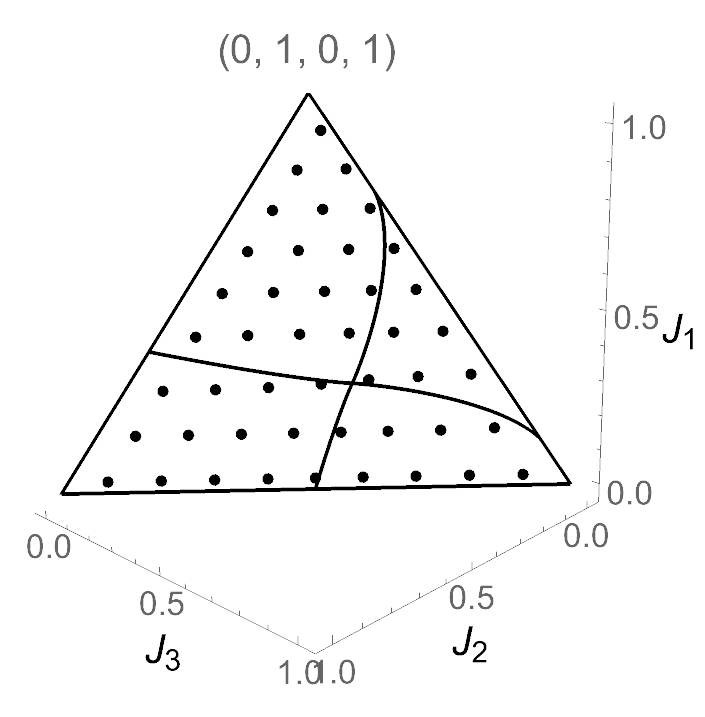}\includegraphics[width=3cm,height=3cm,keepaspectratio]{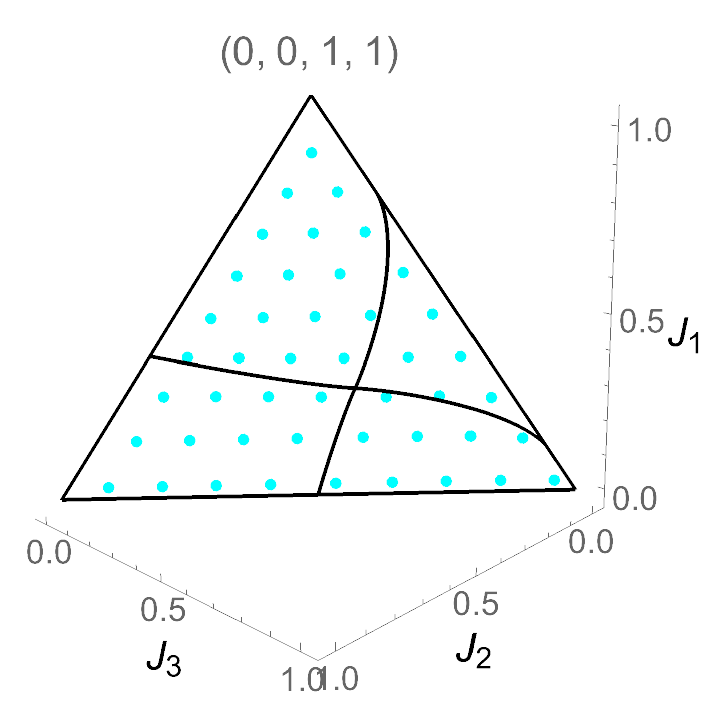}\includegraphics[width=3cm,height=3cm,keepaspectratio]{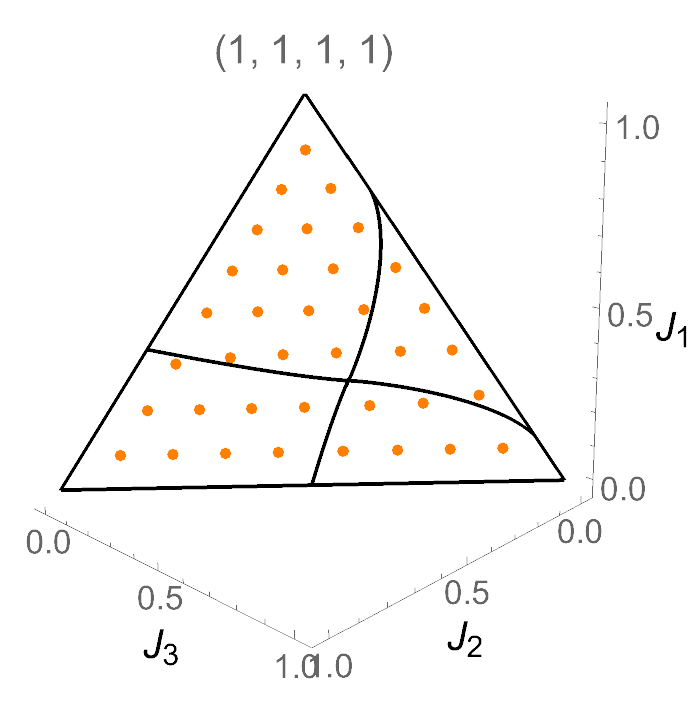}
\par\end{centering}
\caption{Actions for each symmetry class shown in Fig. \ref{fig:Fig1} a).}

\end{figure}
\begin{figure}[H]
\begin{centering}
\includegraphics[width=3cm,height=3cm,keepaspectratio]{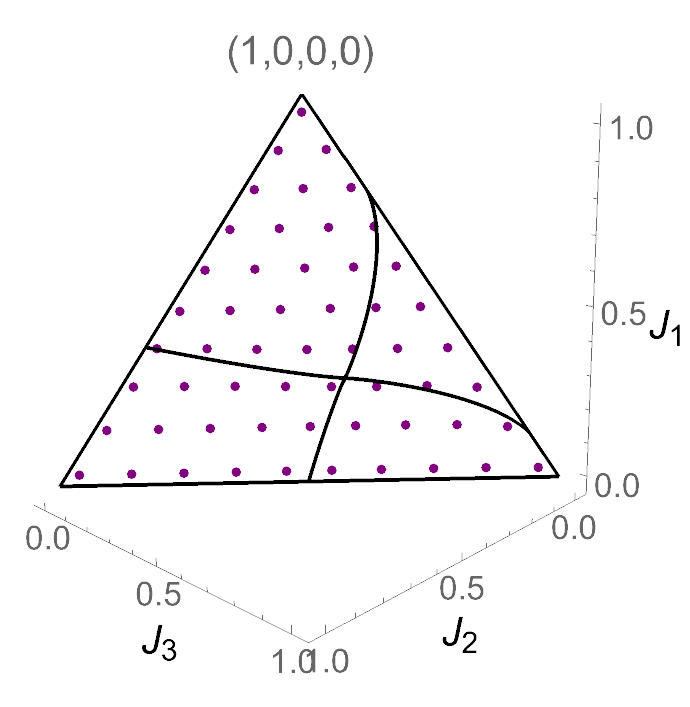}\includegraphics[width=3cm,height=3cm,keepaspectratio]{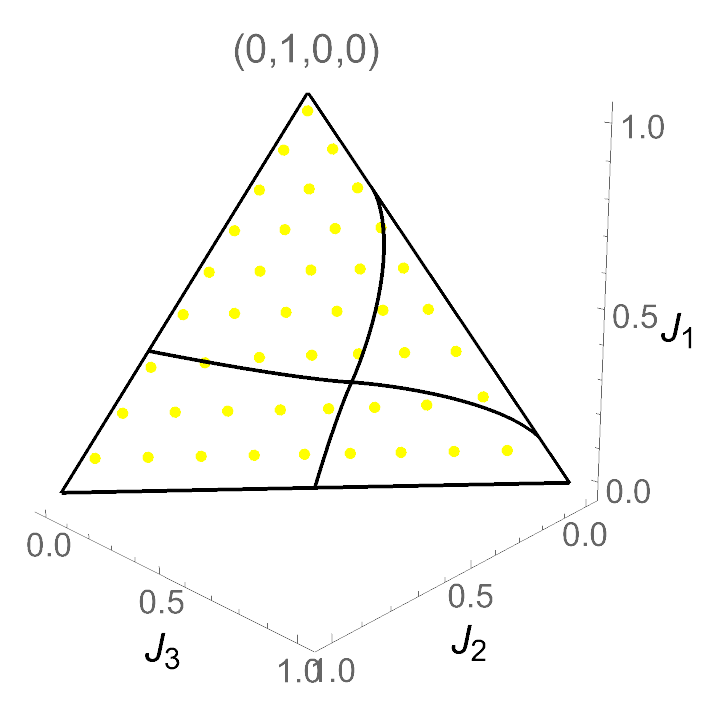}\includegraphics[width=3cm,height=3cm,keepaspectratio]{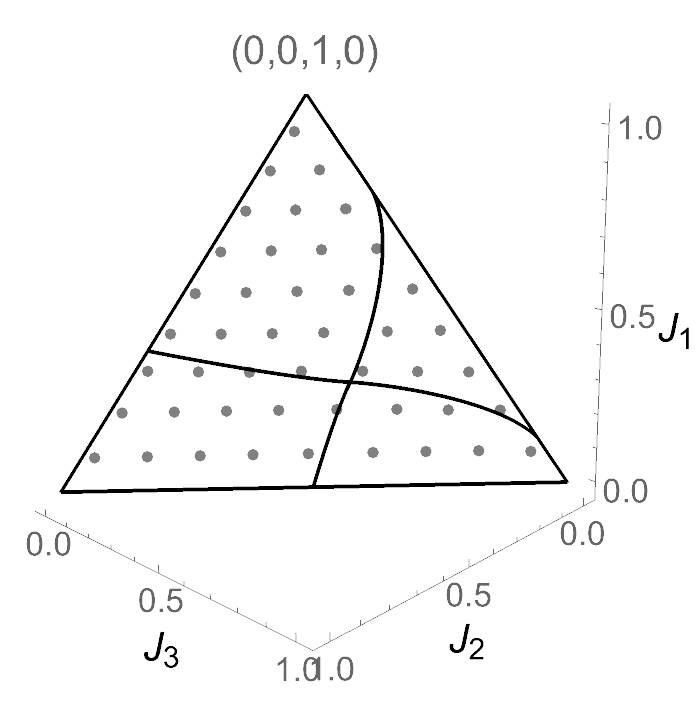}\includegraphics[width=3cm,height=3cm,keepaspectratio]{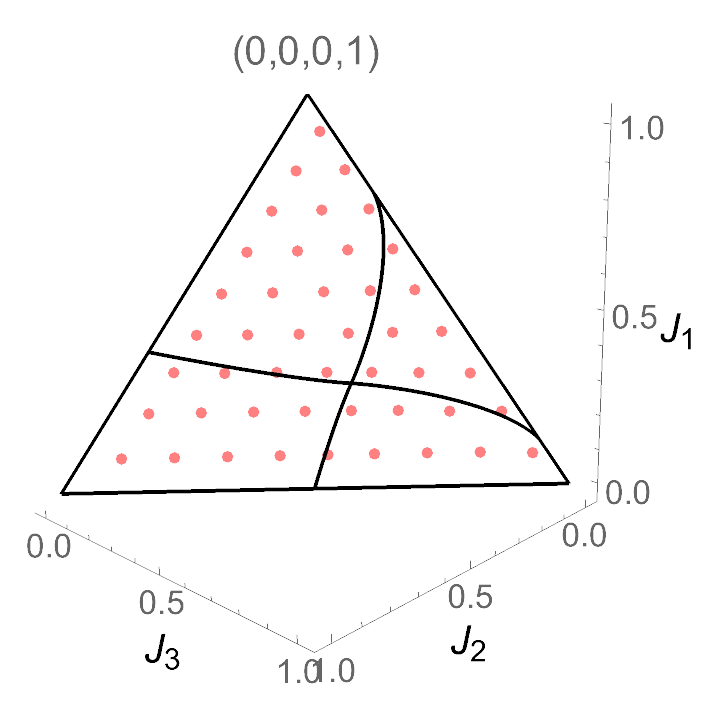}\\
\includegraphics[width=3cm,height=3cm,keepaspectratio]{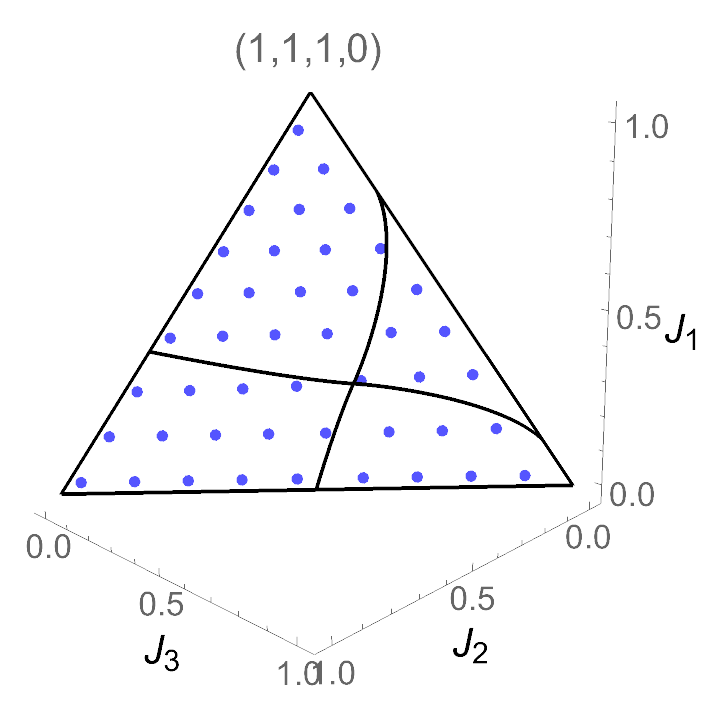}\includegraphics[width=3cm,height=3cm,keepaspectratio]{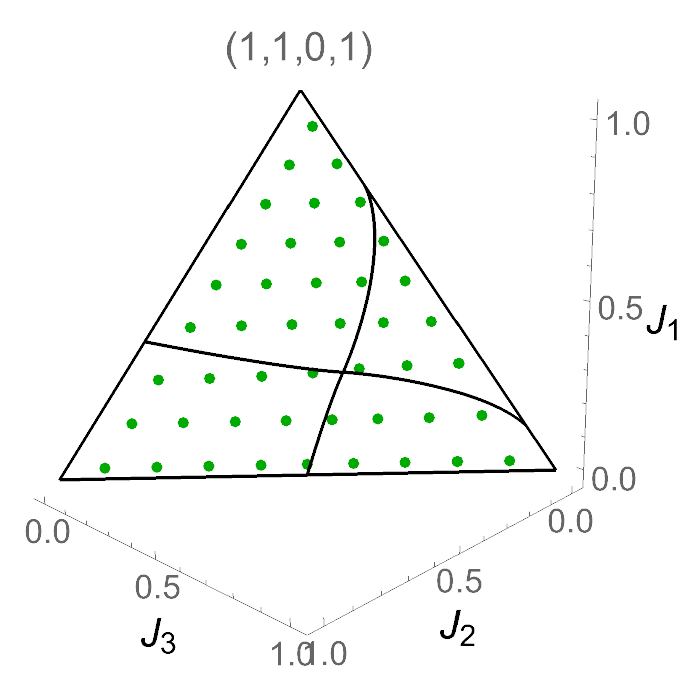}\includegraphics[width=3cm,height=3cm,keepaspectratio]{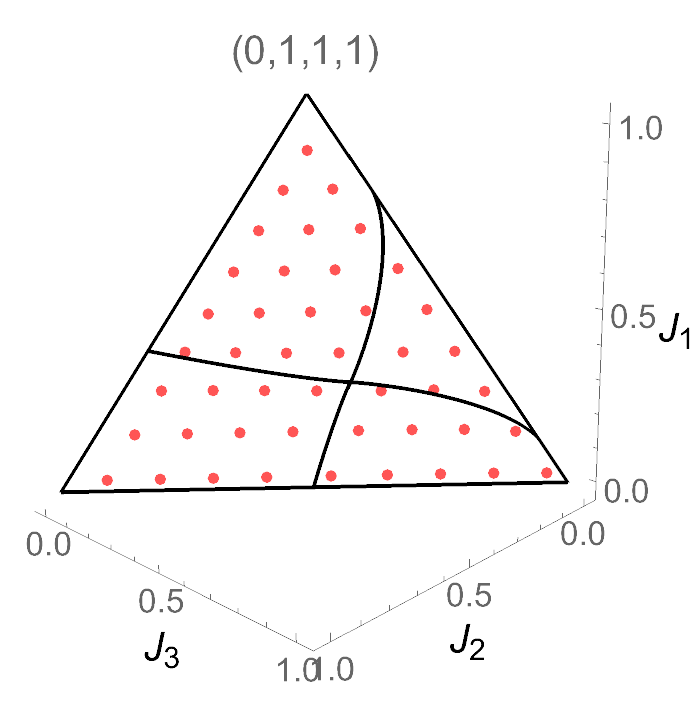}\includegraphics[width=3cm,height=3cm,keepaspectratio]{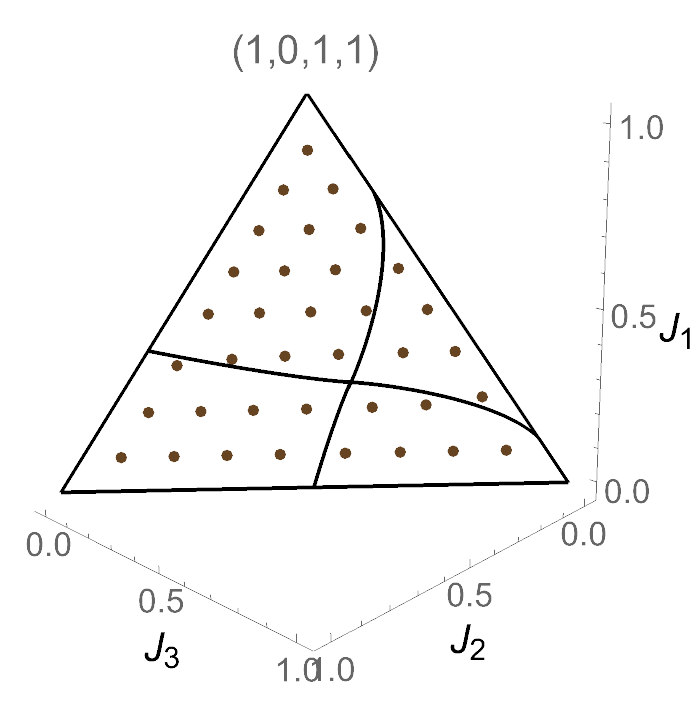}
\par\end{centering}
\caption{Actions for each symmetry class shown in Fig. \ref{fig:Fig1} b).}
\end{figure}

\bibliographystyle{abbrv}
\bibliography{ClassicalS3Ref.bib,all1.bib,hd1.bib,all.bib,hd.bib, all3.bib}

\def\cprime{$'$} \def\cprime{$'$} \def\cprime{$'$} \def\cprime{$'$}
\begin{thebibliography}{10}

\bibitem{AS}
M.~Abramowitz and I.~A. Stegun, editors.
\newblock {\em Handbook of mathematical functions with formulas, graphs, and mathematical tables}.
\newblock Dover Publications Inc., New York, 1992.
\newblock Reprint of the 1972 edition.

\bibitem{Al-Rashed1985}
A.~M. Al-Rashed and N.~Zaheer.
\newblock Zeros of {S}tieltjes and {V}an {V}leck polynomials and applications.
\newblock {\em Journal of Mathematical Analysis and Applications}, 110(2):327--339, sep 1985.

\bibitem{Alam1979}
M.~Alam.
\newblock Zeros of {S}tieltjes and {V}an {V}leck polynomials.
\newblock {\em Transactions of the American Mathematical Society}, 252(0):197--204, 1979.

\bibitem{ADH19}
J.~Alonso, H.~R. Dullin, and S.~Hohloch.
\newblock Symplectic classification of coupled angular momenta.
\newblock {\em Nonlinearity}, 33:417--468, 2019.

\bibitem{Arscott2014}
F.~Arscott, I.~Sneddon, M.~Stark, and S.~Ulam.
\newblock {\em Periodic Differential Equations: An Introduction to Mathieu, Lam{\'e}, and Allied Functions}.
\newblock International series of monographs in pure and applied mathematics. Elsevier Science, 2014.

\bibitem{Axler01}
S.~Axler, P.~Bourdon, and W.~Ramey.
\newblock {\em Graduate Texts in Mathematics: Harmonic Function Theory}.
\newblock Springer New York, 2001.

\bibitem{Bateman:100233}
H.~Bateman and A.~Erdélyi.
\newblock {\em {Higher transcendental functions Vol. III}}.
\newblock California {I}nstitute of {T}echnology. Bateman Manuscript project. McGraw-Hill, New York, NY, 1955.

\bibitem{BELLON2005360}
M.~Bellon and M.~Talon.
\newblock Spectrum of the quantum {N}eumann model.
\newblock {\em Physics Letters A}, 337(4):360--368, 2005.

\bibitem{BELLON2006110}
M.~Bellon and M.~Talon.
\newblock The quantum {N}eumann model: Refined semiclassical results.
\newblock {\em Physics Letters A}, 356(2):110--114, 2006.

\bibitem{Bellon:2005ae}
M.~P. Bellon and M.~Talon.
\newblock {The {Q}uantum Neumann model: Asymptotic analysis}.
\newblock {\em Phys. Lett. A}, 351:283--289, 2006.

\bibitem{Chen2019}
Z.~Chen, T.-J. Kuo, and C.-S. Lin.
\newblock The geometry of generalized {L}am{\'{e}} equation, i.
\newblock {\em Journal de Math{\'{e}}matiques Pures et Appliqu{\'{e}}es}, 127:89--120, Jul 2019.

\bibitem{Chiscop_2019}
I.~Chiscop, H.~R. Dullin, K.~Efstathiou, and H.~Waalkens.
\newblock A {Lagrangian} fibration of the {Isotropic 3-Dimensional Harmonic Oscillator} with monodromy.
\newblock {\em Journal of Mathematical Physics}, 60(3):032103, 2019.

\bibitem{CushDuist88}
R.~H. Cushman and J.~J. Duistermaat.
\newblock The quantum mechanical spherical pendulum.
\newblock {\em Bull. Amer. Math. Soc.}, 19:475--479, 1988.

\bibitem{spheroidal}
S.~R. Dawson, H.~R. Dullin, and D.~M.~H. Nguyen.
\newblock Monodromy in {Prolate Spheroidal} harmonics.
\newblock {\em Studies in Applied Mathematics}, 146(4):953--982, 2021.

\bibitem{Dawson2022}
S.~R. Dawson, H.~R. Dullin, and D.~M.~H. Nguyen.
\newblock The harmonic {L}agrange top and the confluent {H}eun equation.
\newblock {\em Regular and Chaotic Dynamics}, 27(4):443--459, 2022.

\bibitem{DLMF}
{\it NIST Digital Library of Mathematical Functions}.
\newblock http://dlmf.nist.gov/, Release 1.0.24 of 2019-09-15.
\newblock F.~W.~J. Olver, A.~B. {Olde Daalhuis}, D.~W. Lozier, B.~I. Schneider, R.~F. Boisvert, C.~W. Clark, B.~R. Miller, B.~V. Saunders, H.~S. Cohl, and M.~A. McClain, eds.

\bibitem{Duistermaat80}
J.~J. Duistermaat.
\newblock On global action-angle coordinates.
\newblock {\em Comm. Pure Appl. Math.}, 33:687--706, 1980.

\bibitem{DW18}
H.~R. Dullin and H.~Waalkens.
\newblock Defect in the joint spectrum of hydrogen due to monodromy.
\newblock {\em Phys. Rev. Lett.}, 120:020507, 2018.

\bibitem{eisenhart1934}
L.~P. Eisenhart.
\newblock Separable {S}ystems in {E}uclidean 3-{S}pace.
\newblock {\em Annals of Mathematics}, 35:284, 1934.

\bibitem{erdelyi1955higher}
A.~Erd{\'e}lyi, W.~Magnus, F.~Oberhettinger, and F.~G. Tricomi.
\newblock {\em Higher Transcendental Functions. Vol. III}.
\newblock McGraw-Hill Book Company, Inc., New York-Toronto-London, 1955.

\bibitem{gurarie95}
D.~Gurarie.
\newblock Quantized {N}eumann problem, separable potentials on ${S}^{n}$ and the {L}ame equation.
\newblock {\em Journal of Mathematical Physics}, 36(10):5355--5391, 1995.

\bibitem{GurarieBook}
D.~Gurarie.
\newblock {\em Symmetries and Laplacians: Introduction to Harmonic Analysis, Group Representations and Applications}.
\newblock Springer, 2008.

\bibitem{Kress06}
E.~G. Kalnins, J.~M. Kress, and W.~Miller.
\newblock Second-order superintegrable systems in conformally flat spaces. {IV}. {T}he classical {3D {St{\"{a}}ckel}} transform and {3D} classification theory.
\newblock {\em Journal of mathematical physics}, 47(4):043514, 2006.

\bibitem{KKM18}
E.~G. Kalnins, J.~M. Kress, and W.~Miller.
\newblock {\em Separation of Variables and Superintegrability}.
\newblock Bristol: IOP Publishing, 2018.

\bibitem{Kalnins2002}
E.~G. Kalnins, J.~M. Kress, W.~Miller, and G.~S. Pogosyan.
\newblock Complete sets of invariants for dynamical systems that admit a separation of variables.
\newblock {\em Journal of Mathematical Physics}, 43(7):3592--3609, 2002.

\bibitem{Kalnins1986}
E.~G. Kalnins and W.~Miller.
\newblock Separation of variables on the $n$-dimensional {Riemannian} manifolds. i. the $n$-sphere {$S^n$} and {Euclidean $n$-space} $\mathbb{R}^n$.
\newblock {\em Journal of Mathematical Physics}, 27(7):1721--1736, 1986.

\bibitem{Floch23}
Y.~Le~Floch.
\newblock {\em {Berezin-Toeplitz quantization and semitoric systems}}.
\newblock Habilitation {\`a} diriger des recherches, {Universit{\'e} de Strasbourg}, 2023.

\bibitem{FlochPelayo18}
Y.~Le~{F}loch and A.~{P}elayo.
\newblock Symplectic geometry and spectral properties of classical and quantum coupled angular momenta.
\newblock {\em Journal of Nonlinear Science}, 29(2):655--708, 2018.

\bibitem{FlochVuNgoc21}
Y.~Le~{F}loch and S.~Vũ~{N}goc.
\newblock The inverse spectral problem for quantum semitoric systems.
\newblock {\em arXiv preprint}, 2021.

\bibitem{Morse53}
P.~Morse and H.~Feshbach.
\newblock {\em Methods of Theoretical Physics}.
\newblock McGraw-Hill, New York, 1953.

\bibitem{Mussardo2006}
G.~Mussardo, V.~Riva, G.~Sotkov, and G.~Delfino.
\newblock Kink scaling functions in 2d non-integrable quantum field theories.
\newblock {\em Nuclear Physics B}, 736(3):259--287, feb 2006.

\bibitem{VuNgoc99}
S.~V. Ng{\d{o}}c.
\newblock Quantum monodromy in integrable systems.
\newblock {\em Comm. Math. Phys.}, 203(2):465--479, 1999.

\bibitem{Nguyen2023}
D.~M.~H. Nguyen, S.~R. Dawson, and H.~R. Dullin.
\newblock Integrable systems arising from separation of variables on ${S}^{3}$, 2023.

\bibitem{Pawellek2007}
M.~Pawellek.
\newblock Quasi-doubly periodic solutions to a generalized {L}am{\'{e}} equation.
\newblock {\em Journal of Physics A: Mathematical and Theoretical}, 40(27):7673--7686, jun 2007.

\bibitem{Pelayo2009}
A.~Pelayo and S.~V. Ng\d{o}c.
\newblock Semitoric integrable systems on symplectic 4-manifolds.
\newblock {\em Inventiones mathematicae}, 177(3):571--597, 2009.

\bibitem{robertson1927}
H.~P. Robertson.
\newblock Bemerkung über separierbare {S}ysteme in der {W}ellenmechanik.
\newblock {\em Mathematische Annalen}, 98:749, 1927.

\bibitem{ronveaux95}
A.~Ronveaux and F.~Arscott.
\newblock {\em Heun's differential equations}.
\newblock Oxford University Press, 1995.

\bibitem{SadovskiiZhilinskii99}
D.~A. Sadovski{\'{\i}} and B.~I. Z{\^h}ilinski{\'{\i}}.
\newblock Monodromy, diabolic points, and angular momentum coupling.
\newblock {\em Phys. Lett. A}, 256(4):235--244, 1999.

\bibitem{Schoebel2016}
K.~Sch{\"o}bel.
\newblock Are orthogonal separable coordinates really classified?
\newblock {\em Symmetry, Integrability and Geometry: Methods and Applications {(SIGMA)}}, 12:No 041, 16, 2016.

\bibitem{Schoebel2015}
K.~Sch{\"{o}}bel and A.~P. Veselov.
\newblock Separation coordinates, moduli spaces and {Stasheff} polytopes.
\newblock {\em Communications in Mathematical Physics}, 337(3):1255--1274, 2015.

\bibitem{Sinitsyn2007}
E.~Sinitsyn and B.~Zhilinskii.
\newblock Qualitative analysis of the {C}lassical and {Q}uantum {M}anakov {T}op.
\newblock {\em SIGMA. Symmetry, Integrability and Geometry: Methods and Applications}, 3:Paper 046, 23 p., 2007.

\bibitem{Staeckel93}
P.~St{\"a}ckel.
\newblock {\"U}ber die {B}ewegung eines {P}unktes in einer n-fachen {M}annigfaltigkeit.
\newblock {\em Math. Ann.}, 42:537--563, 1893.

\bibitem{szegő1975orthogonal}
G.~Szeg{\H{o}}.
\newblock {\em Orthogonal {P}olynomials}.
\newblock American Math. Soc: Colloquium publ. American Mathematical Society, 1975.

\bibitem{toth94}
J.~A. Toth.
\newblock On a class of spherical harmonics associated with rigid body motion.
\newblock {\em Mathematical Research Letters}, 1(2):203--210, 1994.

\bibitem{Volkmer1999}
H.~Volkmer.
\newblock Expansions in {P}roducts of {H}eine-{S}tieltjes {P}olynomials.
\newblock {\em Constructive Approximation}, 15(4):467--480, oct 1999.

\end{thebibliography}
\end{document}